\newtheorem{theorem1}{Theorem}
\newtheorem{RR}[theorem1]{Reduction Rule}
\newtheorem{question}[theorem]{Question}
\DeclareFontShape{T1}{lmr}{b}{sc}{<->ssub*cmr/bx/sc}{}
\DeclareFontShape{T1}{lmr}{bx}{sc}{<->ssub*cmr/bx/sc}{}
\newcommand{\FPT}{$\mathsf{FPT}$}
    \newcommand{\VDP}{$\textsc{VDP}$}
\newcommand{\EDP}{$\textsc{EDP}$}
\newcommand{\NPoly}{\textsf{NP} $\subseteq$ \textsf{coNP}$/$\textsf{poly}\xspace}
\newcommand{\todom}[1]{\todo[linecolor=blue,backgroundcolor=blue!25,bordercolor=blue]{#1}}
\title{Kernels for the Disjoint Paths Problem on Subclasses of Chordal Graphs}
\titlerunning{Kernels for the Disjoint Paths Problem on Subclasses of Chordal Graphs} 
\author{Juhi Chaudhary}{Ben-Gurion University of the Negev, Beersheba, Israel  \and \url{https://sites.google.com/view/juhichaudhary/home}}{juhic@post.bgu.ac.il}{https://orcid.org/0000-0001-5560-9129}{}
\author{Harmender Gahlawat}{Ben-Gurion University of the Negev, Beersheba, Israel \and \url{https://sites.google.com/view/harmendergahlawat/} }{harmendergahlawat@gmail.com}{https://orcid.org/0000-0001-7663-6265}{}
\author{Michal Włodarczyk}{University of Warsaw, Warsaw, Poland \and \url{https://www.mimuw.edu.pl/~mw277619/}}{michal.wloda@gmail.com}{https://orcid.org/0000-0003-0968-8414}{}
\author{Meirav Zehavi}{Ben-Gurion University of the Negev, Beersheba, Israel \and \url{https://sites.google.com/site/zehavimeirav/}}{zehavimeirav@gmail.com}{https://orcid.org/0000-0002-3636-5322}{}
\authorrunning{J. Chaudhary et al.}
\keywords{Kernelization, Parameterized Complexity, Vertex-Disjoint Paths Problem, Edge-Disjoint Paths Problem.} 
\begin{document}

\maketitle

\begin{abstract}
Given an undirected graph $G$ and a multiset of $k$ terminal pairs $\mathcal{X}$, the \textsc{Vertex-Disjoint Paths} (\VDP) and \textsc{Edge-Disjoint Paths} (\EDP) problems ask whether $G$ has $k$ pairwise internally vertex-disjoint paths and $k$ pairwise edge-disjoint paths, respectively, connecting every terminal pair in~$\mathcal{X}$. In this paper, we study the kernelization complexity of \VDP~and~\EDP~on subclasses of chordal graphs. For \VDP, we design a $4k$ vertex kernel on split graphs and an $\mathcal{O}(k^2)$ vertex kernel on well-partitioned chordal graphs. We also show that the problem becomes polynomial-time solvable on threshold graphs.
For \textsc{EDP}, we first prove that the problem is $\mathsf{NP}$-complete on complete graphs. Then, we design an $\mathcal{O}(k^{2.75})$ vertex kernel for \EDP~on split graphs, and improve it to a $7k+1$ vertex kernel on threshold graphs. Lastly, we provide an $\mathcal{O}(k^2)$ vertex kernel for \EDP~on block graphs and a $2k+1$ vertex kernel for clique paths. Our contributions improve upon several results in the literature, as well as resolve an open question by Heggernes et al.~[Theory Comput. Syst., 2015].

\end{abstract}

\section{Introduction}\label{sec:intro}
The \textsc{Vertex-Disjoint Paths} (\VDP) and \textsc{Edge-Disjoint Paths} (\EDP) problems are fundamental routing problems, having applications in VLSI design and virtual circuit routing \cite{frank1990packing,EDPHardTreewidth,schrijver2003combinatorial,srinivas2005finding}. Notably, they have been a cornerstone of the groundbreaking Graph Minors project of Robertson and Seymour~\cite{DPMinor}, and several important techniques, including the \textit{irrelevant vertex technique}, originated in the process of solving disjoint paths~\cite{DPMinor}. In \VDP~(respectively, \EDP), the input is an undirected graph $G$ and a multiset of terminal pairs $\mathcal{X} = \{(s_1,t_1), \ldots, (s_k,t_k)\}$, and the goal is to find $k$ pairwise internally vertex-disjoint (respectively, edge-disjoint) paths $P_1,\ldots, P_k$ such that $P_i$ is a path with endpoints $s_i$ and $t_i$.

Both \VDP~and \EDP~are extensively studied in the literature, and have been at the center of numerous results in algorithmic graph theory~\cite{DPApproxHard,DPApproxTW,DPApprox1,DPApprox2,DPDirected,DPApprox3,VDPSTOCPlanar,DPPartialTrees}. Karp~\cite{VDPHard} proved that \VDP~is $\mathsf{NP}$-complete (attributing the result to Knuth), and a year later, Even, Itai, and Shamir~\cite{EDPHard} proved the same for \EDP. When $k$ is \emph{fixed} (i.e., treated as a constant), Robertson and Seymour~\cite{lokshtanov2020efficient, DPMinor} gave an $\mathcal{O}(|V(G)|^3)$ time algorithm as a part of their famous Graph Minors project. This algorithm is a \textit{fixed-parameter tractable} (\FPT) algorithm parameterized by $k$. Later, the power $3$ was reduced to $2$ by Kawarabayashi, Kobayashi, and Reed~\cite{fastestVDP}. 

In Parameterized Complexity, each problem instance is associated with an integer parameter $k$. We study both \textsc{VDP} and \textsc{EDP} through the lens of \textit{kernelization} under the parameterization by $k$. A \textit{kernelization algorithm} is a polynomial-time algorithm that takes as input an instance $(I,k)$ of a problem and outputs an \textit{equivalent instance} $(I',k')$ of the same problem such that the size of $(I',k')$ is bounded by some computable function $f(k)$. The problem is said to admit an $f(k)$ sized kernel, and if $f(k)$ is polynomial, then the problem is said to admit a polynomial kernel. It is well known that a problem is \FPT~if and only if it admits a kernel. Due to the profound impact of preprocessing, kernelization has been termed ``\emph{the lost continent of polynomial time}''~\cite{kernelApplication}.  For more details on kernelization, we refer to  books~\cite{bookParameterized, bookKernelization}. 

Bodlaender et al.~\cite{bodlaender} proved that, unless \NPoly, \VDP~does not admit a polynomial kernel (on general graphs). On the positive side, Heggernes et al.~\cite{hegger} extended this study to show that \VDP~and \EDP~admit polynomial kernels on split graphs with $\mathcal{O}(k^2)$ and $\mathcal{O}(k^3)$ vertices, respectively. Yang et al.~\cite{yang} further showed that a restricted version of \VDP, where each vertex can appear in at most one terminal pair, admits a $4k$ vertex kernel. Recently, Ahn et al.~\cite{ahn1} introduced so-called \textit{well-partitioned chordal graphs} (being a generalization of split graphs), and showed that \VDP~on these graphs admits an $\mathcal{O}(k^3)$ vertex kernel. 
In this paper, we extend the study of kernelization of \EDP~and~\VDP~on these (and other) subclasses of chordal graphs.

\begin{table}[t]
\scalebox{0.82}		{
	\begin{tabular}{|l|l|l|}
	\hline
\Large{\textbf{\underline{Graph Class}}} & \Large{\textbf{\underline{\VDP}}} & \Large{\textbf{\underline{\EDP}}} \\
\textbf{Well-partitioned} \textbf{Chordal} & $\bm{\mathcal{O}(k^{2})}$ \textbf{vertex kernel}  \textbf{[Theorem} \textbf{\ref{labelwpcvdp}]}  & \textcolor{red}{\textbf{OPEN}} \\ 

\textbf{Split} & $\bm{4k}$ \textbf{vertex kernel} \textbf{[Theorem} \textbf{\ref{labelsplitvdp}]} &  $\bm{\mathcal{O}(k^{2.75})}$ \textbf{vertex kernel} \textbf{[Theorem} \textbf{\ref{labelsplitedp}]}\\

\textbf{Threshold} &  \textcolor{blue}{\textbf{P}} \textbf{[{Theorem}} \textbf{\ref{thm:threshold-vdp-poly}]} & $\bm{7k+1}$ \textbf{vertex kernel} \textbf{[Theorem} \textbf{\ref{T:Threshold}]}\\

\textbf{Block} & \textcolor{blue}{\textbf{P}} \textbf{[{Observation}} \textbf{\ref{blockP}]} & $\bm{4k^{2}-2k}$ \textbf{vertex kernel} \textbf{[Theorem} \textbf{\ref{T:Block}]} \\

\textbf{Clique Path} & \textcolor{blue}{\textbf{P}} \textbf{[{Observation}} \textbf{\ref{blockP}]}  &  $\bm{2k+1}$ \textbf{vertex kernel} \textbf{[Theorem} \textbf{\ref{T:CP}]} \\ 

		\hline

	\end{tabular}
	}
		\caption{Summary of the kernelization results of \textsc{VDP} and \textsc{EDP} parameterized by the number of occurrences of terminal pairs ($k$) on the subclasses of chordal graphs studied in this paper. }
		
			\label{tablepig}
			
\end{table}

\subsection{Our Contribution}\label{sec:contri}

An overview of our results is given in Table~\ref{tablepig}.
We begin by discussing the results about \textsc{EDP}.
First, we observe that the problem remains \textsf{NP}-hard even on inputs with a trivial graph structure given by a clique, unlike \textsc{VDP}.
This extends the known hardness results for split graphs~\cite{hegger} and graphs of cliquewidth at most~6~\cite{DPCliqueWidth}.

 \begin{restatable}{theorem}{complete} \label{npc}
\textsc{EDP} is $\mathsf{NP}$-hard on complete graphs.
\end{restatable}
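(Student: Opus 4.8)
The plan is to reduce from a known $\mathsf{NP}$-hard problem whose structure naturally encodes "packing disjoint routes," so that the trivial adjacency structure of a clique does not help. A natural candidate is \textsc{Edge-Disjoint Paths} on general graphs (or its bounded-degree variant), or alternatively an integer-flow/demand problem; but the cleanest route is to reduce from a combinatorial problem that is already about selecting many pairwise-conflicting objects, such as \textsc{3-Dimensional Matching} or \textsc{Partition}-type problems, where the "edge budget" inside the clique becomes the scarce resource. The key idea I would use: in a clique $K_n$, a path connecting $s_i$ to $t_i$ of length $\ell_i$ consumes $\ell_i$ edges, and since all $\binom{n}{2}$ edges are present, the only obstruction to routing is the global edge budget together with the fact that paths must be edge-disjoint and each internal vertex has limited degree. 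So I want a gadget where forcing certain paths to be "long" (use many edges/vertices) competes with forcing others to be "short," and feasibility corresponds to a balanced selection.

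Concretely, first I would take an instance of a suitable $\mathsf{NP}$-hard problem — I expect something like \textsc{Numerical 3-Dimensional Matching} or a scheduling/bin-packing flavored problem to be most convenient — and build a complete graph on a vertex set partitioned into: a pool of "routing" vertices, and the terminals $s_1,t_1,\dots,s_k,t_k$. Second, I would use terminal pairs with \emph{repetitions} (the problem allows a multiset $\mathcal{X}$): placing many copies of a pair $(u,v)$ forces $u$ and $v$ each to have high degree used up, effectively "deleting" edges or reserving capacity at those vertices. This is the main lever for control inside an otherwise featureless clique: multiplicities of demands simulate capacity constraints. Third, I would introduce "consumer" pairs whose shortest feasible route is long precisely when the chosen packing is correct, so that the total edge count $\sum \ell_i$ fits inside $\binom{n}{2}$ only for a valid solution; conversely, any valid packing yields a routing by greedily using private vertices.

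The main obstacle I anticipate is \emph{forcing paths to behave}: in a clique, a path can always take a shortcut of length $2$ through any unused vertex, so naively nothing is forced to be long. To overcome this I would make the "private" vertex supply tight — give each long-path gadget only a prescribed set of intermediate vertices, and saturate all \emph{other} potential shortcut vertices by attaching enough parallel terminal demands at them that their incident edge budget is fully consumed, so they cannot serve as relay points. Getting these counting inequalities to match exactly (so that YES-instances route and NO-instances provably cannot) is the delicate part; I would first prove a clean "capacity lemma" — that in $K_n$ a set of edge-disjoint $s_i$–$t_i$ paths exists iff a certain degree-bounded flow is feasible, essentially an Euler/parity-type characterization — and then design the reduction so that the flow feasibility is exactly the source problem. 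The forward direction (route a YES-instance) should be routine once the gadget vertices are laid out; the backward direction (extract a solution from any routing) is where I expect to spend most of the effort, using the saturation argument to pin down which vertices each path may use.
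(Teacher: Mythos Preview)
Your proposal has a genuine gap, and you also walked past the right idea. You mention reducing from \textsc{EDP} on general graphs as a ``natural candidate'' and then discard it in favour of a much more elaborate bin-packing-style construction. The paper does exactly the simple reduction you set aside: given an \textsc{EDP} instance $(G,\mathcal{X},k)$, take $G'$ to be the complete graph on $V(G)$ and for every non-edge $uv$ of $G$ add the terminal pair $(u,v)$ to $\mathcal{X}$. In the forward direction a solution on $G$ plus the single-edge paths on the new pairs is a solution on $G'$. In the backward direction one takes a solution on $G'$ that maximises the number of new pairs routed by their single edge; a short exchange argument shows all new pairs are then routed by their single edge, so the remaining paths avoid all non-edges of $G$ and solve the original instance. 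That is the whole proof.

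Your alternative route has a concrete problem beyond being overcomplicated. The ``capacity lemma'' you propose---that edge-disjoint routability in $K_n$ is equivalent to feasibility of some degree-bounded flow or an Euler/parity-type condition---would, if true in a form clean enough to drive the reduction, give a polynomial-time algorithm for \textsc{EDP} on cliques and thereby contradict the very theorem you are trying to prove. So either that lemma is false, or it is too weak to let you read off a solution to the source problem from an arbitrary routing; in either case the backward direction of your reduction does not go through as sketched. The saturation idea (using multiplicities to exhaust degree at relay vertices) is plausible as a gadget-building tool, but you have not shown how to prevent a path from detouring through \emph{any} of the many other vertices of the clique, and making every such detour infeasible while keeping the intended routes available is exactly where such constructions tend to collapse. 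The paper's trick---add a terminal pair on each missing edge so that those edges are ``used up'' and the clique effectively behaves like the original graph---achieves the same control with one line.
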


Every graph class treated in this paper includes cliques, so \textsc{EDP} is \textsf{NP}-hard on each of them.
This motivates the study of kernelization algorithms.
From now on, we always use $k$ to denote the number of occurrences of terminal pairs in an instance.

We present an $\mathcal{O}(k^{2.75})$ vertex kernel for \textsc{EDP} on split graphs, improving upon the $\mathcal{O}(k^{3})$ vertex kernel given by Heggernes et al.~\cite{hegger}. 
Our main technical contribution is a lemma stating that the length of each path in a minimum-size solution is bounded by $\mathcal{O}(\sqrt{k})$.
This allows us to obtain the following.

 \begin{restatable}{theorem}{splitedp} \label{labelsplitedp}
\textsc{EDP} on split graphs admits a kernel with at most $\mathcal{O}(k^{2.75})$ vertices. 
\end{restatable}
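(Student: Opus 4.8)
The plan is to prove this via a bounded-search-plus-marking strategy, using the key structural lemma (the $\mathcal{O}(\sqrt k)$ bound on the length of every path in a minimum-size solution) as a black box, exactly as advertised in the excerpt. So first I would fix the split partition $V(G) = C \uplus I$, where $C$ induces a clique and $I$ an independent set, and recall that in a split graph every induced path visits the clique $C$ in at most two consecutive vertices; hence an $(s_i,t_i)$-path alternates in a very restricted way between $C$ and $I$, and its length is governed almost entirely by how many clique vertices it uses. Combined with the structural lemma, this tells us that each of the $k$ solution paths uses at most $\mathcal{O}(\sqrt k)$ vertices of $C$, so a minimum-size solution altogether touches only $\mathcal{O}(k \cdot \sqrt k) = \mathcal{O}(k^{1.5})$ vertices of $C$ and, since each edge-disjoint path through $I$ consumes its own private pair of clique edges incident to each $I$-vertex it uses, a similarly controlled number of $I$-vertices.

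Second, the actual kernelization: I would first apply the obvious cleanup reduction rules — delete a terminal pair $(s_i,t_i)$ that is already adjacent (routing it directly, decrementing $k$), delete isolated non-terminal vertices, and bound the number of terminal vertices trivially by $2k$. The crux is to bound the non-terminal vertices. All non-terminal vertices of $I$ of degree $\le 1$ are useless and get deleted; a non-terminal $I$-vertex of degree $2$ can be short-cut (suppressed) into an edge of $C$, with care about multiplicities since $G$ is a multigraph-free simple graph but paths are edge-disjoint — this is the step that needs the most bookkeeping. After that, each remaining non-terminal $I$-vertex has degree $\ge 3$ in $C$; a counting/marking argument then shows that only $\mathcal{O}(k^{?})$ of them can ever be simultaneously useful: since every solution path passing through such a vertex "pays" with two clique edges at a high-degree vertex and there are at most $\mathcal{O}(k^{1.5})$ relevant clique vertices and each carries at most $\mathcal{O}(k)$ usable edges, a greedy marking scheme (mark, for each terminal pair and each pair of marked clique vertices, enough $I$-witnesses) retains $\mathcal{O}(k^{2.75})$ vertices and argues — by a standard exchange argument swapping an unmarked witness for a marked one — that the reduced instance is equivalent. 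Finally I would also bound $|C|$ itself by marking: keep the $2k$ clique-terminals, and for the non-terminal clique vertices, observe that a minimum solution uses $\mathcal{O}(k^{1.5})$ of them, so marking $\mathcal{O}(k^{1.5})$ "spare" clique vertices per requirement and deleting the rest (again justified by an exchange argument, since any unmarked clique vertex on a solution path can be rerouted through a marked spare) leaves $\mathcal{O}(k^{?})$; the dominating term $\mathcal{O}(k^{2.75})$ comes from the $I$-side count.

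The main obstacle I anticipate is getting the exponent right in the marking for degree-$\ge 3$ non-terminal vertices of $I$, i.e., precisely matching the $\mathcal{O}(k^{2.75})$ figure rather than a cruder $\mathcal{O}(k^3)$ or $\mathcal{O}(k^{2.5})$. The naive bound "$k$ paths $\times$ $\mathcal{O}(\sqrt k)$ clique vertices per path $\times$ $\mathcal{O}(\sqrt k)$ $I$-vertices interleaved per path" already gives only $\mathcal{O}(k^2)$ vertices \emph{used}, so the extra factor must come from the slack one needs when marking spares to guarantee an exchange is always possible: for each of the $\mathcal{O}(k)$ terminal pairs and each of the $\mathcal{O}(\sqrt k)$ slots along its path one must keep roughly $k$ alternatives (one cannot predict which other $k-1$ paths block which edges), giving $k \cdot \sqrt k \cdot k = k^{2.5}$, and then a further $\sqrt k$ overhead appears because each alternative $I$-vertex may itself need to be routed through one of $\mathcal{O}(\sqrt k)$ distinct clique-edge pairs — yielding $k^{2.75}$. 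Pinning down this accounting honestly, and verifying the exchange argument survives the multigraph/edge-disjointness subtleties when two marked spares happen to share a clique vertex, is where the real work lies; everything else is routine once the $\mathcal{O}(\sqrt k)$ path-length lemma is in hand.
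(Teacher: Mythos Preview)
Your proposal has a genuine gap: the marking scheme you sketch does not actually achieve the $\mathcal{O}(k^{2.75})$ bound, and the missing ingredient is a \emph{rich/poor dichotomy on clique vertices} with threshold $k^{0.75}$, which is the heart of the paper's argument.

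Concretely, the paper first runs the Heggernes et al.\ kernel as preprocessing, so that $|C|\le k$, $|I_T|\le 2k$, and any two clique vertices share at most $4k+1$ common neighbours in $I_N$. Then it does \emph{not} mark $I_N$-witnesses per terminal pair or per slot along a path, as you propose. Instead it scans the clique vertices: a vertex $v\in C$ is declared \emph{rich} if one can find $100k^{0.75}$ other clique vertices each sharing $k^{0.75}$ currently-unmarked $I_N$-neighbours with $v$ (those $100k^{1.5}$ witnesses get marked for $v$); otherwise $v$ is \emph{poor}. The reduction rule deletes every unmarked $I_N$-vertex all of whose clique neighbours are rich. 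Safety uses the $\mathcal{O}(k^{1.5})$ edge bound: after removing the $\le 6k^{1.5}$ solution edges, each rich vertex still reaches $\ge 94k^{0.75}$ of its partners via fresh $I_N$-witnesses, and between any two such reachable sets there is an unused clique edge, so the deleted vertex can be bypassed by a length-$\le 5$ detour. The count is then: marked vertices $\le |C|\cdot 100k^{1.5}=\mathcal{O}(k^{2.5})$; surviving unmarked vertices each have a poor neighbour, and a poor vertex $v$ has at most $\mathcal{O}(k^{0.75})\cdot(4k+1)=\mathcal{O}(k^{1.75})$ neighbours in $I_N$ (otherwise it would be rich), so this side contributes $|C|\cdot\mathcal{O}(k^{1.75})=\mathcal{O}(k^{2.75})$.

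Your accounting---``$k$ pairs $\times$ $\sqrt k$ slots $\times$ $k$ alternatives $\times$ $\sqrt k$ overhead''---is not a marking scheme one can actually implement and prove safe; it is a heuristic for why the exponent might be $2.75$, but it does not identify \emph{which} $I_N$-vertices to keep or give an exchange argument that works edge-disjointly. In particular, a per-pair/per-slot marking is essentially the Heggernes et al.\ scheme and yields $\mathcal{O}(k^3)$; to beat that you need the asymmetric $k^{0.75}$ threshold and the rich/poor split. Two smaller issues: your ``delete an adjacent terminal pair'' rule is unsafe for \textsc{EDP} unless you also remove the edge (and track multiplicities), and suppressing a degree-$2$ vertex of $I_N$ would create a multi-edge since its two neighbours in $C$ are already adjacent, so that step cannot be carried out as stated.
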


In the quest to achieve better bounds, we consider a subclass of split graphs. Specifically, we prove that \textsc{EDP} on threshold graphs admits a kernel with at most $7k+1$ vertices. Here, we exploit the known vertex ordering of threshold graphs that exhibits an inclusion relation concerning the neighborhoods of the vertices. 

 \begin{restatable}{theorem}{threshold}\label{T:Threshold}
\textsc{EDP} on threshold graphs admits a kernel with at most $7k+1$ vertices. 
\end{restatable}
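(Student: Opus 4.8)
The plan is to exploit the characterization of threshold graphs: there is a vertex ordering $v_1, v_2, \ldots, v_n$ such that each $v_i$ is either \emph{isolated} (adjacent to no later vertex among $v_{i+1},\dots,v_n$ in the induced ordering) or \emph{dominating} (adjacent to all later vertices), and consequently the closed neighborhoods of dominating vertices form a chain under inclusion. In particular a threshold graph is a split graph, so we may take a partition $V(G) = C \uplus I$ where $C$ is a clique and $I$ an independent set, and we may assume $C$ is inclusion-maximal. The key structural consequence is that the vertices of $I$ can be linearly ordered $u_1, \dots, u_{|I|}$ so that $N(u_1) \subseteq N(u_2) \subseteq \cdots \subseteq N(u_{|I|})$, and each $N(u_j)$ is a ``prefix'' of a fixed ordering of $C$. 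This total order on the neighborhoods in $I$ is exactly the extra leverage that threshold graphs give us over general split graphs, and it should let us replace the $\mathcal{O}(k^{2.75})$ analysis with a linear one.

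First I would classify vertices by their role in $\mathcal{X}$. Call a vertex a \emph{terminal} if it appears in some pair of $\mathcal{X}$; there are at most $2k$ distinct terminals. Vertices of $C$ that are non-terminals are essentially interchangeable as routing vertices inside the clique, and vertices of $I$ that are non-terminals are only useful as a ``length-two detour'' through two of their clique-neighbors. The first batch of reduction rules would handle the easy cases: delete non-terminal isolated vertices; if a terminal pair $(s_i,t_i)$ is already an edge, route it directly and decrement $k$ (being careful that edges may be reused across the multiset only in \EDP\ if they are genuinely distinct — here since $(s_i,t_i)\in E(G)$ we just use that edge and remove the pair); contract or discard twins among non-terminals. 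Then I would argue, using the fact that the solution can be taken to be of minimum total size and the threshold ordering, that no path in an optimal solution needs to be long: a path either stays within $C$ (length $\le 2$: $s_i$–$c$–$t_i$ or just the edge) or uses at most one vertex of $I$ as an internal vertex (for \EDP, an internal $I$-vertex $u$ contributes two edges $cu, uc'$ with $c,c' \in N(u) \subseteq C$). The neighborhood chain means that whenever we want such a detour we can always use the $I$-vertex with the largest neighborhood, or, after the reductions, a bounded number of them.

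Next I would bound the number of ``useful'' non-terminal clique vertices. Each path in a minimum solution that is not a single edge uses at least one internal clique vertex; since the paths are edge-disjoint and there are $k$ of them, and since (by the length bound) each path uses $O(1)$ internal vertices, the solution touches $O(k)$ non-terminal clique vertices. The point is to make this constant explicit and small — aiming for the bookkeeping to land at $7k+1$: roughly $2k$ terminals, plus at most (some constant)$\cdot k$ reserve clique vertices to serve as intermediate hops, plus at most (some constant)$\cdot k$ reserve $I$-vertices for detours, plus one. Concretely, I expect a reduction rule of the form: if there are more than $ak$ non-terminal vertices in $C$, delete one of them (its role can always be played by another); similarly cap the non-terminal vertices of $I$ by exploiting that only the ones with largest neighborhoods matter, so after removing dominated duplicates at most $bk$ survive. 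The equivalence proof for each such rule is a routing-exchange argument: given a solution in the larger instance, reroute the $O(1)$ paths through a deleted vertex to an equivalent surviving vertex, which is possible precisely because the neighborhoods are nested.

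The main obstacle, I expect, is the exchange argument for \EDP\ rather than \VDP: because paths are only edge-disjoint, several paths may legitimately share a clique vertex, so when we delete a non-terminal clique or independent vertex we must simultaneously reroute \emph{all} the paths passing through it, and we must ensure the replacements do not collide on edges with each other or with the rest of the solution. Making this robust — showing that a pool of a constant times $k$ reserve vertices always suffices to absorb all these reroutings simultaneously, and pinning the constants down to hit exactly $7k+1$ — is the delicate part; it will likely require a careful counting of how many paths can pass through a single deleted vertex (bounded by its degree, but more usefully by the fact that each such path consumes two distinct edges at that vertex, so at most $\lfloor \deg/2\rfloor$, and after earlier rules the relevant degrees are $O(k)$) together with a greedy assignment of reroutes to fresh reserve vertices. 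Once that lemma is in place, the kernel bound and the polynomial running time of the reduction rules follow routinely, and correctness is the conjunction of the individual rule-equivalences.
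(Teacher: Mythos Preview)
Your high-level instinct---exploit the nested neighborhoods of $I$ and keep only the ``top'' few non-terminal independent vertices---is exactly right, but your execution plan introduces unnecessary difficulties, and the part you flag as ``the delicate part'' is an artifact of those difficulties rather than an intrinsic obstacle.

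First, for the clique side you propose to delete non-terminal clique vertices one by one and reroute. You then correctly worry that for \EDP\ many paths can share a clique vertex, so deleting one forces a simultaneous rerouting that may collide on edges. The paper avoids this entirely: it never deletes clique vertices. Instead it invokes the lemma that an \EDP\ instance on a split graph with $|C|>k$ (and enough degree at each terminal) is automatically a Yes-instance (Corollary~\ref{prop2}). So one may simply assume $|C|\le k$; no rerouting argument is needed on the clique side at all. Your proposed rule ``if there are more than $ak$ non-terminal clique vertices, delete one'' is strictly harder to justify than this and is not needed.

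Second, your claim that in a minimum \EDP\ solution every path has $O(1)$ length (``stays in $C$ with length $\le 2$ or uses at most one $I$-vertex'') is unproven and not obviously true; the paper's own bound for split graphs is only $O(\sqrt{k})$. Fortunately no path-length bound is needed here. The paper's argument for $|I_N|$ is a clean extremal/pigeonhole step: order $I_N$ so that $N(v_1)\supseteq\cdots\supseteq N(v_{|I_N|})$, set $R=\{v_{4k+2},\dots\}$, and take a solution $\mathcal P$ minimizing the number of visits to $R$. If some path visits $r\in R$ via $v,r,w$ with $v,w\in C$, then every $x\in I_N\setminus R$ is adjacent to both $v$ and $w$; if some such $x$ has both $vx$ and $wx$ unused, swap $r$ for $x$, contradicting minimality. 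Otherwise at least $4k+1$ edges incident to $\{v,w\}$ are used by $\mathcal P$, but $k$ paths contribute at most $4k$ such edges---contradiction. Hence $R$ is unused and can be deleted, giving $|I_N|\le 4k+1$ and the total $k+2k+(4k+1)=7k+1$. Note this swaps \emph{one} path at a time against an extremal solution; the ``simultaneous rerouting'' you anticipate never arises.

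Finally, your proposed rule ``if $(s_i,t_i)$ is an edge, route it directly and decrement $k$'' is not safe for \EDP\ when the pair is repeated (only one copy can use the edge), and in any case the paper's proof does not use such a rule.
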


Another important subclass of chordal graphs is the class of block graphs.
For this case, we present a kernel with at most $4k^2-2k$ vertices. 
Our kernelization algorithm constructs an equivalent 
instance where the number of blocks can be at most $4k-2$, and each block contains at most $k$ vertices. Thus, we have the following theorem.

\begin{restatable}{theorem}{block}\label{T:Block}
\EDP~on block graphs admits a kernel with at most $4k^2-2k$ vertices. 
\end{restatable}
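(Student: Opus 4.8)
The plan is to exploit the block-cut tree $T$ of $G$ together with the fact that, in a block graph, the route of every terminal pair is forced at the coarse level. For a pair $(s_i,t_i)$ let $B(s_i),B(t_i)$ be blocks containing $s_i,t_i$; any $s_i$–$t_i$ path must pass through exactly the blocks and cut vertices on the unique $B(s_i)$–$B(t_i)$ path in $T$, and — since a cut vertex is the only gateway between the two sides it separates — the intersection of any solution path $P_i$ with a fixed block $B$ is a single (possibly trivial) subpath of $P_i$. Moreover, because $G$ is chordal, in a minimum-size solution the restriction of each $P_i$ to a clique block is essentially induced, which will control how many vertices of a block a solution actually touches. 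A pair with $s_i,t_i$ in a common block is routed by the single edge $s_it_i$; pairs with $s_i=t_i$ are removed.

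First I would delete the irrelevant part of the graph. Call a node of $T$ \emph{useful} if it lies on the $B(s_i)$–$B(t_i)$ path for some $i$; anything not meeting a useful node is discarded. The set $U$ of useful nodes is exactly the Steiner tree in $T$ of the at most $2k$ terminal blocks, hence has at most $2k$ leaves (each a terminal block) and therefore at most $2k-2$ nodes of degree $\ge 3$. Between two consecutive such "important" nodes lies a path of degree-$2$ useful blocks through which \emph{the same} set of routes passes (the active route set changes only at a terminal block or a branch node, and no route starts or ends inside such a path). I would then \emph{contract} each such path into a single clique: if $d\le k$ routes traverse it, Menger's theorem gives that $K_p$ admits $d$ edge-disjoint paths between two fixed vertices iff $p-1\ge d$, and the edge-connectivity between the two endpoint cut vertices inside the original path of cliques is exactly $\min_j(|V(B_j)|-1)\ge d$; so replacing the whole path by one clique on at most $k+1$ vertices joining the two endpoint cut vertices preserves feasibility. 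After this step $G$ has at most $2k+(2k-2)=4k-2$ blocks.

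Next I would bound the size of each surviving block $B$. Every vertex of $B$ is either a cut vertex of the reduced graph, a terminal, or a simplicial vertex private to $B$, and a private simplicial vertex can only act as an internal vertex of some $P_i$ inside $B$. Taking a minimum-size solution and using that $P_i\cap B$ is a near-induced subpath of the clique $B$, one shows that across all routes only $\mathcal{O}(k)$ vertices of $B$ are ever visited; unused private simplicial vertices are deleted, and whenever the surviving private vertices of $B$ outnumber the routes through $B$ one can reroute to free one of them, so the deletions are safe. The main technical obstacle is the tight bookkeeping needed to push this down to the claimed bound of $k$ vertices per block: one must account for the sharing of cut vertices between adjacent blocks and for the fact that a route both of whose endpoints lie in $B$ contributes only the edge $s_it_i$ there. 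Each reduction rule runs in polynomial time, and correctness in both directions follows route-by-route from the forced block-level structure together with the Menger argument inside each modified block. Multiplying the two bounds gives at most $(4k-2)\cdot k=4k^2-2k$ vertices, as claimed.
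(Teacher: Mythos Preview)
Your high-level strategy—bound the number of blocks by $4k-2$ and the size of each block by $k$, then multiply—matches the paper's. The first stage (prune blocks lying off every terminal route, then eliminate chains of degree-$2$ non-terminal blocks) corresponds to the paper's RR\ref{RRB1} and RR\ref{RRB3}; the paper contracts each such block to a single vertex one at a time rather than replacing a whole chain by one clique via Menger, but the effect on the block count is the same.

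The genuine gap is in the second stage. Deleting unused private simplicial vertices cannot force $|V(B)|\le k$: cut vertices of $B$ are not private and cannot be removed this way, and a single block may carry far more than $k$ of them. Concretely, let $B$ be a clique on $2k$ vertices, each of which is a cut vertex to its own pendant block consisting of a single edge to one terminal, with the $k$ pairs $(s_i,t_i)$ distributed among the $2k$ pendants. Every vertex of $B$ is a cut vertex, $B$ has no private simplicial vertices whatsoever, your first stage does nothing (there are no degree-$2$ chains), and your second stage leaves $|V(B)|=2k$. The product bound $(4k-2)\cdot k$ is then unjustified. The ``tight bookkeeping'' you defer does not rescue this, because the obstruction is structural rather than arithmetical: no amount of accounting for shared cut vertices lets you shrink $B$ by deletion alone.

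The paper handles this with a different rule. It first proves (\cref{L:cliqueBetter}) that any \textsc{EDP} instance on a clique with more than $k$ vertices is automatically a Yes-instance, and then (\cref{L:BlockUse}, \cref{C:BlockRule}) that any block whose restricted instance is a Yes-instance can be safely \emph{contracted to a single vertex}. Hence every block with more than $k$ vertices is contracted outright; that contraction—not deletion of unused simplicials—is what enforces $|V(B)|\le k$ uniformly and makes the multiplication go through. In the example above, the paper's rule collapses $B$ to one vertex and leaves a star of $2k$ two-vertex blocks.
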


Whenever a block has more than two cut vertices, decreasing the size of that block below $\mathcal{O}(k)$ becomes trickier. However, if we restrict our block graph to have at most two cut vertices per block---i.e., if we deal with \emph{clique paths}---then this can be done. The key point in designing our linear kernel in clique paths is that, in the reduced instance, for each block $B$, the number of vertices in $B$ is dictated by a linear function of the number of terminal pairs having at least one terminal vertex in $B$. 
So,
we obtain a $2k+1$ vertex kernel for this class.


\begin{restatable}{theorem}{cliquepath}\label{T:CP}
\EDP~on clique paths admits a kernel with at most $2k+1$ vertices. 
\end{restatable}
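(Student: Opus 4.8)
\textbf{Proof proposal for Theorem~\ref{T:CP}.}
The plan is to reuse the block-graph machinery of Theorem~\ref{T:Block} and then exploit the linear structure of a clique path to get the bound down to $2k+1$. Recall that a clique path is a block graph in which the block-cut tree is a path, so the blocks $B_1, B_2, \dots, B_r$ can be linearly ordered, consecutive blocks share exactly one cut vertex, and every cut vertex lies in exactly two blocks. First I would run the same preprocessing as in the block-graph kernel: delete vertices that cannot lie on any $s_i$--$t_i$ path, contract or discard blocks that carry no terminals, and handle trivially satisfiable or trivially infeasible pairs. Since in a clique path the $s_i$--$t_i$ route is forced to pass through a contiguous run of blocks $B_{a_i}, B_{a_i+1}, \dots, B_{b_i}$ (the blocks between the one containing $s_i$ and the one containing $t_i$), each pair behaves like a "demand" on an interval of the block-path, and the number of edge-disjoint $s_i$--$t_i$ paths available inside a single clique $B_j$ is essentially governed by its size.

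The core combinatorial step is to show that, after reduction, each block $B$ needs only as many vertices as dictated by a linear function of $t(B)$, the number of terminal pairs with at least one terminal in $B$ (as the excerpt states). For a block $B$ that is an internal block on the path, the relevant quantity is how many solution paths must cross $B$ using its two cut vertices as gateways; for such crossing paths a clique on a constant number of extra vertices suffices to realize any feasible multiset of edge-disjoint transits, because inside a clique one can always reroute. Concretely I would argue: if $B$ has more than $c + t(B)$ vertices for the appropriate small constant $c$, then there is a non-terminal, non-cut vertex whose removal (or whose bypassing) preserves equivalence — either it is unused by some minimum solution, or one can shortcut a path through it. Summing $2k+1 \ge \sum_B (\text{budget for } B)$ then follows from $\sum_B t(B) \le 2k$ (each of the $k$ pairs contributes at most $2$, once for each of its two endpoints) together with the fact that consecutive blocks share a vertex, so a naive sum overcounts cut vertices and the shared-vertex accounting is exactly what collapses the constants into the single $+1$.

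After bounding each block, I would bound the number of blocks: a block with no terminal and serving no crossing demand can be contracted (its two cut vertices identified), and a block that only serves crossing demands but whose two cut vertices are already adjacent in a neighboring block can likewise be absorbed; what remains is a path of blocks each of which either contains a terminal or is an essential "junction", and a counting argument against the $k$ pairs caps their number. Assembling the per-block vertex budgets along this shortened path, and being careful that adjacent blocks share one vertex so vertices are not double-counted, yields the total of at most $2k+1$ vertices. Equivalence of the reduced instance is proved in both directions: any solution of the original restricts to one of the reduced instance since we only removed irrelevant vertices, and any solution of the reduced instance lifts back because the deleted vertices were either bypassable or replaceable inside their clique.

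The main obstacle I anticipate is the rerouting/interchange argument inside a single clique block when several pairs simultaneously demand transit through the same two cut vertices: one must show that an edge-disjoint realization using few internal vertices exists whenever one using many does, which amounts to an Eulerian-type or flow-decomposition argument on the clique (closely related to the $\mathsf{NP}$-hardness of \EDP{} on cliques from Theorem~\ref{npc}, so the claim can only be about minimum-size solutions, not about deciding feasibility). Getting the constants tight enough to land on exactly $2k+1$ rather than $\mathcal{O}(k)$ will require carefully pinning down, for each block, whether its cut vertices themselves can serve as the only "extra" capacity, and handling the two end-blocks of the path (which have a single cut vertex) separately from the internal blocks.
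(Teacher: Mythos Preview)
Your high-level plan matches the paper's: reuse the block-graph contraction machinery, bound each block's size by a function of the terminals it hosts, then sum along the path using the shared-cut-vertex correction. However, the proposal has real gaps at the technical core.

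First, the per-block bound is not simply $c + t(B)$ for a single count $t(B)$. The paper classifies the pairs in the restricted instance $(B,\mathcal{X}_B)$ into four types depending on which endpoints are cut vertices: Type-A (both cut vertices, i.e.\ pure transit), Type-B and Type-C (one cut vertex, one interior), and Type-D (both interior). The crucial point you are missing is that Type-A pairs contribute \emph{zero} to the size threshold: if $|V(B)| \le \max\{a+b,\,a+c\}$ you report No (a degree obstruction), and otherwise the contraction threshold is $|V(B)| > d + 2 + \max\{b+c-1,0\}$, which is independent of $a$. Your $t(B)$, as defined, would count transit pairs and make the sum blow up. The safeness proof of this threshold (the paper's Lemma~\ref{L:CliquePath}) is a direct two-case construction of a solution to $(B,\mathcal{X}_B)$ using Lemma~\ref{L:cliqueBetter} and Corollary~\ref{prop2}, not a rerouting/exchange argument on minimum solutions; your anticipated ``Eulerian-type'' obstacle is a red herring.

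Second, the mechanism is block contraction via the already-proved Lemma~\ref{L:BlockUse}, not single-vertex deletion. Your ``remove one irrelevant vertex'' step would require exactly the rerouting lemma you flag as hard, whereas contraction reduces the question to: is $(B,\mathcal{X}_B)$ a Yes-instance? --- which is decided by the explicit threshold above.

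Third, the final count to $2k+1$ is not just ``shared cut vertices collapse the $+1$''. The paper splits blocks into $\mathcal{D}$-blocks ($b+c=0$, bound $d+2$) and $\mathcal{C}$-blocks ($b+c>0$, bound $b+c+d+1$), sums, subtracts $t-1$ cut vertices, and then uses $\sum_B B^d \le k_1$, $\sum_B (B^b+B^c) \le 2(k-k_1)$, and $|\mathcal{D}| \le k_1$ (where $k_1$ counts pairs with both endpoints in one block) to get exactly $2k+1$. Without the $\mathcal{C}/\mathcal{D}$ distinction and the $k_1$ bookkeeping, your sketch only yields $\mathcal{O}(k)$, as you yourself note.
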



Now, we switch our attention to kernelization algorithms for \textsc{VDP}.
First, 
we give a $4k$ vertex kernel for \textsc{VDP} on split graphs. This resolves an open question by Heggernes et al.~\cite{hegger}, who asked whether this problem admits a linear vertex kernel.
For this purpose, we use the result by Yang et al.~\cite{yang}, who gave a $4k$ vertex kernel for a restricted variant of \textsc{VDP}, 
called \textsc{VDP-Unique} by us, 
where each vertex can participate in at most one terminal~pair. 
			
In order to obtain a linear vertex kernel for \textsc{VDP}, we give a parameter-preserving reduction to \textsc{VDP-Unique}.
Our reduction relies on a non-trivial matching-based argument.

In this way, we improve upon the $4k^{2}$ vertex kernel given by Heggernes et al.~\cite{hegger} as well as generalize the result given by Yang et al.~\cite{yang}. Specifically, we have the following theorem.

 \begin{restatable}{theorem}{splitvdp} \label{labelsplitvdp}
\textsc{VDP} on split graphs admits a kernel with at most $4k$ vertices. 
\end{restatable}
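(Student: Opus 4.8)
\textbf{Proof plan for Theorem~\ref{labelsplitvdp}.}
The plan is to reduce \textsc{VDP} on split graphs to \textsc{VDP-Unique} on split graphs and then invoke the $4k$ vertex kernel of Yang et al.~\cite{yang}. Let $(G, \mathcal{X})$ be an instance of \textsc{VDP} where $G$ is split with clique part $C$ and independent set part $I$. The only obstruction to uniqueness is a vertex $v$ that occurs in several terminal pairs; to make occurrences unique we would like to introduce, for each occurrence of $v$ as a terminal, a fresh private vertex adjacent to (a suitable subset of) $N_G(v)$, and redistribute the terminal pairs among these copies. Since adding vertices to $I$ keeps the graph split and adding vertices to $C$ also keeps it split, this is structurally safe; the real content is to show that such a redistribution can be carried out so that a solution to the new instance corresponds to a solution to the old one and vice versa.

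First I would handle the easy direction: any solution to the original \textsc{VDP} instance uses, for a terminal vertex $v$ occurring $d$ times, at most $d$ edges at $v$ among the internally-disjoint paths (one per path having $v$ as an endpoint), so we can reroute each such path to instead end at a distinct private copy of $v$ provided each copy sees the neighbor that the corresponding path used. The subtlety is that we must commit, \emph{before} seeing the solution, to which neighbors each private copy is attached, so we need the copies to collectively offer enough ``access'' to $N_G(v)$ to accommodate every possible way a solution could leave $v$. This is where the matching-based argument enters: I would set up, for each heavily-used terminal vertex $v$, a bipartite graph between its occurrences and its neighborhood (or between occurrences and ``slots'' reflecting how paths can pass), and use Hall's theorem / an augmenting-path argument to guarantee a consistent assignment of neighbors to copies that works for all solutions simultaneously. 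For split graphs this is tractable because the clique structure severely constrains what a path looks like: internally it can use at most one or two clique vertices in the relevant regime, so the ``ways a path can use $v$'' are few and can be encoded cleanly.

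Then I would prove the reverse direction: given a solution to the \textsc{VDP-Unique} instance, contract the private copies back to $v$ and argue the resulting walks can be shortcut to internally-disjoint paths in $G$ with the original terminal pairs — the disjointness is inherited because distinct copies were distinct vertices, and at $v$ itself at most one path per original occurrence touches it. After establishing the equivalence, I would bound the parameter: the new instance has the same number $k$ of terminal-pair occurrences (we only split occurrences of a vertex across copies, never creating new pairs), so applying Yang et al.'s $4k$ vertex kernel to \textsc{VDP-Unique} and then noting that the kernelized instance is still a split-graph \textsc{VDP} instance (identify/merge copies as needed, or simply keep them — uniqueness is only needed as an intermediate property) yields a kernel with at most $4k$ vertices for the original problem.

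The main obstacle I expect is the ``commit in advance'' aspect of the forward direction: designing the private copies and their adjacencies so that \emph{every} potential solution of the original instance can be simulated, not just one fixed solution. Getting the bipartite model right — in particular, correctly accounting for terminal pairs of the form $(v, v')$ where both endpoints are heavily reused, for pairs with both terminals in the clique versus in the independent set, and for the possibility that the connecting path is a single edge — is the delicate part, and it is precisely where the non-trivial matching argument referenced in the introduction is needed. A secondary technical point is ensuring the construction does not blow up the graph in a way that breaks the split structure or changes $k$; keeping copies of $I$-vertices in $I$ and copies of $C$-vertices in $C$ should resolve this, but the adjacency between a new $C$-copy and the rest of $C$ must be chosen consistently.
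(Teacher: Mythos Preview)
Your overall plan---reduce \textsc{VDP} on split graphs to \textsc{VDP-Unique} by splitting each multiply-occurring terminal into private copies, then apply Yang et al.---is exactly the paper's strategy, and you correctly identify that heavy terminal pairs (those with $st\in E(G)$ appearing more than once) are the only real obstacle to naive copying. However, the place where you deploy the matching argument is not where the paper uses it, and your version of the argument is left too vague to be convincing.

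You propose a per-vertex bipartite model between occurrences of a terminal $v$ and its neighbours, in order to decide \emph{which subset} of $N_G(v)$ each private copy should see. The paper never restricts the neighbourhoods of copies: in the final copying step every copy of $v$ receives the full neighbourhood $N_G(v)$ (and all copies of a clique vertex are made pairwise adjacent). The matching is used earlier, in a global preprocessing step (\textsc{Clean-Up}): one builds a bipartite graph between occurrences of \emph{Type-II heavy pairs} (both terminals in the clique) and the non-terminal independent vertices $I_N$, takes a maximum matching, deletes the matched heavy occurrences together with \emph{all} of $I_N$, and proves via an augmenting-path argument that this is safe. After \textsc{Clean-Up} there are no non-terminal $I$-vertices left, so any remaining heavy Type-II pair of weight $w$ must route $w-1$ of its paths through clique non-terminals; this is what legitimises replacing those $w-1$ occurrences by fresh pairs $(s^i,t^i)$ placed in $I$ and adjacent to all of $C$ but \emph{not} to each other---which is precisely how the paper avoids the ``illusory single-edge path'' problem you flag. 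Type-I heavy pairs are handled separately by simply deleting the edge $st$ and one occurrence. Only after all heavy pairs are dissolved does the straightforward full-neighbourhood copying go through.

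So the gap in your proposal is not conceptual but structural: the matching is not about committing to partial neighbourhoods for copies, and trying to make that version work would force you to reason about all possible solution shapes simultaneously, which is exactly what the paper sidesteps by first stripping out $I_N$ via the matching and only then copying.
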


Next, we give an $\mathcal{O}(k^{2})$ vertex kernel for \textsc{VDP} on well-partitioned chordal graphs (see Definition \ref{defwpc}).  Ahn et al.~\cite{ahn1} showed that \textsc{VDP} admits an $\mathcal{O}(k^{3})$ vertex kernel on this class. 
 We improve their bound by giving a marking procedure
 that marks a set of at most $\mathcal{O}(k^{2})$ vertices in $G$,
 which ``covers'' some solution (if it exists). As a result, we arrive at the following theorem.
 
 \begin{restatable}{theorem}{wpcvdp} \label{labelwpcvdp}
\textsc{VDP} on well-partitioned chordal graphs admits a kernel with $\mathcal{O}(k^{2})$ vertices. 
\end{restatable}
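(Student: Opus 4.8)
\textbf{Proof proposal for Theorem~\ref{labelwpcvdp}.}

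The plan is to follow the standard kernelization template for disjoint-paths problems: identify a small ``core'' of vertices that must suffice for routing any solution, discard everything else, and argue equivalence. Recall that a well-partitioned chordal graph comes with a tree-like \emph{partition tree} whose bags are cliques, and whose adjacency between bags is governed by a single ``boundary'' structure (so each edge of the partition tree corresponds to a biclique between the two incident bags). First I would reduce the number of terminal pairs to at most~$k$ trivially, and then handle the bags that contain many vertices. The key observation is that inside a clique bag, any solution path either passes straight through (using at most two vertices of the bag) or has an endpoint there; so if a bag is large, most of its vertices are irrelevant. The nontrivial part is that paths may need to ``turn around'' within the partition tree, and a vertex in one bag can serve as a waypoint for a path whose terminals lie elsewhere, so one cannot simply keep the terminal-containing bags.

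The core of the argument should be a \emph{marking procedure}. First, mark all (at most $2k$) terminal vertices. Second, for each bag $B$ of the partition tree, and for each of the $\mathcal{O}(k)$ ``directions'' in which a path might enter or leave $B$ (these directions correspond to the adjacent bags in the partition tree that can actually be used by some path, which is bounded in terms of $k$ since only bags on tree-paths between terminal-containing bags matter), mark a bounded number of vertices of $B$ that can act as entry/exit portals — concretely $\mathcal{O}(k)$ vertices per relevant bag suffices, because a minimum solution uses each bag as a transit bag for at most $k$ paths and each such transit uses at most two vertices of $B$. Summing over the $\mathcal{O}(k)$ relevant bags gives $\mathcal{O}(k^2)$ marked vertices total. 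I would then define $G'$ to be the subgraph induced by the marked set (adding, if necessary, the bounded number of ``skeleton'' vertices needed to keep the partition tree connected, still $\mathcal{O}(k^2)$ in total), keeping the same terminal multiset and the same $k$.

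For correctness I must show $(G,\mathcal{X})$ is a yes-instance iff $(G',\mathcal{X})$ is. The backward direction is immediate since $G'$ is an induced subgraph. For the forward direction I would take a minimum-size solution $\mathcal{P}$ in $G$ and rewrite each path so that, within every bag, it uses only marked portal vertices: whenever a path enters bag $B$ through one portal and leaves through another, it can be rerouted to go directly between two \emph{marked} portals of $B$ using the clique edges of $B$, and whenever a path visits a bag it does not need, it can be short-cut out. A counting/exchange argument (analogous to the one used for split graphs, and to the marking argument of Ahn et al.~\cite{ahn1}, but with the sharper per-bag bound) shows that the portals marked above are enough to accommodate all $k$ paths simultaneously without collisions — here I would invoke a Hall-type / flow argument to assign the marked portals to the paths. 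The main obstacle I anticipate is exactly this simultaneous-rerouting step: ensuring that the $\mathcal{O}(k)$ portals marked in a heavily-used bag can be distributed among all paths passing through it without two paths being forced onto the same internal vertex. Getting the per-bag portal budget down to $\mathcal{O}(k)$ (rather than $\mathcal{O}(k^2)$, which would only give $\mathcal{O}(k^3)$ as in~\cite{ahn1}) is where the improvement lies, and it requires arguing that in a minimum-size solution the bag-usage pattern is ``laminar enough'' that a linear number of portals can always be matched to the paths needing them.
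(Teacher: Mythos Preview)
Your high-level template is right, and you correctly identify that the improvement over Ahn et al.\ comes from pushing the per-bag portal budget down to $\mathcal{O}(k)$. But there is a genuine gap: you assert ``Summing over the $\mathcal{O}(k)$ relevant bags'' without ever justifying why there are only $\mathcal{O}(k)$ relevant bags, and in fact this is false as stated. The union of the valid paths in the partition tree is a subtree with at most $2k$ leaves; by the usual degree argument it has at most $2k$ vertices of degree $\ge 3$, but the number of degree-$2$ bags along these paths is completely unbounded. Your ``skeleton'' parenthetical does not help: those long chains of degree-$2$ transit bags \emph{are} the skeleton, and each one needs its own marked portals.

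The paper closes this gap with two additional reduction rules (RR\ref{RR5} and RR\ref{RR6}, both already present in~\cite{ahn1}) that handle a degree-$2$ bag $B$ in the well-partitioned forest whenever $B$ contains no terminal and no vertex marked for a heavy pair: either one of its two boundaries is too small to carry the required number of paths (report No), or $B$ can be deleted and its two neighbouring boundaries wired together by a complete bipartite join. After exhaustively applying these rules, every surviving degree-$2$ bag either contains a terminal or is adjacent to a bag hosting a heavy pair, and a short counting argument (Lemma~\ref{ok:lemma}) then shows there are only $\mathcal{O}(k)$ of them. This step is what turns ``$\mathcal{O}(k)$ marks per bag'' into an $\mathcal{O}(k^2)$ kernel; your rerouting/Hall-type argument for the per-bag budget is reasonable in spirit (the paper does it by a direct exchange in Lemmas~\ref{lm2} and~\ref{lmheavy}, distinguishing non-heavy from heavy pairs), but without the degree-$2$ contraction you cannot bound the total.
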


Unlike \textsc{EDP}, the \textsc{VDP} problem turns out easier on the remaining graph classes. In block graphs,
for every terminal pair with terminals in different blocks, there is a unique induced path connecting these terminals (all internal vertices of this path are the cut vertices).
After adding these paths to the solution, we end up with a union of clique instances, where \textsc{VDP} is solvable in polynomial time.
This leads to the following observation about block graphs and its subclass, clique paths.

\begin{restatable}{observation}{blockvdp}\label{blockP}
\textsc{VDP} on block graphs (and, in particular, on clique paths) is solvable in polynomial time.
\end{restatable}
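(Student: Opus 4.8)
The plan is to reduce \textsc{VDP} on block graphs to \textsc{VDP} on a disjoint union of cliques, for which the problem is polynomial-time solvable. First I would recall the structure of a block graph: its blocks (maximal 2-connected components, each of which is a clique) and cut vertices form a tree, the \emph{block-cut tree}. For any terminal pair $(s_i,t_i)$ with $s_i$ and $t_i$ lying in distinct blocks, there is a unique path in the block-cut tree between the blocks containing them, and correspondingly a unique induced $s_i$--$t_i$ path in $G$; every internal vertex of this path is a cut vertex. The key observation is that any solution path for such a pair must use exactly this induced path, because in a block graph the only way to travel between two different blocks is through the sequence of cut vertices on the block-cut-tree path (a path leaving a block can only re-enter the rest of the graph through a cut vertex, and from each block on the way there is a unique cut vertex leading toward the target).

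The algorithm then proceeds as follows. For every terminal pair whose terminals lie in different blocks, add its forced induced path to the (partial) solution. If two such forced paths share an internal vertex, or a forced path passes through a terminal of another pair as an internal vertex, report that no solution exists. Otherwise, delete all internal vertices of these forced paths from $G$ (they are now used up), obtaining a graph $G'$ together with the remaining terminal pairs --- all of which now have both terminals in the same block of $G'$. After this deletion each block is handled independently: within a single clique, \textsc{VDP} reduces to checking, for the multiset of same-block pairs assigned to that clique, whether we can route them internally-vertex-disjointly inside the clique, which is a simple combinatorial condition (essentially a question about whether the number of pairs and available ``relay'' vertices suffice, e.g.\ via a short matching/flow argument, or by invoking the known polynomial-time algorithm for \textsc{VDP} on cliques). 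Since the blocks are vertex-disjoint apart from cut vertices, and the cut vertices incident to remaining pairs are never interior to any path, the sub-instances are genuinely independent, so the overall instance is a yes-instance iff every clique sub-instance is.

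The main obstacle --- though a mild one --- is the bookkeeping around cut vertices that are themselves terminals or that are shared endpoints of several forced paths: one must argue carefully that a cut vertex used as an \emph{endpoint} of one path may still legitimately serve as an endpoint of another pair, whereas using it as an \emph{internal} vertex twice (or once internally and once as a needed endpoint) is forbidden, and that these local consistency checks are exactly captured by the deletion step plus the per-clique routing test. I would also make explicit why the forced paths are genuinely forced: this uses 2-connectivity of blocks only trivially, the real point being that the block-cut tree is a tree, so any $s_i$--$t_i$ walk must traverse the unique tree-path of blocks, and within each such block the entry and exit cut vertices are uniquely determined, leaving no freedom. All remaining steps --- computing the block-cut tree, extracting forced paths, checking disjointness, and solving \textsc{VDP} on each clique --- run in polynomial time, which yields the claimed result, and in particular applies to clique paths as a special case.
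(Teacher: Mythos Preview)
Your proposal is correct and follows essentially the same approach as the paper, which only sketches the argument in a brief paragraph: force the unique induced path (through cut vertices) for every cross-block pair, then reduce to per-clique instances. Your write-up is in fact more careful than the paper's sketch, particularly in spelling out the consistency checks around cut vertices that serve as terminals versus internal vertices.
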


Finally, we identify a less restricted graph class on which \textsc{VDP} is polynomial-time solvable, namely the class of threshold graphs.
This yields a sharp separation between split graphs and its subclass---threshold graphs---in terms of \textsc{VDP}.

 \begin{restatable}{theorem}{thresholdPoly}\label{thm:threshold-vdp-poly}
 \textsc{VDP} on thresholds graphs is solvable in polynomial time.
\end{restatable}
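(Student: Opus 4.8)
## Proof Proposal for Theorem~\ref{thm:threshold-vdp-poly}

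The plan is to exploit the nested neighborhood structure of threshold graphs together with the fact (already established in Observation~\ref{blockP} and the surrounding discussion for split graphs) that \textsc{VDP} is essentially about routing terminal pairs either through a single common vertex or along short paths. Recall that a threshold graph admits an ordering $v_1, \dots, v_n$ of its vertices such that each $v_i$ is either \emph{isolated} or \emph{dominating} in $G[\{v_1, \dots, v_i\}]$; equivalently, the neighborhoods of the non-isolated vertices, restricted appropriately, form a chain under inclusion, and a threshold graph is a split graph with clique $C$ and independent set $I$ where the vertices of $I$ can be linearly ordered $u_1, \dots, u_p$ with $N(u_1) \subseteq N(u_2) \subseteq \dots \subseteq N(u_p) \subseteq C$. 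First I would pin down this structural description and classify the terminal pairs $(s_i, t_i)$: a pair is \emph{trivial} if $s_i t_i \in E(G)$ (it can be routed by the single edge, using no internal vertices, hence never conflicts with anything), and otherwise both endpoints must lie in $I$ and are non-adjacent, so any path connecting them has length at least $2$ and its internal vertices lie in $C$ (a length-$2$ path) or alternate between $C$ and $I$.

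The key observation I would then prove is a \emph{short-path / greedy} lemma: if a \textsc{VDP} instance on a threshold graph is a yes-instance, then there is a solution in which every non-trivial pair is routed by a path of length exactly $2$, i.e.\ via a single internal vertex of $C$. The intuition is that in a split/threshold graph a longer path $s_i - c - u - c' - \dots - t_i$ through an independent-set vertex $u$ and two clique vertices $c, c'$ can always be "short-circuited": since $C$ is a clique, $c$ and $c'$ are adjacent, so we can drop $u$ and the detour, using one fewer internal vertex and only vertices of $C$; iterating, every path reduces to length at most $2$, and length $1$ is impossible for a non-trivial pair. (One must check the endpoints: for a non-trivial pair both $s_i, t_i \in I$, and the chain property guarantees a common neighbor exists precisely when $N(s_i) \cap N(t_i) \neq \emptyset$, i.e.\ when $\min$ of the two is a subset; this is where threshold-ness, beyond mere split-ness, makes the feasibility condition clean.) After this reduction the problem becomes: the non-trivial pairs induce a set of requests, each needing one private vertex from the intersection of two prescribed neighborhoods, and we must simultaneously and injectively assign such a vertex of $C$ to each request — an instance of \emph{bipartite matching}. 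Moreover, because of the inclusion chain $N(u_1) \subseteq \dots \subseteq N(u_p)$, the admissible vertex set for a request $\{u_a, u_b\}$ with $a < b$ is exactly $N(u_a) \cap C$, a \emph{nested} family of sets, so the matching feasibility can even be checked greedily by processing requests in order of the size of their admissible set (a Hall-type / interval-scheduling argument), but invoking a polynomial-time bipartite matching algorithm already suffices.

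Putting it together: the algorithm (i) computes the threshold ordering in linear time and identifies $C$, $I$, and the chain; (ii) discards trivial pairs; (iii) for each non-trivial pair $\{s_i, t_i\}$ with $s_i, t_i \in I$, forms the candidate set $A_i = N(s_i) \cap N(t_i) \cap C$ (which is just the smaller of the two neighborhoods, by the chain property) and rejects immediately if some $A_i = \emptyset$; (iv) builds the bipartite graph between requests and $C$ with edge set $\{(i, c) : c \in A_i\}$ and tests for a matching saturating all requests; (v) returns \textsc{yes} iff such a matching exists, reconstructing the paths $s_i - c - t_i$ from the matching. Correctness of the "only if" direction is immediate; the "if" direction is exactly the short-path lemma. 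The whole procedure runs in polynomial time.

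I expect the main obstacle to be a clean and fully rigorous proof of the short-path lemma — in particular handling the subtlety that an optimal-looking solution might route several pairs through overlapping long paths in a way that is not obviously locally shortcuttable without creating a new collision, and confirming that the "short-circuit through the clique" operation can be performed simultaneously for all pairs while preserving internal-vertex-disjointness. The resolution is to argue by a potential-function / exchange argument on the total number of internal vertices used (or the total path length): take a solution minimizing $\sum_i (|V(P_i)| - 2)$, and show that if any $P_i$ has an internal independent-set vertex $u$, replacing the sub-path around $u$ by the single clique edge strictly decreases the potential without touching any other $P_j$ (the removed vertices were private to $P_i$ by disjointness, and the clique edge uses no new vertices), contradicting minimality. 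A secondary, more mechanical point is verifying that the feasibility condition "$N(s_i) \cap N(t_i) \cap C \neq \emptyset$ for all $i$, together with a system of distinct representatives existing" is both necessary and sufficient, and that the chain structure of threshold graphs does not introduce additional constraints (e.g.\ from pairs sharing an endpoint in the multiset $\mathcal{X}$), which is handled by the matching formulation automatically.
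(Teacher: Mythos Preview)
Your overall strategy---prove a short-path lemma and then cast the routing as a bipartite matching between length-two requests and candidate internal vertices---is sound in spirit and, once corrected, yields a polynomial-time algorithm. It is, however, a genuinely different route from the paper's proof. The paper first applies the \textsc{Clean-Up} operation from Section~\ref{linearvdpsplit} (so that every vertex of $I$ is a terminal) and then performs a greedy peeling: it repeatedly picks the vertex $v\in I$ of minimum neighbourhood, constructs paths for all pairs containing $v$ explicitly, proves by an exchange argument (Lemma~\ref{lem:threshold-VDP-greedy}) that this greedy choice can always be extended to a full solution, and removes $v$ together with the spent intermediate vertices; once $I$ is exhausted the residual clique instance is trivial. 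Your matching formulation is arguably more uniform, while the paper's peeling argument avoids any appeal to bipartite matching.

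That said, as written your case analysis has real gaps that would make the algorithm return wrong answers. First, you declare every occurrence with $s_it_i\in E(G)$ ``trivial'' and claim it ``never conflicts with anything''; this is false for heavy pairs (Definition~\ref{def:heavy}): if $(s,t)$ occurs $w\ge 2$ times in $\mathcal{X}$ with $st\in E(G)$, the solution paths must be pairwise \emph{distinct}, so only one occurrence may use the edge $st$ and the remaining $w-1$ must enter the matching as length-two requests. Second, your claim that a non-trivial pair must have both endpoints in $I$ is false: one can have $s\in C$, $t\in I$, $st\notin E(G)$ in a threshold graph, and, more importantly, for a heavy Type-II pair (both endpoints in $C$) the private intermediate vertex may lie in $I$ rather than $C$, so restricting the right-hand side of the matching to $C$ loses solutions (this is exactly what the paper's \textsc{Clean-Up} handles by first spending $I_N$ vertices on such pairs). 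Third, your shortcut argument only drops internal $I$-vertices; it does not explain how to reduce a path $s\,c_1\,c_2\,t$ with $s,t\in I$ to length two. The clean fix is to invoke Proposition~\ref{prop1} together with the observation (stated in the proof of Lemma~\ref{lem:threshold-VDP-greedy}) that an induced path in a threshold graph has length at most two, which is precisely where the nested-neighbourhood property is used. With these three corrections your matching reduction goes through and gives an alternative proof.
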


\subsection{Brief Survey of Related Works}\label{sec:LS}

Both \VDP~and~\EDP~are known to be $\mathsf{NP}$-complete for planar graphs~\cite{kramer}, line graphs~\cite{hegger}, and split graphs~\cite{hegger}. Moreover, \VDP~is known to be $\mathsf{NP}$-complete on interval graphs~\cite{VDPIntervalHard} and grids~\cite{kramer} as well. Both \VDP~and \EDP~were studied from the viewpoint of structural parameterization. While \VDP~is \FPT~parameterized by treewidth~\cite{VDPTreewidth}, \EDP~remains $\mathsf{NP}$-complete even on graphs with treewidth at most 2~\cite{EDPHardTreewidth}. Gurski and Wange~\cite{DPCliqueWidth} showed that \VDP~is solvable in linear time on co-graphs but becomes $\mathsf{NP}$-complete for graphs with clique-width at most 6. As noted by Heggernes et al.~\cite{hegger}, there is a reduction from \VDP~on general graphs to \EDP~on line graphs, and from \EDP~on general graphs to \VDP~on line graphs. Using this reduction and the fact that a graph with treewidth $\ell$ has clique-width at most $2\ell +2$~\cite{DPCliqueWidth}, \EDP~can also be shown to be $\mathsf{NP}$-complete on graphs with clique-width at most 6~\cite{hegger}. 

The Graph Minors theory of Robertson and Seymour provides some of the most important algorithmic results of the modern graph theory. Unfortunately, these algorithms, along with the $\mathcal{O}(n^3)$ and $\mathcal{O}(n^2)$ algorithms for \VDP~and \EDP (when $k$ is fixed),~respectively~\cite{DPMinor,fastestVDP}, hide such big constants that they have earned a name for themselves: ``\textit{galactic algorithms}''. Since then, finding efficient \FPT~algorithms for \VDP~and \EDP~ has been a tantalizing question for researchers. Several improvements have been made for the class of planar graphs~\cite{AdlerDP,VDPSTOCPlanar,reed1995rooted,robertson2012graph}, chordal graphs~\cite{DPChordal}, and bounded-genus graphs~\cite{reed1995rooted,dvovrak2009coloring,kobayashi2009algorithms}. 

Concerning kernelization (in addition to the works surveyed earlier in the introduction), we note that Ganian and Ordyniak~\cite{cutEDP} proved that \EDP~admits a linear kernel parameterized by the feedback edge set number. Recently, Golovach et al.~\cite{SetRestricted} proved that \textsc{Set-Restricted Disjoint Paths}, a variant of \VDP~where each terminal pair has to find its path from a predefined set of vertices, does not admit a polynomial compression on interval graphs unless \NPoly.

The optimization variants of \VDP~and~\EDP---\textsc{MaxVDP} and \textsc{MaxEDP}--- are well-studied in the realm of approximation algorithms~\cite{DPApproxHard,DPApproxTW,DPApprox1,DPApprox2,DPApprox3}. Chekuri et al.~\cite{DPApproxHard} gave an $\mathcal{O}(\sqrt{n})$-approximation algorithm for \textsc{MaxEDP} on general graphs, matching the $\Omega(\sqrt{n})$ lower bound provided by Garg et al.~\cite{DPApprox2}.  Recently, \textsc{MaxVDP} has gained much attention on planar graphs~\cite{planarApprox1,planarApprox2,planarApprox4,planarApprox3,planarApprox5}. Highlights of this line of works (\textsc{MaxVDP} on planar graphs) include an approximation algorithm with approximation ratio $\mathcal{O}(n^{\frac{9}{19}}\log^{\mathcal{O}(1)}n)$~\cite{planarApprox2}, and hardness of approximating within a factor of $n^{\Omega( {1}/{(\log \log n)^2} )}$~\cite{planarApprox4}. 

\subsection{Organization of the paper}

We begin with formal preliminaries, where we gather information about the studied graph classes and the basic algorithmic tools. 
In \cref{sec:edp}, we prove the kernelization theorems for \textsc{EDP}, which are followed by the \textsf{NP}-hardness proof for \textsc{EDP} on cliques in \cref{npc:complete}.
Next, we cover the kernelization results for \textsc{VDP} in \cref{sec:vdp} and present a polynomial-time algorithm for threshold graphs in \cref{sec:threshold-poly}.
We conclude in \cref{sec:conclusion}.

\section{Preliminaries}
For a positive integer $\ell$, let $[\ell]$ denote the set $\{1, \ldots, \ell \}$.
\subsection{Graph Notations} \label{sec:prelim}
All graphs considered in this paper are simple, undirected, and connected unless stated otherwise. 	Standard graph-theoretic terms not explicitly defined here can be found in~\cite{diestel}. For a graph $G$, let $V(G)$ denote its vertex set, and $E(G)$ denote its edge set. For a graph $G$, the subgraph of $G$ induced by $S\subseteq V(G)$ is denoted by $G[S]$, where $G[S]=(S,E_{S})$ and $E_{S}=\{xy\in E(G)\mid x,y \in S\}.$  For two sets $X,Y \subseteq V(G)$, we denote by $G[X,Y]$ the subgraph of $G$ with vertex set $X\cup Y$ and edge set $\{xy\in E(G) : x \in X, y \in Y \}$. The \emph{open neighborhood} of a vertex $v$ in $G$ is $N_{G}(v)=\{u\in V(G): uv\in E(G)\}$. The \emph{degree} of a vertex $v$ is $|N_{G}(v)|$, and it is denoted by $d_{G}(v)$. When there is no ambiguity, we do not use the subscript $G$ in $N_{G}(v)$ and $d_{G}(v)$. A vertex $v$ with $d(v)=1$ is a \emph{pendant vertex}. The \emph{distance} between two vertices in a graph $G$ is the number of edges in the shortest path between them. We use the notation $d(u,v)$ to represent the distance between two vertices $u$ and $v$ in a graph $G$ (when $G$ is clear from the context). For a graph $G$ and a set $X\subseteq V(G)$, we use $G-X$ to denote $G[V(G)\setminus X]$, that is, the graph obtained from $G$ by deleting $X$. In a graph $G$, two vertices $u$ and $v$ are \emph{twins} if $N_{G}[u] = N_{G}[v]$.

An \emph{independent set} of a graph $G$ is a subset of $V(G)$ such that no two vertices in the subset have an edge between them in $G$. A \emph{clique} is a subset of $V(G)$ such that every two distinct vertices in the subset are adjacent in $G$. Given a graph $G$, a \emph{matching} $M$ is a subset of edges of $G$ that do not share an endpoint. The edges in $M$ are called \emph{matched edges}, and the remaining edges are called \emph{unmatched edges}. Given a matching $M$, a vertex
$v\in V(G)$ is \textit{saturated} by $M$ if $v$ is incident on an edge of $M$, that is, $v$ is an end
vertex of some edge of $M$. Given a graph $G$, \textsc{Max Matching} is to find a matching of maximum cardinality in $G$. 
\begin{proposition} [\cite{hopcroft}] \label{propmatching}
For a bipartite graph $G$, \textsc{Max Matching} can be solved in $\mathcal{O}(\sqrt{|V(G)|}\cdot|E(G)|)$ time. 
\end{proposition}

A path $P = (v_1,\ldots,v_n)$ is an \emph{$M$-alternating path} if the edges in $P$ are matched and unmatched alternatively with respect to $M$. If both the end vertices of an alternating path are unsaturated, then it is an \emph{$M$-augmenting path}. 

\begin{proposition} [\cite{berge}] \label{maxprop}
A matching $M$ is maximum if and only if there
is no $M$-augmenting path in $G$.
\end{proposition}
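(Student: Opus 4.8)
The plan is to prove both directions of the equivalence via their contrapositives, with the symmetric difference of two matchings serving as the central tool. For the easy direction, I would show that if an $M$-augmenting path exists, then $M$ is not maximum. Given an $M$-augmenting path $P = (v_1, \ldots, v_n)$, both endpoints $v_1$ and $v_n$ are unsaturated by definition, and the edges alternate unmatched/matched/unmatched along $P$, so $P$ contains exactly one more unmatched edge than matched edge. Taking the symmetric difference $M' = M \triangle E(P)$ flips the roles of these edges: every previously matched edge of $P$ leaves the matching and every previously unmatched edge enters it. I would verify that $M'$ is indeed a matching (the only vertices whose incidences change are the internal vertices of $P$, which swap one matched edge for another, and the endpoints, which gain a matched edge without conflict since they were unsaturated), and that $|M'| = |M| + 1$. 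Hence $M$ is not maximum.

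For the harder direction, I would assume $M$ is not maximum and exhibit an $M$-augmenting path. Let $M'$ be a matching with $|M'| > |M|$, and consider the subgraph $H$ on $V(G)$ with edge set $M \triangle M'$. The key structural observation is that every vertex has degree at most $2$ in $H$, since it is incident to at most one edge of $M$ and at most one edge of $M'$. Consequently, each connected component of $H$ is either an isolated vertex, a simple path, or a cycle, and along any such path or cycle the edges must alternate between $M$ and $M'$ (two consecutive edges of the same matching would share an endpoint and violate the matching property).

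The counting step then finishes the argument. Every cycle in $H$ has even length and contains equally many edges of $M$ and $M'$, and likewise every even-length path contributes equally; so a component can contain strictly more $M'$-edges than $M$-edges only if it is a path of odd length whose two terminal edges both belong to $M'$. Since $|M'| > |M|$ globally, at least one such component $P$ must exist. I would then check that $P$ is precisely an $M$-augmenting path: its edges alternate between $M'$ and $M$ starting and ending with $M'$, so with respect to $M$ it alternates unmatched/matched beginning and ending with unmatched edges, and its two endpoints are unsaturated by $M$ (each endpoint is incident in $H$ only to an $M'$-edge, and if it were also saturated by $M$ that $M$-edge would lie in $M \cap M'$ and hence outside $H$, yet would then attach to the endpoint inside the component). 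The main obstacle is this structural decomposition of $M \triangle M'$ together with the parity bookkeeping, which requires care to rule out the degenerate cases and to confirm the unsaturated-endpoint condition precisely; the rest is routine verification.
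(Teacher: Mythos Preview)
Your proof is correct and is precisely the standard argument for Berge's theorem. Note, however, that the paper does not prove this proposition at all: it is quoted from the literature (Berge's original paper) and used as a black box, so there is no ``paper's own proof'' to compare against. Your symmetric-difference decomposition into paths and cycles of maximum degree two, followed by the parity count to locate an odd path with $M'$-edges at both ends, is exactly the classical proof; the only spot that could be tightened is the parenthetical verification that the endpoints are $M$-unsaturated, where the cleanest phrasing is: if an endpoint $v$ of such a path were incident to some $e \in M$, then either $e \notin M'$ forces $e \in M \triangle M'$ and hence $\deg_H(v) \ge 2$, or $e \in M'$ gives $v$ two distinct $M'$-edges, both contradictions.
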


A path $P = (v_1, v_2, \ldots, v_n)$ on $n$ vertices is a \emph{$(v_{1},v_{n})$-path} and $\{v_2, \ldots, v_{n-1}\}$ are the \emph{internal vertices} of $P$. Moreover, for a path $P = (v_1, v_2, \ldots, v_n)$, we say that $P$ \emph{visits} the vertices $\{v_1, v_2, \ldots, v_n\}$. Throughout this paper, let $P_{uv}$ denote the path containing only the edge $uv$. Let $P_1$ be an $(s_1, t_1)$-path and $P_2$ be an $(s_2, t_2)$-path. Then, $P_1$ and $P_2$ are \emph{vertex-disjoint} if $V(P_1) \cap V(P_2)= \emptyset$. Moreover, $P_1$ and $P_2$ are \emph{internally vertex-disjoint} if $(V (P_1) \setminus \{s_1, t_1\}) \cap V(P_{2}) = \emptyset$ and $(V (P_2) \setminus \{s_2, t_2\}) \cap V(P_{1}) = \emptyset$, that is, no internal vertex of one path is used as a vertex on
the other path, and vice versa. Two paths are said to be \emph{edge-disjoint} if they do not have any edge in common. Note that two internally vertex-disjoint paths are edge-disjoint, but the converse may not be true. A path $P$ is \emph{induced} if $G[V(P)]$ is the same as $P$. For a path $P = (v_1,\ldots,v_n)$ on $n$ vertices and vertices $v_i,v_j \in V(P)$, let $(v_i,v_j)$-\textit{subpath of $P$} denote the  subpath of $P$ with endpoints $v_i$ and $v_j$. 

\subsection{Problem Statements}\label{sec:PS}
  Given a graph $G$ and a set (or, more generally, an ordered multiset) $\mathcal{X}$ of pairs of distinct vertices in $G$, we refer to the pairs in $\mathcal{X}$ as \emph{terminal pairs}. A vertex in $G$ is a \emph{terminal vertex} if it appears in at least one terminal pair in $\mathcal{X}$ (when $\mathcal{X}$ is clear from context); else, it is a \emph{non-terminal vertex}. For example, if $G$ is a graph with $V(G)=\{v_{1},v_{2},\ldots, v_{6}\}$, and  $\mathcal{X}=\{(v_1,v_{3}),(v_{2},v_{3}),(v_{3},v_{6}),(v_{1},v_{6})\}$ is a set of terminal pairs, then $\{v_{1},v_{2},v_{3},v_{6}\}$ are terminal vertices in $G$ and $\{v_{4},v_{5}\}$ are non-terminal vertices in $G$.

The \textsc{Vertex-Disjoint Paths}  problem takes as input a graph
$G$ and a set of $k$ terminal pairs in $G$, and the task is to decide whether
there exists a collection of $k$ pairwise internally vertex-disjoint paths in $G$ such that the vertices
in each terminal pair are connected to each other by one of the paths. More formally,
\bigskip

		\noindent\fbox{ \parbox{137mm}{
		\noindent \underline{\textsc{Vertex-Disjoint Paths (VDP)}:}\\
		\textbf{Input:} A graph $G$ and an ordered multiset $\mathcal{X} = \{(s_1, t_1), \ldots,(s_k, t_k)\}$ of $k$ pairs of terminals.\\
		\textbf{Question:} Does $G$ contain $k$ distinct and pairwise internally vertex-disjoint paths $P_1, \ldots, P_k$ such
that for all $i\in[k]$, $P_{i}$
is an $(s_i, t_i)$-path?}}
			
			\medskip
			
 The \textsc{Edge-Disjoint Paths}  problem takes as input a graph
$G$ and a set of $k$ terminal pairs in $G$, and the task is to decide whether
there exists a collection of $k$ pairwise edge-disjoint paths in $G$ such that the vertices
in each terminal pair are connected to each other by one of the paths. More formally,

			\bigskip

		\noindent\fbox{ \parbox{137mm}{
		\noindent \underline{\textsc{Edge-Disjoint Paths (EDP)}:}\\
		\textbf{Input:} A graph $G$ and an ordered multiset $\mathcal{X} = \{(s_1, t_1), \ldots,(s_k, t_k)\}$ of $k$ pairs of terminals.\\
		\textbf{Question:} Does $G$ contain $k$ pairwise edge-disjoint paths $P_1, \ldots, P_k$ such
that for all $i\in[k]$, $P_{i}$
is an $(s_i, t_i)$-path?}}
\medskip

\begin{remark} \label{remark1}
Note that in
both problems (\VDP~and \EDP), we allow different terminal pairs to intersect, that is, it may happen that for $i\neq j, \{s_i,t_i \} \cap \{ s_j,t_j \}  \neq \emptyset$.
\end{remark}

If there are two identical pairs $\{ s_i,t_i\} = \{s_j,t_j \} = \{ x,y\}$ in $\mathcal{X}$
and the  edge $xy$ is present in $G$ then only one of the paths $P_i, P_j$ can use the edge $xy$ if we require them to be edge-disjoint.
However, setting $P_i = P_j = (x,y)$ does not violate the condition of being internally vertex-disjoint.
It is natural though and also consistent with the existing literature to
impose the additional condition that all paths in a solution have to be pairwise distinct. 

\begin{remark} \label{remark2}
Throughout this paper, we assume that the degree of every terminal vertex is at
least the number of terminal pairs in which it appears. Else, it is trivially a No-instance.
\end{remark}

Following the notation introduced in~\cite{ahn1} and~\cite{hegger}, we have the following definition.
\begin{definition} \label{def:heavy}
An edge $xy \in E(G)$ is \emph{heavy} if for some $w\geq 2$,
there exist pairwise distinct indices $i_1, \ldots, i_w$ such that for each $j\in [w]$, $\{x, y\} = \{s_{i_j} , t_{i_j} \}$. We call a terminal pair $(s_{i},t_{i})$ 
\emph{heavy} if $s_it_i$ is a heavy edge; else, we call it \emph{light}. Note that calling a terminal pair heavy or light only makes sense when the terminals in the pair have an edge between them. 
\end{definition}

Next, consider the following definition.

\begin{definition} [Minimum Solution]
Let $(G,\mathcal{X},k)$ be a Yes-instance of \textsc{VDP} or \textsc{EDP}.
A solution $\mathcal{P} = \{P_1, \ldots, P_k\}$ for the instance $(G,\mathcal{X},k)$ is \emph{minimum} if there is no
solution $\mathcal{Q} = \{Q_1, \ldots, Q_k\}$ for $(G,\mathcal{X},k)$ such that $\sum_{i=1}^{k}|E(Q_{i})|<\sum_{i=1}^{k}|E(P_{i})|$.
\end{definition}

Since we deal with subclasses of chordal graphs (see Section \ref{sec:GC}), the two following propositions are crucial for us. 

\begin{proposition} [\cite{hegger}] \label{prop1}
Let $(G,\mathcal{X},k)$ be a Yes-instance of \textsc{VDP} such that $G$ is a chordal graph, and let $\mathcal{P}=\{P_1, \ldots, P_k\}$ be a minimum
solution of $(G,\mathcal{X},k)$. Then, every path $P_{i}\in \mathcal{P}$ satisfies exactly one of the following two statements:
\begin{enumerate}
    \item [$(i)$] $P_i$ is an induced path.
    \item [$(ii)$] $P_i$ is a path of length $2$, and there exists a path $P_{j}\in \mathcal{P}$ of length $1$ whose endpoints are the same as the endpoints of $P_i$.
 
\end{enumerate}
\end{proposition}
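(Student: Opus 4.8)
The plan is to prove Proposition~\ref{prop1} by a local exchange argument: take a minimum solution $\mathcal{P}$ and show that if some path $P_i$ fails both conditions, we can shorten the total number of edges, contradicting minimality. First I would observe that because $G$ is chordal, any non-induced path $P_i = (v_1, \ldots, v_m)$ contains a chord $v_a v_b$ with $b \geq a+2$; replacing the subpath $(v_a, \ldots, v_b)$ by the single edge $v_a v_b$ yields a shorter $(s_i, t_i)$-walk, and extracting a simple $(s_i,t_i)$-path from it gives a path that is strictly shorter in edges than $P_i$. For \textsc{VDP}, the delicate point is that the new path uses only vertices already in $V(P_i)$, so it remains internally vertex-disjoint from every other $P_j$ — here I would be careful about the endpoints, since $v_1 = s_i$ or $v_m = t_i$ could also be endpoints of other pairs, but a vertex dropped from the interior cannot create a conflict, and a vertex that stays can only stay in the interior if it was already there. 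Thus the only way a non-induced $P_i$ survives in a minimum solution is if shortening it is impossible without colliding with another path, which forces a very restricted configuration.

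Next I would analyze exactly that restricted configuration. If $P_i$ is not induced, the only chord available (after we also rule out longer shortcuts) must be the one joining $s_i$ to $t_i$ directly — any other chord would let us shortcut while staying inside $V(P_i)$, strictly decreasing $\sum |E(P_j)|$. So $s_i t_i \in E(G)$, and replacing $P_i$ by the single edge $(s_i, t_i)$ is length-decreasing; the only obstruction is that this edge is already used by some other path $P_j \in \mathcal{P}$, which (since the paths are required to be pairwise distinct and $P_j$ would then be exactly the edge $s_i t_i$) means $P_j = (s_i, t_i)$ has length $1$ and the same endpoints as $P_i$. It remains to argue $P_i$ has length exactly $2$: if $P_i$ had length $\geq 3$, then among $v_1, \ldots, v_m$ with $m \geq 4$ and $v_1 v_m$ a chord, chordality applied to the cycle $v_1 \ldots v_m v_1$ (which has length $\geq 4$) gives a further chord $v_a v_b$ with $\{a,b\} \neq \{1,m\}$, and shortcutting along it again stays inside $V(P_i)$ and decreases the length — so the only surviving non-induced case has $m = 3$, i.e.\ $P_i$ is a path of length $2$.

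The two statements are mutually exclusive: a path satisfying (ii) has length $2$ hence is not induced (its endpoints are adjacent), and an induced path of length $\geq 1$ never has an adjacent-endpoint pair forcing (ii); for length $1$ and length $0$ — length $0$ is impossible since terminal pairs are pairs of distinct vertices, and length $1$ is induced — so (i) holds and (ii) fails. The main obstacle I expect is the bookkeeping in the \textsc{VDP} case around \emph{internal} vs.\ endpoint vertices when we perform the shortcut: one must verify that extracting a simple path from the shortcut walk does not accidentally reintroduce a vertex into the interior that was an internal vertex of another path, and that heavy/duplicate terminal pairs (Definition~\ref{def:heavy}, Remark~\ref{remark1}) are handled — but since every shortcut only deletes vertices from $V(P_i)$, no new vertex-collision is ever created, so this reduces to checking the endpoint cases, which is routine. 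For \textsc{EDP} the argument is actually simpler since we only need edge-disjointness, and the same shortcut deletes edges; the distinct-paths convention is exactly what produces the length-$1$ twin path in case (ii).
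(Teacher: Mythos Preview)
The paper does not prove this proposition; it is stated with a citation to Heggernes et al.~\cite{hegger} and used as a black box. So there is no ``paper's own proof'' to compare against.

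That said, your argument is the standard one and is essentially correct. A couple of small clean-ups: after shortcutting along a chord $v_av_b$ you directly obtain a simple path $(v_1,\ldots,v_a,v_b,\ldots,v_m)$ --- no need to speak of ``extracting a simple path from a walk''. And in Case~(ii) when $|E(P_i)|\geq 3$ and $s_it_i\in E(G)$, your use of chordality on the cycle $v_1\cdots v_mv_1$ is exactly right: any chord $v_av_b$ of that cycle has $\{a,b\}\neq\{1,m\}$ (since $v_1v_m$ is a cycle edge), hence $b-a\leq m-2$, so the shortcutted path has length $\geq 2$, has an internal vertex from the interior of $P_i$, and is therefore distinct from every other $P_j$. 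The final paragraph about \textsc{EDP} is extraneous here --- the proposition concerns only \textsc{VDP}.
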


The next observation follows from Proposition \ref{prop1}.

\begin{observation} \label{obsminimum}
Let $(G,\mathcal{X},k)$ be a Yes-instance of \textsc{VDP}. If there is a terminal pair $(s,t)\in \mathcal{X}$ such that $st\in E(G)$, then $P_{st}$ belongs to every minimum solution of $(G,\mathcal{X},k)$.
\end{observation}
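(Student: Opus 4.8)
The statement is an immediate consequence of Proposition~\ref{prop1}, so the plan is simply to unpack the two cases it provides. Let $\mathcal{P} = \{P_1,\dots,P_k\}$ be an arbitrary minimum solution of $(G,\mathcal{X},k)$, and let $i \in [k]$ be an index with $\{s_i,t_i\} = \{s,t\}$; then $P_i$ is an $(s,t)$-path, and by Proposition~\ref{prop1} it satisfies either $(i)$ or $(ii)$. The goal is to show that in both cases the single-edge path $P_{st}$ occurs among $P_1,\dots,P_k$.

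For case $(i)$, I would argue that an induced $(s,t)$-path must be the single edge $st$: if $P_i = (s=v_1,v_2,\dots,v_n=t)$ with $n\ge 3$, then $G[V(P_i)]$ contains the edge $st$ (which is present in $G$ by hypothesis) in addition to the edges of $P_i$, so $G[V(P_i)]\neq P_i$, contradicting that $P_i$ is induced. Hence $n=2$ and $P_i = P_{st}$, giving $P_{st}\in\mathcal{P}$.

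For case $(ii)$, $P_i$ has length $2$ and, by the proposition, there is a path $P_j\in\mathcal{P}$ of length $1$ with the same endpoints $s$ and $t$; but the only length-$1$ path with endpoints $s,t$ is $P_{st}$, so again $P_{st}\in\mathcal{P}$. Since $\mathcal{P}$ was an arbitrary minimum solution, $P_{st}$ lies in every minimum solution. I do not foresee any real obstacle here; the only point that needs care is the elementary observation that between two adjacent vertices an induced path has length exactly $1$, which is what rules out the non-trivial subcase of $(i)$.
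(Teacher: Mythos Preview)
Your proposal is correct and mirrors the paper's own justification, which simply states that the observation follows from Proposition~\ref{prop1}; you have just made the two-case analysis explicit. There is nothing to add.
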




\subsection{Parameterized Complexity}

 Standard notions in Parameterized Complexity not explicitly defined here can be found in~\cite{bookParameterized}. Let $\Pi$ be an $\mathsf{NP}$-hard problem. In the framework of Parameterized Complexity, each instance of $\Pi$ is associated with a \textit{parameter} $k$. We say that $\Pi$ is \textit{fixed-parameter tractable} ($\mathsf{FPT}$) if any instance $(I,k)$ of $\Pi$ is solvable in time $f(k)\cdot |I|^{\mathcal{O}(1)}$, where
$f$ is some computable function of $k$. 

\begin{definition} [Equivalent Instances] \label{equivalent}
Let $\Pi$ and $\Pi'$ be two parameterized problems. Two instances $(I, k)\in \Pi$ and $(I', k')\in \Pi'$ are \emph{equivalent} if: $(I, k)$ is a Yes-instance of $\Pi$ if and only if $(I', k')$ is a Yes-instance of $\Pi'$.
\end{definition}

A parameterized (decision) problem $\Pi$ admits a \emph{kernel} of size $f(k)$ for some computable function $f$ that depends only on $k$ if the following is true: There exists an algorithm (called a \emph{kernelization algorithm}) that runs in $(|I|+k)^{\mathcal{O}(1)}$ time and translates any input instance $(I,k)$ of $\Pi$ into an equivalent instance $(I',k')$ of $\Pi$ such that the size of $(I',k')$ is bounded by $f(k)$. If the function $f$ is polynomial (resp., linear) in $k$, then the problem is said to admit a \emph{polynomial kernel} (resp., \emph{linear kernel}). It is well known that a decidable parameterized problem is \FPT~if and only if it admits a kernel~\cite{bookParameterized}.

To design kernelization algorithms, we rely on the notion of \emph{reduction rule}, defined below.

\begin{definition} [Reduction Rule]
A \emph{reduction rule} is a polynomial-time procedure that consists of a condition and an operation, and its input is an instance $(I,k)$ of a parameterized problem $\Pi$. If the condition is true, then the rule outputs a new instance $(I',k')$ of $\Pi$ such that $k'\leq k$. Usually, also $|I'|<|I|$.
\end{definition}

A reduction rule is \emph{safe}, when the condition is true, $(I,k)$ and $(I',k')$ are equivalent. Throughout this paper, the reduction rules will be numbered, and the reduction rules will be applied exhaustively in the increasing order of their indices. So, if reduction rules $i$ and $j$, where $i<j,$ are defined for a problem, then $i$ will be applied exhaustively before $j$. Notice that after the application of rule $j$, the condition of rule $i$ might become true. In this situation, we will apply rule $i$ again (exhaustively). In other words, when we apply rule $j$, we always assume that the condition of rule $i$ is false. 

\subsection{Graph Classes}\label{sec:GC}
A graph $G$ is a \emph{chordal graph} if every cycle in $G$ of length at least four
has a \emph{chord}, that is, an edge joining two non-consecutive vertices of the cycle. In what follows, we define several subclasses of the class of chordal graphs, namely, \emph{complete graphs}, \emph{block graphs}, \emph{split graphs}, \emph{threshold graphs}, and \emph{well-partitioned chordal graphs}. A graph whose vertex set is a clique is a \emph{complete graph}.  A vertex is a \emph{cut vertex} in a graph $G$ if removing it increases the total number of connected components in $G$.
A \emph{block} of a graph $G$ is a maximal connected subgraph of $G$ that does not contain any
cut vertex. A graph $G$ is a \emph{block graph} if the vertex set of every block in $G$ is a clique. A block of a block graph $G$ is an \emph{end block} if it contains exactly one cut vertex. Note that a block graph that is not a complete graph has at least two end blocks. A graph $G$ is a \emph{split graph} if there is a partition $(C, I)$ of $V(G)$ such that $C$ is a clique and
$I$ is an independent set. A split graph $G$ is a \textit{threshold graph} if there exists a linear ordering of the vertices in $I$, say, $(v_{1},v_{2},\ldots, v_{|I|})$, such that $N(v_{1})\subseteq N(v_{2}) \subseteq \cdots \subseteq N(v_{|I|})$.

An undirected graph in which any two vertices are connected by exactly one path is a \emph{tree}. A tree with at most two vertices or exactly one non-pendant vertex is a \emph{star}. The vertices of degree one in a tree are called \emph{leaves}. Note that a split graph admits a partition of its vertex
set into cliques that can be arranged in a star structure, where the leaves are cliques of size one. Motivated by this definition of split graphs, Ahn et al.~\cite{ahn1} introduced \emph{well-partitioned chordal graphs}, which are defined by relaxing the definition of split graphs in the following two ways: (i) by allowing
the parts of the partition to be arranged in a tree structure instead of a star structure, and (ii) by allowing the
cliques in each part to have arbitrary size instead of one. A more formal definition of a well-partitioned chordal graph is given below.

\begin{definition} [Well-Partitioned Chordal Graph] \label{defwpc}
A connected graph $G$ is a \emph{well-partitioned chordal graph} if there exists a partition $\mathcal{B}$ of $V(G)$ and a
tree $\mathcal{T}$ having $\mathcal{B}$ as its vertex set such that the following hold.

\begin{enumerate}
    \item [$(i)$] Each part $X\in \mathcal{B}$ is a clique in $G$.
    
    \item [$(ii)$] For each edge $XY\in E(\mathcal{T})$, there are subsets $X'\subseteq X$ and $Y'\subseteq Y$ such that $E(G[X,Y])=X'\times Y'$.
    
\item [$(iii)$] For each pair of distinct $X,Y\in \mathcal{B}$ with $XY\notin E(\mathcal{T})$, $E(G[X,Y])=\emptyset$.
\end{enumerate}
\end{definition}

The tree $\mathcal{T}$ in Definition \ref{defwpc} is called a \emph{partition tree} of $G$, and the elements of $\mathcal{B}$ are called its \emph{bags}. Notice that a well-partitioned chordal graph can have multiple partition trees.

\begin{proposition} [\cite{ahn1}] \label{ahn:prop}
Given a well-partitioned chordal graph $G$, a partition tree of $G$ can be found in polynomial time.
\end{proposition}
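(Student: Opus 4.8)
The plan is to build the partition tree on top of classical chordal-graph machinery, exploiting the fact that every well-partitioned chordal graph is, in particular, chordal: each bag is a clique and the bags are arranged in a tree, so $G$ cannot contain an induced cycle of length at least four. Thus I would first compute a perfect elimination ordering of $G$ (e.g., by Lex-BFS) in linear time, and from it obtain in polynomial time the family of maximal cliques of $G$ together with a clique tree $T_c$, whose edges are labelled by the minimal separators of $G$; both objects are standard for chordal graphs. The remaining task, which is the whole content of the statement, is to convert the overlapping maximal cliques of $T_c$ into the pairwise-disjoint bags demanded by Definition~\ref{defwpc}, and to read off a tree $\mathcal{T}$ on them.

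The structural handle I would use is that, in any valid partition tree, across a tree edge $XY$ the cross-edges form a complete join $X' \times Y'$, so $X' \cup Y'$ is a clique and each of $X'$ and $Y'$ is a clique that separates the two sides of $XY$ in $G$ (deleting $X'$ disconnects the remaining vertices of $X$, together with the subtrees hanging off $X$ away from $Y$, from the $Y$-side). Hence the separators of $G$ are highly constrained, and each one is exactly the ``outward-facing'' portion of some bag. Two concrete routes to turn this into an algorithm are (a) grouping the vertices according to the family of minimal separators that contain them and contracting each group into a bag, or (b) peeling off leaf bags one at a time: a leaf bag $X$ that has a purely internal vertex is signalled by a set of simplicial vertices sharing the same closed neighborhood (such a vertex $v$ has $N[v]=X$), which identifies the bag $X$, after which $X$ is removed and the procedure recurses on the smaller well-partitioned chordal graph. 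I would let $\mathcal{T}$ be induced by the clique-tree adjacencies that survive the contraction/peeling.

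Whichever route is taken, once a candidate $(\mathcal{B},\mathcal{T})$ is produced, verifying conditions (i)--(iii) of Definition~\ref{defwpc} is easy and polynomial: test that each part is a clique, that across every tree edge the bipartite graph between the two bags is a complete product $X'\times Y'$, and that bags non-adjacent in $\mathcal{T}$ have no edges between them. Because we are promised that $G$ is well-partitioned, the construction is guaranteed to succeed, so this final step mainly certifies correctness.

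The main obstacle I anticipate is proving that the peeling/grouping is always valid and that bag assignment is unambiguous. Concretely, I must show that a genuine well-partitioned chordal graph with more than one bag always exposes a removable leaf bag of the recognizable form above --- including the awkward case of a leaf bag all of whose vertices face the parent, where there is no purely internal simplicial vertex --- and that the relevant interface sets are laminar (non-crossing), so that a shared separator's vertices are distributed to a unique side without conflict. Establishing this laminarity, together with correctly handling the degenerate cases (a complete graph, which is a single bag; and split-type instances with singleton leaf bags), is where the real work lies; each such check is local and polynomial, so the overall running time remains polynomial.
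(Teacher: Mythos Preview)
The paper does not prove this proposition at all: it is stated as a citation from Ahn et al.~\cite{ahn1} and used as a black box, with no argument given. There is therefore nothing in the paper to compare your attempt against; the intended ``proof'' here is simply ``see~\cite{ahn1}''.

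As for your sketch itself, it is a plan rather than a proof, and you are candid about this. The obstacles you flag --- ensuring that a recognizable leaf bag always exists (including the case where every vertex of the leaf bag lies on the boundary toward the parent), and showing that the relevant separator family is laminar so that vertices are assigned to bags unambiguously --- are exactly the places where the argument is not yet complete. Neither route (a) nor (b) is carried through: for (a) you would need to prove that ``same set of minimal separators containing $v$'' really coincides with ``same bag'', and for (b) you would need a lemma guaranteeing the existence of a detectable leaf bag in every nontrivial well-partitioned chordal graph. Until one of these is established, the proposal does not constitute a proof; if you want a self-contained argument, you should consult the recognition algorithm in~\cite{ahn1} directly.
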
 

  \begin{figure}[t]
 \centering
    \includegraphics[scale=0.55]{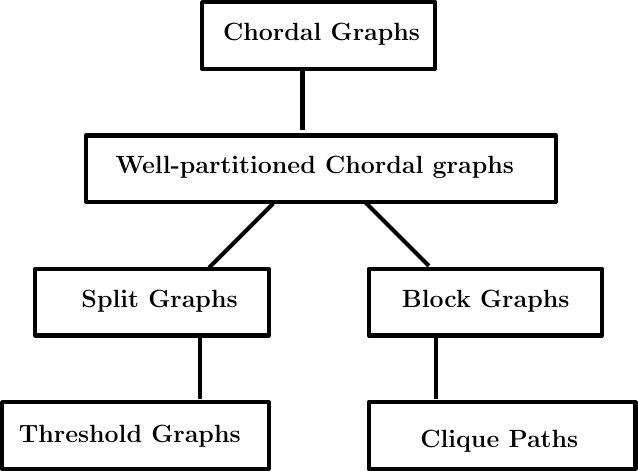}
    \caption{The inclusive relationship among the subclasses of chordal graphs studied in this paper.}
    \label{fig0}
\end{figure}

\begin{definition} [Boundary of a Well-Partitioned Chordal Graph]
Let $\mathcal{T}$ be a partition tree of a well-partitioned chordal graph $G$ and let $XY \in E (\mathcal{T})$. The \emph{boundary} of $X$ with respect to $Y$, denoted
by $\mathsf{bd}(X,Y)$, is the set of vertices of $X$ that have a neighbor in $Y$, i.e., $\mathsf{bd}(X,Y)= \{x \in X : N_{G}(x) \cap Y \neq \emptyset \}$.
\end{definition}

\begin{remark} \label{obsdel}
Note that all the graph classes considered in this paper are closed under vertex deletion. Therefore, throughout this paper, if $G$ belongs to class $\mathcal{Z}$ and $G'$ is obtained from $G$ after deleting some vertices, then we assume that $G'$ also belongs to $\mathcal{Z}$ without mentioning it explicitly. 
\end{remark}

The inclusion relationship among various subclasses of chordal graphs discussed in this paper is shown in Figure \ref{fig0}.

\section{Kernelization Results on \textsc{EDP}}\label{sec:edp}

We begin with the analysis of the simplest scenario where the input graph is a clique.
In this setting, \textsc{EDP} is still \textsf{NP}-hard (see \cref{npc:complete}), but we show below that whenever the size of the clique is larger than the parameter $k$, then we always obtain a Yes-instance.
This improves the bound in~\cite[Lemma 7]{hegger} by a factor of 2, which will play a role in optimizing the constants in our kernels (particularly, the linear ones).

\begin{lemma}\label{L:cliqueBetter}
    Let $(G,\mathcal{X}, k)$ be an instance of \textsc{EDP} such that $G$ is
a clique. If $|V(G)| > k$, then $(G,\mathcal{X},k)$ is a Yes-instance.
\end{lemma}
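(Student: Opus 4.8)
The plan is to prove the lemma by induction on $k$, building an explicit solution. The base case $k=0$ is trivial (the empty set of paths). For the inductive step, the observation that drives everything is that the \emph{demand multigraph} $H$ on $V(G)$, whose edges are exactly the $k$ terminal pairs, satisfies $\sum_{v\in V(G)} d_H(v) = 2k < 2\lvert V(G)\rvert$; hence some vertex $u$ has $d_H(u)\le 1$, i.e., $u$ participates in at most one terminal pair. I would then peel $u$ off, distinguishing the two possibilities for $d_H(u)$, and in each case invoke the induction hypothesis on a clique on $\lvert V(G)\rvert - 1 > k-1$ vertices with $k-1$ terminal pairs.

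If $d_H(u)=1$, say $u$ occurs only in the pair $(u,x)$, I route this pair directly along the edge $ux$, delete $u$, and apply the induction hypothesis to $(G-u,\ \mathcal{X}\setminus\{(u,x)\},\ k-1)$. Since $ux$ is incident to $u$, it does not occur in $G-u$, so adjoining the single-edge path $(u,x)$ to the recursively obtained solution keeps all paths pairwise edge-disjoint; they also stay pairwise distinct because none of the recursive paths touches $u$. If instead $d_H(u)=0$, then $u$ is a non-terminal vertex, which I use as a ``private'' detour: pick an arbitrary pair $(s,t)\in\mathcal{X}$ (it exists as $k\ge 1$, and $s,t\ne u$), route it along the length-$2$ path $(s,u,t)$, delete $u$, and apply the induction hypothesis to $(G-u,\ \mathcal{X}\setminus\{(s,t)\},\ k-1)$ (removing one occurrence of the pair). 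Both edges $su,ut$ are incident to $u\notin V(G-u)$, so this path is edge-disjoint from, and distinct from, every path in the recursive solution.

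It remains to check that one of these two branches is always available, i.e., that a vertex of $H$-degree $0$ or $1$ is always present when $\lvert V(G)\rvert > k$; this is exactly the degree-sum count above. I expect the only genuinely delicate point to be the bookkeeping around \emph{heavy} (repeated) pairs together with the requirement that all paths in a solution be pairwise distinct: a heavy pair must be realized by several pairwise edge-disjoint paths. This is handled transparently by the induction, since the $d_H(u)=1$ branch never touches a heavy pair, and the $d_H(u)=0$ branch reroutes one occurrence of some pair through the free vertex $u$ while leaving the remaining occurrences to the recursion. I would also remark that a naive greedy ``route every pair by its direct edge, or by a length-$2$ detour through a still-unused vertex'' only yields the weaker threshold $\lvert V(G)\rvert > 2k$ (matching~\cite[Lemma 7]{hegger}); the improvement to $\lvert V(G)\rvert > k$ is precisely what the ``always peel a vertex of $H$-degree at most one'' strategy buys us. Finally, note that the precondition of Remark~\ref{remark2} is automatically satisfied throughout, as every terminal vertex $v$ has $d_H(v)\le k < \lvert V(G)\rvert = d_G(v)+1$, so no terminal's occurrence count ever exceeds its degree; hence no sub-instance created by the recursion is trivially a No-instance for that reason.
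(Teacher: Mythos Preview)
Your proof is correct and follows the same inductive peel-off-a-vertex strategy as the paper, but with a slightly different pigeonhole to locate the vertex to remove. The paper considers the \emph{simple} graph $H$ whose edges are the \emph{heavy} pairs (those occurring at least twice in $\mathcal{X}$); since $H$ has at most $k/2$ edges it has at most $k$ non-isolated vertices, so some vertex $v$ is isolated in $H$, and then all pairs through $v$ are light and can be routed by single edges simultaneously (possibly several at once), while if no pair contains $v$ it serves as a detour for one arbitrary pair. You instead work with the full demand multigraph and use the average-degree count $2k/|V(G)|<2$ to find a vertex of $H$-degree at most $1$, then peel off exactly one pair per step. Both arguments are equally short and yield the same bound; yours is a touch more uniform (always decreasing $k$ by exactly one), while the paper's may dispatch several light pairs in a single inductive step. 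Your remarks on distinctness of paths and on why the naive greedy only achieves $|V(G)|>2k$ are accurate.
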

\begin{proof}
    We give a proof by induction on $k$.
    For $k \in \{0, 1\}$ the lemma clearly holds.
    Consider $k>1$ and define $H$ as a graph on the vertex set $V(G)$ with edges given by pairs of vertices $x,y$ for which $\{x,y\}$ appears at least twice in $\mathcal{X}$.
    Then $H$ has at most $\frac{k}{2}$ edges and thus at most $k$ non-isolated vertices.
    Since $|V(H)| = |V(G)| > k$, there exists an isolated vertex $v$ in $H$.
    Let $\mathcal{X}_v \subseteq \mathcal{X}$ denote the set of pairs containing $v$ (it is not a multiset by the choice of $v$) and $k_v = |\mathcal{X}_v|$.
    We distinguish two cases.

    First, suppose that $k_v > 0$. 
    Then $|\mathcal{X} \setminus \mathcal{X}_v| = k - k_v \le k - 1 < |V(G-v)|$.
    Hence, the \textsc{EDP} instance $(G-v,\mathcal{X} \setminus \mathcal{X}_v, k - k_v)$ satisfies the conditions of the lemma and hence,  by the inductive assumption, it is a Yes-instance.
    Let $\mathcal{P}$ denote a solution to this instance and
    $\mathcal{P}_v$ denotes the set of single-edge paths corresponding to the pairs in $\mathcal{X}_v$.
    Clearly, all edges used in $\mathcal{P}_v$ are incident to $v$, and so these paths are edge-disjoint with $\mathcal{P}$.
    Consequently, $\mathcal{P} \cup \mathcal{P}_v$ is a solution to the instance $(G,\mathcal{X}, k)$.

    Suppose now that $k_v = 0$, that is, $v$ does not appear in $\mathcal{X}$.
    Let $e = \{u,w\}$ be an arbitrary pair from $\mathcal{X}$.
    Again, by the inductive assumption, the instance $(G-v, \mathcal{X} \setminus \{e\}, k-1)$ admits some solution $\mathcal{P}$.
    Then the path $(u,v,w)$ is edge-disjoint with $\mathcal{P}$, and so    
    $\mathcal{P} \cup \{(u,v,w)\}$ forms a~solution to the instance $(G,\mathcal{X}, k)$.

    In both cases, we were able to construct a solution to $(G,\mathcal{X}, k)$, which concludes the inductive argument.
\end{proof}

The bound above is tight as one can construct a No-instance $(G,\mathcal{X}, k)$ where $G$ is a clique and $|V(G)| = k$.
Consider $\mathcal{X}$ comprising just $k$ copies of some pair $\{u,v\}$.
Since the degree of $u$ is $k-1$, there cannot be $k$ edge-disjoint paths having $u$ as their common endpoint.

If $G$ is a split graph with more than $k$ vertices in the clique and the degree of each terminal
vertex is at least the number of terminals on it, then we can reduce such an instance to the setting of \cref{L:cliqueBetter} by replacing each terminal $v$ in the independent set with an arbitrary neighbor of $v$.
As a consequence, we obtain the following corollary, being a quantitative improvement over~\cite[Lemma 8]{hegger}.

\begin{corollary}\label{prop2}
Let $(G,\mathcal{X}, k)$ be an instance of \textsc{EDP} such that $G$ is
a split graph with split partition $(C, I)$. If $|C| > k$ and the degree of each terminal
vertex is at least the number of terminals on it, then $(G,\mathcal{X},k)$ is a Yes-instance.
\end{corollary}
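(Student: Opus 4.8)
The plan is to reduce an arbitrary split-graph instance to the clique setting of \cref{L:cliqueBetter} by relocating each terminal that lies in the independent set onto one of its neighbors in the clique, in such a way that no edge of the clique is ``overloaded'' by the relocation. First I would observe that if $v \in I$ is a terminal appearing in $d$ terminal pairs, then by assumption $d_G(v) \ge d$, and since $I$ is independent, all neighbors of $v$ lie in $C$. The naive idea is to pick, for each occurrence of $v$ in a pair $(s_i,t_i)$, a neighbor $u \in N_G(v) \subseteq C$ and replace that occurrence of $v$ by $u$; then reconnect the final solution by prepending/appending the single edge $vu$. The subtlety is that two different occurrences of $v$ (say in pairs $(v,a)$ and $(v,b)$) must be routed out of $v$ along \emph{distinct} edges, since a solution for \textsc{EDP} uses each edge at most once; hence the chosen neighbors for the occurrences of a single vertex $v$ must be pairwise distinct. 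This is exactly where the matching argument comes in.

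The key steps, in order, would be: (1) Build a bipartite ``assignment'' graph $H$ whose left side is the multiset of terminal-occurrences-in-$I$ (one node per occurrence of each $v\in I$ in $\mathcal X$) and whose right side is $C$, joining an occurrence of $v$ to $u\in C$ iff $uv\in E(G)$; since each $v\in I$ has degree $\ge$ (its number of occurrences), Hall's condition holds on the ``star'' around each $v$, but we need a global perfect matching saturating the left side. Actually the cleaner formulation: for each $v\in I$ independently, since $d_G(v)\ge(\text{number of occurrences of }v)$, we can injectively assign the occurrences of $v$ to distinct neighbors of $v$ in $C$ (no matching across different $v$'s is needed here, because occurrences of distinct $v\ne v'$ may be assigned to the same clique vertex without conflict — the edges $vu$ and $v'u$ are different edges). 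So step (1) is really just a per-vertex injection. (2) Form the new instance $(G',\mathcal X',k)$ where $G' = G[C]$ (a clique on $|C|>k$ vertices) and $\mathcal X'$ is obtained from $\mathcal X$ by replacing, in every pair, each terminal that lies in $I$ by its assigned clique vertex; note terminals already in $C$ stay put. Pairs may now have equal endpoints only if both original endpoints were in $I$ with a common assigned vertex — but distinct endpoints of one pair are non-adjacent if both in $I$, impossible; if one is in $C$ and one in $I$, the assigned neighbor is distinct from… hmm, actually $s_i\in C$ and $t_i\in I$ could have $t_i$ assigned to $s_i$ itself. I would handle this by forbidding, in the injection for $v$, the use of any clique-vertex that is the partner of $v$ in some pair; this costs at most (number of occurrences of $v$) forbidden vertices, and one checks $d_G(v)$ is still large enough, or more carefully one argues these degenerate pairs can be routed by a length-2 detour and removed first. (3) Apply \cref{L:cliqueBetter} to $(G[C],\mathcal X',k)$ since $|C|>k$: it is a Yes-instance, giving edge-disjoint paths $\{P'_i\}$ in $G[C]$. (4) Lift back: for each $i$, if $s_i\in I$ was reassigned to $u_i$, prepend the edge $s_iu_i$ to $P'_i$; similarly append $t_iv_i$ if needed. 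By the injectivity in step (1), the prepended/appended edges around a fixed $I$-vertex are all distinct, and edges with one endpoint in $I$ never appear in $G[C]$, so the resulting paths $\{P_i\}$ remain pairwise edge-disjoint and connect the original terminal pairs; hence $(G,\mathcal X,k)$ is a Yes-instance.

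For the converse direction there is nothing to prove: the corollary only asserts ``then $(G,\mathcal X,k)$ is a Yes-instance'', so a one-directional construction suffices.

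The main obstacle I expect is the bookkeeping around degenerate pairs created by the reassignment — namely a terminal pair $(s_i,t_i)$ where the reassignment collapses the two endpoints, or where the prepended edge $s_iu_i$ coincides with an edge already used as a single-edge path for a heavy pair. The safe way to dispose of this is to pre-process: first route (and delete) every heavy terminal pair whose endpoints are adjacent via length-$1$ or length-$2$ paths through private vertices (there is room since $|C|>k$), reducing to the case where every surviving pair is light and where Remark~\ref{remark2}'s degree bound still holds with the used edges removed; then the clean per-vertex injection above goes through. Alternatively — and this is probably what the authors intend given their phrasing ``replacing each terminal $v$ in the independent set with an arbitrary neighbor of $v$'' — one shows that the degenerate situations can simply be absorbed because $|C|>k$ leaves enough slack, so that even after arbitrary (not necessarily injective-avoiding-partner) choices, a greedy rerouting inside the clique fixes every conflict; the formal justification of that slack is the only genuinely nontrivial point.
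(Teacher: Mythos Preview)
Your approach is the paper's: push each occurrence of an $I$-terminal onto a distinct neighbor in $C$ (this is what the paper's one-line ``replacing each terminal $v$ in the independent set with an arbitrary neighbor of $v$'' really means, made precise by your per-vertex injection), apply \cref{L:cliqueBetter} to the resulting clique instance, and prepend/append the stub edges.

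Your worry about degenerate pairs is unnecessary and you should simply drop it. If after reassignment a pair collapses to $(u,u)$, then the original pair is routed entirely by its stubs---either the single edge $(s_i,u)$ when $t_i=u\in C$, or the length-$2$ path $(s_i,u,t_i)$ when $s_i,t_i\in I$ both get assigned to $u$. Such a path uses only $I$--$C$ edges, hence is edge-disjoint from every clique path, and by your per-vertex injectivity it is edge-disjoint from every other stub. So: remove the degenerate pairs from $\mathcal X'$, apply \cref{L:cliqueBetter} to the at most $k$ remaining pairs on the clique $G[C]$, and combine. No preprocessing of heavy pairs, no forbidding of partner vertices, no ``slack'' argument is needed.
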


\subsection{A Subcubic Vertex Kernel for Split Graphs} \label{subcubic:edp:split}

In this section, we show that $\textsc{EDP}$ on split graphs admits a kernel with $\mathcal{O}(k^{2.75})$ vertices. Let $(G,\mathcal{X},k)$ be an instance of \textsc{EDP} where $G$ is a split graph. Note that given a split graph $G$, we can compute (in linear time) a partition $(C, I)$ of $V(G)$ such that $C$ is a clique and $I$ is an independent set~\cite{hammer}. We partition the set $I$ into two sets, say, $I_T$ and $I_N$, where $I_T$ and $I_N$ denote the set of terminal vertices and the set of non-terminal vertices in $I$, respectively. 

To ease the presentation of mathematical calculations, for this section (Section~\ref{subcubic:edp:split}), we assume that $k^{\frac{1}{4}}$ is a natural number. If this is not the case, then we can easily get a new equivalent instance that satisfies this condition in the following manner. Let $d= (\lceil k^{\frac{1}{4}}\rceil)^4-k$ and $v\in C$. Now, we add $d$ terminal pairs $\{(s_{i_1},t_{i_1}), \ldots, (s_{i_d},t_{i_d})\}$ and attach each of these terminals to $v$. Observe that this does not affect the size of our kernel ($\mathcal{O}(k^{2.75})$ vertices) since $(\lceil k^{\frac{1}{4}}\rceil)^4 = \mathcal{O}(k)$. Moreover, we assume that $k>8$, as otherwise, we can use the \FPT algorithm for \EDP~\cite{DPMinor} to solve it in polynomial time. 

Before proceeding further, let us first discuss the overall idea
leading us to Theorem~\ref{labelsplitedp}.

\medskip
\noindent\textbf{Overview.}
Heggernes et al.~\cite{hegger} gave an $\mathcal{O}(k^3)$ vertex kernel for \EDP~on split graphs. In our kernelization algorithm (in this section), we use their algorithm as a preprocessing step. After the prepossessing step, the size of $C$ and $I_T$ gets bounded by $2k$ each, and the size of $I_N$ gets bounded by $\mathcal{O}(k^3)$. Therefore, we know that the real challenge in designing an improved kernel for \textsc{EDP} on split graphs lies in giving a better upper bound on $|I_{N}|$.

Our kernelization algorithm makes a non-trivial use of a lemma (Lemma \ref{L:SPathLength}), which establishes that the length of each path in any minimum solution (of \EDP~on a split graph $G$) is bounded by $\mathcal{O}(\sqrt{k})$. This, in turn, implies that a minimum solution of \EDP~for split graphs contains $\mathcal{O}(k^{1.5})$ edges. Note that during the preprocessing step (i.e., the kernelization algorithm by Heggernes et al.~\cite{hegger}), for every pair of vertices in $C$, at most $4k+1$ vertices are reserved in $I_{N}$, giving a cubic vertex kernel. In our algorithm, we characterized those vertices (called \emph{rich} by us) in $C$ for which we need to reserve only $\mathcal{O}(k^{1.5})$ vertices in $I_{N}$.
Informally speaking, a vertex $v\in C$ is \emph{rich} if there are $\Omega(k^{0.75})$ vertices in $C$ that are ``reachable'' from $v$, even if we delete all the edges used by a ``small'' solution (containing  $\mathcal{O}(k^{1.5})$ edges). Then, we show that if two vertices are rich, then even if they do not have any common neighbors in $I_N$, there exist ``many'' ($\Omega(k^{1.5})$)  edge-disjoint paths between them even after removing any $\mathcal{O}(k^{1.5})$ edges of $G$. Hence, for every rich vertex, we keep only those vertices in $I_N$ that are necessary to make the vertex rich, that is, we keep $\mathcal{O}(k^{1.5})$ vertices in $I_N$ for every rich vertex. Thus, all rich vertices in $C$ contribute a total of $\mathcal{O}(k^{2.5})$ vertices in $I_N$. The vertices in $C$ that are not rich are termed as \emph{poor}. Finally, we establish that a poor vertex cannot have too many neighbors in $I_N$. More specifically, a poor vertex can have only $\mathcal{O}(k^{1.75})$ neighbors in $I_N$. So, even if we keep all their neighbors in $I_{N}$, we store a total of $\mathcal{O}(k^{2.75})$ vertices in $I_N$ for the poor vertices. This leads us to an $\mathcal{O}(k^{2.75})$ vertex kernel for \EDP~on split graphs. 
\subsubsection{A Bound on the Length of the Paths in a Minimum Solution} \label{SS:boundSplit}
In this section, we prove that for a minimum solution $\mathcal{P}$ of an instance $(G,\mathcal{X},k)$ of \EDP~where $G$ is a split graph, each path $P \in \mathcal{P}$ has length at most $4\sqrt{k}+3$. We prove this bound by establishing that if there is a path of length  $4\sqrt{k}+4$ in $\mathcal{P}$, then $\mathcal{P}$ contains at least $k+1$ paths, a contradiction. To this end, we need the concept of \emph{intersecting edges} (see Definition \ref{def2}) and \emph{non-compatible edges} (see Definition \ref{def1}).

Now, consider the following remark.

\begin{remark}
For ease of exposition, throughout this section (Section~\ref{SS:boundSplit}), we assume (without mentioning explicitly it every time) that $(G,\mathcal{X},k)$ is a Yes-instance of \EDP,~where $G$ is a split graph. Moreover, $\mathcal{P}$ denotes a (arbitrary but fixed) minimum solution of $(G,\mathcal{X},k)$, and $P\in \mathcal{P}$ is a path such that $P$ contains $\ell$ vertices, say, $v_1,\ldots, v_\ell$, from clique $C$. Moreover, without loss of generality, let $v_1, \ldots, v_\ell$ be the order in which these vertices appear in the path $P$ from some terminal to the other. See Figure~\ref{Fig:splitDemo} for an illustration. Note that if a path, say, $P'$, in a split graph has length $p$ (i.e. $|E(P')| = p$), then it contains at least $\lceil \frac{p}{2} \rceil$ vertices from $C$. Therefore, to bound the length of $P$ by $\mathcal{O} (\sqrt{k})$, it suffices to bound the number of vertices of $C$ in $P$ by $\mathcal{O}(\sqrt{k})$. 
\end{remark}

\begin{figure}
    \centering
    \includegraphics[scale=0.8]{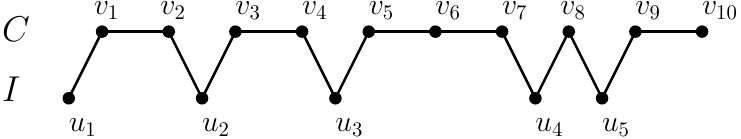}
    \caption{Here, $P$ is a path with endpoints $u_1$ and $v_{10}$. The vertices $v_1, \ldots, v_{10}$ are clique vertices of the path $P$ (here, $\ell =10$) and $u_1, \ldots, u_5$ are independent set vertices of $P$. Observe that $v_iv_{i+1}$ is not necessarily an edge in $P$.}
    \label{Fig:splitDemo}
\end{figure}

 Assuming the ordering $v_1,\ldots, v_\ell$ of the vertices in $V(P) \cap C$ along the path $P$, we have the following definitions.

\begin{definition}[Intersecting Edges] \label{def2}
Consider two edges $e_i = v_iv_{i'}$ and $e_j = v_jv_{j'}$ such that $i,i',j,j' \in [\ell]$, and without loss of generality, assume that $i<i',j<j'$, and $i \leq j$. Then, $e_i$ and $e_j$ are {\em non-intersecting} if $j\geq i'$; otherwise, they are {\em intersecting}. See Figure~\ref{Fig:SplitIntersect} for an illustration.  
\end{definition}

\begin{definition}[Non-compatible Edges] \label{def1}
 Two edges $e_1,e_2 \in E(G)$ are \emph{non-compatible} if there does not exist a path $P'\in \mathcal{P} \setminus \{P\}$ (given $P \in \mathcal{P}$) such that $\{e_1,e_2\} \subseteq E(P')$. Moreover, a set of edges $\mathcal{S} = \{ e_1, \ldots, e_p \}\subseteq E(G)$ is \emph{non-compatible} if every distinct $e_i,e_j \in \mathcal{S}$ are non-compatible.
\end{definition}

\begin{figure}
    \centering
    \includegraphics[scale=0.8]{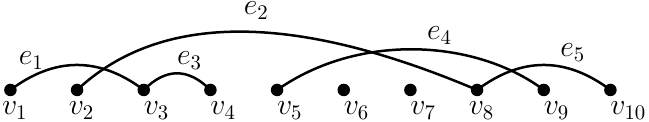}
    \caption{Here, edge $e_1$ intersects with $e_2$, edge $e_2$ intersects with edges $e_1,e_3,e_4$, edge $e_3$ intersects with $e_2$, edge $e_4$ intersects with $e_2$ and $e_5$, and edge $e_5$ intersects with $e_4$. Note that although edges $e_2$ and $e_5$ share an endpoint; they are non-intersecting. }
    \label{Fig:SplitIntersect}
\end{figure}

Next, we show that, since $P$ is a path in a minimum solution $\mathcal{P}$, most of the edges with both endpoints in $\{v_1,\ldots, v_\ell\}$ are used by paths in $\mathcal{P}$ (otherwise, we get a contradiction to the fact that $\mathcal{P}$ is a minimum solution). In particular, we have the following lemma.

\begin{lemma}\label{C:splitShortcut}
Each edge of the form $v_iv_j$, where $i, j\in [\ell]$ and $j\geq i+2$, is used by some path in $ \mathcal{P} \setminus \{P\}$.
\end{lemma}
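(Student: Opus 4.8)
The plan is to argue by contradiction, using a shortcutting argument that exploits the minimality of $\mathcal{P}$. Suppose some edge $v_iv_j$ with $j \geq i+2$ is not used by any path in $\mathcal{P} \setminus \{P\}$. The first step is to observe that $v_iv_j$ is not an edge of $P$ either: by our convention, the vertices $v_1, \ldots, v_\ell$ appear along $P$ in this order, and since $I$ is an independent set, between $v_i$ and $v_{i+1}$ on $P$ lie only (zero or more) vertices of $I$; consequently $v_i$ and $v_j$ are separated on $P$ by at least the vertex $v_{i+1}$, so they are non-consecutive on $P$ and $v_iv_j \notin E(P)$. Combined with the assumption, this means $v_iv_j$ is used by no path of $\mathcal{P}$ at all.

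The second step is to perform the shortcut. Let $s,t$ be the terminals connected by $P$, so $P$ traverses $s, \ldots, v_i, \ldots, v_j, \ldots, t$. Define $P'$ as the walk obtained from $P$ by replacing the $(v_i,v_j)$-subpath of $P$ with the single edge $v_iv_j$ (which exists because $C$ is a clique). Since we only delete internal vertices, $P'$ is again a simple path with endpoints $s$ and $t$, and $E(P') \subseteq \bigl(E(P) \setminus E(\text{the }(v_i,v_j)\text{-subpath of }P)\bigr) \cup \{v_iv_j\}$. As $j \geq i+2$, the removed subpath passes through $v_{i+1}$ and hence has length at least $2$, while we add one edge, so $|E(P')| \leq |E(P)| - 1$.

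The final step is to check that $\mathcal{P}' := (\mathcal{P} \setminus \{P\}) \cup \{P'\}$ is a solution for $(G,\mathcal{X},k)$, obtained by assigning $P'$ to the terminal pair previously served by $P$: the paths in $\mathcal{P} \setminus \{P\}$ are untouched, and $P'$ is edge-disjoint from all of them because each of its edges is either an edge of $P$ (edge-disjoint from the rest since $\mathcal{P}$ is a solution) or the edge $v_iv_j$, which by assumption no other path uses; the latter also shows $P'$ is distinct from every other path in $\mathcal{P}'$. Then $\sum_{Q \in \mathcal{P}'} |E(Q)| < \sum_{P'' \in \mathcal{P}} |E(P'')|$, contradicting the minimality of $\mathcal{P}$, and the lemma follows. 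The only point needing care is the first step, namely ruling out that $v_iv_j$ is secretly an edge of $P$, but this is immediate from the fixed ordering of $v_1, \ldots, v_\ell$ along $P$ together with the independence of $I$; there is no real obstacle.
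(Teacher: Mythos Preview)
Your proof is correct and follows essentially the same shortcutting argument as the paper: assume the edge is unused, replace the $(v_i,v_j)$-subpath of $P$ by the single edge $v_iv_j$, and contradict minimality. You are in fact slightly more careful than the paper, since you explicitly verify that $v_iv_j \notin E(P)$ (so that ``unused by $\mathcal{P}\setminus\{P\}$'' upgrades to ``unused by $\mathcal{P}$''), a point the paper's proof leaves implicit.
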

\begin{proof}
Targeting a contradiction, consider $v_iv_j$, where $i, j\in [\ell]$ and $j\geq i+2$, that is not used by any path in $\mathcal{P}$. Then, observe that the path we get, say, $P'$, after replacing the $(v_i,v_j)$-subpath of $P$, which contains at least two edges, with the edge $v_iv_j$, has fewer edges than $P$. Moreover, $(\mathcal{P} \setminus \{P \}) \cup \{P' \}$ is also a solution of  $(G,\mathcal{X},k)$. This contradicts the fact that $\mathcal{P}$ is a minimum solution.  
\end{proof}

Now, we show that if two edges are intersecting edges, then they are non-compatible as well. In particular, we have the following lemma.
\begin{lemma}\label{L:splitIntersect}
Let $e_i = v_iv_{i'}$ and $e_j = v_jv_{j'}$ be two (distinct) intersecting edges. Then, $e_i$ and $e_j$ are non-compatible.
\end{lemma}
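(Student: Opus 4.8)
The plan is to argue by contradiction: suppose $e_i = v_iv_{i'}$ and $e_j = v_jv_{j'}$ are intersecting (so, after relabeling, $i \le j < i'$ and $j < j'$) but \emph{compatible}, meaning there is some path $P' \in \mathcal{P} \setminus \{P\}$ with $\{e_i, e_j\} \subseteq E(P')$. I would first observe that, since $P'$ is a path in a minimum solution over a chordal (indeed split) graph, Proposition~\ref{prop1} applies: $P'$ is either induced or a length-$2$ path paralleling a length-$1$ path. A length-$2$ path cannot contain two distinct edges both of the form $v_av_b$ with all four indices in $[\ell]$ unless those edges share an endpoint in a very constrained way, so the substantive case is that $P'$ is \emph{induced}. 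The key structural fact I want to exploit is that $v_i, v_{i'}, v_j, v_{j'}$ all lie in the clique $C$, hence \emph{all} pairs among them are edges of $G$; combined with $P'$ being induced, this will force the four vertices to be consecutive along $P'$ in a way that is incompatible with $e_i, e_j$ both being edges of $P'$.

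Concretely, here are the steps in order. First, dispose of the degenerate subcases where $e_i$ and $e_j$ share an endpoint: if they shared an endpoint they would be non-intersecting by the parenthetical remark in Definition~\ref{def2} (the $e_2, e_5$ example), contradicting the hypothesis, so we may assume $v_i, v_{i'}, v_j, v_{j'}$ are four distinct vertices of $C$. Second, note $\{v_i,v_{i'},v_j,v_{j'}\} \subseteq C$ so these four vertices induce a $K_4$ in $G$. Third, use Proposition~\ref{prop1}: if $P'$ has length $2$, then $E(P') = \{e_i, e_j\}$ forces $e_i$ and $e_j$ to share a midpoint vertex, already excluded; so $P'$ is induced. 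Fourth, since $P'$ is induced and contains the two non-adjacent-in-$P'$... here is the crux: in $P'$, the edges $e_i = v_iv_{i'}$ and $e_j = v_jv_{j'}$ are two edges of an induced path whose four endpoints form a clique in $G$. An induced path cannot contain two edges $ab$ and $cd$ with $\{a,b,c,d\}$ a clique of size $4$, because then (say) $ac \in E(G)$ would be a chord of $P'$ unless $a,b,c,d$ are arranged so that every such pair is consecutive on $P'$ — impossible for four distinct vertices on a path with only two of the six clique-edges being path-edges. This yields the contradiction.

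The main obstacle, and the step I would be most careful about, is the last one: cleanly ruling out every arrangement of $v_i, v_{i'}, v_j, v_{j'}$ along the induced path $P'$. One must check that regardless of their order on $P'$, at least one pair among the four is non-consecutive on $P'$ yet adjacent in $G$ (being inside $C$), contradicting inducedness — and one must handle the possibility that $e_i, e_j$ are consecutive edges of $P'$ sharing a vertex, which was already excluded, versus being disjoint edges of $P'$, in which case the two ``inner'' endpoints are non-consecutive on $P'$ but adjacent in $C$. I expect the cleanest phrasing is: the two edges $e_i, e_j$, being distinct and vertex-disjoint, partition into at least two components of $P' \cap C$-structure, and any vertex of $e_i$ together with any vertex of $e_j$ is a chord of the induced path $P'$ — immediate contradiction. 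I would also double-check whether the intersecting hypothesis ($j < i'$) is actually needed here or only the fact that $e_i \ne e_j$ are both ``long'' chords of $P$; if the lemma as used only needs the intersecting case, I would keep the hypothesis but note the argument in fact shows more.
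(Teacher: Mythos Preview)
Your approach has a fatal gap: Proposition~\ref{prop1} is stated and proved only for \textsc{VDP}, not for \textsc{EDP}. The present section concerns a minimum \emph{edge}-disjoint solution, and in that setting paths need not be induced. Indeed, the very purpose of Section~\ref{SS:boundSplit} is to show that paths in a minimum \textsc{EDP} solution on a split graph can have length up to $\Theta(\sqrt{k})$; if Proposition~\ref{prop1} applied, every path would already have length at most~$3$ and Lemma~\ref{L:SPathLength} would be trivial. So the step ``$P'$ is induced, hence four clique vertices on $P'$ give a chord'' simply does not go through: the putative chord $v_av_b$ may well exist in $G$ without being an edge of $P'$, and you cannot shortcut $P'$ along it because that chord may already be occupied by some other path in $\mathcal{P}$.

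There is also a smaller error: intersecting edges \emph{can} share an endpoint. From Definition~\ref{def2}, with $i<i'$, $j<j'$, $i\le j$, the edges are intersecting exactly when $j<i'$; in particular $i=j$ (or $i'=j'$) gives intersecting edges. The caption of Figure~\ref{Fig:SplitIntersect} only illustrates that \emph{some} edges sharing an endpoint ($e_2$ and $e_5$, where the shared vertex is the right end of one and the left end of the other) are non-intersecting. The paper's proof explicitly treats $i=j$ as Case~1.

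The paper's actual argument does not use inducedness at all. Instead it performs an explicit rerouting: assuming $e_i,e_j\in E(P')$, it first shows (Claim~\ref{O:splitPath}) that the two ``outer'' vertices $s,t$ of $\{v_i,v_{i'},v_j,v_{j'}\}$ along $P'$ cannot both come from $\{v_i,v_{i'}\}$ (nor both from $\{v_j,v_{j'}\}$), then, by a case analysis on the relative order of $i,j,i',j'$, it splices a subpath of $P$ into $P'$ and replaces a subpath of $P$ by a single clique edge, producing $\widehat P,\widehat{P'}$ with $E(\widehat P)\cup E(\widehat{P'})\subseteq E(P)\cup E(P')$ and strictly fewer edges in total. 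This contradicts minimality of $\mathcal{P}$. The intersecting hypothesis is genuinely used here: it is what guarantees that the discarded subpath of $P$ has positive length in every case.
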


\begin{proof}
Without loss of generality, assume that $i<i',j<j'$, and $i \leq j$. 
Targeting a contradiction, let $P' \in \mathcal{P}\setminus \{P\}$ be the path such that $e_i,e_j \in E(P')$. Moreover, let $s,t \in \{v_i, v_{i'}, v_j, v_{j'} \}$ be the two vertices such that $v_i, v_{i'}, v_j, v_{j'}$ appear in the $(s,t)$-subpath of $P'$.
 First, we prove the following claim.
\begin{claim}\label{O:splitPath}
$|\{s,t\}\cap \{v_i,v_{i'} \}|  \leq 1$. Similarly, $|\{s,t\}\cap \{v_j,v_{j'}\}| \leq 1$.
\end{claim}
\begin{proof}
We will show that if $s \in \{v_i,v_{i'} \}$, then $t \notin \{v_i,v_{i'} \}$. (The other cases are symmetric.) Without loss of generality, assume that $s= v_i$ and $t  = v_{i'}$. Since the $(s,t)$-subpath of $P'$ contains at least two edges ($e_i$ and $e_j$), we can replace the $(s,t)$-subpath with the edge $e_i$ to get a path $P''$ such that $E(P'') \subset E(P')$ and the endpoints of $P'$ and $P''$ are the same. Thus, we can replace $P'$ with $P''$ in $\mathcal{P}$ to get a solution with fewer edges, contradicting that $\mathcal{P}$ is a minimum solution.
\end{proof}

Next,  we will argue that we can reconfigure paths $P$ to $\widehat{P}$ and $P'$ to $\widehat{P'}$ such that $|E(\widehat{P})|+|E(\widehat{P'})| < |E(P)|+|E(P')|$ and $E(\widehat{P}) \cup E(\widehat{P'}) \subseteq E(P) \cup E(P')$. This will complete the proof, since then $\widehat{\mathcal{P}} = (\mathcal{P} \setminus \{P,P'\}) \cup \{\widehat{P},\widehat{P'} \}$ is a solution of $(G,\mathcal{X},k)$ having fewer edges than $\mathcal{P}$, contradicting the fact that $\mathcal{P}$ is a minimum solution.

\begin{figure}
\centering
\begin{subfigure}{.5\textwidth}
  \centering
  \includegraphics[width=0.99\linewidth]{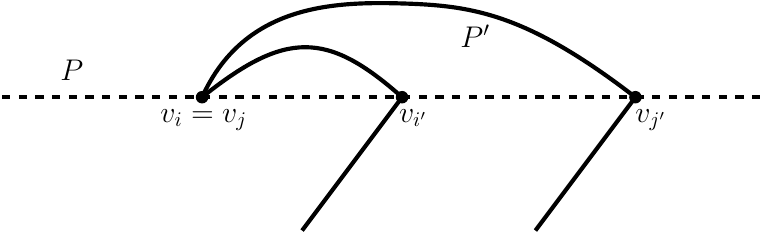}
\end{subfigure}%
\begin{subfigure}{.5\textwidth}
  \centering
  \includegraphics[width=0.99\linewidth]{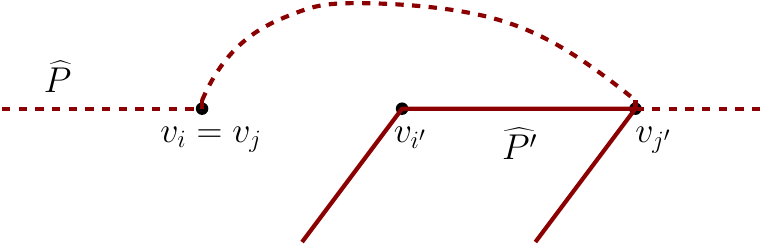}
\end{subfigure}
\caption{Here, $P$ and $P'$ are represented in dotted and solid black, respectively, in the left figure. Similarly, $\widehat{P}$ and $\widehat{P'}$ are represented in dotted and solid red, respectively, in the right figure.}
\label{fig:Case1}
\end{figure}

\begin{figure}
\centering
\begin{subfigure}{.5\textwidth}
  \centering
  \includegraphics[width=0.99\linewidth]{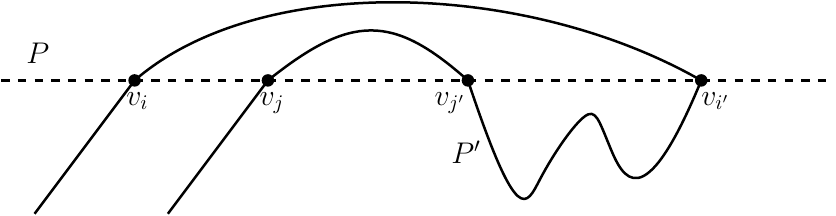}
\end{subfigure}%
\begin{subfigure}{.5\textwidth}
  \centering
  \includegraphics[width=0.99\linewidth]{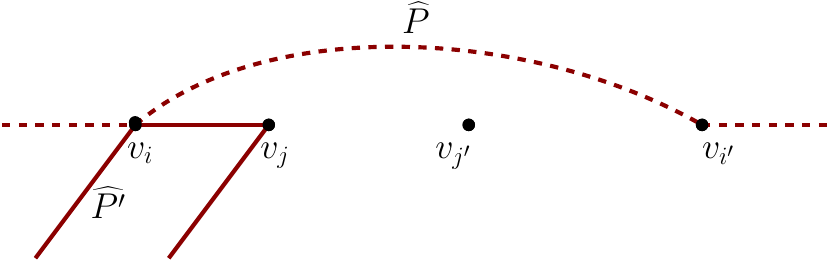}
\end{subfigure}
\caption{Here, $P$ and $P'$ are represented in dotted and solid black, respectively, in the left figure. Similarly, $\widehat{P}$ and $\widehat{P'}$ are represented in dotted and solid red, respectively, in the right figure.}
\label{fig:Case2}
\end{figure}

\begin{figure}
\centering
\begin{subfigure}{.5\textwidth}
  \centering
  \includegraphics[width=0.99\linewidth]{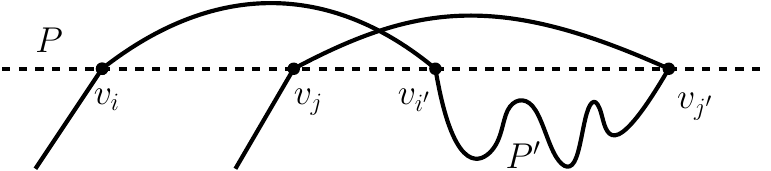}
\end{subfigure}%
\begin{subfigure}{.5\textwidth}
  \centering
  \includegraphics[width=0.99\linewidth]{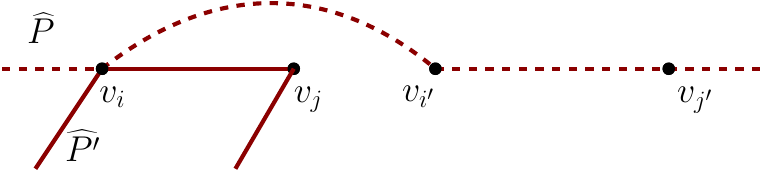}
\end{subfigure}
\caption{Here, $P$ and $P'$ are represented in dotted and solid black, respectively, in the left figure. Similarly, $\widehat{P}$ and $\widehat{P'}$ are represented in dotted and solid red, respectively, in the right figure.}
\label{fig:Case3}
\end{figure}

\begin{figure}
\centering
\begin{subfigure}{.5\textwidth}
  \centering
  \includegraphics[width=0.99\linewidth]{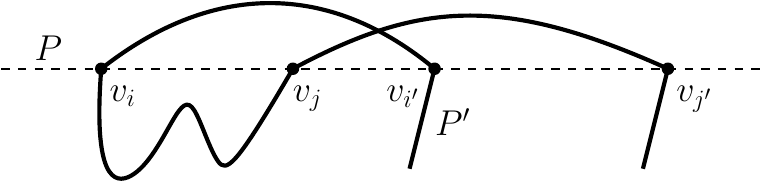}
\end{subfigure}%
\begin{subfigure}{.5\textwidth}
  \centering
  \includegraphics[width=0.99\linewidth]{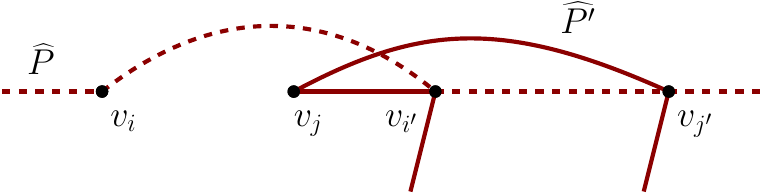}
\end{subfigure}
\caption{Here, $P$ and $P'$ are represented in dotted and solid black, respectively, in the left figure. Similarly, $\widehat{P}$ and $\widehat{P'}$ are represented in dotted and solid red, respectively, in the right figure.}
\label{fig:Case4}
\end{figure}

To this end, consider the following cases depending on the positions of $i,i',j,j'$. 
\smallskip

\noindent \textbf{Case 1:} $\bm{ i = j.}$ Since $e_i$ and $e_j$ are distinct edges, note that $i' \neq j'$. Hence, either $i' < j'$ or $i'> j'$. These two cases are symmetric, and therefore, we consider only the case when $i' <j'$.  See Figure~\ref{fig:Case1} for an illustration. Let $v = v_i = v_j$. Now, due to Claim~\ref{O:splitPath}, we have that $v \notin \{ s,t\}$. Therefore, either $s=v_{j'}$ and $t=v_{i'}$, or $s= v_{i'}$ and $t = v_{j'}$. Since these cases are symmetric, we assume $s=v_{j'}$ and $t=v_{i'}$. 
    In this case, we get $\widehat{P}$ by replacing the $(v_i,v_j')$-subpath of $P$ with the edge $v_jv_{j'}$, and we get $\widehat{P'}$ by replacing the $(v_{i'},v_{j'})$-subpath of $P'$ by the $(v_{i'},v_{j'})$-subpath of $P$. Observe that $E(\widehat{P}) \cup E(\widehat{P'}) \subseteq E(P) \cup E(P')$, and also $|E(\widehat{P})|+|E(\widehat{P'})| < |E(P)|+|E(P')|$ since the $(v_i,v_{i'})$-subpath of $P$ is removed. 
    \smallskip
    
 \noindent \textbf{Case 2:} $\bm{i' =j'.}$ This case is symmetric to Case 1.
    \smallskip
    
   \noindent \textbf{Case 3:} $\bm{i <j <j' < i'.}$ Due to Claim~\ref{O:splitPath}, either $s \in \{v_i,v_{i'}\}$ and $t\in \{v_j, v_{j'}\}$, or $s \in \{v_j, v_{j'}\}$ and $t\in \{v_i,v_{i'}\}$. Since these cases are symmetric, we assume that $s \in \{v_i,v_{i'}\}$ and $t\in \{v_j, v_{j'}\}$. Moreover, the case when $s=v_i$ and $t\in \{v_j, v_{j'} \}$ and the case when $s=v_{i'}$ and $t\in \{v_j, v_{j'} \}$ are symmetric (since we can just reverse the ordering of the vertices $v_1, \ldots, v_\ell$ to get the other case). Therefore, without loss of generality, we assume that $s=v_i$ and $t\in \{v_j, v_{j'} \}$. See Figure~\ref{fig:Case2} for an illustration.
    We get $\widehat{P}$ by replacing the $(v_i,v_{i'})$-subpath of $P$ with the edge $v_iv_{i'}$, and we get $\widehat{P'}$ by replacing the $(v_i,t)$-subpath of $P'$ with the $(v_i,t)$-subpath of $P$.  Observe that $E(\widehat{P}) \cup E(\widehat{P'}) \subseteq E(P) \cup E(P')$, and also $|E(\widehat{P})|+|E(\widehat{P'})| < |E(P)|+|E(P')|$ since the $(t,v_{i'})$-subpath of $P$ is removed. 
    \smallskip
    
    \noindent \textbf{Case 4:} {$\bm {i < j < i' < j'.}$}  Due to Claim~\ref{O:splitPath}, either $s \in \{v_i,v_{i'}\}$ and $t\in \{v_j, v_{j'}\}$, or $s \in \{v_j, v_{j'}\}$ and $t\in \{v_i,v_{i'}\} $. Since these cases are symmetric, we assume that $s \in \{v_i,v_{i'}\}$ and $t\in \{v_j, v_{j'}\}$. Here, we consider the following two cases:
   \smallskip
   
        \noindent \textbf{Subcase 4.1:} $\bm{s=v_i}$ \textbf{and} $\bm{t\in \{v_j,v_{j'}\}.}$ See Figure~\ref{fig:Case3} for an illustration. In this case, we obtain $\widehat{P}$ by replacing the $(v_i,v_{i'})$-subpath of $P$ with the edge $v_iv_{i'}$, and we obtain $\widehat{P'}$ by replacing the $(v_i,v_j)$-subpath of $P'$ by the $(v_i,v_j)$-subpath of $P$. Observe that $E(\widehat{P}) \cup E(\widehat{P'}) \subseteq E(P) \cup E(P')$, and also $|E(\widehat{P})|+|E(\widehat{P'})| < |E(P)|+|E(P')|$ since the $(v_j,v_{i'})$-subpath of $P$ is removed.
        
        \smallskip
        
        \noindent \textbf{Subcase 4.2:} $\bm{s= v_{i'}}$ \textbf{and} $\bm{t \in \{v_j,v_{j'}\}.}$ See Figure~\ref{fig:Case4} for an illustration. In this case, we obtain $\widehat{P}$ by replacing the $(v_i,v_{i'})$-subpath of $P$ with the edge $v_iv_{i'}$, and we obtain $\widehat{P'}$ by replacing the $(v_{i'},v_j)$-subpath of $P'$ by the $(v_{i'},v_j)$-subpath of $P$. Observe that $E(\widehat{P}) \cup E(\widehat{P'}) \subseteq E(P) \cup E(P')$, and also $|E(\widehat{P})|+|E(\widehat{P'})| < |E(P)|+|E(P')|$ since the $(v_j,v_{i})$-subpath of $P$ is removed.

This completes our proof.
\end{proof}

Now, we present the main lemma of this section.
\begin{lemma}\label{L:SPathLength}
Let $(G,\mathcal{X},k)$ be a Yes-instance of \EDP~where $G$ is a split graph. Moreover, let $\mathcal{P}$ be a minimum solution of $(G,\mathcal{X},k)$. Then, for every path $P \in \mathcal{P}$, $|E(P)| < 4  \sqrt{k} +4$.
\end{lemma}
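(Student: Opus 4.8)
\textbf{Proof plan for Lemma~\ref{L:SPathLength}.}
The plan is to argue by contradiction: assume some path $P \in \mathcal{P}$ has $|E(P)| \ge 4\sqrt{k}+4$. Since $P$ lives in a split graph, consecutive non-$C$ vertices on $P$ are separated by vertices of $C$, so $P$ contains $\ell \ge \lceil |E(P)|/2 \rceil \ge 2\sqrt{k}+2$ vertices $v_1,\dots,v_\ell$ of the clique $C$, listed in the order they appear along $P$. I would then exhibit a large set of clique-to-clique edges on these vertices that is pairwise non-compatible, each of which must (by Lemma~\ref{C:splitShortcut}) be used by a \emph{distinct} path of $\mathcal{P}\setminus\{P\}$, together with $P$ itself; counting these forces $|\mathcal{P}| \ge k+1$, the desired contradiction.

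The heart of the argument is to produce enough pairwise non-compatible edges. By Definition~\ref{def2} and Lemma~\ref{L:splitIntersect}, any two \emph{intersecting} edges among $\{v_iv_j : i,j\in[\ell]\}$ are non-compatible, so it suffices to find many pairwise intersecting such edges, all of the ``shortcut'' form $v_iv_j$ with $j \ge i+2$ so that Lemma~\ref{C:splitShortcut} applies. Here is the clean choice: for the middle vertex, or more symmetrically, take the edges $e_m = v_m v_{m+2}$ — no wait, those need not pairwise intersect. Instead I would take a ``nested/crossing'' family: pick $e_r = v_r\, v_{\ell - r +1}$ for $r = 1, 2, \dots$; any two of these intersect in the sense of Definition~\ref{def2} (each contains the ``center'' index), and each has length $\ge 2$ as long as $r < \ell-r$, giving roughly $\ell/2$ pairwise intersecting shortcut edges. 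Actually, to be safe about the $j \ge i+2$ requirement and about the edges genuinely existing in $G$ (they do, since $C$ is a clique and $v_r, v_{\ell-r+1}\in C$), one takes $r$ ranging so that $v_r$ and $v_{\ell-r+1}$ are at least two apart along the $v$-ordering; this yields a non-compatible set $\mathcal{S}$ of size $\approx \ell/2 \ge \sqrt{k}+1$. Hmm, $\sqrt k +1$ is not quite $k+1$, so the counting must be sharper — the point is that each edge of $\mathcal{S}$ lies on a \emph{different} path of $\mathcal{P}$, and moreover each such path is different from $P$, so $|\mathcal{P}\setminus\{P\}| \ge |\mathcal{S}|$; but one needs $|\mathcal{S}| > k-1$. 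So the family of pairwise non-compatible edges must actually have size $\ge k+1$, which is why the bound is $4\sqrt{k}+4$ rather than something smaller: with $\ell \ge 2\sqrt{k}+2$ one should instead count \emph{all} shortcut edges $v_iv_j$, $j\ge i+2$, partitioned into groups that are pairwise non-compatible. I would therefore count the total number of shortcut edges, which is $\binom{\ell}{2} - (\ell-1) = \binom{\ell-1}{2}$, and show that the ``compatibility'' relation — two edges are compatible only if some single path of $\mathcal{P}\setminus\{P\}$ contains both — together with each path having bounded length forces the number of distinct paths to be at least quadratic in $\ell$, hence $> k$ once $\ell \ge 2\sqrt{k}+2$.

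More precisely, the step I expect to carry the weight is: by Lemma~\ref{C:splitShortcut} every shortcut edge $v_iv_j$ ($j\ge i+2$) is used by some path in $\mathcal{P}\setminus\{P\}$; a single path $Q \in \mathcal{P}\setminus\{P\}$ can contain only \emph{non-intersecting} shortcut edges (by the contrapositive of Lemma~\ref{L:splitIntersect}), and a maximal set of pairwise non-intersecting edges on a linearly ordered vertex set of size $\ell$ forms a laminar/non-crossing family, hence has size at most $\ell-1$ — in fact at most $O(\sqrt{k})$ because each path of $\mathcal{P}$ has length $O(\sqrt k)$, but one must be careful not to use the very bound we are proving; using just ``$\le \ell-1$'' or a direct length bound on $Q$ is enough. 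Dividing, the number of distinct paths needed is at least $\binom{\ell-1}{2}/(\ell-1) = (\ell-2)/2 \ge \sqrt{k}$, and again this only gives $\sqrt k$. The resolution — and the actual main obstacle — is that we must count non-compatible edges, not just non-intersecting ones, and use that any two edges \emph{both incident to a single clique vertex's neighborhood} behave well; concretely, I would fix attention on one vertex, say $v_{\lceil \ell/2\rceil}$, consider the $\approx \ell$ shortcut edges of the form $v_i v_j$ straddling it, note they are pairwise intersecting hence pairwise non-compatible, so they require $\ge \ell - O(1) \ge 2\sqrt k + 2 - O(1) $ distinct paths; combined with $P$, and since a path may be counted at most... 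Actually the genuinely correct counting is the one the authors presumably intend: a pairwise non-compatible set of edges needs that many \emph{distinct} paths in $\mathcal{P}$, and from $\ell \ge 2\sqrt k + 2$ straddling edges one already gets close to $2\sqrt k$; to reach $k+1$ one instead iterates this over $\sqrt k$ ``layers'' of the path, i.e. partitions $v_1,\dots,v_\ell$ into $\sqrt k$ consecutive blocks, within each block gets $\sqrt k$ non-compatible edges, and argues across blocks that these edge-sets are themselves mutually non-compatible, yielding $(\sqrt k)\cdot(\sqrt k) = k$ non-compatible edges in total plus $P$, hence $k+1$ paths — contradiction. I would structure the write-up around this block decomposition, and I expect the delicate point to be verifying that shortcut edges from \emph{different} blocks are also non-compatible (so that the per-block counts truly add up rather than overlap), which should follow from a short case analysis in the spirit of the proof of Lemma~\ref{L:splitIntersect}, reusing the path-reconfiguration / minimality argument.
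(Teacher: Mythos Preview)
You were one step away from the paper's argument but miscounted at the crucial moment and then veered off into a flawed plan. When you ``fix attention on $v_{\lceil \ell/2\rceil}$ and consider the shortcut edges $v_iv_j$ straddling it,'' the number of such edges is not $\approx \ell$; it is $\approx (\ell/2)^2$. Concretely, the paper takes
\[
\mathcal{S}=\bigl\{\,v_iv_j : 1\le i\le \lceil \ell/2\rceil-1,\ \lceil \ell/2\rceil+1\le j\le \ell\,\bigr\},
\]
so $|\mathcal{S}|=(\lceil \ell/2\rceil-1)\lfloor \ell/2\rfloor$. Any two edges of $\mathcal{S}$ cross the index $\lceil \ell/2\rceil$ and are therefore intersecting in the sense of Definition~\ref{def2}, hence pairwise non-compatible by Lemma~\ref{L:splitIntersect}; and each has $j\ge i+2$, so Lemma~\ref{C:splitShortcut} forces each onto a distinct path of $\mathcal{P}\setminus\{P\}$. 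With $\ell\ge 2\sqrt{k}+2$ this already gives $|\mathcal{S}|\ge \sqrt{k}(\sqrt{k}+1)>k$, the desired contradiction. No block decomposition or iteration is needed.

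Your fallback block plan has a genuine gap: shortcut edges taken from \emph{disjoint} consecutive blocks are non-intersecting, and Lemma~\ref{L:splitIntersect} says nothing about non-intersecting edges. In fact two non-intersecting shortcut edges can perfectly well lie on the same path of $\mathcal{P}\setminus\{P\}$, so you cannot add the per-block counts. The ``short case analysis in the spirit of Lemma~\ref{L:splitIntersect}'' you hope for does not go through: the reconfiguration arguments there rely essentially on the two edges overlapping along the $v$-ordering. The fix is simply to count the straddling edges correctly, as above.
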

\begin{proof}

Let $P$ be a path in $\mathcal{P}$ such that $P$ contains $\ell$ vertices from the clique $C$, and let $v_1,\ldots, v_\ell$ be an ordering of vertices of $C$ in $P$, along $P$ from one terminal to the other. First, we show that there are $\Omega (\ell^2)$ intersecting edges with both endpoints in $\{ v_1,\ldots, v_\ell\}$.  Let $\mathcal{S} = \{v_iv_j : 1\leq i \leq \big\lceil \frac{\ell}{2} \big\rceil-1, \ \big\lceil \frac{\ell}{2} \big\rceil +1 \leq j \leq \ell \}$. Observe that $\mathcal{S}$ is a set of pairwise intersecting edges, and hence, due to Lemma~\ref{L:splitIntersect}, $\mathcal{S}$ is a set of non-compatible edges. Moreover, it is easy to see that $|\mathcal{S}| = (\big\lceil \frac{\ell}{2} \big\rceil -1)(\big\lfloor\frac{\ell}{2} \big\rfloor)$. Furthermore, since each edge in $\mathcal{S}$ is of the form $v_iv_j$ such that $i,j \in [\ell]$ and $j\geq i+2$, due to Lemma~\ref{C:splitShortcut}, each edge in $\mathcal{S}$ is used by some path in $\mathcal{P} \setminus \{P\}$. Therefore, $|\mathcal{P}| > |\mathcal{S}|$.

Now, if  $|E(P)| \geq 4\sqrt{k}+4$, then $\ell \geq 2\sqrt{k}+2$ (since any three consecutive edges in $P$ require at least two vertices from $C$). In this case, we have that $|\mathcal{P}| > |\mathcal{S}| =(\Big\lceil \frac{2 \sqrt{k}  +2}{2} \Big\rceil -1)(\Big\lfloor\frac{2 \sqrt{k}  +2}{2} \Big\rfloor) = ( \sqrt{k}) ( \sqrt{k} +1) > k$, a contradiction (since $|\mathcal{P}| = k$).
\end{proof}

We have the following corollary as a consequence of Lemma~\ref{L:SPathLength} (since $k\geq 9$).
\begin{corollary}\label{C:SplitEdge}
Let $\mathcal{P}$ be a minimum solution of an instance $(G,\mathcal{X},k)$ of \EDP~where $G$ is a split graph. Then, $\sum_{P\in \mathcal{P}}|E(P)| \leq 5  k^{1.5} $.
\end{corollary}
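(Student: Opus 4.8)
The plan is to combine the per-path length bound from \cref{L:SPathLength} with the fact that the solution has exactly $k$ paths. Since $\mathcal{P}$ is a minimum solution and $G$ is a split graph, Lemma~\ref{L:SPathLength} gives $|E(P)| < 4\sqrt{k}+4$ for every $P \in \mathcal{P}$. Because $|E(P)|$ is an integer, this is equivalent to $|E(P)| \leq 4\sqrt{k}+3$ (more precisely, $|E(P)| \le \lceil 4\sqrt{k}+4 \rceil - 1$, but for the purpose of the coarse estimate we just use the strict inequality). Summing over all $k$ paths then yields $\sum_{P \in \mathcal{P}} |E(P)| < k(4\sqrt{k}+4) = 4k^{1.5} + 4k$.

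The only remaining point is to absorb the lower-order term $4k$ into the leading term so as to get the clean bound $5k^{1.5}$. This is where the hypothesis $k \geq 9$ (equivalently $k > 8$, assumed at the start of Section~\ref{subcubic:edp:split}) is used: for $k \geq 9$ we have $\sqrt{k} \geq 3$, hence $4k = 4k \cdot 1 \le 4k \cdot \frac{\sqrt{k}}{3} \le k^{1.5}$ is not quite tight, so more carefully $4k \le k^{1.5}$ iff $4 \le \sqrt{k}$ iff $k \ge 16$; to cover $9 \le k \le 15$ one can instead note $4k^{1.5} + 4k \le 5k^{1.5}$ iff $4k \le k^{1.5}$ iff $\sqrt{k} \ge 4$, which fails for $k \in \{9,\dots,15\}$. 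So I would instead argue directly: $4k^{1.5} + 4k \le 4k^{1.5} + \frac{4}{3}k^{1.5} < 5k^{1.5}$ using $k \le \frac{1}{3}k^{1.5}$, i.e.\ $\sqrt{k} \ge 3$, which holds precisely because $k \ge 9$. Thus $\sum_{P\in\mathcal{P}} |E(P)| < 4k^{1.5} + \tfrac{4}{3}k^{1.5} = \tfrac{16}{3}k^{1.5}$, which is slightly worse than $5k^{1.5}$; to land exactly on $5k^{1.5}$ I would use the integrality-sharpened bound $|E(P)| \le 4\sqrt{k}+3$ and then $\sum |E(P)| \le 4k^{1.5} + 3k \le 4k^{1.5} + k^{1.5} = 5k^{1.5}$, where $3k \le k^{1.5}$ holds iff $\sqrt{k} \ge 3$, i.e.\ $k \ge 9$.

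I do not anticipate any genuine obstacle here; the corollary is an immediate arithmetic consequence of \cref{L:SPathLength}, and the only mild subtlety is bookkeeping the constants so that the stated bound $5k^{1.5}$ comes out, which is exactly what the side condition $k \ge 9$ is there to guarantee. In writing it up I would be explicit that integrality of $|E(P)|$ upgrades the strict bound $|E(P)| < 4\sqrt{k}+4$ to $|E(P)| \le 4\sqrt{k}+3$ (this step is harmless even when $\sqrt{k}$ is irrational, since $4\sqrt{k}+3 < 4\sqrt{k}+4$), sum over the $k$ paths, and then invoke $k \ge 9 \Rightarrow 3k \le k^{1.5}$ to conclude $\sum_{P\in\mathcal{P}} |E(P)| \le 5k^{1.5}$.
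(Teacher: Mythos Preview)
Your approach is exactly the paper's: it states the corollary as an immediate consequence of Lemma~\ref{L:SPathLength} together with $k\ge 9$, and your computation $\sum_P |E(P)| \le k(4\sqrt{k}+3)=4k^{1.5}+3k\le 5k^{1.5}$ (using $3k\le k^{1.5}\Leftrightarrow k\ge 9$) is precisely what is intended. One small correction to your parenthetical: the claim that the sharpening $|E(P)|\le 4\sqrt{k}+3$ is ``harmless even when $\sqrt{k}$ is irrational'' is not right---from $|E(P)|<4\sqrt{k}+4$ with $|E(P)|\in\mathbb{Z}$ you only get $|E(P)|\le \lfloor 4\sqrt{k}\rfloor+4$, which can exceed $4\sqrt{k}+3$. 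This is a non-issue here, though, because Section~\ref{subcubic:edp:split} explicitly assumes $k^{1/4}\in\mathbb{N}$, so $\sqrt{k}$ is an integer and the sharpening to $4\sqrt{k}+3$ is valid.
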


\subsubsection{An \texorpdfstring{$\mathcal{O}(k^{2.75})$}{} Vertex Kernel for Split Graphs}\label{SS:Split}
In this section, we use Corollary~\ref{C:SplitEdge} stating that there can be at most $5 k^{1.5} $ edges in any minimum solution to design a subcubic ($\mathcal{O}(k^{2.75})$) vertex kernel for \EDP~on split graphs. 

We start with the following preprocessing step, which we apply only once to our input instance. 
\smallskip

\noindent \textbf{\textsc{Preprocessing Step}:} First, we use the kernelization for \EDP~on split graphs provided by Heggernes et al.~\cite{hegger} as a preprocessing step. In their kernel, if $|C| \geq 2k$, then they report a Yes-instance (due to~\cite[Lemma 8]{hegger}), and hence, assume that $|C| <2k$. Due to \cref{prop2}, if $|C| >k$, then we have a Yes-instance, and hence we assume that $|C|\leq k$. Moreover, in their kernel, for any two vertices $u,w \in C$, $|N(u)\cap N(w)\cap I_N| \leq 4k+1$ (i.e., $u$ and $w$ have at most $4k+1$ common neighbors in $I_N$). Furthermore, there are no pendant vertices in $I_N$.

Next, we define a \textsc{Marking Procedure}, where we label the vertices in $C$ as \textit{rich} or \textit{poor}. Furthermore, we partition the vertices in $I_N$ into two sets, denoted $U$ (read {\em unmarked}) and $M$ (read {\em marked}), in the following manner. 
\smallskip

 \noindent \textbf{\textsc{Marking Procedure}:} Let $(G,\mathcal{X},k)$ be an instance of \textsc{EDP} where $G$ is a split graph.
\begin{enumerate}
    \item  $M \Leftarrow \emptyset$ and $U \Leftarrow I_N$. (Initially, all vertices in $I_N$ is unmarked.) Moreover, fix an ordering $v_1, \ldots, v_{|C|}$ of the vertices of $C$.
    \item For $1\leq i \leq |C|$:
    \begin{enumerate}
        \item [2.1] $A_{v_i} \Leftarrow \emptyset$, $M_{v_i} \Leftarrow \emptyset$ (read {\em marked for $v_i$}), and $U_T = U$ (read {\em unmarked temporary}).
         \item [2.2] For $1\leq j \leq |C|$ such that $i \neq j$ and $|A_{v_i}| <100 k^{0.75}$:
         \begin{enumerate}
             \item [2.2.1] If $|N(v_i) \cap N(v_j) \cap {U_T}| \geq k^{0.75}$, then $A_{v_i} \Leftarrow A_{v_i} \cup \{v_j\}$. Moreover, select some (arbitrary) subset $M_{v_i,v_j} \subseteq N(v_i) \cap N(v_j) \cap {U_T}$ such that $|M_{v_i,v_j}| = k^{0.75}$. Then, $M_{v_i} \Leftarrow M_{v_i} \cup M_{v_i,v_j}$ and $U_T \Leftarrow U_T \setminus M_{v_i,v_j}$. 
         \end{enumerate}

 \item [2.3] If $|A_{v_i}|= 100 k^{0.75}$, then label $v_i$ as {\em rich}. Moreover, $M \Leftarrow M \cup M_{v_i}$ and $U \Leftarrow U_T$.
 
 \item [2.4] If $|A_{v_i}|< 100 k^{0.75}$, then label $v_i$ as {\em poor}.
 \end{enumerate}
\end{enumerate}
This completes our \textsc{Marking Procedure}. 

\begin{remark}
Note that the definition of {\em rich} and {\em poor} depends on the order in which our \textsc{Marking Procedure} picks and marks the vertices (i.e., being rich or poor is not an intrinsic property of the vertex itself). A different execution of the above procedure can label different vertices as rich and poor. Moreover, note that if $M_{v,x}$ exists (i.e., $v$ is rich and $x\in A_{v}$) and $M_{x,v}$ exists (i.e., $x$ is rich and $v\in A_{x}$), then $M_{x,v}\cap M_{v,x} = \emptyset$.
\end{remark}

We have the following observation regarding the vertices marked rich by an execution of \textsc{Marking Procedure} on an instance $(G,\mathcal{X},k)$ of \EDP~where $G$ is a split graph.
\begin{observation}\label{O:rich}
Consider an execution of \textsc{Marking Procedure} on an instance $(G,\mathcal{X},k)$ of \EDP~where $G$ is a split graph. Then for a rich vertex $v$, $|M_v| = 100 k^{1.5}$ (i.e., the number of vertices marked in $I_N$ for $v$ are $100k^{1.5}$).
\end{observation}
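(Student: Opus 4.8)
The statement is a simple counting consequence of the way \textsc{Marking Procedure} is written, so my plan is just to unwind the loop structure carefully and add up the marked vertices. First I would recall that $v$ being labelled \emph{rich} means, by Step~2.3 together with the guard in Step~2.2, that the loop in Step~2.2 terminated with $|A_{v}| = 100 k^{0.75}$; in particular the body of Step~2.2.1 was executed successfully exactly $100 k^{0.75}$ times for this vertex $v$ (each successful execution adds one new vertex $v_j$ to $A_{v}$, and the guard $|A_{v_i}| < 100 k^{0.75}$ prevents more than that many).

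Next I would observe that each such successful execution of Step~2.2.1 contributes a set $M_{v,v_j} \subseteq N(v)\cap N(v_j)\cap U_T$ with $|M_{v,v_j}| = k^{0.75}$ to $M_{v}$ via the update $M_{v} \Leftarrow M_{v}\cup M_{v,v_j}$. The key point making the cardinalities add rather than merely bound $|M_v|$ from above is that these sets are pairwise disjoint: after adding $M_{v,v_j}$ we immediately remove it from $U_T$ (the update $U_T \Leftarrow U_T\setminus M_{v,v_j}$), and every subsequent $M_{v,v_{j'}}$ is chosen as a subset of the current $U_T$, hence is disjoint from all previously chosen $M_{v,v_j}$. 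Therefore $M_v$ is a disjoint union of $100 k^{0.75}$ sets each of size exactly $k^{0.75}$, giving $|M_v| = 100 k^{0.75}\cdot k^{0.75} = 100 k^{1.5}$.

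There is essentially no obstacle here — the only thing to be slightly careful about is the bookkeeping of which variable plays which role ($U_T$ versus $U$, and $M_{v_i}$ being accumulated inside the $i$-loop before being flushed into $M$ in Step~2.3), and the fact that the guard in Step~2.2 is a strict inequality while the label condition in Step~2.3 is an equality, so that ``rich'' forces $|A_v|$ to hit exactly $100 k^{0.75}$. I would phrase the whole argument in two short sentences invoking the disjointness observation from the preceding remark (which already notes that the $M$-sets picked for different vertices are disjoint, and implicitly that within one vertex they are too, since $U_T$ shrinks), and conclude $|M_v| = 100 k^{1.5}$.
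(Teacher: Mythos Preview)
Your proposal is correct and follows essentially the same approach as the paper: both argue that richness forces $|A_v|=100k^{0.75}$, that each $M_{v,x}$ has exactly $k^{0.75}$ elements, and that these sets are pairwise disjoint because each is drawn from the current $U_T$ which is immediately shrunk. Your write-up is slightly more verbose about the loop bookkeeping, but there is no substantive difference.
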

\begin{proof}
Notice that a vertex $v\in C$ is rich if $|A_v| = 100k^{0.75}$. Moreover, for each vertex $x \in A_v$, we mark exactly $k^{0.75}$ previously unmarked vertices in $I_N$ (denoted $M_{v,x}$). Hence, for any two distinct vertices $x,y \in A_v$, $M_{v,x} \cap M_{v,y} =\emptyset$. Therefore, $|M_v| = \sum_{w\in A_v} |M_{v,w}| = 100k^{0.75}\times k^{0.75} = 100k^{1.5}$.
\end{proof}

\begin{definition}[Reachable Vertices]\label{D:reachable}
Consider an execution of \textsc{Marking Procedure} on an instance $(G,\mathcal{X},k)$ of \EDP~where $G$ is a split graph. Moreover, let $\mathcal{P}$ be a solution of $(G,\mathcal{X},k)$. Then, for a rich vertex $v \in C$, let $R_v\subseteq A_v$ (read {\em reachable from $v$}) denote the set of vertices such that for each vertex $x\in R_v$, there is a vertex $u \in M_{v,x}$ such that $u$ is not used by any path in $\mathcal{P}$. 
\end{definition}

Notice that, in Definition~\ref{D:reachable}, a path of the form $(v,u,x)$ is edge-disjoint from every path in $\mathcal{P}$. Furthermore, we briefly remark that $R_v$ is defined with respect to the execution of \textsc{Marking Procedure} and a solution $\mathcal{P}$ of $(G,\mathcal{X},k)$, which we will always fix before we use $R_v$.

Let $\mathcal{P}$ be a solution to an instance $(G,\mathcal{X},k)$ of \EDP. Informally speaking, in the following lemma, we show that if $\mathcal{P}$ uses at most $6k^{1.5}$ edges, then for a rich vertex $v$, $R_v = \Omega(k^{0.75})$ (i.e., there are ``many reachable'' vertices in $A_v$ from $v$ using paths that are edge-disjoint from every path in $\mathcal{P}$). In particular, we have the following lemma.

\begin{lemma}\label{L:auxillary}
Consider an execution of \textsc{Marking Procedure} on an instance $(G,\mathcal{X},k)$ of \EDP~where $G$ is a split graph. Moreover, let $\mathcal{P}$ be a solution of $(G,\mathcal{X},k)$ (not necessarily minimum) such that the total number of edges used in $\mathcal{P}$ is at most $6k^{1.5}$. Then, for any rich vertex $v\in C$, $|R_v| \geq 94k^{0.75}$.
\end{lemma}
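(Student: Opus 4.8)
The plan is to bound $|R_v|$ from below by showing that each vertex $x \in A_v \setminus R_v$ is ``responsible'' for many edges of $\mathcal{P}$, so that only few such vertices can exist given the budget of $6k^{1.5}$ edges. Recall that by Observation~\ref{O:rich}, a rich vertex $v$ has $|A_v| = 100k^{0.75}$, and for each $x \in A_v$ the marking procedure selected a set $M_{v,x} \subseteq N(v) \cap N(x) \cap I_N$ with $|M_{v,x}| = k^{0.75}$, and these sets are pairwise disjoint across $x \in A_v$. By Definition~\ref{D:reachable}, if $x \notin R_v$, then \emph{every} vertex $u \in M_{v,x}$ is used by some path in $\mathcal{P}$.

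\textbf{Key steps.} First I would fix the solution $\mathcal{P}$ and observe that, after the preprocessing step, $I_N$ contains no pendant vertices, so every $u \in M_{v,x} \subseteq I_N$ has degree at least $2$ in $G$; in particular, when a path of $\mathcal{P}$ visits $u$ as an internal vertex it uses $2$ edges incident to $u$, and even if $u$ is an endpoint of a path it uses at least $1$ edge incident to $u$. Since $u \in I_N$ is a non-terminal, $u$ is never an endpoint of a path in $\mathcal{P}$, so $u$ being ``used'' means some path passes through $u$ internally, contributing $2$ edges incident to $u$; moreover distinct vertices of $I_N$ are pairwise non-adjacent, so these edge-sets (edges incident to $u$, over $u \in I_N$) are ``almost disjoint''. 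The cleaner route: each $u \in I_N$ used by $\mathcal{P}$ contributes at least one edge of $\mathcal{P}$ that is incident to $u$, and since the $M_{v,x}$'s for $x \in A_v$ are pairwise disjoint subsets of $I_N$, and an edge incident to a vertex in $I_N$ has its other endpoint in $C$ (not in $I_N$), each edge of $\mathcal{P}$ is incident to at most one vertex of $\bigcup_{x} M_{v,x}$. Hence, if $x \notin R_v$, the $k^{0.75}$ vertices of $M_{v,x}$ are all used, accounting for at least $k^{0.75}$ distinct edges of $\mathcal{P}$; and these edge-sets are disjoint over different $x \in A_v \setminus R_v$. Therefore $|A_v \setminus R_v| \cdot k^{0.75} \le |E(\mathcal{P})| \le 6k^{1.5}$, giving $|A_v \setminus R_v| \le 6k^{0.75}$, hence $|R_v| \ge 100k^{0.75} - 6k^{0.75} = 94k^{0.75}$, as claimed.

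\textbf{Main obstacle.} The delicate point is the disjointness/counting bookkeeping: I must argue that the edges of $\mathcal{P}$ charged to the vertices of $M_{v,x}$, summed over $x \in A_v \setminus R_v$, are genuinely distinct, so that the total is at most the total number of edges of $\mathcal{P}$. This relies on two facts: (i) $I_N$ is an independent set, so any edge of $G$ incident to a vertex $u \in I_N$ has its other endpoint in $C$, meaning such an edge is incident to exactly one vertex of $I_N$; and (ii) the sets $M_{v,x}$, $x \in A_v$, are pairwise disjoint by construction. Together these ensure the charging map from used vertices of $\bigcup_{x \in A_v \setminus R_v} M_{v,x}$ to edges of $\mathcal{P}$ is injective, which closes the argument. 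A minor subtlety is whether $\mathcal{P}$ is required to be minimum — it is not, per the hypothesis — so I would make sure the argument never invokes minimality and only uses the edge-budget $6k^{1.5}$.
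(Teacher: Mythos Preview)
Your proposal is correct and takes essentially the same approach as the paper. The paper's proof observes that at most $6k^{1.5}$ vertices of $I$ can be used by $\mathcal{P}$ (the same edge-to-$I$-vertex injection you describe), hence at least $94k^{1.5}$ vertices of $M_v$ are unused, and since all unused vertices of $M_v$ lie in $\bigcup_{x\in R_v} M_{v,x}$ this forces $|R_v|\cdot k^{0.75}\ge 94k^{1.5}$; your version just runs the complementary count on $A_v\setminus R_v$ directly, which is the same argument.
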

\begin{proof}
Note that in \textsc{Marking Procedure}, for each vertex  $x \in A_v$, we mark exactly $k^{0.75}$ vertices (whose set is denoted by $M_{v,x}$). Since $v$ is a rich vertex, due to Observation~\ref{O:rich}, $|M_v| = 100k^{1.5}$. Since the total number of edges used by $\mathcal{P}$ are at most $6k^{1.5}$, the total number of vertices used in $I$ by $\mathcal{P}$ can be at most $6k^{1.5}$ as well. As $M_v \subseteq I$, there are at least $94 k^{1.5}$ vertices in $M_v$ that are not used by any path in $\mathcal{P}$.  

Targeting contradiction, assume that $|R_v| <94k^{0.75}$. By definition of $R_v$ (Definition~\ref{D:reachable}), for every vertex $y \in A_v \setminus{R_v}$, each vertex in $M_{v,y}$ is used by some path in $\mathcal{P}$. Since for each $x\in A_v$, $|M_{v,x}| = k^{0.75}$, where the number of vertices in $M_v$ that are not used by any path in $\mathcal{P}$ is at most $|R_v|\times k^{0.75} < 94k^{0.75}\times k^{0.75} = 94k^{1.5}$, a contradiction.
\end{proof}

Next, we provide the following reduction rule.

\begin{RR} [RR\ref{RRSE1}] \label{RRSE1}
Let $(G,\mathcal{X},k)$ be an instance of \EDP~where $G$ is a split graph. Let $U$ be the set of unmarked vertices we get after an execution of \textsc{Marking Procedure} on $(G,\mathcal{X},k)$. Moreover, let $U'\subseteq U$ be the set of vertices in $U$ that do not have a poor neighbor. If $U' \neq \emptyset$, then $G' \Leftarrow G- U'$ and $\mathcal{X}' \Leftarrow \mathcal{X}$.
\end{RR}

The following two lemmas (Lemmas \ref{L:auxillary1} and \ref{L:splitEdgeMain}) are essential to prove the safeness of RR\ref{RRSE1}.

\begin{lemma}\label{L:auxillary1}
Let $(G,\mathcal{X},k)$ be an instance of \EDP~where $G$ is a split graph. Consider an execution of \textsc{Marking Procedure} on $(G,\mathcal{X},k)$. Moreover, let $\mathcal{P}$ be a solution of $(G,\mathcal{X},k)$ (not necessarily minimum) such that the total number of edges used in $\mathcal{P}$ is $\ell$, where  $\ell \leq 6k^{1.5}$. Furthermore, let $u\in U$ be an unmarked vertex such that $u$ does not have any poor neighbor. If there is a path $P \in \mathcal{P}$ such that $u \in V(P)$, then there exists a solution $\mathcal{P}' = (\mathcal{P} \setminus \{P\}) \cup \{P'\}$ of $(G,\mathcal{X},k)$ such that $u\notin V(P')$, and the total number of edges in $\mathcal{P}'$ is at most $\ell+3$.
\end{lemma}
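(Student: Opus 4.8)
The plan is to reroute the path $P$ so that it avoids the unmarked vertex $u$, paying at most three extra edges. First I would observe what $u$ looks like in $P$: since $u \in I$ (the independent set) and $G$ is split, $u$ is an internal vertex of $P$ whose two neighbours on $P$, call them $a$ and $b$, both lie in the clique $C$ (a terminal in $I$ can be an endpoint, but then it has degree at least its multiplicity by Remark~\ref{remark2}, and in a minimum-type argument we only need to handle the generic internal case; if $u$ is not on $P$ at all there is nothing to do, and if $u$ is a terminal then $u$ cannot be in $U$ since $U \subseteq I_N$ consists of \emph{non-terminal} vertices, so this case does not arise). Thus $u$ appears as the subpath $(a,u,b)$ with $a,b \in C$ and $a \neq b$.

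The key point is that $u$ has no poor neighbour, so every neighbour of $u$ in $C$ — in particular $a$ and $b$ — is rich. Now I would invoke the marking structure around the rich vertex $a$: by Observation~\ref{O:rich}, $a$ has $|M_a| = 100k^{1.5}$ marked vertices reserved for it, distributed as $100k^{0.75}$ blocks $M_{a,x}$ of size $k^{0.75}$, one for each $x \in A_a$. Apply Lemma~\ref{L:auxillary} to the solution $\mathcal{P}$ (which uses $\ell \le 6k^{1.5}$ edges): we get $|R_a| \ge 94k^{0.75}$, so there are at least $94k^{0.75}$ vertices $x \in A_a$ for which some vertex $w_x \in M_{a,x}$ is untouched by $\mathcal{P}$, making the two-edge path $(a, w_x, x)$ edge-disjoint from all of $\mathcal{P}$. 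The same bound holds around $b$. The rerouting strategy is then: replace the subpath $(a,u,b)$ of $P$ by a detour of the form $(a, w_x, x, w'_x, b)$ where $x$ is chosen so that the connecting vertices $w_x \in M_{a,x}$ and $w'_x \in M_{b,x}$ are both free and distinct from everything used by $P$ — this is where I would count carefully that such an $x$ exists. We need $x \in R_a \cap R_b$ and $x$ not already on $P$; since $|R_a|, |R_b| \ge 94k^{0.75}$ and $|V(P) \cap C| = O(\sqrt{k})$, and the remaining solution paths can occupy only $O(k^{1.5})$ further slots among the $\Theta(k^{0.75})$ candidate blocks, a clean pigeonhole leaves many valid choices of $x$; and the freshly chosen connector vertices $w_x, w'_x$ lie in blocks $M_{a,x}, M_{b,x}$ which, being disjoint from each other and untouched by $\mathcal{P}$, are automatically disjoint from $V(P) \setminus \{u\}$ (note $M_{a,x} \cap M_{b,x} = \emptyset$ since $a$ and $b$ mark disjoint blocks, as recorded in the remark after Observation~\ref{O:rich}).

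Having picked such an $x$, I set $P'$ to be $P$ with the length-2 segment $(a,u,b)$ excised and the length-4 segment $(a, w_x, x, w'_x, b)$ spliced in. Then $P'$ is still an $(s,t)$-path with the same endpoints as $P$, it does not visit $u$, it is edge-disjoint from every path in $\mathcal{P} \setminus \{P\}$ (the four new edges all touch vertices in $\{a, w_x, x, w'_x, b\}$ which, apart from $a,b \in C$ shared with $P$, are new vertices not used elsewhere in $\mathcal{P}$; the edges $aw_x, w_x x, x w'_x, w'_x b$ are unused by $\mathcal{P}$ by the choice via $R_a, R_b$, and edges internal to the untouched portion of $P$ are still only in $P$), and $P$ itself is edge-disjoint from the rest of $\mathcal{P}$ by hypothesis, so $\mathcal{P}' = (\mathcal{P} \setminus \{P\}) \cup \{P'\}$ is a valid solution. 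Finally $|E(P')| = |E(P)| - 2 + 4 = |E(P)| + 2$, so the total edge count of $\mathcal{P}'$ is $\ell + 2 \le \ell + 3$, as claimed (the slack of $+3$ versus $+2$ presumably absorbs a degenerate case, e.g. when $a = b$ is impossible but one of the two connectors could coincide with an existing clique vertex forcing a length-3 rather than length-4 detour, or the symmetric situation where $u$ is adjacent on $P$ to only one clique vertex).

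The main obstacle I anticipate is the counting argument that guarantees a good choice of $x \in R_a \cap R_b$ together with fresh connectors simultaneously: one has to be careful that the constants ($94k^{0.75}$ reachable vertices against $100k^{0.75}$ total, minus the $O(\sqrt k)$ clique vertices of $P$ and whatever other solution paths block) really do leave room, and that the two blocks $M_{a,x}$ and $M_{b,x}$ used for the two connectors are genuinely disjoint and each contain a vertex avoided by all of $\mathcal{P}$ — the latter is exactly the content of $x$ lying in both $R_a$ and $R_b$, so the crux is showing $R_a \cap R_b \setminus V(P) \neq \emptyset$, which follows since $|R_a| + |R_b| \ge 188k^{0.75} > 100k^{0.75} + O(\sqrt k) \ge |A_a \cup A_b| + |V(P) \cap C|$ for $k$ large (recall $k > 8$). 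I would also double-check the edge-case where $u$ is an endpoint-adjacent internal vertex with only one clique neighbour, handling it with a single-sided detour $(a, w_x, x, \dots)$ costing only one extra edge, comfortably within the $+3$ budget.
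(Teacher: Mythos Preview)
Your overall strategy matches the paper's Case~1: when the two rich neighbours $a,b$ of $u$ on $P$ share a reachable clique vertex $x\in R_a\cap R_b$, you splice in the four-edge detour $(a,w_x,x,w'_x,b)$ and gain only $+2$ edges. That part is fine.

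The genuine gap is your pigeonhole argument that $R_a\cap R_b\ne\emptyset$. You write $|R_a|+|R_b|\ge 188k^{0.75}>100k^{0.75}+O(\sqrt k)\ge |A_a\cup A_b|+|V(P)\cap C|$, but this is wrong: for a rich vertex $v$ the set $A_v$ has size exactly $100k^{0.75}$, so $|A_a\cup A_b|$ may be as large as $200k^{0.75}$. Since $R_a\subseteq A_a$ and $R_b\subseteq A_b$, nothing prevents $A_a$ and $A_b$ (hence $R_a$ and $R_b$) from being disjoint. Your inclusion--exclusion simply does not force an intersection.

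This is precisely what the paper's Case~2 handles, and it is where the $+3$ is actually used (not, as you guessed, in some degenerate endpoint situation). When $R_a\cap R_b=\emptyset$, there are at least $|R_a|\cdot|R_b|\ge(94k^{0.75})^2$ clique edges $xy$ with $x\in R_a$, $y\in R_b$; since $\mathcal P$ uses at most $6k^{1.5}$ edges, some such $xy$ is unused. Then one takes $w\in M_{a,x}$ and $w'\in M_{b,y}$ unused by $\mathcal P$ (these exist by the definition of $R_a,R_b$) and splices in the five-edge detour $(a,w,x,y,w',b)$, replacing two edges by five for a net $+3$. You need to add this case; once you do, the argument is complete and essentially identical to the paper's.
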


\begin{proof}
First, note that $u$ is not an endpoint of $P$ since $u \in I_N$. Let the immediate neighbors of $u$ in $P$ be $v$ and $w$ (i.e., $P$ is of the form $(s, \ldots, v,u,w,$ $\ldots, t)$). Since $u$ does not have any poor neighbor and $u\in I$, note that both $v$ and $w$ are rich vertices. Therefore, due to Lemma~\ref{L:auxillary} (and the fact that $\ell \leq 6k^{1.5}$), $|R_v| \geq 94k^{0.75}$ and $|R_w| \geq 94k^{0.75}$. See Figure~\ref{fig:splitRule} for an abstract illustration.

\begin{figure}
    \centering
    \includegraphics[scale= 0.8]{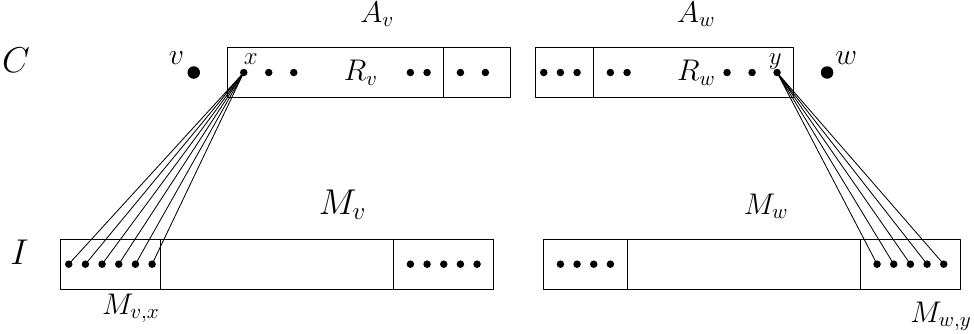}
    \caption{An abstract illustration of the notations used in the proof of Lemma~\ref{L:auxillary1}.}
    \label{fig:splitRule}
\end{figure}

Observe that both $R_v$ and $R_w$ are subsets of $C$. Now, we have one of the following cases. 
\smallskip

    \noindent \textbf{Case 1:} \bm{$R_v \cap R_w \neq \emptyset.$} Let $x \in R_v\cap R_w$. Now, due to definition of $R_v$ and $R_w$ (Definition~\ref{D:reachable}), there are vertices $x' \in M_{v,x}$ and $y'\in M_{w,x}$ such that none of $x'$ and $y'$ is used by any path in $\mathcal{P}$. So, the edges used in the path $\widehat{P} = (v,x',x,y',w)$ are not used by any path in $\mathcal{P}$ (as either $x'$ or $y'$ is an endpoint of each edge in $\widehat{P}$). Therefore, we obtain $P'$ be replacing the $(v,w)$-subpath of $P$ with the path $\widehat{P}$. Observe that $\mathcal{P}' = (\mathcal{P} \setminus \{P\}) \cup \{P'\}$ is a solution of $(G,\mathcal{X},k)$, $u\notin V(P')$, and the total number of edges in $\mathcal{P}'$ is at most $\ell+2$ (since the $(u,v)$-subpath of $P$ has two edges and $\widehat{P}$ has four edges).
    \smallskip
    
\noindent \textbf{Case 2:} \bm{$R_v \cap R_w = \emptyset.$} Recall that, due to Lemma~\ref{L:auxillary}, $|R_v| \geq 94k^{0.75}$ and $|R_w| \geq 94k^{0.75}$.  Since $R_v \subseteq C$, $R_w\subseteq C$, and $R_v \cap R_w = \emptyset$, there are at least $8836k^{1.5}$ edges of the form $xy$ such that $x\in R_v$ and $y \in R_w$. Since $\mathcal{P}$ uses at most $6k^{1.5}$ edges, there is at least one edge $xy$ that is not used by any path in $\mathcal{P}$ such that $x\in R_v$ and $y \in R_w$.  Moreover, due to definition of $R_v$ and $R_w$ (Definition~\ref{D:reachable}), there are vertices $x'\in M_{v,x}$ and $y'\in M_{w,y}$ such that none of $x'$ and $y'$ is used by any path in $\mathcal{P}$. So, the edges used in the path $\widehat{P} = (v,x',x,y,y',w)$ are not used by any path in $\mathcal{P}$. Therefore, we obtain $P'$ by replacing the $(v,w)$-subpath of $P$ with the path $\widehat{P}$. Observe that $\mathcal{P}' = (\mathcal{P} \setminus \{P\}) \cup \{P'\}$ is a solution of $(G,\mathcal{X},k)$, $u\notin V(P')$, and the total number of edges in $\mathcal{P}'$ is at most $\ell+3$ (since the $(u,v)$-subpath of $P$ has two edges and $\widehat{P}$ has five edges).

This completes our proof.
\end{proof}

\begin{lemma}\label{L:splitEdgeMain}
Let $(G,\mathcal{X},k)$ be an instance of \EDP~where $G$ is a split graph. Let $U$ be the set of unmarked vertices we get after an execution of \textsc{Marking Procedure} on $(G,\mathcal{X},k)$, and let $u\in U$ be a vertex such that $u$ does not have any poor neighbor. Let $G' = G- \{u\}$. Then, $(G,\mathcal{X},k)$ is a Yes-instance if and only if $(G',\mathcal{X},k)$ is a Yes-instance. 
\end{lemma}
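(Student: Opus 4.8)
The plan is to prove the equivalence in both directions, with the backward direction being trivial and the forward direction relying on the two preceding lemmas together with the bound from Corollary~\ref{C:SplitEdge}.

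\textbf{The easy direction.} If $(G',\mathcal{X},k)$ is a Yes-instance, then since $G' = G - \{u\}$ is a subgraph of $G$ on the same terminal set, any solution for $(G',\mathcal{X},k)$ is immediately a solution for $(G,\mathcal{X},k)$; so $(G,\mathcal{X},k)$ is a Yes-instance. (One should note here that deleting $u$, which lies in $I_N$, does not remove any terminal vertex, so the instance $(G',\mathcal{X},k)$ is well-defined and $G'$ is still a split graph by Remark~\ref{obsdel}.)

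\textbf{The hard direction.} Suppose $(G,\mathcal{X},k)$ is a Yes-instance. Take a \emph{minimum} solution $\mathcal{P}$ of $(G,\mathcal{X},k)$. By Corollary~\ref{C:SplitEdge}, $\sum_{P \in \mathcal{P}} |E(P)| \le 5k^{1.5} \le 6k^{1.5}$, so $\mathcal{P}$ satisfies the edge-budget hypothesis of Lemma~\ref{L:auxillary1}. Now I would consider two cases. If no path of $\mathcal{P}$ visits $u$, then $\mathcal{P}$ is already a solution that lives entirely in $G - \{u\} = G'$, and we are done. Otherwise some path $P \in \mathcal{P}$ has $u \in V(P)$; here I apply Lemma~\ref{L:auxillary1} with $\ell = \sum_{P \in \mathcal{P}} |E(P)| \le 6k^{1.5}$ to obtain a new solution $\mathcal{P}' = (\mathcal{P} \setminus \{P\}) \cup \{P'\}$ of $(G,\mathcal{X},k)$ with $u \notin V(P')$ and with at most $\ell + 3$ edges in total. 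The subtle point is that one application of Lemma~\ref{L:auxillary1} removes $u$ from \emph{one} path, but since $\mathcal{P}$ is a set of pairwise edge-disjoint paths and $u$ was an internal (non-terminal, degree-contributing) vertex, $u$ could in principle appear on several paths of $\mathcal{P}$ simultaneously — so I would iterate. Each iteration increases the total edge count by at most $3$ and strictly decreases the number of paths containing $u$, so after at most $k$ iterations we reach a solution with no path through $u$, provided the edge budget hypothesis of Lemma~\ref{L:auxillary1} stays satisfied throughout.

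\textbf{The main obstacle.} The one thing to verify carefully is that the edge budget does not blow past $6k^{1.5}$ over the iterations: starting from $\le 5k^{1.5}$ and adding $\le 3$ per iteration over $\le k$ iterations gives $\le 5k^{1.5} + 3k$, which is at most $6k^{1.5}$ for $k \ge 9$ (which holds by our standing assumption $k > 8$), so Lemma~\ref{L:auxillary1} remains applicable at every step. (In fact, the cleanest way to organize this is: among all solutions of $(G,\mathcal{X},k)$ with at most $6k^{1.5}$ edges in total, pick one, say $\mathcal{P}^\ast$, minimizing the number of paths that visit $u$; such a solution exists since the minimum solution $\mathcal{P}$ qualifies. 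If that minimum number is positive, one application of Lemma~\ref{L:auxillary1} produces a solution with strictly fewer $u$-visiting paths and at most $6k^{1.5}$ edges still — contradiction — so the minimum is $0$, and $\mathcal{P}^\ast$ is a solution of $(G',\mathcal{X},k)$.) Either bookkeeping route works; the induction-via-extremal-choice phrasing avoids having to track the running edge count explicitly.
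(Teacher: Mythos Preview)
Your proof is correct and follows essentially the same route as the paper: start from a minimum solution (total edge count $\le 5k^{1.5}$ by Corollary~\ref{C:SplitEdge}), then iteratively apply Lemma~\ref{L:auxillary1} to reroute each path through $u$, tracking that $5k^{1.5} + 3k \le 6k^{1.5}$ for $k \ge 9$ so the lemma's hypothesis holds at every step. One minor caveat: your parenthetical extremal-choice reformulation is not quite sound as stated, since applying Lemma~\ref{L:auxillary1} once to a solution with $\ell = 6k^{1.5}$ edges may yield one with $6k^{1.5} + 3$ edges, which falls outside the family you are minimizing over; the explicit iterative bookkeeping (which is also the paper's argument) is the version that works cleanly.
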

\begin{proof}
We claim that $(G,\mathcal{X},k)$ and $(G',\mathcal{X},k)$ are equivalent instances of \textsc{EDP}.

In one direction, if $(G',\mathcal{X},k)$ is a Yes-instance of \textsc{EDP},
then, because $G'$ is a subgraph of $G$, $(G,\mathcal{X},k)$ is also a Yes-instance of \textsc{EDP}.
    
In the other direction, suppose $(G,\mathcal{X},k)$ is a Yes-instance of \textsc{EDP}. Furthermore, let $\mathcal{P}$ be a minimum solution of $(G,\mathcal{X},k)$. If $u$ does not participate in any path in $\mathcal{P}$, then note that $\mathcal{P}$ is a solution to $(G',\mathcal{X},k)$ as well.
    
So, let $u$ be a vertex in paths $P_1, \ldots, P_r$ in $\mathcal{P}$. Observe that $u$ is an internal vertex of all these paths (since $u \in I_N$) and $u$ does not have any poor neighbor (due to the definition of $u$). Now, for each path $P_i$ ($i\in [r]$), we will create a path $P'_i$ such that $(\mathcal{P} \setminus \{P_1, \ldots, P_r\}) \cup \{P'_{1}, \ldots, P'_{r}\}$ is a solution to $(G',\mathcal{X},k)$, and the endpoints of $P'_i$ are the same as those of $P_{i}$.

Since $\mathcal{P}$ is a minimum solution, due to Corollary~\ref{C:SplitEdge}, $\mathcal{P}$ uses at most $5k^{1.5}$ edges. Therefore, due to Lemma~\ref{L:auxillary1}, we can get a solution $\mathcal{P}_1 = (\mathcal{P}\setminus \{P_1\}) \cup \{P'_1\}$ of $(G,\mathcal{X},k)$ such that $\mathcal{P}_1$ uses at most $5k^{1.5}+3$ edges. Similarly, for $2 \leq i \leq r$, we can get a solution $(\mathcal{P}_{i-1} \setminus \{P_i\}) \cup \{ P'_i\}$ of $(G,\mathcal{X},k)$ such that $\mathcal{P}_i$ uses at most $5k^{1.5}+3i$ edges (which is always less than $6k^{1.5}$ since $i\leq r \leq k$ and $k\geq 9$) due to Lemma~\ref{L:auxillary1}. 
Observe that no path in $\mathcal{P}_r$ contains the vertex $u$. Therefore, $\mathcal{P}_r$ is a solution to $(G',\mathcal{X},k)$ as well (as $\mathcal{P}_r$ is a solution of $(G,\mathcal{X},k)$). Hence, $(G',\mathcal{X},k)$ is a Yes-instance.
\end{proof}
    
We have the following lemma to prove the safeness of RR\ref{RRSE1}.
\begin{lemma}\label{L:RRSE1}
RR\ref{RRSE1} is safe.
\end{lemma}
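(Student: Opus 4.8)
The plan is to show that the reduction rule RR\ref{RRSE1}, which deletes the set $U'$ of all unmarked vertices having no poor neighbor, produces an equivalent instance. Since $U'$ may contain several vertices, but Lemma~\ref{L:splitEdgeMain} handles only the deletion of a single such vertex, the natural approach is an inductive argument on $|U'|$: I would delete the vertices of $U'$ one at a time, invoking Lemma~\ref{L:splitEdgeMain} at each step. The one subtlety is that when we pass from $G$ to $G - \{u\}$ for some $u \in U'$, the \textsc{Marking Procedure} (and hence the sets of rich/poor vertices and the resulting $U$ and $U'$) is defined relative to the current graph, so I must check that the remaining vertices of $U'$ continue to satisfy the hypotheses of Lemma~\ref{L:splitEdgeMain} in the smaller graph.

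More precisely, first I would fix an execution of \textsc{Marking Procedure} on $(G,\mathcal{X},k)$ and let $U' = \{u_1, \dots, u_m\}$ be the set identified by RR\ref{RRSE1}. Define $G_0 = G$ and $G_i = G_{i-1} - \{u_i\}$ for $i \in [m]$, so $G_m = G - U' = G'$. For each step, I would argue that $u_i$ remains an unmarked vertex with no poor neighbor in $G_{i-1}$: deleting vertices of $U'$ (which all lie in $I_N$) does not change the clique $C$, so one can use the "same" marking run on $G_{i-1}$ (it never touches the vertices of $U'$, as those are unmarked and are never chosen into any $M_{v,x}$), yielding the same partition of $C$ into rich and poor and leaving $u_i$ unmarked with the same neighborhood in $C$ — hence still no poor neighbor. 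Then Lemma~\ref{L:splitEdgeMain} applies to give that $(G_{i-1},\mathcal{X},k)$ is a Yes-instance iff $(G_i,\mathcal{X},k)$ is. Chaining these equivalences for $i = 1, \dots, m$ yields that $(G,\mathcal{X},k)$ is a Yes-instance iff $(G',\mathcal{X},k)$ is, and since RR\ref{RRSE1} leaves $\mathcal{X}$ and $k$ unchanged, the rule is safe.

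The main obstacle I anticipate is the bookkeeping around the fact that "rich/poor" is not an intrinsic property but depends on the marking execution (as flagged in the remark following \textsc{Marking Procedure}): I need the deletions to be consistent with a single coherent marking run so that the hypothesis "$u$ has no poor neighbor" is preserved throughout the induction. The cleanest way to handle this is to observe that no vertex of $U'$ is ever marked or used in any $M_{v,x}$ during the original run, so literally the same sequence of choices is valid on each $G_i$ and produces the identical labelling of $C$; with that observation the induction goes through routinely. Alternatively, if one prefers, RR\ref{RRSE1} could be restated to delete vertices of $U'$ one at a time and re-run the marking — but then the single-vertex statement of Lemma~\ref{L:splitEdgeMain} suffices directly and no induction is needed.
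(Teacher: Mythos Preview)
Your proposal is correct and follows the same approach as the paper, which simply observes that $G'$ is a split graph and then invokes Lemma~\ref{L:splitEdgeMain}. You are in fact more careful than the paper's one-line proof: you spell out the induction needed to pass from the single-vertex statement of Lemma~\ref{L:splitEdgeMain} to the deletion of all of $U'$, and correctly verify that the same marking run remains valid after each deletion.
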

\begin{proof}
By Remark \ref{obsdel} (in Section~\ref{sec:prelim}), note that $G'$ is a split graph. The rest of the proof follows from Lemma~\ref{L:splitEdgeMain}.
\end{proof}

Next, we show that after an exhaustive application of RR\ref{RRSE1}, we get a subcubic vertex kernel. 

\begin{lemma}\label{L:splitBound}
Let $(G,\mathcal{X},k)$ be an instance of \EDP~where $G$ is a split graph. If we cannot apply RR\ref{RRSE1} on $(G,\mathcal{X},k)$, then $|V(G)| = \mathcal{O}(k^{2.75})$. 
\end{lemma}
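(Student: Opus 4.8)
The plan is to bound $|V(G)| = |C| + |I_T| + |I_N|$ by analyzing each part after the \textsc{Preprocessing Step} and exhaustive application of RR\ref{RRSE1}. From the preprocessing step we already have $|C| \le k$, and the kernel of Heggernes et al.\ also bounds $|I_T| = \mathcal{O}(k^2)$ (each terminal pair contributes at most two terminal vertices, and heavy/duplicate handling keeps $|I_T| = \mathcal{O}(k)$ in fact, but $\mathcal{O}(k^2)$ certainly suffices). So the entire task reduces to bounding $|I_N|$, which I would split into the marked vertices $M$ and the unmarked vertices $U$.

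For the marked set $M$, I would argue as follows. By the \textsc{Marking Procedure}, $M = \bigcup_{v \text{ rich}} M_v$. There are at most $|C| \le k$ rich vertices, and by Observation~\ref{O:rich}, each rich vertex $v$ satisfies $|M_v| = 100k^{1.5}$. Hence $|M| \le k \cdot 100k^{1.5} = 100k^{2.5} = \mathcal{O}(k^{2.5})$.

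For the unmarked set $U$, after RR\ref{RRSE1} is applied exhaustively, every vertex of $U$ has at least one poor neighbor in $C$ (otherwise the rule would still apply). So I would bound $|U|$ by $\sum_{v \text{ poor}} |N(v) \cap U|$. The key step is to show that a poor vertex $v$ has only $\mathcal{O}(k^{1.75})$ neighbors in $I_N$ — and in particular in $U$. This is the crux of the argument and the main obstacle. The idea: when the \textsc{Marking Procedure} processes a poor vertex $v$, it ends with $|A_v| < 100k^{0.75}$, which means that at the moment $v$ was processed, for all but fewer than $100k^{0.75}$ other clique vertices $w$, we had $|N(v) \cap N(w) \cap U_T| < k^{0.75}$. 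Combining this over all $w \in C \setminus A_v$ (there are at most $k$ of them), the neighbors of $v$ in the then-current $U_T$ that lie in the common neighborhoods of such ``sparse'' $w$'s number fewer than $k \cdot k^{0.75} = k^{1.75}$; the remaining neighbors of $v$ in $U_T$ lie in $\bigcup_{w \in A_v} M_{v,w}$, contributing at most $100k^{0.75} \cdot k^{0.75} = 100k^{1.5}$. One must also account for vertices removed from $U$ (placed in $M$) by \emph{earlier} rich vertices, but since those are now in $M$ and not in $U$, they do not count toward $|N(v) \cap U|$; and neighbors of $v$ that were unmarked but not yet in $U_T$ when $v$ was processed cannot exist since $U_T$ only shrinks as the procedure runs over $j$. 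Here one has to be careful that the preprocessing guarantees no two clique vertices share more than $4k+1$ common neighbors in $I_N$, but that weaker bound is not even needed for this count; what matters is the threshold $k^{0.75}$ built into the procedure. Putting this together: $|N(v) \cap I_N| \le 100k^{1.5} + k^{1.75} + (\text{lower-order}) = \mathcal{O}(k^{1.75})$ for each poor $v$, and since there are at most $k$ poor vertices, $|U| \le \sum_{v \text{ poor}} |N(v)\cap U| = \mathcal{O}(k \cdot k^{1.75}) = \mathcal{O}(k^{2.75})$.

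Finally I would combine the three bounds: $|V(G)| = |C| + |I_T| + |M| + |U| = \mathcal{O}(k) + \mathcal{O}(k^2) + \mathcal{O}(k^{2.5}) + \mathcal{O}(k^{2.75}) = \mathcal{O}(k^{2.75})$, which gives the claim. The dominant term comes from the unmarked non-terminal vertices attached to poor clique vertices, so the subtlety in the poor-vertex neighborhood count is the heart of the proof; everything else is bookkeeping using the already-established Observation~\ref{O:rich} and the preprocessing guarantees.
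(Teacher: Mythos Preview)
Your overall decomposition ($|V(G)| = |C| + |I_T| + |M| + |U|$, then bounding $|M| = \mathcal{O}(k^{2.5})$ and $|U| = \mathcal{O}(k^{2.75})$ via poor-vertex neighborhoods) matches the paper's proof exactly. The error is in the heart of your poor-vertex bound, and specifically in the sentence ``the remaining neighbors of $v$ in $U_T$ lie in $\bigcup_{w \in A_v} M_{v,w}$'' and the follow-up assertion that the $4k{+}1$ preprocessing bound ``is not even needed.''

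That sentence is false. A vertex $u \in N(v) \cap U$ all of whose other clique-neighbors lie in $A_v$ is in $N(v) \cap N(w)$ for some $w \in A_v$, but there is no reason it must belong to the particular $k^{0.75}$-sized subset $M_{v,w}$ that the procedure chose. Without any further control, $N(v) \cap N(w)$ could be arbitrarily large: take a single $w$ sharing $N \gg k$ common $I_N$-neighbors with $v$; then $|A_v| = 1$, $v$ is poor, yet $|N(v) \cap U|$ is unbounded. This is exactly what the preprocessing bound $|N(v) \cap N(w) \cap I_N| \le 4k+1$ rules out. With it, the contribution from $w \in A_v$ is at most $|A_v| \cdot (4k+1) < 100k^{0.75}(4k+1) = \mathcal{O}(k^{1.75})$, which, together with your $\sum_{w \notin A_v} |N(v) \cap N(w) \cap U_T| < k \cdot k^{0.75}$ bound and the $|M_v| < 100k^{1.5}$ local-marks bound, yields $|N(v) \cap U| = \mathcal{O}(k^{1.75})$ as desired. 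This is precisely how the paper argues it: the $4k{+}1$ bound is essential, not incidental. Once you reinstate it, your argument becomes correct and coincides with the paper's.
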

\begin{proof}
Consider an execution of \textsc{Marking Procedure} on $(G,\mathcal{X},k)$ to get $U,M$, and rich and poor vertices. 

First, we count the number of marked vertices in $I_N$, being $|M|$. Note that we mark vertices in $I_N$ only for rich vertices. Consider a rich vertex $v$ (in $C$). For a rich vertex $v$, $M_v = 100k^{1.5}$ (Observation~\ref{O:rich}).  Therefore, the total number of marked vertices in $I_N$ is at most $100|C|k^{1.5}$ (if each vertex in $C$ is rich). Since $|C| \leq k$ (due to \textsc{Preprocessing Step}), the total number of marked vertices in $I_N$ are at most $100 k^{2.5} = \mathcal{O}(k^{2.5})$ .


Second, we count the cardinality of unmarked vertices $U$ in $I_N$. Since we cannot apply RR\ref{RRSE1}, each vertex $w \in U$ has at least one poor vertex as a neighbor. Therefore, $|U| \leq \sum_{v: v \text{ is poor}} |N(v)\cap U|$. So, consider a poor vertex $v_i$ in $C$. We claim that $|N(v_i) \cap U| = \mathcal{O}(k^{1.75})$. Targeting contradiction, assume that $|N(v_i)\cap U| > 400k^{1.75}+100k^{0.75}$.  Since any two vertices in $C$ can have at most $4k+1$ common neighbors in $I_N$ (due to \textsc{Preprocessing Step}), there are at least $100k^{0.75}$ vertices in $A_{v_i}$ (since \textsc{Preprocessing Step} ensures that there are no pendant vertices in $I_N$), a contradiction to the fact that $v_i$ is a poor vertex. Therefore, for each poor vertex $v_i$ (in $C$),  $|N(v_i)\cap U| = \mathcal{O}(k^{1.75})$. Since there are at most $k$ poor vertices (as $|C| \leq k$), we have that $|U| = \mathcal{O} (k^{2.75})$.

Since $|V(G)| = |C|+|I_T|+|U|+|M|$, we have that $|V(G)| = \mathcal{O}(k^{2.75})$ (as $|C| \leq k$ and $|I_T| \leq 2k$).
\end{proof}

Now, we have the following observation.
\begin{observation} \label{O:splitEdge}
RR\ref{RRSE1}, \textsc{Marking Procedure}, and \textsc{Preprocessing Step} can be executed in polynomial time. Moreover, our initial parameter $k$ does not increase during the application of RR\ref{RRSE1}.
\end{observation}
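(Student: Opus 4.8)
The plan is to treat the three procedures one at a time, bounding the running time of each by a polynomial in $|V(G)|$ and $k$, and then to observe that RR\ref{RRSE1} never changes the terminal multiset, so in particular it cannot increase the parameter. Since each of the three procedures is essentially a bounded nest of loops over $C$, $I_N$, and neighborhoods, with only elementary set operations inside, the whole argument is bookkeeping.

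For the \textsc{Preprocessing Step}: computing a split partition $(C,I)$ takes linear time~\cite{hammer}, the threshold checks ($|C| \ge 2k$ and $|C| > k$) and the partition of $I$ into $I_T, I_N$ are linear, and the core is the kernelization algorithm of Heggernes et al.~\cite{hegger}, which runs in polynomial time by definition of a kernelization algorithm; moreover that algorithm does not increase the number of terminal pairs, and the early Yes-reports obviously do not either. For the \textsc{Marking Procedure}: fixing the ordering of $C$ costs $O(|C|)$; the outer loop runs $|C|$ times, and for each $v_i$ the inner loop runs at most $|C|$ times (it may terminate earlier once $|A_{v_i}| = 100k^{0.75}$). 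Each inner iteration computes $N(v_i)\cap N(v_j)\cap U_T$, tests one cardinality threshold, selects an arbitrary subset $M_{v_i,v_j}$ of size $k^{0.75}$, and updates $A_{v_i}, M_{v_i}, U_T$ — all doable in time polynomial in $|V(G)|$ with adjacency lists. The updates of $M$ and $U$ at Steps~2.3/2.4 and the labelling are linear. Summing over the $O(|C|^2)$ inner iterations gives a polynomial bound.

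For RR\ref{RRSE1}: one first runs the \textsc{Marking Procedure} (polynomial, by the above), then identifies $U' \subseteq U$ by testing, for each $u \in U$, whether $u$ has a poor neighbour — a scan over $N(u)\cap C$ using the computed labels — and finally deletes $U'$ from $G$; all of this is polynomial. For the parameter claim, RR\ref{RRSE1} sets $\mathcal{X}' \Leftarrow \mathcal{X}$, so $k' = |\mathcal{X}'| = |\mathcal{X}| = k$; hence the parameter stays the same and, in particular, does not increase. The only point worth a sentence of care — rather than a genuine obstacle — is that the inner loop of the \textsc{Marking Procedure} has an early-exit condition, but even the worst case of $|C|$ iterations per outer step is polynomial, so the bound holds unconditionally.
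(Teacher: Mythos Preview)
Your argument is correct; the paper itself gives no proof for this observation, treating it as self-evident, so your write-up is simply a careful elaboration of the routine bookkeeping the authors leave implicit. The only minor remark is that the statement asks about the parameter under RR\ref{RRSE1} alone, so your additional comments on the \textsc{Preprocessing Step} not increasing $k$ are not needed (though not wrong).
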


Finally, due to \textsc{Preprocessing Step}, RR\ref{RRSE1}, Observation~\ref{O:splitEdge}, and Lemmas~\ref{L:RRSE1} and~\ref{L:splitBound}, we have the following theorem.

\splitedp*

\subsection{A Linear Vertex Kernel for Threshold Graphs} \label{secthreshold}

Let $(G,\mathcal{X},k)$ be an instance of \textsc{EDP} where $G$ is a threshold graph. Note that given a threshold graph $G$, we can compute (in linear time) a partition $(C, I)$ of $V(G)$ such that $C$ is a clique, and $I$ is an independent set~\cite{hammer}. If $|C|\geq k+1$, then by Corollary \ref{prop2}, $(G,\mathcal{X},k)$ is a Yes-instance. So, without loss of generality, we assume that $|C|\leq k$. Furthermore, let us partition the set $I$ into two sets, say, $I_{T}$ and $I_{N}$, where $I_{T}$ contains the terminal vertices and $I_{N}$ contains the non-terminal vertices.
Since there are at most $2k$ terminal vertices in $G$, we have $|I_{T}|\leq 2k$. If $|I_{N}|\leq 4k+1$, then we have a kernel of at most $7k+1$ vertices, and we are done. So, assume otherwise (i.e., $|I_{N}|\geq 4k+2$). By the definition of a threshold graph, there exists an ordering, say, $(v_{1},v_{2},\ldots,v_{|I_{N}|})$, of $I_{N}$ such that $N(v_{|I_{N}|})\subseteq N(v_{|I_{N}|-1}) \subseteq \cdots \subseteq N(v_{1})$, which can be computed in linear time~\cite{heggernes2007linear}. Let $R=\{v_{4k+2},\ldots,v_{|I_N|}\}$. Note that, since $|I_{N}|\geq 4k+2$, $R\neq \emptyset$.

Now, consider the following reduction rule.

\begin{RR}[RR\ref{RRT1}]\label{RRT1}
If $R\neq \emptyset$, then $G'\Leftarrow G-R$, $\mathcal{X'} \Leftarrow \mathcal{X}$, $k' \Leftarrow k$.
\end{RR}

We have the following lemma to establish that RR\ref{RRT1} is safe.

\begin{lemma} \label{lemma8}
RR\ref{RRT1} is safe.
\end{lemma}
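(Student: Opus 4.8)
The plan is to show that $(G,\mathcal{X},k)$ and $(G',\mathcal{X},k)$ are equivalent instances, where $G' = G - R$. One direction is immediate: since $G'$ is an induced subgraph of $G$, any solution for $(G',\mathcal{X},k)$ is also a solution for $(G,\mathcal{X},k)$. (Recall that $G'$ is still a threshold graph by Remark~\ref{obsdel}.) The real work is the forward direction: assuming $(G,\mathcal{X},k)$ is a Yes-instance, I would take a \emph{minimum} solution $\mathcal{P} = \{P_1,\dots,P_k\}$ and argue that it can be transformed into a solution avoiding every vertex of $R$. Since every vertex of $R$ lies in $I_N$, it is a non-terminal vertex in the independent set, hence it can only appear as an internal vertex of a path, sandwiched between two clique vertices; so for each $P_i$ using some $v \in R$, I want to reroute the length-$2$ subpath $(a, v, b)$ (with $a,b \in C$) through a different middle vertex.

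The key structural fact to exploit is the neighborhood-nesting: every $v_j \in R$ satisfies $N(v_j) \subseteq N(v_{4k+1}) \subseteq \dots \subseteq N(v_1)$, so the clique-neighbors of any $R$-vertex are also neighbors of all $4k+1$ vertices $v_1,\dots,v_{4k+1}$ that we keep. Thus if $(a,v,b)$ is a detour through $v \in R$ with $a,b \in C$, then $a,b \in N(v) \subseteq N(v_i)$ for every $i \le 4k+1$, so $(a, v_i, b)$ is a valid replacement path for any such $i$. I would process the (at most $k$) occurrences of $R$-vertices in $\mathcal{P}$ one at a time, each time picking a vertex $v_i$ with $i \le 4k+1$ that is not currently used by any path of the solution as an internal vertex, and also such that the two new edges $av_i$ and $v_ib$ are not already used by the current solution. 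A counting argument makes this possible: the total number of edges in a minimum solution is $\mathcal O(k)$ here (each path has length $\mathcal O(1)$ after Proposition-style reductions, or more simply each path of a minimum solution in a split/threshold graph uses few edges — but even crudely, $k$ paths, and each rerouting adds $O(1)$ edges), so the number of ``blocked'' indices among $v_1,\dots,v_{4k+1}$ stays below $4k+1$ throughout, leaving a free index to use. One must also ensure the rerouting does not create a non-simple path; since $v_i \notin V(P)$ for the chosen index and $v_i \in I$, inserting it between two clique vertices of $P$ keeps $P$ a simple path.

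The main obstacle I anticipate is the bookkeeping of the counting argument: I need a clean invariant bounding how many of $v_1, \dots, v_{4k+1}$ can simultaneously be ``used'' (as internal vertices) or ``blocked'' (because one of the two candidate new edges is occupied) across the whole evolving solution, and I must verify this bound stays strictly below $4k+1$ so a valid replacement index always exists. The cleanest route is probably: observe that at any moment a path can contribute at most a bounded number of used internal independent-set vertices, and each of the $\le k$ reroutings occupies $O(1)$ new edges, each of which can block at most $O(1)$ indices; choosing the constant $4k+1$ (rather than, say, $2k+1$) gives enough slack. Alternatively, and perhaps more elegantly, one processes reroutings greedily and notes that at the step handling the $j$-th bad occurrence, the current solution still uses at most some linear-in-$k$ number of edges incident to $\{v_1,\dots,v_{4k+1}\}$, leaving a free choice. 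I would present whichever of these gives the tightest and most transparent bound matching the threshold $4k+2$ used to define $R$.
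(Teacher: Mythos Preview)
Your overall plan---show that a solution avoiding $R$ exists by rerouting each occurrence of an $R$-vertex through one of the $4k+1$ kept vertices---is the right direction, but the counting argument as you sketch it has a real gap. You assert that a minimum solution has ``at most $k$ occurrences of $R$-vertices'' and that ``each path has length $\mathcal O(1)$''; neither is justified. For split graphs (hence threshold graphs) the only bound available is Lemma~\ref{L:SPathLength}, giving length $O(\sqrt{k})$ per path and $O(k^{1.5})$ edges total, so a single path may visit many $I_N$-vertices and the global number of blocked indices among $v_1,\dots,v_{4k+1}$ is not obviously below $4k+1$. Your iterative scheme, where you also forbid $v_i$ from being an internal vertex of \emph{any} path, makes this worse: the number of $I$-vertices appearing in the solution can be $\Theta(k^{1.5})$, far exceeding $4k+1$.

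The paper avoids all of this with a purely local count. Take a solution $\mathcal{P}$ minimizing $\sum_{P\in\mathcal{P}}|V(P)\cap R|$, and suppose some $P$ still visits $r\in R$ with clique-neighbors $v,w$ on $P$. By the nesting property every kept vertex $x\in\{v_1,\dots,v_{4k+1}\}$ is adjacent to both $v$ and $w$. If for some such $x$ both edges $vx$ and $wx$ are unused by $\mathcal{P}$, swap $r$ for $x$ (shortcutting if $x$ already lies on $P$), contradicting minimality. Hence each of the $4k+1$ kept vertices contributes a distinct used edge incident to $v$ or $w$. But any simple path uses at most two edges at $v$ and at most two at $w$, so $k$ paths use at most $4k$ such edges---contradiction. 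This argument needs no bound on path lengths or total edge count; the whole weight is carried by the trivial fact that a path meets each vertex in at most two edges. That is the missing idea in your sketch, and it is also what explains the constant $4k+1$.
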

\begin{proof}
First, note by Remark \ref{obsdel} that $G'$ is also a threshold graph. Now, we claim that $(G,\mathcal{X},k)$ and $(G',\mathcal{X'},k')$ are equivalent instances of \textsc{EDP} on threshold graphs. 

In one direction, if $(G',\mathcal{X'},k')$ is a Yes-instance of \textsc{EDP},
then, because $G'$ is a
    subgraph of $G$, $(G,\mathcal{X},k)$ is also a Yes-instance of \textsc{EDP}.
    
    In the other direction, suppose $(G,\mathcal{X},k)$ is a Yes-instance of \textsc{EDP}. Furthermore, let $\mathcal{P}$ be a solution for which $\sum_{P\in \mathcal{P}}|V(P)\cap R|$ (i.e., the total number of visits by all the paths in $\mathcal{P}$ to vertices in $R$) is minimum. We claim that no path in $\mathcal{P}$ visits a vertex in $R$. Targeting a contradiction, suppose that there exists a path, say, $P\in \mathcal{P}$, such that $P$ visits some vertex $r\in R$.
Since $r$ is not a terminal vertex, there are two vertices, say, $v,w \in C$, such that the edges $vr$ and $wr$ appear
consecutively on the path $P$. Since $N_{G}(r)\subseteq N_{G}(x)$ for every $x\in I_{N}\setminus R$, it is clear that every vertex in $I_{N}\setminus R$ is adjacent to both $v$ and $w$. Now, we claim the following.

\begin{claim} \label{clmlabel}
For every $x\in I_{N}\setminus R$, at least one of $vx$ and $wx$ belongs to some path in $\mathcal{P}$. In particular, at least $4k+1$ edges in the paths in $\mathcal{P}$ are incident with $v$ or $w$.
\end{claim}

\begin{proof}
Let us assume that there exists a vertex, say, $y \in I_{N}\setminus R$, such that both the edges $vy$ and $wy$ are not used by any path in $\mathcal{P}$. In such a situation, one can replace the vertex $r$ with vertex $y$ in the path $P$. It leads to a contradiction to the choice of $\mathcal{P}$. Thus, at least $4k+1$ edges in the paths in $\mathcal{P}$ are incident with $v$ or $w$.
\end{proof} 

Note that any path in $\mathcal{P}$ can use at most two edges incident on $v$. The same is true for $w$. This implies that in total at most $4k$ edges incident on $v$ or $w$ (combined) can belong to the paths in $\mathcal{P}$, which is a contradiction to  Claim \ref{clmlabel}. Therefore, no path in $\mathcal{P}$ visits a vertex in $R$, and thus $\mathcal{P}$ is a solution of $(G,\mathcal{X},k)$ as well.
\end{proof}

\begin{observation} \label{obs8}
RR\ref{RRT1} can be applied in polynomial
time.
\end{observation}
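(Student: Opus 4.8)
The plan is simply to check that every ingredient needed to apply RR\ref{RRT1} is computable in polynomial (indeed, linear) time. First, I would recall that a split partition $(C,I)$ of the threshold graph $G$ can be obtained in linear time~\cite{hammer}, which is already used above. Given $(C,I)$ and the terminal multiset $\mathcal{X}$, partitioning $I$ into the terminal set $I_T$ and the non-terminal set $I_N$ only requires one scan over $\mathcal{X}$ to mark which vertices of $I$ are terminals, costing $\mathcal{O}(|V(G)|+|\mathcal{X}|)$ time.

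Second, the one genuinely structural ingredient is the nested-neighborhood ordering $(v_1,\ldots,v_{|I_N|})$ of $I_N$ satisfying $N(v_{|I_N|})\subseteq\cdots\subseteq N(v_1)$. For this I would invoke the linear-time recognition algorithm for threshold graphs of Heggernes and Kratsch~\cite{heggernes2007linear}, which outputs precisely such an ordering of the independent part; restricting the output ordering to $I_N$ is immediate. (Equivalently, one may sort the vertices of $I_N$ by degree, which for threshold graphs coincides with the inclusion order on their neighborhoods.)

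Third, once this ordering is available, testing the condition $R\neq\emptyset$ amounts to checking whether $|I_N|\ge 4k+2$; if it holds, we set $R=\{v_{4k+2},\ldots,v_{|I_N|}\}$ and construct $G'=G-R$, $\mathcal{X}'=\mathcal{X}$, $k'=k$, all in linear time. Summing the costs of these steps yields the claimed polynomial running time. I do not expect any real obstacle here: the only step that is not pure bookkeeping is the computation of the nested-neighborhood ordering, and a linear-time algorithm for it is already cited in the text preceding the reduction rule, so the whole statement follows by accounting.
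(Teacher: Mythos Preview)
Your proposal is correct. The paper itself does not supply a proof for this observation at all---it is stated without justification---so your accounting of the steps (linear-time split partition, scan to form $I_T$ and $I_N$, the linear-time nested-neighborhood ordering via~\cite{heggernes2007linear}, and the final deletion of $R$) is exactly the routine verification the paper leaves implicit.
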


Note that by applying RR\ref{RRT1}, we ensure that $|I_{N}|\leq 4k+1$, which gives us a kernel with at most $7k+1$ vertices. By Lemma \ref{lemma8} and Observation \ref{obs8}, we have the following theorem.
\threshold*

\subsection{A Quadratic Vertex Kernel for Block Graphs} \label{quadratic:edp:block}

In this section, we show that \EDP~on block graphs admits a kernel with at most $4k^2-2k$ vertices. Let us first discuss the overall idea leading us to this result.
\medskip

\noindent \textbf{Overview.}
Let $(G,\mathcal{X},k)$ be an instance of \textsc{EDP}~where $G$ is a block graph. First, we aim to construct a reduced instance where the number of blocks can be at most $4k-2$. We begin by showing that if there is an end block that does not contain any terminal, then we can delete this block from the graph (in RR\ref{RRB1}), while preserving all solutions. Next, we argue that if there is a block, say, $B$, with at most two cut vertices that do not contain any terminal, then we can either \textit{contract} (defined in Definition~\ref{D:contract}) $B$ to a single vertex, or answer negatively (in RR\ref{RRB3}). Thus, each block with at most two cut vertices in the (reduced) graph contains at least one terminal. This bounds the number of blocks with at most two cut vertices to be at most $2k$ (as $k$ terminal pairs yield at most $2k$ terminals). Next, we use the following folklore property of block graphs (For the sake of completeness, we give the proof also).
\begin{observation}\label{P:block}
Let $\ell$ be the number of end blocks in a block graph $G$. Then, the number of blocks with at least three cut vertices is at most $\ell-2$.
\end{observation}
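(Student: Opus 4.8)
My plan is to work with the \emph{block tree} (also called the block-cut tree) of $G$. Recall that this is the bipartite tree $\mathcal{T}$ whose vertex set consists of the blocks of $G$ on one side and the cut vertices of $G$ on the other side, with a block $B$ adjacent to a cut vertex $c$ precisely when $c \in V(B)$. The key observation is that a block $B$, viewed as a node of $\mathcal{T}$, has degree in $\mathcal{T}$ exactly equal to the number of cut vertices it contains. So an end block corresponds to a leaf of $\mathcal{T}$ among the block-nodes, a block with exactly two cut vertices corresponds to a degree-$2$ block-node, and a block with at least three cut vertices corresponds to a block-node of degree at least $3$.

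The main step is then a standard counting argument on the tree $\mathcal{T}$. Let $n_1$ denote the number of block-nodes of degree $1$ (i.e.\ end blocks, so $n_1 = \ell$), let $n_2$ be the number of block-nodes of degree $2$, and let $n_{\geq 3}$ be the number of block-nodes of degree at least $3$; these are what we want to bound. Since every cut-vertex-node of $\mathcal{T}$ has degree at least $2$ (a cut vertex lies in at least two blocks by definition), and since the sum of all degrees in a tree on $N$ vertices equals $2(N-1) < 2N$, one gets that the number of nodes of degree $\geq 3$ is controlled by the number of leaves. Concretely, in any tree, if $L$ is the set of leaves and we look only at the forest structure, the number of vertices of degree at least $3$ is at most $|L| - 2$; here I would restrict attention to the subtree spanned by the block-nodes (contracting each degree-$2$ cut vertex, or simply arguing within $\mathcal{T}$ directly) so that the only leaves that matter are the end blocks. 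This yields $n_{\geq 3} \leq \ell - 2$.

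To carry this out cleanly without worrying about the cut-vertex-nodes, the cleanest route is: in the tree $\mathcal{T}$, suppress every node of degree $2$ (repeatedly replace a degree-$2$ node and its two incident edges by a single edge); this does not change the number of leaves nor the number of nodes of degree $\geq 3$, and every cut-vertex-node either gets suppressed or survives as a node of degree $\geq 3$. Now in the resulting tree $\mathcal{T}'$, every internal node has degree $\geq 3$. For such a tree with $L$ leaves, the handshake identity $\sum_v \deg(v) = 2(|V(\mathcal{T}')| - 1)$ together with $\deg(v) \geq 3$ for the $|V(\mathcal{T}')| - L$ internal nodes and $\deg(v) = 1$ for the $L$ leaves gives $L + 3(|V(\mathcal{T}')| - L) \leq 2|V(\mathcal{T}')| - 2$, hence $|V(\mathcal{T}')| - L \leq L - 2$, i.e.\ the number of internal nodes is at most $L - 2$. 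Since the leaves of $\mathcal{T}'$ are exactly the end blocks of $G$ (the only degree-$1$ nodes of $\mathcal{T}$ are block-nodes, and suppression preserves leaves), $L = \ell$, and the block-nodes with $\geq 3$ cut vertices are internal nodes of $\mathcal{T}'$, so there are at most $\ell - 2$ of them.

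I do not anticipate a serious obstacle here; the only point requiring a little care is justifying that end blocks are precisely the leaves of $\mathcal{T}$ and that the suppression of degree-$2$ nodes does not accidentally turn a cut-vertex-node into a leaf — but a cut vertex by definition lies in at least two blocks, so its node always has degree $\geq 2$ and is never a leaf, which settles this. The rest is the elementary tree inequality above.
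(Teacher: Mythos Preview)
Your proposal is correct and follows essentially the same route as the paper: both arguments pass to the block--cut tree, note that its leaves are exactly the end blocks (because every cut-vertex-node has degree at least~$2$), and then invoke the tree inequality that the number of vertices of degree at least~$3$ is at most the number of leaves minus~$2$. The only difference is cosmetic: the paper cites this inequality from Diestel, whereas you derive it explicitly via suppression of degree-$2$ nodes and the handshake lemma.
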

\begin{proof}
Let $\{B_1, \ldots, B_k\}$ and $\{f_1, \ldots, f_{p}\}$
be the set of blocks and the set of cut-vertices, respectively, of the block graph $G$. Then, the
\emph{cut-tree} of $G$ is the tree $T_G=(V', E')$, where $V' = \{B_1,\ldots, B_k, f_1, \ldots, f_p\}$
and $E' = \{(B_i, f_j) \mid f_j \in B_i, 1 \leq i \leq k, 1 \leq j \leq p\}.$  Note that every pendant vertex in $T_{G}$ corresponds to an end block of $G$. Thus, the proof follows from the following property of trees~\cite{diestel}. In any tree, if $D_{1}$ denotes the number of pendant vertices and $D_{\geq 3}$ denotes the number of vertices of degree at least 3, then $ D_{\geq 3} \leq D_{1} -2$.
\end{proof}

Observation~\ref{P:block}, along with the fact that the number of end blocks is at most $2k$, establishes that the number of blocks with at least three cut vertices in the (reduced) graph is at most $2k-2$.  Therefore, we have at most $4k-2$ blocks in the (reduced) graph. Finally, due to  Lemma~\ref{L:cliqueBetter} and the properties of block graphs, we show that if a block, say, $B$, is big enough (i.e., $|V(B)| >k$), then we can contract $B$ to a single vertex while preserving the solutions (in RR\ref{RRB2}). Hence, each block contains at most $k$ vertices, and thus, the total number of vertices in the (reduced) graph is at most $4k^2-2k$.

\medskip

Now, we discuss our kernelization algorithm, which is based on the application of three reduction rules (RR\ref{RRB1}-RR\ref{RRB3}), discussed below, to an instance $(G,\mathcal{X},k)$ of \EDP~where $G$ is a block graph. 

\begin{RR}[RR\ref{RRB1}]\label{RRB1}
If $B$ is an end block of $G$ with cut vertex $v$ such that $B$ does not contain any terminal, then $G' \Leftarrow G[V(G) \setminus (V(B)\setminus \{v\})]$ and $\mathcal{X}'\Leftarrow \mathcal{X}$.
\end{RR}

We have the following lemma to establish that RR\ref{RRB1} is safe.

\begin{lemma}\label{L:RRB1}
RR\ref{RRB1} is safe.
\end{lemma}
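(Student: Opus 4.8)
The plan is to establish that $(G,\mathcal{X},k)$ and $(G',\mathcal{X},k)$ are equivalent instances. The forward direction is immediate: since $B$ contains no terminal, every terminal vertex appearing in $\mathcal{X}$ survives in $G' = G[V(G)\setminus(V(B)\setminus\{v\})]$, and $G'$ is an induced subgraph of $G$; hence any solution of $(G',\mathcal{X},k)$ is also a solution of $(G,\mathcal{X},k)$. I would also record that $G'$ is connected (because $B$ is an end block, deleting $V(B)\setminus\{v\}$ leaves $v$ attached to the rest of $G$) and remains a block graph by Remark~\ref{obsdel}.

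For the reverse direction, I would fix an arbitrary solution $\mathcal{P}=\{P_1,\dots,P_k\}$ of $(G,\mathcal{X},k)$ and prove that no path $P_i$ visits any vertex of $V(B)\setminus\{v\}$. This suffices, because then every $P_i$ uses only vertices and edges present in $G'$, so $\mathcal{P}$ is already a solution of $(G',\mathcal{X},k)$. The structural fact driving the argument is that, since $B$ is an end block, $v$ is its unique cut vertex; consequently each vertex of $V(B)\setminus\{v\}$ has all of its neighbours inside $V(B)$, and $v$ is the only vertex through which a walk can pass between $V(B)\setminus\{v\}$ and $V(G)\setminus V(B)$.

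Suppose, towards a contradiction, that some $P_i$ visits a vertex $u\in V(B)\setminus\{v\}$. Because $B$ contains no terminal, neither $u$ nor $v$ is a terminal, so the endpoints $s_i,t_i$ of $P_i$ both lie outside $V(B)$; in particular $u$ is an internal vertex of $P_i$ and the path has nonempty portions on both sides of $u$. Tracing $P_i$ from $u$ towards $s_i$, the path must eventually leave $V(B)\setminus\{v\}$ to reach $s_i$, and since $v$ separates $V(B)\setminus\{v\}$ from the rest of $G$, it must do so by passing through $v$. The same reasoning applied to the subpath from $u$ towards $t_i$ shows that it, too, passes through $v$. But these two occurrences of $v$ lie on opposite sides of $u$ along $P_i$, so the simple path $P_i$ visits $v$ twice — a contradiction. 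Hence no path of $\mathcal{P}$ enters $V(B)\setminus\{v\}$, and $\mathcal{P}$ witnesses that $(G',\mathcal{X},k)$ is a Yes-instance.

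The argument is short, and the main point needing care is the degenerate case where $v$ could be an endpoint of some $P_i$ (equivalently, a terminal): this is exactly what the hypothesis that $B$ contains no terminal rules out, and I would state this step explicitly rather than leave it implicit. I would close by noting that RR\ref{RRB1} is applied in polynomial time and does not increase the parameter $k$.
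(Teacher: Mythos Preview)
Your proof is correct and follows essentially the same approach as the paper: both directions are handled identically, and the key step---that a path with both endpoints outside $V(B)$ visiting some $u\in V(B)\setminus\{v\}$ would force $v$ to appear twice---is exactly the paper's argument. Your write-up is slightly more explicit about the role of the hypothesis that $v$ itself is not a terminal, which is a nice touch but not a material difference.
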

\begin{proof}
First, observe that $G'$ is a block graph. Now, we claim that $(G,\mathcal{X},k)$ and $(G',\mathcal{X'},k)$ are equivalent instances of \textsc{EDP} on block graphs.

In one direction, if $(G',\mathcal{X'},k)$ is a Yes-instance of \textsc{EDP},
then, because $G'$ is a subgraph of $G$, $(G,\mathcal{X},k)$ is also a Yes-instance of \textsc{EDP}.
    
In the other direction, suppose $(G,\mathcal{X},k)$ is a Yes-instance of \textsc{EDP}, and let $\mathcal{P}$ be a solution of $(G,\mathcal{X},k)$. Observe that no path between two vertices $u_1,u_2 \in V(G)\setminus V(B)$ passes through a vertex $u \in V(B)\setminus\{v\}$, as otherwise, the vertex $v$ appears at least twice in this path, which contradicts the definition of a path. Therefore, since $B$ does not contain any terminal, no path in $\mathcal{P}$ passes through a vertex of $V(B)\setminus\{v\}$. As $\mathcal{X'} = \mathcal{X}$ and each path in $\mathcal{P}$ is restricted to $V(G')$, observe that $\mathcal{P}$ is also a solution of $(G',\mathcal{X'},k)$.
\end{proof}

\begin{observation}
RR\ref{RRB1} can be applied in polynomial
time.
\end{observation}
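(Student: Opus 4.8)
The plan is simply to observe that each ingredient in the statement of RR\ref{RRB1} is computable by standard graph algorithms in polynomial (indeed linear) time. First I would compute the block decomposition of $G$ together with its set of cut vertices; this is done in $\mathcal{O}(|V(G)|+|E(G)|)$ time by the classical DFS-based algorithm for biconnected components, which also yields the block--cut tree. From the block--cut tree one immediately identifies the \emph{end blocks}, namely the blocks containing exactly one cut vertex, and for each such block $B$ its unique cut vertex $v$.

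Next I would iterate over all end blocks. Since $\mathcal{X}$ contains at most $2k$ occurrences of terminals, the set of terminal vertices can be stored in a lookup table of size $\mathcal{O}(k)$, so for a fixed end block $B$ with cut vertex $v$ the test ``$V(B)\setminus\{v\}$ contains no terminal vertex'' costs $\mathcal{O}(|V(B)|)$ time; summing over all end blocks gives $\mathcal{O}(|V(G)|)$ total. If some end block $B$ passes this test, the rule outputs $G' = G[V(G)\setminus(V(B)\setminus\{v\})]$ and $\mathcal{X}'=\mathcal{X}$, again constructed in linear time (and $G'$ is a block graph by Remark~\ref{obsdel}); otherwise the condition of the rule is false and no action is taken. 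Hence a single application of RR\ref{RRB1} runs in polynomial time.

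There is no genuine obstacle here. The only point worth stating explicitly is that we bound the cost of \emph{one} application; repeated application is governed by the general convention on exhaustively applying reduction rules, and since each application strictly decreases $|V(G)|$, at most $|V(G)|$ applications occur, so the exhaustive process also runs in polynomial time.
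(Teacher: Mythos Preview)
Your proposal is correct and, in fact, supplies more detail than the paper itself, which states this observation without proof. The argument you give---computing the block--cut tree via DFS, scanning end blocks for terminals, and noting that each application strictly shrinks the graph---is exactly the natural justification the paper leaves implicit.
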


The following definitions (Definitions~\ref{D:contract} and \ref{D:virtual}) are crucial to proceed further in this section. Informally speaking, we contract a block $B$ by replacing it with a (new) vertex $v$ such that ``edge relations'' are preserved. We have the following formal definition.

\begin{definition}[Contraction of a Block]\label{D:contract}
Let $(G,\mathcal{X},k)$ be an instance of \EDP~where $G$ is a block graph, and let $B$ be a block of $G$.  The {\em contraction of $B$} in $(G,\mathcal{X},k)$ yields another instance $(G',\mathcal{X}',k')$ of \EDP~as follows. First,  $V(G') = (V(G)\setminus V(B)) \cup \{v\}$ (i.e., delete $V(B)$ and add a new vertex $v$). Moreover, define $f:V(G) \rightarrow V(G')$ such that $f(x) = x$ if $x\in V(G)\setminus V(B)$, and $f(x) = v$ if $x\in V(B)$.  Second, $E(G') = \{f(x)f(y) : xy \in E(G), \ f(x)\neq f(y) \}$. Similarly, $\mathcal{X}' = \{(f(s),f(t)) : (s,t) \in \mathcal{X}, \ f(s)\neq f(t) \}$.  Finally, let $k' = |\mathcal{X}'|$. Note that $k'$ might be smaller than $k$ (in case $B$ contains a terminal pair). Moreover, $\bigcup_{u \in V(B)} (N_G(u) \setminus B) \subseteq N_{G'}(v)$. See Figure~\ref{fig:contract} for an illustration.
\end{definition}

\begin{figure}
\centering
\begin{subfigure}{.5\textwidth}
  \centering
  \includegraphics[width=1\linewidth]{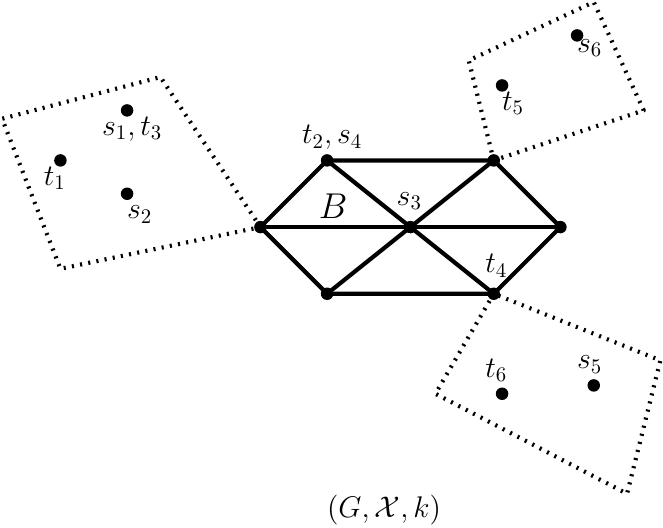}
  \label{fig:sub1}
\end{subfigure}%
\begin{subfigure}{.5\textwidth}
  \centering
  \includegraphics[width=0.75\linewidth]{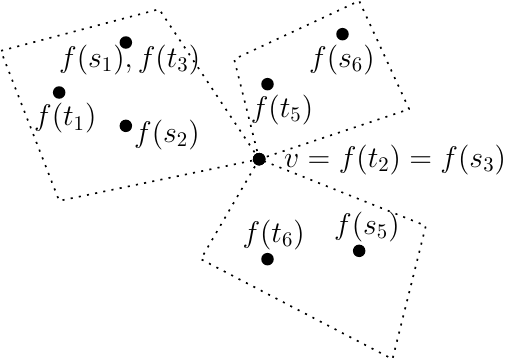}
  \label{fig:sub2}
\end{subfigure}
\caption{Here, we contract $B$ in $(G,\mathcal{X},k)$ to obtain $(G',\mathcal{X}',k')$. Note that $(f(s_4),f(s_4)) \notin \mathcal{X}'$ since $f(s_4)=f(t_4) = v$.}
\label{fig:contract}
\end{figure}

Now, we will exploit the properties of block graphs to show that if a block $B$ has at least $k+1$ vertices, then we can contract $B$ to a single vertex ``safely''. For this purpose, we define the instance $(G,\mathcal{X},k)$ of \EDP~restricted to a block $B$ of the (connected) block graph $G$ as follows.

\begin{definition}[Restriction of an Instance $(G,\mathcal{X},k)$ to a Block]\label{D:virtual}
Consider a  block $B$ whose set of cut vertices is  $U$. For each cut vertex $u \in U$, let $C_{B,u}$ denote the (connected)  component of $G[V(G) \setminus (V(B) \setminus \{u\})]$ containing $u$. Now, define $h:V(G) \rightarrow V(B)$ such that $h(x) = x$ if $x \in V(B)$, and $h(x) = u$ if $x\in V(C_{B,u})$. Then, the {\em restriction of $(G,\mathcal{X},k)$ to $B$}, denoted by $(B,\mathcal{X}_B, |\mathcal{X}_B|)$, is defined as follows: $\mathcal{X}_B = \{(h(s),h(t)) : (s,t) \in \mathcal{X}, \ h(s)\neq h(t) \}$. See Figure~\ref{Fig:restrictBlock} for an illustration.
\end{definition}

\begin{figure}
    \centering
    \includegraphics[scale=0.70]{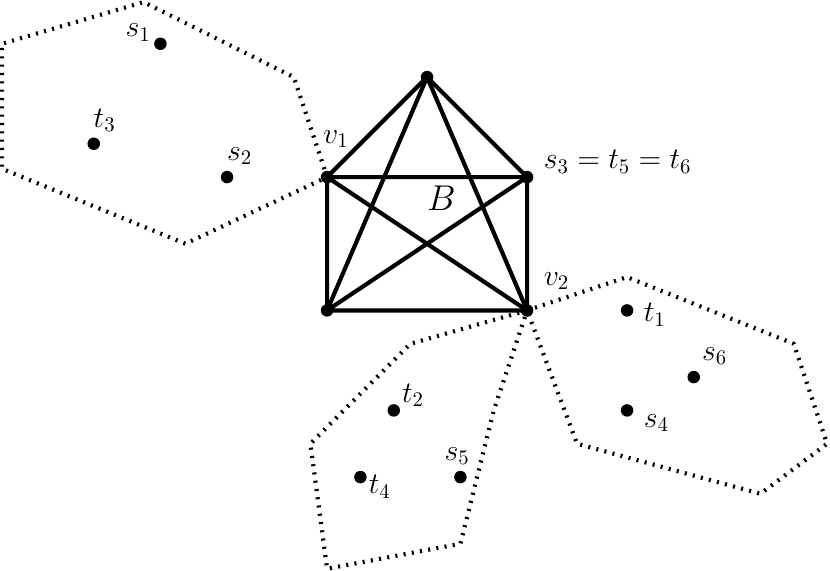}
    \caption{Here, $h(s_1) = h(s_2)=h(t_3)= v_1$. Similarly, $h(t_2) = h(t_4)=h(s_5)= h(t_1)= h(s_6) = h(s_4) =  v_2$. Also, $h(s_3) = s_3$, $h(t_5) = t_5$, and $h(t_6)=t_6$. Moreover, here $\mathcal{X}_B = \{ (h(s_1),h(t_1)), (h(s_2),h(t_2)), (h(s_3),h(t_3)), (h(s_5),h(t_5)), (h(s_6),h(t_6)) \}$. Note that $(h(s_4),h(t_4)) \notin \mathcal{X}_B$ since $h(s_4) = h(t_4) = v_2$. }
    \label{Fig:restrictBlock}
\end{figure}

We have the following trivial observation that we will use for our proofs.
\begin{observation}\label{O:BlockSubpath}
Let $\mathcal{P}$ be a set of edge-disjoint paths. Consider a set of paths $\mathcal{P}^*$ constructed in the following manner. For each path $P\in \mathcal{P}$, add a path $P^*$ to $\mathcal{P}^*$ such that $E(P^*) \subseteq E(P)$. Then, $\mathcal{P}^*$ is also a set of edge-disjoint paths.
\end{observation}

We have the following lemma.

\begin{lemma}\label{L:BlockRestrict}
Let $(G,\mathcal{X},k)$ be an instance of \EDP~on a block graph $G$, and let $B$ be a block of $G$. If $(G, \mathcal{X}, k)$ is a Yes-instance, then $(B,\mathcal{X}_B, |\mathcal{X}_B|)$ is a Yes-instance.
\end{lemma}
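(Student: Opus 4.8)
The plan is to start from an arbitrary solution $\mathcal{P}$ of $(G,\mathcal{X},k)$ and to extract from each path exactly the portion that lies inside the block $B$; these portions will form a solution of $(B,\mathcal{X}_B,|\mathcal{X}_B|)$. Recall the map $h\colon V(G)\to V(B)$ from Definition~\ref{D:virtual}: every terminal pair $(s_i,t_i)\in\mathcal{X}$ with $h(s_i)\neq h(t_i)$ contributes the pair $(h(s_i),h(t_i))$ to $\mathcal{X}_B$, and these are precisely the $|\mathcal{X}_B|$ pairs that must be connected in $B$. Fix such an $i$ and let $P_i\in\mathcal{P}$ be the $(s_i,t_i)$-path.

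First I would observe that $P_i$ necessarily visits both $h(s_i)$ and $h(t_i)$. Indeed, if $s_i\notin V(B)$ then $s_i$ lies in the component $C_{B,h(s_i)}$, which meets $V(B)$ only in the cut vertex $h(s_i)$; since $h(t_i)\neq h(s_i)$ forces $t_i\notin C_{B,h(s_i)}$, the path $P_i$ must leave this component and hence pass through $h(s_i)$. If $s_i\in V(B)$ then $h(s_i)=s_i$ and $P_i$ trivially visits it. The same holds for $t_i$, so I may let $P_i^*$ be the $(h(s_i),h(t_i))$-subpath of $P_i$.

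The crucial step is the claim $V(P_i^*)\subseteq V(B)$. Suppose not and pick $w\in V(P_i^*)\setminus V(B)$; then $w$ lies in $C_{B,u}\setminus\{u\}$ for a unique cut vertex $u$ of $B$, which therefore separates $w$ from $V(B)\setminus\{u\}$. If $u\notin\{h(s_i),h(t_i)\}$, both endpoints of $P_i^*$ lie in $V(B)\setminus\{u\}$, so each of the two arcs of $P_i^*$ joining an endpoint to $w$ must cross $u$, forcing $P_i^*$ to visit $u$ twice — impossible for a path. If instead $u=h(s_i)$ (the case $u=h(t_i)$ is symmetric), then the arc of $P_i^*$ from $w$ to $h(t_i)$ crosses $u=h(s_i)$, so $P_i^*$ revisits its own first vertex — again impossible. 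Hence $V(P_i^*)\subseteq V(B)$, and since $B$ is a clique, $E(P_i^*)\subseteq E(B)$; that is, $P_i^*$ is a path in $B$.

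To finish, I would invoke Observation~\ref{O:BlockSubpath}: as $\mathcal{P}$ is a family of pairwise edge-disjoint paths and $E(P_i^*)\subseteq E(P_i)$ for every relevant $i$, the family $\{P_i^*\}$ is pairwise edge-disjoint; moreover each $P_i^*$ contains at least one edge (its endpoints $h(s_i)\neq h(t_i)$ are distinct), so edge-disjointness additionally forces the $P_i^*$ to be pairwise distinct. Thus $\{P_i^*:(s_i,t_i)\in\mathcal{X},\ h(s_i)\neq h(t_i)\}$ witnesses that $(B,\mathcal{X}_B,|\mathcal{X}_B|)$ is a Yes-instance. The main obstacle is the containment claim $V(P_i^*)\subseteq V(B)$, where care is needed in the cases where a cut vertex of $B$ coincides with $h(s_i)$ or $h(t_i)$ and where $s_i$ or $t_i$ already lies in $B$; the remaining steps are routine block-graph separation arguments.
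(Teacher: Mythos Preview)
Your proof is correct and follows essentially the same approach as the paper: extract, for every pair $(s_i,t_i)$ with $h(s_i)\neq h(t_i)$, the $(h(s_i),h(t_i))$-subpath of $P_i$ and invoke Observation~\ref{O:BlockSubpath}. In fact your write-up is more careful than the paper's: you make explicit the containment $V(P_i^*)\subseteq V(B)$ (which the paper's four-case argument leaves implicit) and you do not unnecessarily assume the starting solution to be minimum.
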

\begin{proof}
Let $\mathcal{P}$ be a minimum solution of $(G,\mathcal{X},k)$. Due to the definition of $(B,\mathcal{X}_B, |\mathcal{X}_B|)$ (Definition~\ref{D:virtual}) and Observation~\ref{O:BlockSubpath}, it is sufficient to show that, for each occurrence of a terminal pair $(s,t) \in \mathcal{X}$ such that $P$ is the path in $\mathcal{P}$ corresponding to it, if $h(s)\neq h(t)$, then $P$ contains an $(h(s),h(t))$-subpath, say, $P^*$. Recall that $U$ is the set of cut vertices of $B$. Since $h(s)\neq h(t)$, we have the following cases:
\medskip

    \noindent \textbf{Case 1:} $\bm{s,t \in V(B).}$ Note that in this case $h(s) = s$ and $h(t) = t$. Therefore, $P^* = P$.
    \smallskip
    
    \noindent \textbf{Case 2:} $\bm{s \in V(C_{B,u})}$ \textbf{and} $\bm{t \in V(C_{B,w})}$ \textbf{for distinct} $\bm{u,w \in U.}$ Note that in this case, every $(s,t)$-path has to pass through vertices $u$ and $w$ as each of them is a cut vertex for $s$ and $t$. Moreover, $h(s) = u$ and $h(t) = w$. Therefore, the subpath of $P$ with endpoints $u$ and $w$ is $P^*$.
    \smallskip
    
    \noindent \textbf{Case 3:} $\bm{s \in V(C_{B,u})}$ \textbf{and} $\bm{t \in V(B) \setminus \{ u\}}$ \textbf{for some} $\bm{u\in U.}$ Note that in this case, every $(s,t)$-path has to pass through the vertex $u$ as it is the cut vertex common to $C_{B,u}$ and $B$. Moreover, since $h(s) = u$ and $h(t) = t$, the subpath of $P$ with endpoints $u$ and $t$ is $P^*$.
    \smallskip
    
    \noindent \textbf{Case 4:} $\bm{s  \in V(B) \setminus \{ u\} }$ \textbf{and} $\bm{t \in V(C_{B,u})}$ \textbf{for some} $\bm{u\in U.}$ This case is symmetric to Case 3.

This establishes that $(B, \mathcal{X}_B, |\mathcal{X}_B|)$ is a Yes-instance, and, thus, completes our proof.
\end{proof}

Next, we will show that if for a block $B$, $(B,\mathcal{X}_B, |\mathcal{X}_B|)$ is a Yes-instance, then we can contract $B$ ``safely''. In particular, we have the following lemma.

\begin{lemma}\label{L:BlockUse}
Let $(G,\mathcal{X},k)$ be an instance of \EDP~on a block graph $G$, and let $B$ be a block of $G$ whose set of cut vertices is $U$. Moreover, let the contraction of $B$ in $(G,\mathcal{X},k)$ yield $(G',\mathcal{X'},k')$. Given that $(B, \mathcal{X}_B,|\mathcal{X}_B|)$ is a Yes-instance, $(G,\mathcal{X},k)$ is a Yes-instance if and only if $(G',\mathcal{X'},k')$ is a Yes-instance.
\end{lemma}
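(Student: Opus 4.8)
The plan is to argue both directions of the equivalence, each by transforming a solution on one instance into a solution on the other. The forward direction (from $(G,\mathcal{X},k)$ to $(G',\mathcal{X}',k')$) is the easier one: given a solution $\mathcal{P}$ of $(G,\mathcal{X},k)$, I would take a minimum such solution and, for each terminal pair $(s,t)\in\mathcal{X}$ with $f(s)\neq f(t)$, restrict the corresponding path $P$ to its portion outside the block — more precisely, replace any maximal subpath of $P$ that enters the block $B$ (necessarily through one cut vertex and leaving through another cut vertex, since $G$ is a block graph and $P$ is a path) by a single visit to the contracted vertex $v$. Because every path can traverse $B$ through at most one ``window'' (entering and leaving $B$ exactly once — any second entry would force a repeated vertex), this yields a well-defined path in $G'$ between $f(s)$ and $f(t)$. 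Edge-disjointness is preserved by Observation~\ref{O:BlockSubpath} applied to the pieces lying outside $B$, together with the fact that distinct paths using edges incident to $v$ in $G'$ correspond to distinct cut-vertex-to-cut-vertex crossings of $B$ in $G$, which used disjoint edge sets. Pairs with $f(s)=f(t)$ (both terminals inside $B$) simply disappear from $\mathcal{X}'$, consistent with $k'=|\mathcal{X}'|$.

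The backward direction is where the hypothesis that $(B,\mathcal{X}_B,|\mathcal{X}_B|)$ is a Yes-instance is essential, and this is the main obstacle. Given a solution $\mathcal{P}'$ of $(G',\mathcal{X}',k')$, each path $P'$ either avoids $v$ entirely, or passes through $v$ using exactly two incident edges, say $v a$ and $v b$ with $a,b\in V(G)\setminus V(B)$. Such a crossing needs to be ``routed through $B$'': I must replace $v$ by an actual path inside $B$ from a neighbor of $a$ in $B$ to a neighbor of $b$ in $B$. The difficulty is that many paths of $\mathcal{P}'$ may cross $v$ simultaneously, and all their in-block routings must be pairwise edge-disjoint inside $B$. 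This is exactly a disjoint-paths instance on $B$: each crossing of $v$ at cut vertices $u,w$ of $B$ becomes a demand $(u,w)$. I claim the multiset of these demands is (a sub-multiset of) $\mathcal{X}_B$ — or can be handled by the solution to $\mathcal{X}_B$ — so a solution $\mathcal{Q}$ to $(B,\mathcal{X}_B,|\mathcal{X}_B|)$ supplies the needed edge-disjoint in-block routings. Splicing each crossing's segment $a\!-\!v\!-\!b$ together with the corresponding path of $\mathcal{Q}$ (after identifying endpoints at the appropriate cut vertices, using the fact that $a$ is adjacent in $G$ to the cut vertex of $B$ whose component contains $a$) produces paths in $G$; edge-disjointness outside $B$ comes from $\mathcal{P}'$ and inside $B$ from $\mathcal{Q}$.

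The technical care needed is in matching up the demand multisets precisely: a demand created by a crossing of the path for pair $(s,t)$ at cut vertices $u,w$ corresponds to the pair $(h(s),h(t))=(u,w)\in\mathcal{X}_B$ from Definition~\ref{D:virtual}, and a terminal of $\mathcal{X}$ lying inside $B$ contributes a pair to $\mathcal{X}_B$ with one endpoint a genuine vertex of $B$ rather than a cut vertex — these cases are handled uniformly by Lemma~\ref{L:BlockRestrict}'s case analysis. I would also note that the resulting graph after contraction is again a block graph (so the statement makes sense within the class), and that $k'\le k$ with the parameter never increasing, as recorded in Definition~\ref{D:contract}. Finally, the ``only if'' half of Lemma~\ref{L:BlockRestrict} combined with the backward direction here shows the whole reduction (used later in RR\ref{RRB2}) is safe: whenever we contract, the hypothesis $(B,\mathcal{X}_B,|\mathcal{X}_B|)$ being a Yes-instance will be guaranteed separately (via Lemma~\ref{L:cliqueBetter}, since $B$ is a clique with more than $k\ge|\mathcal{X}_B|$ vertices).
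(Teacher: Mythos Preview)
Your proposal is correct and follows essentially the same approach as the paper: project paths to $G'$ by collapsing their $B$-segments to $v$ in the forward direction, and splice a solution of $(G',\mathcal{X}',k')$ with a solution of $(B,\mathcal{X}_B,|\mathcal{X}_B|)$ in the backward direction via the same case analysis you allude to. The one point worth making explicit is the forward-direction edge-disjointness at $v$, which hinges on the block-graph fact that any $w\notin V(B)$ has at most one neighbor in $V(B)$ (so distinct boundary edges of $B$ in $G$ map to distinct edges at $v$ in $G'$); the paper invokes exactly this.
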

\begin{proof}

Let $\mathcal{P}_B$ be a solution of $(B, \mathcal{X}_B,|\mathcal{X}_B|)$. 

In one direction, let  $(G,\mathcal{X},k)$ be a Yes-instance, and let $\mathcal{P}$ be a solution of $(G,\mathcal{X},k)$. Then, for an occurrence of a terminal pair $(s,t) \in \mathcal{X}$ such that $f(s) \neq f(t)$, let $P$ be the $(s,t)$-path in $\mathcal{P}$ corresponding to it. We construct the path $P'$ in $G'$ with endpoints $f(s)$ and $f(t)$ in the following manner, and add it to $\mathcal{P'}$. If $P$ does not contain any vertex of $B$, then let $P' = P$.
Else, let $x,y \in V(B)$ be the first and the last vertices of $B$ to appear in $P$ while traversing $P$ from $s$ to $t$. Note that $x$ and $y$ might be the same vertex (when $s,t \in C_{B,u}$, for some $u\in U$). Then, we replace the $(x,y)$-subpath of $P$ with the vertex $v$ to get the path $P'$. Note that it is always possible since $N_G(U) \subseteq N_{G'}(v)$ (see Definition~\ref{D:contract}). 
Now, observe that the set of edges used in $P'$ other than the edges with $v$ as an endpoint is a subset of edges used in $P$.  Therefore, due to Observation~\ref{O:BlockSubpath}, if two paths, say, $P'_1$ and $P'_2$ in $\mathcal{P}'$ share an edge, say, $e$, then $v$ is an endpoint of $e$. Targeting contradiction, assume $P'_1$ and
$P'_2$ (obtained from $P_1$ and $P_2$ as discussed above) share edge $e = vw$. Now, note that $v$ corresponds to cut vertices, say, $x$ and $y$, in $V(P_1)$ and $V(P_2)$, respectively. Therefore, there is an edge $xw \in E(P_1)$ and edge $yw \in E(P_2)$. Clearly, if $x=y$, then $P_1$ and $P_2$ are not edge-disjoint, a contradiction. If $x\neq y$, then due to the definition of a block graph, $x$ and $y$ cannot be both simultaneously adjacent to a vertex $u \notin V(B)$, another contradiction. Thus, $\mathcal{P}'$ is a solution to \EDP~on $(G',\mathcal{X}',k')$.

In the other direction, let $(G',\mathcal{X}',k')$ be a Yes-instance, and let $\mathcal{P'}$ be a solution of $(G',\mathcal{X}',k')$. Then, for each occurrence of a terminal pair $(s,t)\in \mathcal{X}$, we provide a path $P$ in the following manner. First, if $f(s) \neq f(t)$ (respectively, $h(s)\neq h(t)$), then let $P' \in \mathcal{P}'$ (respectively, $P_B \in \mathcal{P}_B$) be the path associated with the occurrence of the terminal pair $(f(s), f(t)) \in \mathcal{X}'$ (respectively, $(h(s), h(t)) \in \mathcal{X}_B$) corresponding to $(s,t)$. Now, we have the following cases.
\medskip

   \noindent \textbf{Case 1:} $\bm{s,t \in V(B).}$ Then, let $P = P_B$. Note that  $(h(s),h(t)) \in \mathcal{X}_B$ since $h(s)\neq h(t)$.
    \smallskip
    
    \noindent \textbf{Case 2:}  $\bm{s,t \in C_{B,u}}$, \textbf{for some} $\bm{u\in U.}$ Here, note that any $(s,t)$-path cannot contain any vertex of $V(B)\setminus \{u\}$. So, let $P$ be the path $P'$ with the only change that if $v \in V(P')$, then we replace it with $u$. Also, note that $(f(s),f(t)) \in \mathcal{X}'$ since $f(s)\neq f(t)$.
  \smallskip
  
   \noindent \textbf{Case 3:} $\bm{s \in C_{B,u}}$ \textbf{and} $\bm{t \in C_{B,w}}$, \textbf{for distinct} $\bm{u,w \in U.}$ 
    In this case, observe that $v \in V(P')$. Let $P'$ be of the form $(s, \ldots, x,v,y, \ldots, t)$. We obtain $P$ by replacing $v$ with $P_B$, that is, $P = (s, \ldots, x, u, \ldots, w, y, \ldots, t)$.   
    \smallskip
    
    \noindent \textbf{Case 4:} $\bm{s \in C_{B,u}}$ \textbf{and} $\bm{t \in V(B)\setminus \{ u\}}$, \textbf{for some} $\bm{u \in U.}$ 
    Similarly to Case 3, we replace the vertex $v$ in path $P'$ with the path $P_B$ to obtain $P$. 
    \smallskip
    
    \noindent \textbf{Case 5:} $\bm{s \in V(B)\setminus \{ u\}}$ \textbf{and} $\bm{t  \in C_{B,u}}$, \textbf{for some} $\bm{u\in U.}$ This case is symmetric to Case 4.

\medskip

Finally, we obtain a path $P$ for every terminal pair $(s,t) \in \mathcal{X}$. Moreover, since for any path  $P'\in \mathcal{P}'$ and $P_B \in \mathcal{P}_B$, $E(P')\cap E(P_B) = \emptyset$, it is easy to see that the paths in $\mathcal{P}$ (where $\mathcal{P}$ is the set of paths obtained as discussed above in Cases 1-5) are edge-disjoint, and therefore $(G,\mathcal{X},k)$ is a Yes-instance.
\end{proof}


Next, we have the following reduction rule. 

\begin{RR}[RR\ref{RRB2}]\label{RRB2}
If $G$ has a block $B$ such that $|V(B)| > k$,  then contract $B$ in $(G,\mathcal{X},k)$ to get $(G',\mathcal{X}',k')$.
\end{RR}

The safeness of RR\ref{RRB2} is implied by Lemma~\ref{L:BlockUse} and Lemma~\ref{L:cliqueBetter}. In particular, we have the following corollary.

\begin{corollary}\label{C:BlockRule}
RR\ref{RRB2} is safe.
\end{corollary}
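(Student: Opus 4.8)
The plan is to obtain the corollary as an immediate consequence of Lemma~\ref{L:BlockUse} and Lemma~\ref{L:cliqueBetter}, after checking that their hypotheses hold for the block $B$ selected by RR\ref{RRB2}. First I would record the bookkeeping facts. Since $G$ is a block graph and $B$ is one of its blocks, $V(B)$ is a clique, so $B$ is a complete graph and Lemma~\ref{L:cliqueBetter} is applicable to any instance of the form $(B,\mathcal{X}_B,|\mathcal{X}_B|)$. The contracted graph $G'$ is obtained from $G$ by collapsing the clique $B$ to a single vertex $v$; because any two distinct blocks of a block graph share at most one (cut) vertex, this operation neither creates a new cycle through $v$ nor merges two distinct blocks, so every block of $G'$ remains a clique and $G'$ is again a (connected) block graph. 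Finally, by Definition~\ref{D:contract} we have $\mathcal{X}' = \{(f(s),f(t)) : (s,t)\in\mathcal{X},\ f(s)\neq f(t)\}$ and $k'=|\mathcal{X}'|$, hence $k'\le|\mathcal{X}|=k$, as a reduction rule requires.

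Next I would verify the hypothesis of Lemma~\ref{L:BlockUse}, namely that $(B,\mathcal{X}_B,|\mathcal{X}_B|)$ is a Yes-instance. By Definition~\ref{D:virtual}, $\mathcal{X}_B$ is formed from the multiset $\mathcal{X}$ by relabeling each terminal through the map $h$ and discarding any pair whose two terminals are identified; consequently $|\mathcal{X}_B|\le|\mathcal{X}|=k$. Combined with the precondition of RR\ref{RRB2} that $|V(B)|>k$, this gives $|V(B)|>|\mathcal{X}_B|$. Applying Lemma~\ref{L:cliqueBetter} to the clique $B$ with terminal multiset $\mathcal{X}_B$ and parameter $|\mathcal{X}_B|$ then certifies that $(B,\mathcal{X}_B,|\mathcal{X}_B|)$ is a Yes-instance.

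With this hypothesis established, Lemma~\ref{L:BlockUse} yields that $(G,\mathcal{X},k)$ is a Yes-instance if and only if $(G',\mathcal{X}',k')$ is a Yes-instance, i.e., the two instances are equivalent; together with $k'\le k$ and $G'$ being a block graph, this shows RR\ref{RRB2} is safe. I do not expect a real obstacle: all the substantive work is carried out in Lemmas~\ref{L:BlockRestrict}, \ref{L:BlockUse}, and \ref{L:cliqueBetter}, and the only mildly delicate point in the present argument is the observation that contracting a block keeps the graph within the class of block graphs, which in turn rests on the elementary fact that two distinct blocks meet in at most one vertex.
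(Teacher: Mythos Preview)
Your proof is correct and follows exactly the route the paper indicates: the paper states that the corollary is ``implied by Lemma~\ref{L:BlockUse} and Lemma~\ref{L:cliqueBetter}'' without further elaboration, and you have supplied precisely the missing details (that $|\mathcal{X}_B|\le k<|V(B)|$ triggers Lemma~\ref{L:cliqueBetter}, so the hypothesis of Lemma~\ref{L:BlockUse} is met). Your additional bookkeeping about $G'$ remaining a block graph and $k'\le k$ is appropriate and consistent with how the paper handles the analogous point for RR\ref{RRB1}.
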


\begin{observation} 
RR\ref{RRB2} can be applied in polynomial
time.
\end{observation}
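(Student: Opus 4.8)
The plan is to observe that both ingredients of applying RR\ref{RRB2} — locating a suitable block and performing the contraction — are elementary polynomial-time (indeed, near-linear-time) tasks. First I would recall that the blocks of any graph can be computed in $\mathcal{O}(|V(G)|+|E(G)|)$ time by the standard depth-first-search-based algorithm for biconnected components (equivalently, by constructing the block-cut tree). Having the list of blocks in hand, we scan them once and record, for each block $B$, the value $|V(B)|$; testing whether some block satisfies $|V(B)| > k$ is then a trivial linear pass. If no such block exists, the rule does not fire; otherwise we pick any one witnessing block $B$.

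Next I would argue that carrying out the contraction of $B$, exactly as specified in Definition~\ref{D:contract}, costs only polynomial time in the size of the instance $(G,\mathcal{X},k)$. Concretely: we form $V(G') = (V(G)\setminus V(B))\cup\{v\}$ and the map $f$ in time $\mathcal{O}(|V(G)|)$ (each vertex is classified as ``in $B$'' or ``not in $B$'' using a membership array for $V(B)$); we build $E(G')$ by iterating over every edge $xy\in E(G)$, computing $f(x)f(y)$, discarding loops, and deduplicating, which is $\mathcal{O}(|E(G)|\log|E(G)|)$ (or linear with hashing); and we build $\mathcal{X}'$ analogously by iterating over the $k$ terminal pairs, again in $\mathcal{O}(k)$ time, finally setting $k' = |\mathcal{X}'| \le k$. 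All of these bounds are polynomial (in fact linear up to a logarithmic factor) in $|V(G)|+|E(G)|+k$, so the whole rule is applied in polynomial time.

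There is essentially no obstacle here: the only point that deserves a word is that the block decomposition is genuinely computable efficiently, for which I would simply cite the classical biconnected-components algorithm, and that the contraction as defined never requires more than a single sweep over the vertices, edges, and terminal pairs. Hence RR\ref{RRB2} can be applied in polynomial time.
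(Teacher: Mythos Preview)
Your proposal is correct and follows exactly the natural argument; the paper itself does not even give a proof of this observation, treating it as self-evident. Your explicit breakdown into computing the block decomposition via DFS, scanning block sizes, and performing the contraction by a single pass over vertices, edges, and terminal pairs is precisely what is implicitly intended.
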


Finally, we have the following reduction rule.

\begin{RR}[RR\ref{RRB3}]\label{RRB3}
Let $B$ be a block of $G$ that has exactly two cut vertices, say, $u$ and $w$, and there is no terminal vertex in $V(B) \setminus \{u,w\}$. Consider the instance $(B, \mathcal{X}_B, |\mathcal{X}_B|)$ restricted to the block $B$. If $|V(B)| > |\mathcal{X}_B| $, then contract $B$ in $(G,\mathcal{X},k)$ to get the instance $(G',\mathcal{X}',k')$. Else, answer negatively.
\end{RR}

We have the following lemma to establish that RR\ref{RRB3} is safe.
\begin{lemma}\label{L:RRB3}
RR\ref{RRB3} is safe.
\end{lemma}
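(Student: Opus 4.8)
The plan is to split into the two cases distinguished by the reduction rule. First suppose $|V(B)| > |\mathcal{X}_B|$, so the rule contracts $B$ to obtain $(G',\mathcal{X}',k')$. By Lemma~\ref{L:BlockUse}, it suffices to show that $(B, \mathcal{X}_B, |\mathcal{X}_B|)$ is a Yes-instance; then the contraction preserves the answer and we are done (also noting that $G'$ is still a block graph by Remark~\ref{obsdel} and the fact that contracting a block of a block graph yields a block graph). Now $B$ is a clique, and since there is no terminal vertex of $G$ in $V(B) \setminus \{u,w\}$ and $B$ has exactly two cut vertices $u,w$, every pair in $\mathcal{X}_B$ consists of vertices among $\{u,w\}$ (by the definition of $h$ in Definition~\ref{D:virtual}, every terminal outside $B$ is mapped to $u$ or $w$, and terminals inside $B$ other than $u,w$ do not exist). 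Hence $\mathcal{X}_B$ is a multiset of copies of the single pair $(u,w)$. Since $B$ is a clique with $|V(B)| > |\mathcal{X}_B|$, Lemma~\ref{L:cliqueBetter} applies directly and gives that $(B,\mathcal{X}_B,|\mathcal{X}_B|)$ is a Yes-instance, as needed. (One should check Remark~\ref{remark2} is not violated: if some terminal has insufficient degree the original instance is trivially No; but here the only terminals in $B$ are $u$ and $w$, whose degrees in $G$ are at least $|V(B)|-1 \ge |\mathcal{X}_B|$, so the condition of Lemma~\ref{L:cliqueBetter} is what matters.)

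For the other case, suppose $|V(B)| \le |\mathcal{X}_B|$ and the rule answers negatively; we must show $(G,\mathcal{X},k)$ is then a No-instance. Assume for contradiction that $(G,\mathcal{X},k)$ is a Yes-instance. By Lemma~\ref{L:BlockRestrict}, $(B,\mathcal{X}_B,|\mathcal{X}_B|)$ is also a Yes-instance. As argued above, $\mathcal{X}_B$ is a multiset of $|\mathcal{X}_B|$ copies of $(u,w)$, so a solution would consist of $|\mathcal{X}_B|$ pairwise edge-disjoint $(u,w)$-paths in the clique $B$. Each such path uses at least one edge incident to $u$, and these edges are distinct across the paths since the paths are edge-disjoint; hence $|\mathcal{X}_B| \le d_B(u) = |V(B)| - 1 < |V(B)| \le |\mathcal{X}_B|$, a contradiction. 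Therefore $(G,\mathcal{X},k)$ is a No-instance, matching the rule's output.

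The main subtlety — and the step I would write most carefully — is the structural claim that $\mathcal{X}_B$ contains only copies of $(u,w)$: this uses both that $u,w$ are the \emph{only} cut vertices of $B$ (so the components $C_{B,u'}$ of Definition~\ref{D:virtual} exist only for $u' \in \{u,w\}$) and that $V(B)\setminus\{u,w\}$ contains no terminal vertex of the original instance (so no pair in $\mathcal{X}$ has an endpoint inside $B$ that $h$ fixes to a third vertex). Once this is established, both directions reduce cleanly to Lemma~\ref{L:cliqueBetter} and Lemma~\ref{L:BlockUse}/Lemma~\ref{L:BlockRestrict}, with only the elementary degree count in the negative case requiring a line of its own.
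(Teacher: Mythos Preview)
Your proof is correct and follows essentially the same approach as the paper: both reduce safeness to showing that $(B,\mathcal{X}_B,|\mathcal{X}_B|)$ is a Yes-instance iff $|V(B)|>|\mathcal{X}_B|$, via Lemmas~\ref{L:BlockRestrict} and~\ref{L:BlockUse}, after observing that $\mathcal{X}_B$ consists solely of copies of $(u,w)$. The only cosmetic difference is that for the Yes direction you invoke Lemma~\ref{L:cliqueBetter}, whereas the paper writes out the explicit paths $P_{uw},(u,u_1,w),\dots,(u,u_{r-1},w)$; your appeal to Remark~\ref{obsdel} for $G'$ being a block graph is slightly off (that remark concerns vertex deletion), but you correctly supply the actual reason alongside it.
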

\begin{proof}
First, observe that $G'$ is a block graph. 
Lemma~\ref{L:BlockRestrict} implies that if $(B, \mathcal{X}_B, |\mathcal{X}_B|)$ is a No-instance, then $(G,\mathcal{X},k)$ is a No-instance. Moreover, Lemma~\ref{L:BlockUse} establishes that if $(B, \mathcal{X}_B, |\mathcal{X}_B|)$ is a Yes-instance, then we can contract $B$ in $(G,\mathcal{X},k)$. Therefore, to prove our lemma, it is sufficient to show that $(B,\mathcal{X}_B, |\mathcal{X}_B|)$ is a Yes-instance if and only if $|V(B)| > |\mathcal{X}_B| $. 
Let $r = |\mathcal{X}_B|$. Then, note that in $\mathcal{X}_B$, we have $r$ many terminal pairs, and each terminal pair has terminals $u$ and $w$.

In one direction, let $|V(B)| \leq r$. Then, note that the degree of both $u$ and $w$ in $B$ is at most $r-1$, and therefore, there cannot be $r$ edge-disjoint paths with terminals $u$ and $w$ in $B$. Hence, $(B_i,\mathcal{X}_B, |\mathcal{X}_B|)$ is a No-instance.

In the other direction, let $|V(B)| > r$. We provide $r$  edge-disjoint paths between $u$ and $w$ in the following manner. One path, say, $P_r$, consists only of the edge $uw$ (i.e., $P_r = P_{uw}$). To get $r-1$ additional edge-disjoint paths, we select $r-1$ vertices, say, $u_1, \ldots, u_{r-1}$, from $V(B)\setminus \{u,w\}$. Now, consider the path $P_i$, for $i\in [r-1]$, as $(u,u_i,w)$. Observe that paths $P_1, \ldots, P_r$ are indeed edge-disjoint. Hence, $(B_i,\mathcal{X}_B, |\mathcal{X}_B|)$ is a Yes-instance.
\end{proof}

\begin{observation}
RR\ref{RRB3} can be applied in polynomial
time.
\end{observation}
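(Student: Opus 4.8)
The plan is to check that every ingredient needed to apply RR\ref{RRB3} is a standard polynomial-time subroutine, so that the rule as a whole runs in polynomial time. First I would compute the block--cut-vertex decomposition of $G$ --- the set of blocks, the set of cut vertices, and, for each block, the cut vertices contained in it --- which is a classical depth-first-search computation running in linear time. Scanning the blocks, I would pick out those blocks $B$ with exactly two cut vertices $u,w$; for such a $B$, testing whether $V(B)\setminus\{u,w\}$ contains a terminal vertex is immediate from $\mathcal{X}$.

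The only mildly nontrivial point is building the restricted instance $(B,\mathcal{X}_B,|\mathcal{X}_B|)$ of Definition~\ref{D:virtual}, which requires the map $h$. For the fixed block $B$ with cut vertices $u$ and $w$, deleting $V(B)\setminus\{u\}$ (respectively $V(B)\setminus\{w\}$) from $G$ and running a graph search from $u$ (respectively $w$) yields the component $C_{B,u}$ (respectively $C_{B,w}$); since $u$ and $w$ are the only cut vertices of the block $B$ of the connected graph $G$, these two components together with $V(B)$ cover $V(G)$ and meet $V(B)$ only in $u$ and $w$, so $h$ is well defined and can be tabulated in linear time. Then $\mathcal{X}_B$ is obtained in a single pass over $\mathcal{X}$ by replacing each pair $(s,t)$ by $(h(s),h(t))$ and discarding the pairs with $h(s)=h(t)$, after which the test $|V(B)|>|\mathcal{X}_B|$ is a direct comparison.

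Finally, if the test succeeds I would perform the contraction of $B$ exactly as in Definition~\ref{D:contract} --- delete $V(B)$, add a single new vertex $v$, and re-route the edges and terminal pairs through the map $f$ --- which is plainly polynomial; if it fails, I would output a trivial No-instance. Aggregating these polynomial-time steps over the at most $|V(G)|$ blocks proves the observation. I do not anticipate a genuine obstacle here; the only point deserving a moment's care is the well-definedness and efficient computability of $h$, and that follows directly from the way a block sits inside a connected graph, as is already implicit in Definition~\ref{D:virtual}.
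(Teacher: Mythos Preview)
Your proposal is correct and matches what the paper implicitly relies on: the paper states this observation without any proof, treating it as immediate from the standard polynomial-time computability of the block--cut decomposition, the restriction of Definition~\ref{D:virtual}, and the contraction of Definition~\ref{D:contract}. Your write-up simply makes these routine steps explicit, and there is nothing to correct.
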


Next, we establish that once we cannot apply RR\ref{RRB1}-RR\ref{RRB3} anymore, the size of the reduced graph is bounded by a quadratic function of $k$. In particular, we have the following lemma.

\begin{lemma}\label{L:BlockSizeBound}
Let $(G, \mathcal{X}, k)$ be an instance of \EDP~where we cannot apply reduction rules RR\ref{RRB1}-RR\ref{RRB3} anymore. Then,  $G$ contains at most $4k-2$ blocks and at most $4k^2-2k$ vertices.
\end{lemma}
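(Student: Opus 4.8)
The plan is to bound separately (i) the number of blocks and (ii) the number of vertices inside each block, and then multiply. For the block count, I would argue that after exhaustive application of RR\ref{RRB1}--RR\ref{RRB3}, every block with at most two cut vertices must contain a terminal vertex that is not one of its cut vertices — more precisely, it contains a terminal. Indeed, if an end block contained no terminal, RR\ref{RRB1} would apply; and if a block with exactly two cut vertices $u,w$ had no terminal in $V(B)\setminus\{u,w\}$, then RR\ref{RRB3} would either contract it or answer negatively (and in the latter case we would have halted, so the instance is not ``one on which the rules cannot be applied'' in the intended sense, or we simply treat the trivial reduced instance). Since each of the $k$ terminal pairs contributes at most two terminal vertices, there are at most $2k$ terminal vertices, and hence at most $2k$ blocks with at most two cut vertices.

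Next I would invoke Observation~\ref{P:block}: the number of blocks with at least three cut vertices is at most $\ell - 2$, where $\ell$ is the number of end blocks. Every end block has exactly one cut vertex, so it is in particular a block with at most two cut vertices; therefore $\ell \le 2k$, and the number of blocks with at least three cut vertices is at most $2k - 2$. Adding the two bounds, the total number of blocks is at most $2k + (2k-2) = 4k - 2$.

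For the per-block vertex bound, I would use that RR\ref{RRB2} is no longer applicable: hence every block $B$ satisfies $|V(B)| \le k$. (Here one should be slightly careful about the small cases where $\ell < 2$ — i.e. $G$ itself is a single clique, which is a block graph with one block and no cut vertices — but then $|V(G)| \le k \le 4k^2 - 2k$ for $k \ge 1$, so the bound holds trivially; I would dispatch this case first.) Since distinct blocks of a block graph share at most one (cut) vertex, counting vertices by blocks overcounts at worst, so $|V(G)| \le \sum_{B} |V(B)| \le (4k-2)\cdot k = 4k^2 - 2k$.

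The main obstacle is the bookkeeping around RR\ref{RRB3}'s two-sided behavior: one branch contracts, the other declares a No-instance, so I must phrase the invariant as ``either we have already returned an answer, or the reduced instance has the stated structure,'' and make sure the claimed structural properties (every small block hosts a terminal) genuinely follow from non-applicability rather than from the abort branch. A secondary subtlety is confirming that blocks with exactly one cut vertex and blocks with zero cut vertices are correctly folded into the ``at most two cut vertices'' count and that end blocks are precisely the one-cut-vertex blocks, which is where Observation~\ref{P:block} plugs in cleanly.
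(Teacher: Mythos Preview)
Your proof takes essentially the same approach as the paper's: bound the blocks with at most two cut vertices by $2k$ using RR\ref{RRB1} and RR\ref{RRB3}, invoke Observation~\ref{P:block} to bound blocks with at least three cut vertices by $2k-2$, and then use RR\ref{RRB2} to cap each block at $k$ vertices, multiplying to get $4k^2-2k$. Your additional remarks about the single-clique corner case and the abort branch of RR\ref{RRB3} are sound extra hygiene not present in the paper's terser proof.
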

\begin{proof}
First, we prove that the number of blocks in $G$ is at most $4k-2$. Observe that each end block contains at least one terminal (due to RR\ref{RRB2}), and also, each block with two cut vertices contains at least one terminal (due to RR\ref{RRB3}).  Therefore, the number of blocks with at most two cut vertices is at most $2k$. Hence, due to Observation~\ref{P:block}, the number of blocks in $G$ with at least three cut vertices is at most $2k-2$. Thus, the total number of blocks in $G$ is at most $4k-2$.

Since each block contains at most $k$ vertices (due to RR\ref{RRB2}), the total number of vertices in $G$ is at most $4k^2-2k$.
\end{proof}
 
Note that throughout this section, our initial parameter $k$ does not increase during the application of reduction rules RR\ref{RRB1}-RR\ref{RRB3}. Moreover, observe that RR\ref{RRB1}-RR\ref{RRB3} can be implemented in polynomial time. Therefore, due to Lemmas~\ref{L:RRB1},~\ref{L:RRB3},~\ref{L:BlockSizeBound}, and Corollary~\ref{C:BlockRule}, we have the following theorem.

\block*

\subsection{A Linear Vertex Kernel for Clique Paths} \label{linear:edp:cliquepath}
A \textit{clique path} is a block graph where each block has at most two cut vertices. (In an informal manner, we can think that the blocks are arranged in the form of a path.) In this section, we present a linear vertex kernel for \EDP~on clique paths. First, we present an overview of the overall idea leading to this result.

\medskip
\noindent \textbf{Overview.} Let $(G,\mathcal{X},k)$ be an instance of \EDP~where $G$ is a clique path.  Our kernelization algorithm is based on the application of  RR\ref{RRB1}-RR\ref{RRB3} (defined in Section~\ref{quadratic:edp:block}) along with three new reduction rules (RR\ref{RRC3}-RR\ref{RRC2}). In our kernel for block graph in Section~\ref{quadratic:edp:block}, we established that for a block $B$, if $|V(B)| \geq k+1$, then we can contract $B$ (see RR\ref{RRB2}). Moreover, we showed that the total number of blocks in a reduced instance can be at most $4k-2$, thus giving us an $\mathcal{O}(k^2)$ vertex kernel.

Here, we use the property of clique paths that each block can have at most two cut vertices to improve the kernel size. Since there is no block with more than two cut vertices, each block must contain a terminal after an exhaustive application of RR\ref{RRB1}-RR\ref{RRB3}. Let $B$ be a block of $G$ with cut vertices $u$ and $w$. Consider the instance $(B,\mathcal{X}_B,|\mathcal{X}_B|)$, that is, the instance $(G,\mathcal{X},k)$ restricted to the block $B$ (see Definition~\ref{D:virtual}). Any terminal pair $(s,t)\in \mathcal{X}_B$ is of one of the following types:

\begin{itemize}
\item Type-A: $s,t \in \{u,w\}$. 
\item Type-B: $s = u$ and $t \in V(B) \setminus \{u,w\}$, or vice versa. 
\item Type-C: $s = w$ and $t \in V(B) \setminus \{u,w\}$, or vice versa. 

\item Type-D: $s,t \in V(B) \setminus \{u,w\}$. 
\end{itemize}

Let $a,b,c$, and $d$ denote the cardinality of Type-A, Type-B, Type-C, and Type-D occurrences of terminal pairs in $\mathcal{X}_B$, respectively.  We show that if $|V(B)| > d+ 2 + \max\{b+c-1,0\}$, then we can either contract $B$ to a single vertex ``safely'' (when $|V(B)| > \max \{a+b,a+c\}$) or report a No-instance.
Summing these numbers over all blocks yields an upper bound on the total size of the reduced instance.
The Type-A pairs are irrelevant now,
each pair of Type-B or Type-C contributes to two blocks, while
each Type-D pair appears in only a single block.
By grouping the summands in an appropriate way, we are able to eventually attain a~bound of $2k+1$.

We have the following reduction rules.
\begin{RR}[RR\ref{RRC3}]\label{RRC3}
Let $(G,\mathcal{X},k)$ be an instance of \EDP~where $G$ is a clique path. Moreover, let  $B$ be a block of $G$ such that $B$ has two cut vertices, say, $u$ and $w$. Consider the instance $(B,\mathcal{X}_B,|\mathcal{X}_B|)$. If $ |V(B)| \leq \max \{a+b, a+c \}$, then report a No-instance.
\end{RR}

We have the following lemma to prove that RR\ref{RRC3} is safe.
\begin{lemma}\label{L:RRC3}
RR\ref{RRC3} is safe.
\end{lemma}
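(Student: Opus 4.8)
The plan is to show that whenever $|V(B)| \le \max\{a+b,\, a+c\}$, the restricted instance $(B,\mathcal{X}_B,|\mathcal{X}_B|)$ is a No-instance; then Lemma~\ref{L:BlockRestrict} immediately gives that $(G,\mathcal{X},k)$ is a No-instance, so reporting ``No'' is correct. By symmetry of $u$ and $w$ it suffices to treat the case $|V(B)| \le a+b$; the case $|V(B)| \le a+c$ is identical with the roles of $u$ and $w$ swapped. First I would recall that in $\mathcal{X}_B$ every Type-A pair is exactly $(u,w)$ (up to order), and every Type-B pair has $u$ as one endpoint. Hence all $a+b$ of these occurrences are terminal pairs that must be routed by edge-disjoint paths in $B$, each of which uses at least one edge of $B$ incident to $u$.

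The key step is a simple counting argument on the degree of $u$ inside $B$. Since $B$ is a clique on $|V(B)|$ vertices, $d_B(u) = |V(B)| - 1$. Each of the $a+b$ paths associated with the Type-A and Type-B occurrences has $u$ as an endpoint, so each such path uses exactly one edge of $B$ incident to $u$ (a path cannot revisit its own endpoint, and all of $B$'s edges at $u$ stay within $B$). Because the paths are pairwise edge-disjoint, these $a+b$ edges at $u$ are pairwise distinct, forcing $a+b \le d_B(u) = |V(B)| - 1$, i.e.\ $|V(B)| \ge a+b+1 > a+b$. This contradicts the hypothesis $|V(B)| \le a+b$. Therefore no solution to $(B,\mathcal{X}_B,|\mathcal{X}_B|)$ exists, so the instance is a No-instance. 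The symmetric argument at vertex $w$ handles $|V(B)| \le a+c$.

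I would finish by invoking Lemma~\ref{L:BlockRestrict}: since $(B,\mathcal{X}_B,|\mathcal{X}_B|)$ being a No-instance implies $(G,\mathcal{X},k)$ is a No-instance, the rule correctly reports a No-instance, and it clearly runs in polynomial time (computing the block, its cut vertices, and the quantities $a,b,c$ from $\mathcal{X}$ is straightforward). I do not expect any real obstacle here; the only mild subtlety is making precise that the paths for Type-A and Type-B occurrences each consume a \emph{distinct} edge at $u$ (which is exactly where edge-disjointness and the fact that $u$ is an endpoint, not an internal vertex, of these paths is used), and being careful that a Type-A pair is genuinely $(u,w)$ so that it too has $u$ as an endpoint. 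This mirrors the tightness example already given after Lemma~\ref{L:cliqueBetter} and in the proof of Lemma~\ref{L:RRB3}.
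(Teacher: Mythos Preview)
Your proposal is correct and follows essentially the same approach as the paper: reduce via Lemma~\ref{L:BlockRestrict} to showing that $(B,\mathcal{X}_B,|\mathcal{X}_B|)$ is a No-instance, then observe that the $a+b$ Type-A and Type-B occurrences all have $u$ as an endpoint and apply the degree bound $d_B(u)=|V(B)|-1$ to derive a contradiction. The paper's proof is terser but uses exactly the same counting argument at $u$ (assuming $b\ge c$ without loss of generality).
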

\begin{proof}
Due to Lemma~\ref{L:BlockRestrict}, we know that if $(B,\mathcal{X}_B,|\mathcal{X}_B|)$ is a No-instance, then $(G,\mathcal{X},k)$ is a No-instance. Therefore, it is sufficient to show that if $|V(B)| \leq \max \{ a+b, a+c\}$, then $(B,\mathcal{X}_B,|\mathcal{X}_B|)$ is a No-instance. Without loss of generality, assume that $b \geq c$. Observe that the vertex $u$  appears in $a+b$ occurrences of terminal pairs (in $\mathcal{X_B}$). If $|V(B)| \leq a+b$, then $d_B(u) \leq a+b-1$, and therefore, $u$ cannot be a part of $a+b$ edge-disjoint paths. Thus, $(B,\mathcal{X}_B,|\mathcal{X}_B|)$ is a No-instance, and so is $(G,\mathcal{X},k)$.
\end{proof}

Next, we have the following reduction rule. 
\begin{RR}[RR\ref{RRC1}]\label{RRC1}
Let $(G,\mathcal{X},k)$ be an instance of \EDP~where $G$ is a clique path. Moreover, let  $B$ be a block of $G$ that has two cut vertices, say, $u$ and $w$. Consider the instance $(B,\mathcal{X}_B,|\mathcal{X}_B|)$. If $|V(B)| > d+2+\max \{ b+c-1,0\}$, then contract $B$ in $(G,\mathcal{X},k)$ to obtain the instance $(G',\mathcal{X}',k')$.
\end{RR}

We have the following lemma to establish that RR\ref{RRC1} is safe.

\begin{lemma}\label{L:CliquePath}
RR\ref{RRC1} is safe.
\end{lemma}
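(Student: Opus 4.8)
The statement of RR\ref{RRC1} says: when $B$ is a block with exactly two cut vertices $u,w$, and $|V(B)| > d+2+\max\{b+c-1,0\}$, then contracting $B$ yields an equivalent instance. By the general machinery already developed (Lemma~\ref{L:BlockRestrict} and Lemma~\ref{L:BlockUse}), it suffices to prove that under this size hypothesis the restricted instance $(B,\mathcal{X}_B,|\mathcal{X}_B|)$ is a Yes-instance; then Lemma~\ref{L:BlockUse} tells us contraction is safe. So the entire proof reduces to: \emph{given a clique $B$ on $>d+2+\max\{b+c-1,0\}$ vertices with two designated vertices $u,w$, and $a+b+c+d$ terminal pairs of Types A--D as described, construct $a+b+c+d$ pairwise edge-disjoint paths in $B$ realizing them.}

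**Constructing the paths.** I would build the solution greedily, handling the pairs in order of how constrained they are. The internal ("private") vertices $V(B)\setminus\{u,w\}$ are the scarce resource, since $B$ is a clique and any induced path in a clique has length at most $2$; in a minimum solution every Type-D path uses one private vertex, every Type-B/Type-C path of length $2$ uses one private vertex, and the single edge $uw$ can serve at most one Type-A pair directly. Concretely: (i) reserve the edge $uw$ for one Type-A pair if $a\ge 1$; (ii) each remaining Type-A pair, together with each Type-B and Type-C pair that we route via a length-$2$ path, and each Type-D pair, consumes a \emph{distinct} private vertex of $B$ as the midpoint of a 2-path. The number of private vertices needed is at most $(\text{remaining A})+ (\text{B routed long}) + (\text{C routed long}) + d$. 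The point of the threshold $d+2+\max\{b+c-1,0\}$ is precisely that it leaves enough private vertices: $|V(B)\setminus\{u,w\}| = |V(B)|-2 > d + \max\{b+c-1,0\}$, which is exactly the number of private midpoints we need once we route one B-or-C pair (when $b+c\ge 1$) through the edge $uw$-free part cleverly, or — more simply — once we note that at most $b+c-1$ of the B/C pairs need a private midpoint while one of them can share structure appropriately, and the Type-A surplus is absorbed because a clique on enough vertices trivially admits $|V(B)|-1$ edge-disjoint $(u,w)$-paths (the edge $uw$ plus one 2-path per other vertex), which dominates $a$ under the companion hypothesis $|V(B)|>\max\{a+b,a+c\}$ guaranteed by RR\ref{RRC3} not firing.

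**Edge-disjointness check.** Having assigned to each path either the edge $uw$ (used once) or a 2-path with a private midpoint, edge-disjointness is immediate: the two edges of a 2-path through private vertex $x$ are $\{u,x\}$ or $\{w,x\}$ type edges incident to $x$, and distinct midpoints give disjoint edge sets; paths sharing an endpoint $u$ but with different midpoints share no edge. The only subtlety is the accounting when $b+c\ge 1$ versus $b+c=0$, which is why the formula contains $\max\{b+c-1,0\}$: when there is at least one B or C pair, one of the "long" routings can be avoided or rerouted so that only $b+c-1$ private midpoints are charged to B/C pairs, whereas when $b=c=0$ the term vanishes. I would verify this boundary case explicitly and then combine with Lemma~\ref{L:BlockRestrict} and Lemma~\ref{L:BlockUse} to conclude.

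**Main obstacle.** The delicate part is \emph{not} producing some feasible routing — that is routine in a large clique — but matching the count exactly to the stated threshold $d+2+\max\{b+c-1,0\}$, i.e.\ showing this bound is simultaneously sufficient for Yes and (via RR\ref{RRC3}) tight for No, so that the two rules together leave no gap. In particular, I expect the careful bookkeeping to show why one of the Type-B/Type-C pairs can always be "absorbed for free" (the "$-1$"), and why the additive "$+2$" rather than "$+0$" is the right slack (it accounts for $u$ and $w$ themselves, plus leaving room for the edge $uw$ to carry a Type-A path), to be the step requiring the most care. Everything else follows mechanically from the clique structure and the already-established Lemmas~\ref{L:BlockRestrict} and~\ref{L:BlockUse}.
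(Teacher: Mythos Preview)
Your reduction to showing that $(B,\mathcal{X}_B,|\mathcal{X}_B|)$ is a Yes-instance (via Lemma~\ref{L:BlockUse}) and your use of RR\ref{RRC3} for the auxiliary bound $|V(B)|>\max\{a+b,a+c\}$ are both correct. The gap is in the routing itself. You assert that distinct private midpoints suffice and that ``the Type-A surplus is absorbed'', but a Type-A $2$-path $(u,x,w)$ consumes $\{u,x\}$ and $\{x,w\}$, so the same $x$ cannot serve a Type-B pair whose terminal is $x$, nor be a Type-B/C midpoint, without care. If one literally insists on pairwise distinct midpoints across all types, one needs $(a-1)+(b+c-1)+d$ private vertices; with $a=10,\,b=5,\,c=0,\,d=5$ the two hypotheses only force $|V(B)|\ge 16$ (so $14$ private vertices), while your count is $18$. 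Your edge-disjointness paragraph also mischaracterises Type-D $2$-paths, whose edges lie entirely inside $V(B)\setminus\{u,w\}$, not ``$\{u,x\}$ or $\{w,x\}$ type''.

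The paper closes this by \emph{reusing} the Type-A midpoints: their consumed edges touch only $u$ and $w$, so those vertices remain fully usable for routing within the private part. Concretely (for $b+c>0$, wlog $b\ge c$) one partitions $V(B)\setminus\{u,w\}$ into $X$ of size $b$ containing a Type-B terminal and $Y$ of size $\ge a-1$; routes all Type-A pairs through $Y$ and the edge $uw$; routes \emph{one} Type-B pair by its direct edge $ux_1$ (this is the ``$-1$''); and then deletes the used edges to obtain a split graph $H$ with clique $X\cup Y$ of size $\ge b+c+d$ and independent set $\{u,w\}$, on which the remaining $b+c+d-1$ pairs are handled in one shot by Corollary~\ref{prop2}. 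The case $b+c=0$ is similar, using Lemma~\ref{L:cliqueBetter} on $Y$ for the Type-D pairs. This reduce-to-a-residual-split-graph step is precisely what makes the threshold $d+2+\max\{b+c-1,0\}$ sufficient, and it is the piece your sketch does not supply.
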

\begin{proof}
We first remark that $G'$ is also a clique path. Next, we show that if $|V(B)| > d+2+\max \{ b+c-1,0\}$, then $(B,\mathcal{X}_B, |\mathcal{X}_B|)$ is a Yes-instance, and therefore, due to Lemma~\ref{L:BlockUse}, we can safely contract $B$ in $(G,\mathcal{X},k)$ to get $(G',\mathcal{X}',k')$. 

If $a=0$, then the claim follows directly due to Lemma~\ref{L:cliqueBetter}. Hence, we assume that $a>0$. Due to RR\ref{RRC3},
we know that $|V(B)| > \max \{a+b,a+c\}$. To prove that $(B,\mathcal{X}_B, |\mathcal{X}_B|)$ is a Yes-instance, we consider the following two cases. 

\noindent \textbf{Case 1:}    $\bm{b+c = 0.}$ Let $Y$ denote the set of vertices in $V(B) \setminus \{u,w \}$. Note that $|Y| \geq d+1$ (since $|V(B)| \geq d+3$). Moreover, recall that $|V(B)| \geq a+1$ (since $\max \{b,c\} =0$ and RR~\ref{RRC3} is not applicable) and hence, $|Y| \geq a-1$. Now, we assign edge-disjoint paths to the terminal pairs in the following manner. 
  \begin{enumerate}
       \item  First, we assign edge-disjoint paths to each occurrence of Type-A terminal pairs. Note that there are $a$ many such occurrences of terminal pairs, and each pair has terminals $u$ and $w$. Let $P^A_a = P_{uw}$ (i.e., $P^A_a$ consists only of a single edge $uw$). Next, let $y_1,\ldots, y_{|Y|}$ be an ordering of vertices in $Y$. Recall that $|Y|\geq a-1$. Now, for $1\leq i \leq a-1$, consider the path $P^A_{i} = (u,y_i,w)$. Observe that $P^A_1, \ldots, P^A_a$ are indeed edge-disjoint. 
        \smallskip

       \item Finally, we assign edge-disjoint paths to each occurrence of Type-D terminal pairs. Note that till now, no path has used any edge $xy$ such that $\{x,y\} \subseteq Y$. Moreover, each terminal of a Type-D terminal pair belongs to $Y$, the number of Type-D terminal pairs is $d$, $Y$ induces a clique, and $|Y| \geq d+1$. Therefore, due to Lemma~\ref{L:cliqueBetter}, all Type-D terminal pairs can be assigned paths only using the edges with both endpoints in $Y$. Since we have not used any such edge in a path of the form $P^A_j$ (for $j\in [a]$), $(B,\mathcal{X}_B, |\mathcal{X}_B|)$ is a Yes-instance.
    \end{enumerate}
    
    Therefore, $(B,\mathcal{X}_B, |\mathcal{X}_B|)$ is a Yes-instance.

\medskip
\noindent \textbf{Case 2:}   $\bm{b+c >0.}$  Without loss of generality, we assume that $b\geq c$. Since $|V(B)| > a+b$ (and hence, $|V(B)\setminus \{u,w\}| \geq a+b-1$), let the vertex set $V(B)\setminus \{u,w\}$  be partitioned into two subsets $X$ and $Y$ such that (1) $|X| = b$ and $X$ contains at least one terminal vertex of Type-B, (2) $|Y| \geq a-1$. Also, note that by the assumption, we have $|V(B)|\geq b+c+d+2$. Now, we assign edge-disjoint paths to terminal pairs in the following manner.

    
    First, we assign edge-disjoint paths to each occurrence of Type-A terminal pairs. Recall that each Type-A terminal pair has terminals $u$ and $w$. Let $P^A_a = P_{uw}$ (i.e., $P^A_a$ consists only of a single edge $uw$). Next, let $y_1,\ldots, y_{|Y|}$ be an ordering of vertices in $Y$. Recall that $|Y|\geq a-1$. Now, for $1\leq i \leq a-1$, consider the path $P^A_{i} = (u,y_i,w)$. Observe that $P^A_1, \ldots, P^A_a$ are indeed edge-disjoint.
    
    Next, we assign a path to one occurrence of a Type-B terminal pair. Recall that $X$ contains at least one terminal vertex of a Type-B terminal pair, say $\chi \in \mathcal{X_B}$, and let $x_1\in X$ be that vertex. Let $P^B_{1} = P_{ux_1}$ (i.e., $P^B_1$ consists only of a single edge $ux_1$). Note that $P^B_1$ is indeed edge-disjoint with every path in $\{P^A_1,\ldots, P^A_a\}$ and till this point, we have not used any edge $xy$ such that $\{x,y\} \subseteq X\cup Y$. 
    
    Now, we assign the rest of the paths in the following manner. Consider the graph $H$ obtained after removing all the edges used by the paths $P^A_1,\ldots, P^A_{a}$ and $P^B_1$. Notice that $H$ is a split graph where $X\cup Y$ induces a clique and $\{u,w\}$ induces an independent set.  Moreover, observe that $d_H(u)\geq b-1$ and $d_H(w) \geq c$. Now, let $\mathcal{X}'_B$ be the multiset of terminal pairs obtained after removing all Type-A terminal pairs along with the terminal pair $\chi$ from $\mathcal{X}_B$. Note that these are exactly the terminal pairs in $\mathcal{X}_B$ that are not provided a path yet. Since $|\mathcal{X}'_B| \leq b+c+d-1$ and $|X\cup Y| \geq b+c+d$, due to \cref{prop2}, we have that $(H,\mathcal{X}'_B,|\mathcal{X}'_B|)$ is a Yes-instance, and thus, there exists an edge-disjoint path between each terminal pair $\mathcal{X}'_B$. Since these paths only use edges present in $H$, notice that all these paths are indeed edge-disjoint from paths $P^A_1, \ldots, P^A_{a}$ and $P^B_1$ (due to construction of $H$). Hence $(B,\mathcal{X}_B, |\mathcal{X}_B|)$ is a Yes-instance. This completes our proof.
\end{proof}

\begin{observation}
RR\ref{RRC1} can be applied in polynomial
time.
\end{observation}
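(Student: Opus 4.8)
The plan is to check that each ingredient of RR\ref{RRC1} can be carried out in time polynomial in $|V(G)|+k$: locating a block $B$ with two cut vertices for which the triggering inequality holds, building the restricted instance $(B,\mathcal{X}_B,|\mathcal{X}_B|)$, computing the counts $a,b,c,d$, and performing the contraction of $B$.

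First, I would compute the block--cut tree of $G$ by a standard depth-first search; in linear time this produces the list of all blocks of $G$ (there are at most $|V(G)|$ of them) together with the cut vertices lying in each. Since $G$ is a clique path, it then suffices to iterate over the blocks with exactly two cut vertices $u$ and $w$. For a fixed such block $B$, the restriction $(B,\mathcal{X}_B,|\mathcal{X}_B|)$ of Definition~\ref{D:virtual} is determined by the map $h$, and $h$ is in turn determined once we know, for each cut vertex $u$ of $B$, the component $C_{B,u}$ of $G-(V(B)\setminus\{u\})$ containing $u$. A single graph traversal from $u$ in $G-(V(B)\setminus\{u\})$ identifies $V(C_{B,u})$, so $h$ can be read off in linear time; then $\mathcal{X}_B = \{(h(s),h(t)) : (s,t)\in\mathcal{X},\ h(s)\neq h(t)\}$ is formed in $\mathcal{O}(k)$ time. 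Classifying each (at most $k$) pair of $\mathcal{X}_B$ as Type-A, Type-B, Type-C, or Type-D amounts to checking whether its endpoints lie in $\{u,w\}$ or in $V(B)\setminus\{u,w\}$, which yields $a,b,c,d$, and evaluating $|V(B)| > d+2+\max\{b+c-1,0\}$ is then immediate.

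If this inequality holds for the block at hand, we carry out the contraction of Definition~\ref{D:contract}: set $V(G') = (V(G)\setminus V(B))\cup\{v\}$, add the edge $f(x)f(y)$ for each $xy\in E(G)$ with $f(x)\neq f(y)$, put $\mathcal{X}' = \{(f(s),f(t)) : (s,t)\in\mathcal{X},\ f(s)\neq f(t)\}$ and $k' = |\mathcal{X}'|$; each of these is a single pass over $E(G)$ and over $\mathcal{X}$. Iterating over all blocks keeps the total running time polynomial. I do not expect a real obstacle here: the only point requiring care is the construction of $\mathcal{X}_B$ (and hence of the type-counts), which relies on correctly recognising the two side-components $C_{B,u}$ and $C_{B,w}$ hanging off $B$---exactly what the two traversals above provide.
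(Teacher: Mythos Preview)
Your proposal is correct. The paper states this observation without proof, so your detailed argument---computing the block decomposition, building the restricted instance via the map $h$, tallying the type counts, and carrying out the contraction of Definition~\ref{D:contract}---is exactly the routine verification one would expect and is more than the paper itself supplies.
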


In RR\ref{RRC1}, we showed that in a reduced instance, the size of each block with two cut vertices is bounded by a linear function of the number of times its vertices appear in some occurrences of terminal pairs. In the next reduction rule (RR\ref{RRC2}), we consider the end blocks (i.e., blocks with exactly one cut vertex). So, consider an end block $B$ of $G$ with cut vertex $u$, and let $(B,\mathcal{X}_B,|\mathcal{X}_B|)$ be the restriction of $(G,\mathcal{X},k)$ to $B$. Any terminal pair $(s,t)\in \mathcal{X}_B$ is one of the following types:
\begin{itemize}
    \item Type-B$'$: $s=u$ and $t\in V(B)\setminus \{u\}$, or vice versa.
    \item Type-D$'$: $s,t \in V(B)\setminus \{ u\}$.
\end{itemize}

Let $b'$ and $d'$ denote the number of occurrences of Type-B$'$ and Type-D$'$ terminal pairs in $\mathcal{X}_B$, respectively.

We have the following reduction rule.
\begin{RR}[RR\ref{RRC2}] \label{RRC2}
Let $(G,\mathcal{X},k)$ be an instance of \EDP~where $G$ is a clique path. Let $B$ be an end block of $G$ with cut vertex $u$. Consider the instance $(B,\mathcal{X}_B,|\mathcal{X}_B|)$. If $|V(B)| > b'+d'$, then contract $B$ in $(G,\mathcal{X},k)$ to obtain $(G',\mathcal{X}',k')$.
\end{RR}

We have the following lemma to prove that RR\ref{RRC2} is safe. 

\begin{lemma}\label{L:RRC2}
RR\ref{RRC2} is safe.
\end{lemma}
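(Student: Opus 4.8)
The plan is to reuse, essentially verbatim, the two-step template of the earlier contraction rules (compare the proofs of Lemma~\ref{L:RRB3} and Lemma~\ref{L:CliquePath}): reduce the safeness of RR\ref{RRC2} to the single claim that the restricted instance $(B,\mathcal{X}_B,|\mathcal{X}_B|)$ is a Yes-instance, and then appeal to Lemma~\ref{L:BlockUse} to transfer this to the contracted instance. Since RR\ref{RRC2} only contracts (it never reports a No-instance), we only have to argue equivalence of $(G,\mathcal{X},k)$ and $(G',\mathcal{X}',k')$ under the firing condition $|V(B)| > b'+d'$, and not any direction of a No-answer.

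First I would check that $G'$ is again a clique path. Contracting the end block $B$ merges all of $V(B)$ into one vertex $v$; since $B$ is an end block with cut vertex $u$, every vertex of $V(B)\setminus\{u\}$ has all its neighbours inside $B$, so $N_{G'}(v)=N_G(u)\setminus V(B)$. Thus $v$ simply occupies the former position of $u$ in the (unique) other block of $G$ containing $u$, the number of blocks drops by one, and $v$ is not a cut vertex of that block; in particular no block of $G'$ acquires a new cut vertex, so $G'$ is a clique path. Next I would record the elementary count $|\mathcal{X}_B|=b'+d'$: by the type classification preceding RR\ref{RRC2}, every pair surviving in $\mathcal{X}_B$ has both images in $V(B)$ and distinct images, hence either exactly one endpoint equal to $u$ (Type-B$'$) or none (Type-D$'$), and these are mutually exclusive.

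The core step is then immediate. The block $B$ is a clique (it is a block of a block graph), and RR\ref{RRC2} fires precisely when $|V(B)| > b'+d' = |\mathcal{X}_B|$, so Lemma~\ref{L:cliqueBetter} applied to $(B,\mathcal{X}_B,|\mathcal{X}_B|)$ yields that this instance is a Yes-instance. Finally, invoking Lemma~\ref{L:BlockUse} with the block $B$ (whose cut-vertex set is $\{u\}$) and the contraction $(G',\mathcal{X}',k')$ gives that $(G,\mathcal{X},k)$ is a Yes-instance if and only if $(G',\mathcal{X}',k')$ is a Yes-instance, which is exactly the safeness of RR\ref{RRC2}.

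I do not expect a genuine obstacle here: all the substantive work is already packaged in Lemma~\ref{L:cliqueBetter} and Lemma~\ref{L:BlockUse}. The only points needing (routine) care are the two bookkeeping claims above---that contraction of an end block keeps us inside the class of clique paths, and that $\mathcal{X}_B$ decomposes exactly into its $b'$ Type-B$'$ and $d'$ Type-D$'$ pairs---together with the observation that $B$ is a clique so that Lemma~\ref{L:cliqueBetter} is applicable.
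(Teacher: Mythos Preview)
Your proposal is correct and follows essentially the same approach as the paper: remark that $G'$ remains a clique path, observe that $|\mathcal{X}_B|=b'+d'$ so Lemma~\ref{L:cliqueBetter} applies to the clique $B$, and then invoke Lemma~\ref{L:BlockUse}. The paper's proof is terser (it does not spell out the bookkeeping about $G'$ being a clique path or the decomposition of $\mathcal{X}_B$), but the logical structure is identical.
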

\begin{proof}
We first remark that $G'$ is a clique path. Since $B$ is a clique with at least $|\mathcal{X}_B|+1$ (here $|\mathcal{X}_B| = b'+d'$) vertices, due to Lemma~\ref{L:cliqueBetter}, we have that $(B,\mathcal{X}_B,|\mathcal{X}_B|)$ is a Yes-instance. Therefore, due to Lemma~\ref{L:BlockUse}, we can safely contract $B$ in $(G,\mathcal{X},k)$. \end{proof}



Next, we establish that once we cannot apply RR\ref{RRB1}-RR\ref{RRC2}, the number of vertices of the reduced graph is bounded by a linear function of $k$. In particular, we have the following lemma.

\begin{lemma}\label{L:CliqueSizeBound}
Let $(G,\mathcal{X},k)$ be an instance of \EDP~where $G$ is a clique path. If we cannot apply RR\ref{RRB1}-RR\ref{RRC2} on this instance, then $G$ contains at most $2k+1$ vertices.
\end{lemma}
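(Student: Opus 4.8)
The plan is to bound $|V(G)|$ by carefully summing the per-block vertex bounds guaranteed by the reduction rules over all blocks of the reduced clique path $G$, and then showing this telescoping-style sum is at most $2k+1$. First I would observe that since $G$ is a clique path and RR\ref{RRB1}--RR\ref{RRB3} are no longer applicable, every block contains at least one terminal vertex, so in particular $G$ has only finitely many blocks, and a clique path with $m$ blocks shares exactly $m-1$ cut vertices between consecutive blocks. I would set up notation: enumerate the blocks $B_1, \ldots, B_m$ along the path, with $B_i$ and $B_{i+1}$ sharing cut vertex $f_i$ for $i \in [m-1]$. Then $|V(G)| = \sum_{i=1}^{m} |V(B_i)| - (m-1)$, since each of the $m-1$ cut vertices is double-counted exactly once.

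Next I would invoke the bounds from RR\ref{RRC1} and RR\ref{RRC2}. Since RR\ref{RRC1} is not applicable, for each internal block $B_i$ (two cut vertices) we have $|V(B_i)| \le d_i + 2 + \max\{b_i + c_i - 1, 0\}$, where $a_i, b_i, c_i, d_i$ are the counts of Type-A/B/C/D occurrences in $\mathcal{X}_{B_i}$. Since RR\ref{RRC2} is not applicable, for each of the (at most two) end blocks $B_i$ (one cut vertex) we have $|V(B_i)| \le b_i' + d_i'$. The key accounting insight is the assignment of terminal occurrences to blocks: a Type-D (or Type-D$'$) occurrence in $\mathcal{X}_{B_i}$ arises from a terminal pair both of whose terminals ``live'' in the component attached to $B_i$ strictly between its cut vertices, so such pairs are charged to a single block; a Type-B or Type-C occurrence corresponds to a pair with one terminal near one cut vertex side and one terminal strictly inside $B_i$, and one needs to argue each original terminal pair generates at most a bounded number of such occurrences across all blocks. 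Type-A occurrences are ``free'' in the sense that they do not appear in the upper bounds except through the $\max\{b_i+c_i-1,0\}$ and $+2$ slack terms.

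The main obstacle — and the step I expect to require the most care — is making this charging scheme precise so that the slack terms ($+2$ per internal block, the $-1$ inside the max, and the $-(m-1)$ from shared cut vertices) exactly cancel the per-block overhead. Concretely: each internal block contributes a ``$+2$'' but donates a shared cut vertex to each neighbor, so globally the $+2$'s are largely absorbed by the $-(m-1)$ term; the $\max\{b_i+c_i-1,0\}$ means that if a block has any Type-B/C pair it ``saves one,'' and one must show Type-B/C pairs are counted in a way consistent with the structure of the path (a pair with terminals in two different ``pendant'' components attached to the same block, or straddling several blocks, must be handled). I would handle this by defining, for each terminal pair $(s,t) \in \mathcal{X}$, the unique minimal sub-path of blocks whose union of interiors-plus-cut-vertices contains both $s$ and $t$, and show this pair contributes Type-D to at most one block, contributes to the $b+c$ count of at most the two ``endpoint'' blocks of that sub-path, and to interior blocks of the sub-path it is a Type-A occurrence (hence harmless). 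Summing $|V(G)| = \sum_i |V(B_i)| - (m-1) \le \sum_i(\text{bound}_i) - (m-1)$, grouping the $d_i, d_i', b_i, c_i, b_i'$ contributions by originating terminal pair, each pair is seen to contribute at most $2$ to the total (one per terminal endpoint), and the constant overheads telescope, yielding $|V(G)| \le 2k + 1$. I would finish by noting the $+1$ accounts for the single end-of-path cut vertex / the base case of a graph with one block, and that monotonicity of $k$ under the rules (already observed) plus polynomial-time applicability give the kernelization, completing the proof of Theorem~\ref{T:CP}.
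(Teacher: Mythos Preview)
Your approach is essentially the paper's: bound each block via RR\ref{RRC1}/RR\ref{RRC2}, subtract the $m-1$ shared cut vertices, and charge the resulting sum to terminal pairs (at most $2$ each). The only step you leave imprecise is why the ``$+2$'' constants are absorbed---the $-(m-1)$ only covers a ``$+1$'' per block, so blocks with $b+c=0$ retain an extra ``$+1$''; the paper closes this by noting each such block (a $D$-block) must contain a Type-D pair (RR\ref{RRB1} and RR\ref{RRB3} guarantee a non-cut-vertex terminal in every block), hence the number $|\mathcal{D}|$ of $D$-blocks is at most $k_1$, the count of pairs with both terminals in one block, yielding $|V(G)| \le |\mathcal{D}| + 1 + 2k - k_1 \le 2k+1$.
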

\begin{proof}
Let $B_1,\ldots, B_t$ be the blocks in $G$. We say that a block $B$ is a $D$-\textit{block} if the instance $(B,\mathcal{X}_B, |\mathcal{X}_B|)$ contains neither Type-B nor Type-C terminal pairs (i.e., $b+c = 0$), and otherwise, we say that $B$ is a $C$-\textit{block}. Let $\mathcal{D}$ and $\mathcal{C}$ denote the set of $D$-blocks and $C$-blocks of $G$. Note that $|V(G)| = \sum_{B\in \mathcal{D}} |V(B)| + \sum_{B\in \mathcal{C}} |V(B)| - (t-1)$ because each cut vertex is counted in two blocks and there are $t-1$ cut vertices.  

For the ease of presentation of calculations, for a block $B$, let $B^b,B^c,B^d$ denote the cardinality of Type-B, Type-C, and Type-D terminal pairs in $(B,\mathcal{X}_B, |\mathcal{X}_B|)$, respectively. Due to RR~\ref{RRC1} and RR~\ref{RRC2}, each block $B\in \mathcal{D}$ has cardinality at most $B^d+2$ and each block $B\in \mathcal{C}$ has cardinality at most $B^d+B^b+B^c+1$. Therefore, $$ |V(G)| \leq \sum_{B\in \mathcal{D}}(B^d+2) +\sum_{B\in\mathcal{C}} (B^b+B^c+B^d+1) - t+1,$$

$$ |V(G)| \leq \sum_{B\in \mathcal{D}}(B^d+1) + |\mathcal{D}| +\sum_{B\in\mathcal{C}} (B^b+B^c+B^d+1) - t+1,$$

$$ |V(G)| \leq |\mathcal{D}| -t+1  +\sum_{B \in \mathcal{C}\cup \mathcal{D}} (B^b+B^c+B^d+1),$$

$$ |V(G)| \leq |\mathcal{D}| -t+1 +t +\sum_{B\in \mathcal{C}\cup \mathcal{D}} (B^b+B^c+B^d).$$

Now, let $k_1$ be the number of terminal pairs in $G$ such that both terminals of the terminal pair lie in the same block of $G$, possibly on the cut vertices of the block. Then, observe that $\sum_{B\in \mathcal{C}\cup \mathcal{D}} B^d \leq k_1$. Thus, the total number of terminal pairs such that the endpoints lie in different blocks are $k-k_1$. Now, consider a terminal pair whose endpoints lie in different blocks. Then, this terminal pair will behave as a Type-B terminal pair for at most one unique block in $G$ and as a Type-C terminal pair for at most one another unique block in $G$. Thus, $\sum_{B\in \mathcal{C}\cup \mathcal{D}} (B^b+B^c) \leq 2(k-k_1)$. Therefore, $|V(G)| \leq |\mathcal{D}|+1 +k_1+2(k-k_1) = |\mathcal{D}|+1+2k-k_1$. Finally, observe that $|\mathcal{D}|\leq k_1$ (since each block contains at least one terminal due to RR~\ref{RRB3}). Therefore, we have that $|V(G)| \leq 2k+1$.
\end{proof}

Now, we have the following observation.
\begin{observation}\label{O:CliquePath}
RR\ref{RRC3}-RR\ref{RRC2} can be applied in polynomial time. Moreover, during the application of RR\ref{RRC3}-RR\ref{RRC2}, we never increase the initial $k$.
\end{observation}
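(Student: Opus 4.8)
The plan is to treat this as a bookkeeping observation and verify its two assertions separately: first that a single application of each of RR\ref{RRC3}--RR\ref{RRC2} runs in polynomial time (hence so does the exhaustive procedure), and second that none of these rules increases the parameter. For the running-time part I would begin by recalling that the block-cut structure of $G$ (the list of all blocks together with their cut vertices) can be computed in polynomial—indeed linear—time by the standard block-cut tree algorithm, so locating the candidate blocks to which RR\ref{RRC3}, RR\ref{RRC1}, or RR\ref{RRC2} might apply is cheap. The only genuinely non-trivial ingredient is evaluating the restricted instance $(B,\mathcal{X}_B,|\mathcal{X}_B|)$ of Definition~\ref{D:virtual}: for each cut vertex $u$ of the block $B$ I would compute, by a single graph search, the component $C_{B,u}$ of $G[V(G)\setminus(V(B)\setminus\{u\})]$ containing $u$; these components determine the map $h$ on every vertex of $G$, after which applying $h$ to the endpoints of each terminal pair produces $\mathcal{X}_B$ in $\mathcal{O}(k)$ time.

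Given $\mathcal{X}_B$, classifying each of its at most $k$ pairs into the Type-A/B/C/D categories (or, for an end block, the Type-B$'$/D$'$ categories) is a constant-time test per pair, so the counts $a,b,c,d$ (respectively $b',d'$) are obtained in $\mathcal{O}(k)$ time. With these counts in hand, checking the triggering inequality of each rule—$|V(B)| \le \max\{a+b,a+c\}$ for RR\ref{RRC3}, $|V(B)| > d+2+\max\{b+c-1,0\}$ for RR\ref{RRC1}, and $|V(B)| > b'+d'$ for RR\ref{RRC2}—is an $\mathcal{O}(1)$ comparison. When a rule fires, its action is either to report a No-instance, which is immediate, or to contract $B$ as in Definition~\ref{D:contract}, which merely relabels vertices, edges, and terminal pairs through the map $f$ and is again polynomial. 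Thus any single application of the three rules costs polynomial time. To bound the \emph{total} work I would observe that every application that does not halt the algorithm contracts some block, replacing its at least two vertices by a single vertex and so strictly decreasing $|V(G)|$; hence at most $|V(G)|$ contractions can occur, and the whole procedure is polynomial. The main obstacle here is not any hard argument but getting the restriction computation right, since $\mathcal{X}_B$ is defined through the auxiliary components $C_{B,u}$; everything downstream is a routine comparison.

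For the second assertion I would argue directly from the definitions of the rules. RR\ref{RRC3} only ever reports a No-instance and never modifies the instance, so it trivially leaves $k$ unchanged (in fact it terminates the algorithm). Both RR\ref{RRC1} and RR\ref{RRC2} act by contracting a block, and by Definition~\ref{D:contract} a contraction sets $k' = |\mathcal{X}'|$ with $\mathcal{X}' = \{(f(s),f(t)) : (s,t)\in\mathcal{X},\ f(s)\neq f(t)\}$. Since $\mathcal{X}'$ is obtained from $\mathcal{X}$ by applying the map $f$ and discarding those pairs that collapse to a single vertex, we have $|\mathcal{X}'| \le |\mathcal{X}| = k$, whence $k' \le k$. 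Consequently no application of RR\ref{RRC3}--RR\ref{RRC2} ever increases the parameter, which completes the observation.
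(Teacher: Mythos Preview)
Your proposal is correct and in fact more thorough than the paper, which states this observation without any proof at all; the paper simply treats both assertions as self-evident bookkeeping. Your polynomial-time analysis via the block-cut structure, computation of $\mathcal{X}_B$, and the bound on the number of contractions, together with the direct argument that contraction can only shrink $\mathcal{X}$, fully justify the observation.
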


Finally, due to RR\ref{RRB1}-RR\ref{RRC2}, along with Lemmas~\ref{L:RRB1},~\ref{L:RRB3},~\ref{L:CliquePath},~\ref{L:RRC2}, and~\ref{L:CliqueSizeBound}, Corollary~\ref{C:BlockRule}, and Observation~\ref{O:CliquePath}, we have the following theorem.

\cliquepath*

\section{\textsf{NP}-hardness for Complete Graphs} \label{npc:complete}
In this section, we prove that \textsc{EDP} is $\mathsf{NP}$-hard on complete graphs by giving a polynomial-time reduction from \textsc{EDP} on general graphs, which is known to be $\mathsf{NP}$-hard~\cite{kramer}. Our reduction is based on the standard technique of adding missing edges and placing a terminal pair on the endpoints of the added edge.  This technique was also used to prove the \textsf{NP}-hardness of \textsc{EDP} for split graphs~\cite{hegger}.

\complete*
 \begin{proof}
Let $(G,\mathcal{X},k)$ be an instance of \textsc{EDP}, where $\mathcal{X}=\{(s_1, t_1),\ldots,$ $(s_k, t_k)\}$. Define the graph $G'$ as follows: Let $V(G')=V(G)$ and $E(G')=E(G)\cup \{uv: uv\notin E(G)\}$.  Furthermore, let $\mathcal{T} = \{(s_{uv}, t_{uv}) : uv \in E(G')\setminus E(G), s_{uv}=u, t_{uv}=v\}$. Note that for ease of notation, we denote the terminal pairs in $\mathcal{T}$ by $(s_{uv}, t_{uv})$ rather than $(u,v)$. Also, for an edge $uv\in E(G')\setminus E(G)$, we introduce either $(s_{uv}, t_{uv})$ or $(s_{vu}, t_{vu})$ in $\mathcal{T}$, not both. Let $\mathcal{X'} = \mathcal{X} \cup \mathcal{T}$.  Now, we claim that $(G,\mathcal{X},k)$ and $(G',\mathcal{X'},k')$ are equivalent instances of \textsc{EDP}, where $k'=k+|\mathcal{T}|$. Let $\mathcal{P}_{\mathcal{T}}=\{P_{uv}: (s_{uv},t_{uv})\in \mathcal{T}\}$.

In one direction, let $\mathcal{P}=\{P_{1},\ldots,P_{k}\}$ be a solution of $(G,\mathcal{X},k)$. Since for every $(s_{uv},t_{uv})\in \mathcal{T}$, the edge $uv$ does not belong to any path in $\mathcal{P}$, $\mathcal{P}\cup \mathcal{P}_{\mathcal{T}}$ is a set of edge-disjoint paths in $G'$. As $\mathcal{X'} = \mathcal{X} \cup \mathcal{T}$, $\mathcal{P}\cup \mathcal{P}_{\mathcal{T}}$ is a solution of $(G',\mathcal{X'},k')$. 

In the other direction, let $\mathcal{P'}$ be a solution of $(G',\mathcal{X'},k')$ that contains as many paths from $\mathcal{P}_{\mathcal{T}}$ as possible. Next, we claim that $\mathcal{P}_{\mathcal{T}}\subseteq \mathcal{P'}.$ Targeting a contradiction, suppose that there
exists a terminal pair, say, $(s_{uv},t_{uv}) \in \mathcal{T}$, such that $P_{uv}\notin \mathcal{P'}$. Let $Q$ denote the path in $\mathcal{P'}$
connecting $s_{uv}$ and $t_{uv}$. If none of the paths in $\mathcal{P'}$ uses the edge $uv$,
then the set $(\mathcal{P'} \setminus Q)\cup \{P_{uv}\}$ is a solution of $(G',\mathcal{X'},k')$ containing more paths from $\mathcal{P}_{\mathcal{T}}$ than $\mathcal{P'}$, contradicting the choice of $\mathcal{P'}$. Hence, there must be a unique path $P^{*}\in \mathcal{P'}$ that uses
the edge $uv$. Let $s^{*}$ and $t^{*}$ be the two terminals that are connected by the path $P^{*}$. Let $W$ denote the walk between $s^{*}$ and $t^{*}$ obtained from  $P^{*}$ by replacing the edge $uv$ with the path $Q$ (there may be some vertices that are repeated in $W$). Since $E(W)=(E(P^{*})\cup E(Q))\setminus \{uv\}$, $W$ is edge-disjoint from every path in $\mathcal{P'}\setminus\{P^{*},Q\}$ (as $\mathcal{P'}$ is a set of edge-disjoint paths). Let $Q^*$ be a path between $s^*$ and $t^*$ that uses a subset of edges of $W$, which again is edge-disjoint from every path in $\mathcal{P'}\setminus\{P^{*},Q\}$.
Hence, $\mathcal{P}=(\mathcal{P'}\setminus\{P^{*},Q\}) \cup \{{P_{uv},Q^{*}\}}$ is a solution of $(G',\mathcal{X'},k')$ that contains one more path from $\mathcal{P}_{\mathcal{T}}$ than $\mathcal{P'}$. This contradicts the choice of $\mathcal{P'}$, and thus implies that $\mathcal{P}_{\mathcal{T}}\subseteq \mathcal{P'}.$ Since $\mathcal{X'} = \mathcal{X} \cup \mathcal{T}$ and $\mathcal{P}_{\mathcal{T}}$ contains the edge-disjoint paths between the terminal pairs present in $\mathcal{T}$, $\mathcal{P'}\setminus \mathcal{P}_{\mathcal{T}}$ must contain the edge-disjoint paths between the terminal pairs present in $\mathcal{X}$. Thus, $\mathcal{P'}\setminus \mathcal{P}_{\mathcal{T}}$ is a solution of $(G,\mathcal{X},k)$.
\end{proof}

\section{Kernelization Results on \textsc{VDP}}\label{sec:vdp}

\subsection{A Linear Vertex Kernel for Split Graphs} \label{linearvdpsplit}

Let $(G,\mathcal{X},k)$ be an instance of \textsc{VDP} where $G$ is a split graph. Note that given a split graph $G$, we can compute (in linear time) a partition $(C, I)$ of $V(G)$ such that $C$ is a clique and $I$ is an independent set~\cite{hammer}. We partition the set $I$ into two sets, say, $I_T$ and $I_N$, where $I_T$ denotes the set of terminal vertices in $I$ and $I_N$ denotes the set of non-terminal vertices in $I$. Observe that a terminal pair in a split graph can only be of one of the following types:

\begin{itemize}
\item Type-I: One of the terminal vertices belongs to $C$, and the other belongs to $I$.
\item Type-II: Both terminal vertices belong to $C$.
\item Type-III: Both terminal vertices belong to $I$.
\end{itemize}

 Our kernelization algorithm is based on a preprocessing step and the application of three reduction rules (RR\ref{RR1}-RR\ref{RR3}). Before proceeding further, let us first discuss the idea behind the proof of Theorem \ref{labelsplitvdp} in the following overview. \medskip

\noindent \textbf{Overview.}
Given an instance $(G,\mathcal{X},k)$ of \textsc{VDP} where $G$ is a split graph, our main goal (in this section) is to convert the given instance of \textsc{VDP} to an instance of \textsc{VDP-Unique} (see \cref{sec:contri}) on split graphs in polynomial time. For this purpose, we focus on the vertices that belong to more than one terminal pair (including different occurrences of the same pair). The intuition suggests that one should make copies of the vertex that belongs to multiple terminal pairs. However, in certain cases (in particular for heavy terminal pairs), this approach fails. In simple words, if $(s,t)$ is a heavy terminal pair (note that $st\in E(G)$, in this case), then creating a copy of either $s$ or $t$ creates an illusion that many 1-length paths exist between $s$ and $t$, which is not true (see Figure \ref{fig1}). 

For Type-I heavy terminal pairs, deleting one copy of $(s,t)$ from $\mathcal{X}$ along with the deletion of edge $st$ from $G$ works. However, for Type-II heavy terminal pairs, the situation is more subtle than it appears, as a similar operation cannot be applied here (as the graph no longer remains split after deleting edge $st$, in this case). Therefore, the idea is to tackle these ``problematic'' heavy terminal pairs with the help of some reduction rules (RR\ref{RR1} and RR\ref{RR2} in our case).

  \begin{figure}[t]
 \centering
    \includegraphics[scale=0.95]{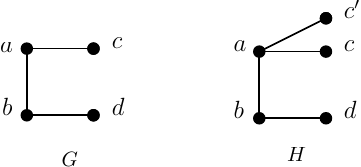}
    \caption{Let $(G,\mathcal{X},2)$ be an instance of \textsc{VDP} where $G$ is a split graph with partition $(C,I)$ such that $C=\{a,b\}$ and $I=\{c,d\}$. Let $\mathcal{X}=\{(a,c),(a,c)\}$. Furthermore, let $H$ be the graph obtained from $G$ by making a copy $c'$ of $c$, and let $\mathcal{X'}=\{(a,c),(a,c')\}$. Note that $(H,\mathcal{X'},2)$ is a Yes-instance (of \textsc{VDP}) while $(G,\mathcal{X},2)$ is a No-instance (of \textsc{VDP}).}
    \label{fig1}
\end{figure}

Another interesting thing to note here is that since we are dealing with a restricted graph class (i.e. split graphs), some vertices are more crucial than others. To turn this property in our favor, we carry out a preprocessing step, which we call \textsc{Clean-Up}. The intuition behind applying the \textsc{Clean-Up} operation is the following. Note that the vertices in the set $I_{N}$ can serve only one purpose, and that is to connect any two vertices from the clique. However, informally speaking, as the non-terminal vertices present in $C$ serve more purpose than the vertices present in $I_{N}$, it is important to save them until their use becomes necessary.  So, to tackle Type-II heavy terminal pairs (one of the ``problematic'' terminal pairs as discussed in the previous paragraph), it is desirable for us to use as many vertices from the set $I_{N}$ as possible because those terminal pairs that do not have a path of length 2 via a vertex from $I_{N}$ have to necessarily use a vertex from the clique as an internal vertex.  
 \medskip
 
 Now, let us define everything discussed above in a formal manner. For this purpose, we begin by defining the \textsc{Clean-Up} operation. Note that the \textsc{Clean-Up} operation is a preprocessing step of our kernelization algorithm. Therefore, given an instance $(G,\mathcal{X},k)$ of \textsc{VDP} where $G$ is a split graph, we apply this operation before applying any of the reduction rules RR\ref{RR1}-RR\ref{RR3} (discussed later in this section). 
\medskip

\noindent \textsc{\textbf{Clean-Up.}}  First, consider the following construction (Construction $\mathcal{A}$), which is crucial to define the \textsc{Clean-Up} operation. This construction was also used by Heggernes et al.~\cite{hegger} who used it to remove a subset of $I_N$ (safely). However, in the \textsc{Clean-Up} operation, we will remove the entire set $I_N$ (safely). In order to do so, we need a few more technical arguments than the ones present in~\cite{hegger}. 

\medskip

\noindent \textbf{Construction $\mathcal{A}$:} 
Given an instance $(G,\mathcal{X}, k)$ of \textsc{VDP} where $G$ is a split graph, we construct a bipartite graph, say, $H$, with bipartition $(A,B)$ as follows:  For every terminal pair $(s,t)$ of Type-II such that $st$ is a heavy edge with weight $w \geq 2$, we introduce $w-1$ vertices, say, $v_{st}^{1},\ldots,v_{st}^{w-1}$, to $A$. The set $B$ consists of all the vertices from set $I_{N}.$ For each $v\in I_{N}$, introduce an edge from $v$ to vertices $v_{st}^{1},\ldots,v_{st}^{w-1}$ if and only if $v$ is adjacent to both $s$ and $t$ in $G$. See Figure \ref{bpgfig} for an illustration of the construction of $H$ from $(G,\mathcal{X},k)$.

Due to Proposition \ref{propmatching}, we compute a maximum matching, say, $M$, in $H$ in polynomial time. Let $\widehat{\mathcal{X}}\subseteq \mathcal{X}$ be the multiset of all terminal pairs whose corresponding vertices in $H$ are saturated by $M$. For example, in Figure \ref{bpgfig}, if $M=\{v_{ab}^{1}f,v_{ab}^{2}g,v_{de}^{1}h\}$ is a maximum matching in $H$, then $\mathcal{\widehat{X}}=\{(a,b),(a,b),(d,e)\}$. This ends Construction $\mathcal{A}$.

\begin{remark} \label{rmrk6}
For intuition, we remark that in Construction $\mathcal{A}$, we introduce only $w-1$ copies of a heavy edge of weight $w$ because one copy will be ``taken care'' of by the edge between them (see Observation \ref{obsminimum}).
\end{remark}

Next, consider the following definition.

\begin{definition} [\textsc{Clean-Up}] \label{CU}
 Given an instance $(G,\mathcal{X},k)$ of \textsc{VDP} where $G$ is a split graph, construct the bipartite graph $H$ and find a maximum matching $M$ in $H$, as described in Construction $\mathcal{A}$. If $I_{N}\neq \emptyset$ or $\widehat{\mathcal{X}}\neq \emptyset$, then $G' \Leftarrow G-I_{N}, \mathcal{X'}\Leftarrow \mathcal{X}\setminus \widehat{\mathcal{X}},$ $k' \Leftarrow k-|\widehat{\mathcal{X}}|$. 
\end{definition}

 \begin{figure}[t]
 \centering
    \includegraphics[scale=0.75]{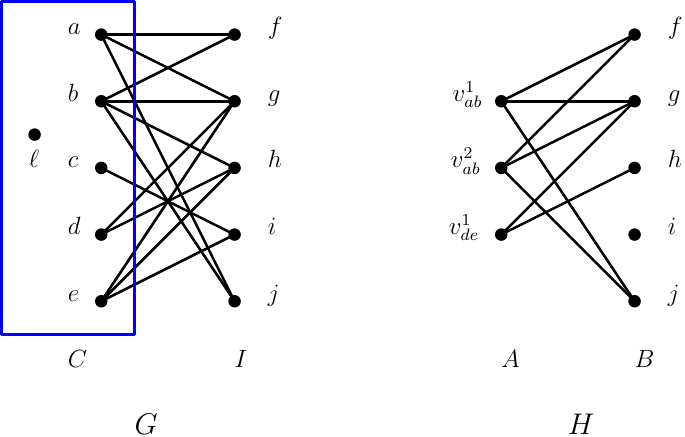}
    \caption{Let $(G,\mathcal{X},k)$ be an instance of \textsc{VDP} where $G$ is a split graph with a partition $(C,I)$ and $\mathcal{X}=\{(a,b),(a,b),$ $(a,b),(d,e),(d,e),(c,d),(b,d)\}$.  The vertices in $C$ form a clique; to keep the picture clean, we do not show the edges in $C$. Observe that the terminal pairs $(a,b)$ and $(d,e)$ are heavy.}
    \label{bpgfig}
\end{figure}

\begin{observation} \label{cleanupobs}
 \textsc{Clean-Up} can be done in polynomial
time.
\end{observation}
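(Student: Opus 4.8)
The plan is to verify that each step comprising the \textsc{Clean-Up} operation runs in polynomial time. Recall that, given $(G,\mathcal{X},k)$ with $G$ a split graph, \textsc{Clean-Up} performs the following: (i) it computes a split partition $(C,I)$ of $G$ and refines $I$ into the terminal part $I_T$ and the non-terminal part $I_N$; (ii) it builds the bipartite graph $H$ with bipartition $(A,B)$ as in Construction $\mathcal{A}$; (iii) it computes a maximum matching $M$ in $H$; (iv) it extracts the multiset $\widehat{\mathcal{X}}\subseteq\mathcal{X}$ of terminal pairs whose corresponding vertices in $A$ are saturated by $M$; and (v) it outputs $(G-I_N,\ \mathcal{X}\setminus\widehat{\mathcal{X}},\ k-|\widehat{\mathcal{X}}|)$.

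First, I would note that step (i) is handled by the linear-time split-partition algorithm of~\cite{hammer}, and that separating the terminal vertices of $I$ from the non-terminal ones requires only a single scan of $\mathcal{X}$. For step (ii), the Type-II heavy edges and their weights can be identified in polynomial time simply by counting occurrences in $\mathcal{X}$; moreover, if $(s,t)$ is a Type-II heavy pair of weight $w$ then $\mathcal{X}$ contains $w$ occurrences of $(s,t)$, and Construction $\mathcal{A}$ introduces only $w-1$ gadget vertices for it. Summing over all Type-II heavy pairs, we get $|A|<\sum w\le|\mathcal{X}|=k$, while $B=I_N$ has size at most $|V(G)|$. Hence $|V(H)|\le k+|V(G)|$, and the edge set of $H$ is obtained by testing, for each $v\in I_N$ and each gadget vertex $v_{st}^{j}$, whether $v$ is adjacent to both $s$ and $t$ in $G$ — at most $|I_N|\cdot k$ adjacency checks. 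Thus $H$ is constructed in polynomial time.

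Next, step (iii) runs in time $\mathcal{O}(\sqrt{|V(H)|}\cdot|E(H)|)$ by Proposition~\ref{propmatching}, which is polynomial in $|V(G)|+k$. For step (iv), once $M$ is computed we make one pass over $A$: whenever a gadget vertex $v_{st}^{j}$ is $M$-saturated, we add one further copy of the occurrence $(s,t)$ to $\widehat{\mathcal{X}}$. Finally, forming $G-I_N$, removing the chosen occurrences from $\mathcal{X}$, and decreasing the parameter by $|\widehat{\mathcal{X}}|$ in step (v) are elementary polynomial-time manipulations. Since the procedure involves no search or branching — its only nontrivial subroutine being a single bipartite maximum-matching computation — no step poses a genuine obstacle; the one point worth checking is the size bound $|A|\le k$, which is what guarantees that $H$, and hence the matching computation, remains polynomial in the input size. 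Combining these bounds yields the claim.
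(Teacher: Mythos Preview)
Your verification is correct. The paper itself does not supply a proof of this observation; it simply states it and moves on, treating the polynomial-time bound as evident from the fact that Construction~$\mathcal{A}$ produces a bipartite graph of size polynomial in the input and then invokes a single maximum-matching computation (Proposition~\ref{propmatching}). Your write-up makes these implicit steps explicit and in particular isolates the one non-obvious size bound, $|A|\le k$, which ensures that $H$ and hence the matching call remain polynomial; everything else is routine bookkeeping, as you note.
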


Note that due to Remark \ref{obsdel}, $G'$ is a split graph. Now, with the help of Definition \ref{Mtry}, Observation \ref{new:obs}, Proposition \ref{firstclaim}, and Lemma \ref{secondclaim}, we will establish that $(G,\mathcal{X},k)$ and $(G',\mathcal{X'},k')$, as described in Definition \ref{CU}, are equivalent (see Lemmas \ref{newlemma1}-\ref{newlemma}).

Before that, consider the next observation which follows trivially from Proposition \ref{prop1}.

\begin{observation} \label{newobs}
Let $(G,\mathcal{X},k)$ be a Yes-instance of \textsc{VDP} where $G$ is a split graph. Moreover, let $\mathcal{P}$ be a minimum solution of $(G,\mathcal{X},k)$. If there exists a path $P \in \mathcal{P}$ that
visits a vertex $x\in I_{N}$, then $P$ must be of the form $(u',x,v')$, where $u',v' \in C$ are terminal vertices such that $P_{u'v'}\in \mathcal{P}$.
\end{observation}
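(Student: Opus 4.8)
The plan is to invoke Proposition~\ref{prop1} (applicable since split graphs are chordal) and to exploit the fact that every neighbor of a vertex of $I_N$ lies in the clique $C$. First I would note that, since $x \in I_N$ is a non-terminal vertex, it cannot be an endpoint of any path in a solution; hence $x$ is an \emph{internal} vertex of $P$, so $P$ has length at least~$2$. Let $a$ and $b$ be the two neighbors of $x$ along $P$. As $x$ belongs to the independent set $I$, we have $N_G(x) \subseteq C$, so $a,b \in C$, and since $C$ is a clique, $ab \in E(G)$.

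Next I would apply Proposition~\ref{prop1} to $P$ and rule out the first alternative, namely that $P$ is induced. If $|E(P)| \geq 3$, then $a$ and $b$ are non-consecutive vertices of $P$, so the edge $ab$ is a chord of $P$, contradicting that $P$ is induced. If $|E(P)| = 2$, then $P = (a,x,b)$, but $G[\{a,x,b\}]$ contains the edge $ab$ in addition to $ax$ and $xb$, so again $P$ is not induced. Either way we reach a contradiction, so $P$ cannot be an induced path.

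Therefore the second alternative of Proposition~\ref{prop1} applies: $P$ is a path of length exactly~$2$, and there is a path $P_j \in \mathcal{P}$ of length~$1$ whose endpoints coincide with those of $P$. Since $P$ has length~$2$ and $x$ is internal, $P = (a,x,b)$, and the parallel length-$1$ path is precisely $P_{ab}$, which thus lies in $\mathcal{P}$. Finally, $a$ and $b$ are the endpoints of the solution path $P$, hence they form a terminal pair and are terminal vertices; and they lie in $C$ as observed above. Setting $u' = a$ and $v' = b$ yields the claimed form. The argument is essentially routine; the only point requiring mild care is the case split on $|E(P)|$ when excluding the induced-path alternative, together with the remark that an induced path on three vertices cannot have adjacent endpoints.
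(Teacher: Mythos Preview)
Your proof is correct and follows the same approach the paper indicates: the paper simply states that the observation ``follows trivially from Proposition~\ref{prop1}'', and your argument is a faithful elaboration of that implication. The case split on $|E(P)|$ is harmless but unnecessary, since in both cases the single fact $ab \in E(G)\setminus E(P)$ already shows $P$ is not induced.
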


\begin{definition} \label{Mtry}
Let $(G,\mathcal{X},k)$ be a Yes-instance of \textsc{VDP} where $G$ is a split graph. Moreover, let $M$ and $H$ be as described
in Construction $\mathcal{A}$. Let $\mathcal{P}$ be a minimum solution of $(G,\mathcal{X},k)$. Then, $M'\subseteq E(H)$ is said to be \emph{induced} by $\mathcal{P}$ in $H$ if it is constructed as follows:

\begin{enumerate}
    \item Initialize $M'\Leftarrow \emptyset$.
    \item For every path $P \in \mathcal{P}$ that
visits a vertex $x\in I_{N}$: By Observation \ref{newobs}, $P$ must be of the form $(u',x,v')$, where $u',v' \in C$ are terminal vertices such that $P_{u'v'}\in \mathcal{P}$. This further implies that $u'v'$ is a heavy edge.
Let $w \geq 2$ be the weight of $u'v'$, and consider the vertices $v_{u'v'}^{1}, \ldots, v_{u'v'}^{w-1}$ in graph $H$. By Construction $\mathcal{A}$, $x$ is adjacent to every vertex in the set $\{v_{u'v'}^{1}, \ldots, v_{u'v'}^{w-1}\}$. So, we can arbitrarily choose an edge $xv_{u'v'}^{j}$ of $H$, for some $j\in [w-1]$ such that $xv_{u',v'}^{j}$ does not appear in $M'$\footnote{The existence of such a $j$ follows from the choice of $w$, and since $P_{u'v'}\in \mathcal{P}$}, and add it to $M'$.

\end{enumerate}
\end{definition}
\begin{observation} \label{new:obs}
Let $(G,\mathcal{X},k)$ be a Yes-instance of \textsc{VDP} where $G$ is a split graph. Moreover, let $M$, $A$, $B$, and $H$ be as described
in Construction $\mathcal{A}$. Let $\mathcal{P}$ be a minimum solution of $(G,\mathcal{X},k)$. Let $M'$ be induced by $\mathcal{P}$ in $H$ as described in Definition \ref{Mtry}. Then, $M'$ is a matching.
\end{observation}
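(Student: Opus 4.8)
The plan is to verify directly that no vertex of the bipartite graph $H$ is incident with two edges of $M'$, handling the two sides of the bipartition $(A,B)$ separately, where $B=I_N$. The side $B$ is immediate: suppose some $x\in I_N$ were incident with two edges of $M'$. By the procedure in Definition~\ref{Mtry}, every edge of $M'$ incident with $x$ is added while processing a path $P\in\mathcal{P}$ that visits $x$; by Observation~\ref{newobs} such a $P$ has the form $(u',x,v')$, so $x$ is an \emph{internal} vertex of $P$. Since $x\in I_N$ is a non-terminal vertex it is never an endpoint of a path of $\mathcal{P}$, and since $\mathcal{P}$ is a solution of \textsc{VDP} its paths are pairwise internally vertex-disjoint; hence at most one path of $\mathcal{P}$ visits $x$. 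As the procedure adds exactly one edge incident with $x$ when processing that path, $x$ is incident with at most one edge of $M'$.

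Next I would treat the side $A$. Fix a heavy Type-II edge $st$ of weight $w\ge 2$ and consider the block $\{v_{st}^1,\dots,v_{st}^{w-1}\}\subseteq A$. By Construction~$\mathcal{A}$ the $H$-neighbours of each $v_{st}^j$ lie in $I_N$, so any edge of $M'$ incident with this block has the form $x\,v_{st}^j$ and was added while processing a path of $\mathcal{P}$ of the form $(s,x,t)$ with $x\in I_N$. The terminal pair $\{s,t\}$ occurs exactly $w$ times in $\mathcal{X}$; by Observation~\ref{obsminimum} the single-edge path $P_{st}$ belongs to $\mathcal{P}$, and since the paths of a solution are pairwise distinct, $P_{st}$ serves exactly one of these $w$ occurrences. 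Hence at most $w-1$ paths of $\mathcal{P}$ have the form $(s,x,t)$ with $x\in I_N$. When the procedure processes the $m$-th such path ($1\le m\le w-1$), only $m-1\le w-2$ edges have so far been added to this block, so, as the block has $w-1$ vertices, some $v_{st}^j$ is still unsaturated; the procedure assigns this $v_{st}^j$ to the current path. Consequently distinct such paths receive distinct vertices $v_{st}^j$, and each $v_{st}^j$ is incident with at most one edge of $M'$. Together with the previous paragraph this shows every vertex of $H$ is saturated by at most one edge of $M'$, so $M'$ is a matching.

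Finally I would record that $M'\subseteq E(H)$: whenever an edge $x\,v_{u'v'}^j$ is added, $x$ is the internal vertex of a path $(u',x,v')\in\mathcal{P}$, so $x$ is adjacent in $G$ to both $u'$ and $v'$, and thus $x\,v_{u'v'}^j\in E(H)$ by Construction~$\mathcal{A}$. I expect the only delicate point to be the counting step on the $A$-side, namely guaranteeing that an unsaturated $v_{st}^j$ always exists at the moment it is needed; this is exactly the content of the footnote in Definition~\ref{Mtry} and rests precisely on $P_{st}\in\mathcal{P}$ absorbing one of the $w$ occurrences of $\{s,t\}$ together with the paths of $\mathcal{P}$ being pairwise distinct. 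Everything else follows routinely from internal vertex-disjointness of the paths in $\mathcal{P}$.
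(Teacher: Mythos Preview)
Your proof is correct and follows essentially the same approach as the paper's own argument, only spelled out in considerably more detail. The paper dispatches the $A$-side in a single sentence by appealing to Definition~\ref{Mtry} (where the footnote already asserts that an unsaturated $v_{st}^j$ always exists), and the $B$-side by invoking internal vertex-disjointness; you make both steps explicit, in particular supplying the counting argument (at most $w-1$ paths of the form $(s,x,t)$ with $x\in I_N$, because $P_{st}$ absorbs one of the $w$ occurrences and the solution paths are pairwise distinct) that the paper leaves implicit.
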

\begin{proof}
 Note that for every path in $\mathcal{P}$ that has an internal vertex that belongs to $I_{N}$, we choose a distinct vertex in set $A$ to be an endpoint of an edge in $M'$ (due to Definition \ref{Mtry}). Furthermore, since the paths in $\mathcal{P}$ are internally vertex-disjoint, for every edge in $M'$, its endpoint in set $B$ is also unique. Thus, $M'$ is a matching in $H$. \end{proof}

For an illustrative example, consider $G$, $H$, and $\mathcal{X}$ given in Figure \ref{bpgfig}. Let $\mathcal{P}=\{(a,f,b),(a,\ell,b),$ $ P_{ab},$ $(d,g,e), P_{de}$ $, P_{cd}, P_{bd}\}$ be a solution of $(G,\mathcal{X},7)$. Then,  $\{v_{ab}^{1}f,v_{de}^{1}g\}$ and $\{v_{ab}^{2}f,v_{de}^{1}g\}$ are two possible choices for $M'$.
\begin{proposition} [\cite{hegger}]\label{firstclaim}
Let $(G,\mathcal{X},k)$ be a Yes-instance of \textsc{VDP} where $G$ is a split graph. Moreover, let $M$, $B$, and $H$ be as described
in Construction $\mathcal{A}$. Let $R$ $(\subseteq B)$ be the set of vertices in $I_{N}$ that are not saturated by $M$ in graph $H$. Let $\mathcal{P}$ be a minimum solution of $(G,\mathcal{X},k)$ for which the number of vertices in $R$ visited by the paths in $\mathcal{P}$ is minimum. Then, none of the paths in $\mathcal{P}$ visits a vertex in $R$. 
\end{proposition}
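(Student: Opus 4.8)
The plan is to argue by contradiction, showing that if some path in the chosen solution $\mathcal{P}$ visits a vertex of $R$, then we can reroute it to avoid $R$ (while keeping the solution minimum), thereby contradicting the minimality of the number of $R$-visits. First I would invoke the matching $M'$ induced by $\mathcal{P}$ in $H$ (Definition \ref{Mtry}), which is a matching by Observation \ref{new:obs}. Suppose for contradiction that some path $P \in \mathcal{P}$ visits a vertex $r \in R$. By Observation \ref{newobs}, $P$ has the form $(u',r,v')$ with $u',v' \in C$ terminal vertices and $P_{u'v'} \in \mathcal{P}$, so $u'v'$ is heavy of some weight $w \ge 2$. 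The construction of $M'$ assigns to $P$ some edge $rv_{u'v'}^{j} \in M'$, so $r$ is $M'$-saturated. On the other hand, $r \in R$ means $r$ is \emph{not} $M$-saturated.

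The key step is a standard augmenting-path / matching-exchange argument comparing $M$ and $M'$. Consider the symmetric difference $M \triangle M'$ in $H$. Since $r$ is saturated by $M'$ but not by $M$, the connected component of $M \triangle M'$ containing $r$ is a path or cycle; because $r$ has $M'$-degree $1$ and $M$-degree $0$ in this symmetric difference, $r$ is an endpoint of a maximal alternating path $Q$ in $M \triangle M'$. Since $M$ is a \emph{maximum} matching (Proposition \ref{maxprop}), $Q$ cannot be an $M$-augmenting path, so its other endpoint must be a vertex that is $M$-saturated — and this endpoint lies in $A$ (it cannot be an unsaturated $B$-vertex, else $Q$ augments $M$). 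Starting from $r$, the path $Q$ alternates $M'$-edge, $M$-edge, $M'$-edge, $\dots$, ending at an $A$-vertex with an $M$-edge; in particular $Q$ has even length and its $B$-vertices along the way are all distinct. This alternating path describes how to "shift" the matching: along $Q$ we can replace the $M'$-edges used by $\mathcal{P}$'s rerouting through $r$ with the $M$-edges, so that $r$ gets freed.

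Concretely, I would walk $Q$ back from $r$: each $M'$-edge $xv_{pq}^{j}$ on $Q$ corresponds to a path $(p,x,q) \in \mathcal{P}$ with $P_{pq} \in \mathcal{P}$, and each $M$-edge $yv_{pq}^{j'}$ tells us that the independent-set vertex $y$ is adjacent to both $p$ and $q$. Since consecutive edges of $Q$ share a vertex in $A$ of the form $v_{pq}^{\bullet}$, the corresponding heavy pairs match up, so we can reroute each such path $(p,x,q)$ through the $M$-partner $y$ instead of $x$. Performing all these swaps simultaneously along $Q$ yields a new solution $\mathcal{P}'$ of $(G,\mathcal{X},k)$: it is still a valid set of internally vertex-disjoint paths (the new internal vertices are exactly the $B$-endpoints of the $M$-edges of $Q$, which are distinct from each other and from all other internal vertices used, since $M$ is a matching and $Q$ is a simple path), it has the same total length (each swap replaces a length-$2$ path by a length-$2$ path), hence it is still minimum, and it uses strictly fewer vertices of $R$ — the vertex $r$ is now unused, and no new vertex of $R$ is introduced because all the $B$-vertices brought in along $Q$ are $M$-saturated and hence not in $R$. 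This contradicts the choice of $\mathcal{P}$.

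The main obstacle I anticipate is verifying carefully that the rerouting along $Q$ does not create new conflicts — i.e., that the $B$-vertices freed and the $B$-vertices newly occupied are globally consistent with internal-vertex-disjointness of the whole solution $\mathcal{P}$, not just along $Q$. This requires noting that every $B$-vertex appearing on $Q$ that is currently used by $\mathcal{P}$ as an internal vertex is precisely an $M'$-saturated vertex on $Q$ (by the way $M'$ is induced from $\mathcal{P}$), so after the swap exactly those are released and exactly the $M$-endpoints on $Q$ are taken, all of which are distinct; a short bookkeeping argument handles the endpoint of $Q$ in $A$ (the last $M$-edge) correctly. A secondary subtlety is the degenerate case where $Q$ is a single $M'$-edge at $r$ whose $A$-endpoint is $M$-unsaturated — but that would make $Q$ an $M$-augmenting path, contradicting maximality of $M$, so this case cannot occur.
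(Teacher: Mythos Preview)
The paper does not prove this proposition; it is cited from \cite{hegger} without proof. Your alternating-path exchange argument is correct and is the standard approach---indeed it is dual to the proof the paper does give for the companion Lemma~\ref{secondclaim}.

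One minor slip: the other endpoint of $Q$ lies in $B$, not in $A$. You correctly deduce that this endpoint is $M$-saturated and $M'$-unsaturated; but then its unique $(M\triangle M')$-edge is its $M$-edge, so $Q$ ends with an $M$-edge and (since it starts with an $M'$-edge at $r$) has even length, hence ends in $B$. This does not affect your rerouting. Writing $Q = (b_0{=}r,\,a_1,\,b_1,\,a_2,\ldots,a_m,\,b_m)$ with $b_{i-1}a_i\in M'$ and $a_ib_i\in M$, your swap $(p_i,b_{i-1},q_i)\mapsto(p_i,b_i,q_i)$ for $i=1,\dots,m$ frees $r=b_0$ and introduces only $b_m$, which is $M$-saturated (hence $b_m\notin R$) and $M'$-unsaturated (hence, by the construction of $M'$ and Observation~\ref{newobs}, unused by $\mathcal{P}$). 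The rest of your bookkeeping is correct.
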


\begin{lemma} \label{secondclaim}
Let $(G,\mathcal{X},k)$ be a Yes-instance of \textsc{VDP} where $G$ is a split graph. Moreover, let $M$, $B$, and $H$ be as described in Construction $\mathcal{A}$. Let $R$ $(\subseteq B)$ be the set of vertices in $I_{N}$ that are not saturated by $M$ in graph $H$, and let $F=I_{N}\setminus{R}$. Let $\mathcal{P}$ be a minimum solution of $(G,\mathcal{X},k)$ for which the number of vertices in $R$ and $F$ visited by the paths in $\mathcal{P}$ is minimum and maximum, respectively (with priority to the minimization). Then, every vertex in $F$ is visited by some path in $\mathcal{P}$. 
\end{lemma}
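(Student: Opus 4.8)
## Proof Plan for Lemma \ref{secondclaim}

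The plan is to argue by contradiction, leveraging the matching structure and the double optimality of $\mathcal{P}$ (minimum solution, minimizing visits to $R$ with priority, then maximizing visits to $F$). Suppose some vertex $z \in F$ is not visited by any path in $\mathcal{P}$. Since $z \in F = I_N \setminus R$, the vertex $z$ is saturated by the maximum matching $M$; let $z v_{st}^{j} \in M$ be its matching edge, where $(s,t) \in \mathcal{X}$ is a Type-II heavy terminal pair with $st$ heavy of weight $w \geq 2$, and $z$ is adjacent to both $s$ and $t$ in $G$ (by Construction $\mathcal{A}$). First I would use Proposition \ref{firstclaim} to note that no path in $\mathcal{P}$ visits any vertex of $R$, so every vertex of $I_N$ used by $\mathcal{P}$ lies in $F$; combined with Observation \ref{newobs}, each such path has the form $(u',x,v')$ with $x \in F$, $u'v' \in E(G)$ heavy, and $P_{u'v'} \in \mathcal{P}$.

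The key step is to produce from $\mathcal{P}$ a solution $\mathcal{P}'$ that either visits strictly fewer vertices of $R$ (impossible, since $\mathcal{P}$ already visits none — so this branch is vacuous) or visits the same number of $R$-vertices and strictly more $F$-vertices, or has strictly fewer total edges; any of these contradicts the choice of $\mathcal{P}$. The natural move: consider the matching $M'$ induced by $\mathcal{P}$ in $H$ (Definition \ref{Mtry}, which is a matching by Observation \ref{new:obs}). The vertex $z$, being unvisited, is not saturated by $M'$ (its only potential role in $M'$ would be as the $B$-endpoint of an edge coming from a path through $z$). Now I would compare $M$ and $M'$: if $M'$ saturates fewer $A$-vertices than $M$, there is an $A$-vertex $v_{s_0t_0}^{i}$ saturated by $M$ but not by $M'$; this corresponds to a heavy pair $(s_0,t_0)$ for which $\mathcal{P}$ routes one "extra" copy (beyond the single edge $P_{s_0t_0}$) through a clique vertex rather than through $I_N$. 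The plan is then to reroute: take the path $Q \in \mathcal{P}$ realizing that copy of $(s_0,t_0)$, which currently uses an internal clique vertex $c \in C$, and replace it with $(s_0, z, t_0)$ — legal because $z$ is adjacent to both $s_0$ and $t_0$ (as $z$ is matched in $M$ to a copy of exactly this pair, or more carefully, by an augmenting-path argument in $H$ starting from $z$). This frees the clique vertex $c$, keeps $\mathcal{P}'$ a valid solution with the same number of paths, visits no new vertex of $R$, and visits one more vertex of $F$ (namely $z$), contradicting maximality of $F$-visits.

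I expect the main obstacle to be the bookkeeping around \emph{which} $A$-vertex/heavy pair to reroute — a single unsaturated $z$ in $M'$ need not be directly adjacent in $G$ to the "right" pair, so one must chase an $M$-alternating path in $H$ from $z$. Concretely: since $|M'| < |M|$ would be needed, but actually $M'$ saturates at most as many $A$-vertices as there are $\mathcal{P}$-paths through $I_N$, while $M$ is maximum; the discrepancy gives, via Proposition \ref{maxprop}-style reasoning (Berge), an $M'$-augmenting path in $H$ from $z$ to some $A$-vertex $v_{s_0t_0}^{i}$ unsaturated by $M'$. Alternating this path reassigns the internal $I_N$-vertices along it (each such reassignment swaps one path $(u',x,v') \mapsto (u', x', v')$ within $F$, preserving the solution) and ends by routing the copy of $(s_0,t_0)$ currently on a clique vertex through the newly-freed $I_N$-vertex, ultimately through $z$. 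This produces $\mathcal{P}'$ with the same number of paths, the same (zero) $R$-visits, and strictly more $F$-visits, the desired contradiction. The second subtlety is ensuring that throughout this surgery no path is duplicated and internal-vertex-disjointness is preserved; this follows because we only ever move internal vertices among $I_N$-vertices that are endpoints of a matching (hence distinct) and replace one clique internal vertex by an $I_N$ vertex, which cannot collide with anything else since that $I_N$ vertex was previously unused.
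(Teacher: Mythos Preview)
Your proposal is correct and follows essentially the same approach as the paper's proof: both form the matching $M'$ induced by $\mathcal{P}$ in $H$, observe that the unused vertex of $F$ is $M'$-unsaturated so $|M'|<|M|$, extract an $M'$-augmenting path in $H-R$ via Berge, and reroute the length-two paths along that augmenting path to gain one more $F$-visit while keeping the solution minimum and $R$-free. The only cosmetic difference is that you pin the augmenting path to start at $z$ (which is indeed justified by looking at the component of $M\triangle M'$ containing $z$, using that $M$ is maximum), whereas the paper takes an arbitrary augmenting path in $H-R$; either way the contradiction is the same.
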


\begin{proof}

Let $v\in F$ such that no path in $\mathcal{P}$ uses $v$ as an internal vertex. First, by Observation \ref{new:obs}, any set induced by $\mathcal{P}$ in $H$ is a matching in $H$. Let $M'$ be one such matching. Note that $M'$ does not saturate any vertex in $R$ (due to Proposition \ref{firstclaim}), and furthermore, $M'$ does not saturate $v$ in $H$ (by the definition of $M'$). This implies that $|M'|<|F|$. Since $|M|=|F|$, note that $M'$ is not a maximum matching in $H$. Thus, by Proposition \ref{maxprop} and the fact that none of $M$ or $M'$ saturates a vertex from $R$, there exists an $M'$-augmenting path, say, $Q$, in $H-R$. For an example, consider $G$, $H$, and $\mathcal{X}$ given in Figure \ref{bpgfig}. If $M=\{v_{ab}^{1}f,v_{ab}^{2}g,v_{de}^{1}h\}$ and $M'=\{v_{ab}^{1}f,v_{de}^{1}g\}$, then $(v_{ab}^{2},f,v_{ab}^{1},$ $g,v_{de}^{1},h)$ is an $M'$-augmenting path in $H-R$. 

Next, we discuss how to obtain a solution $\mathcal{\widehat{P}}$ of $(G,\mathcal{X},k)$ from $\mathcal{P}$ by using $Q$, such that $\mathcal{\widehat{P}}$ visits one more vertex from $F$ than $\mathcal{P}$ and no vertex from $R$ (contradicting the choice of $\mathcal{P}$). Let $M''$ be the matching obtained from $M'$ by replacing the saturated edges with unsaturated edges in $Q$ and vice versa (by augmenting $M'$ with respect to $Q$). For example, consider $G$, $H$, and $\mathcal{X}$ given in Figure \ref{bpgfig}. If $M=\{v_{ab}^{1}f,v_{ab}^{2}g,v_{de}^{1}h\}$ and $M'=\{v_{ab}^{1}f,v_{de}^{1}g\}$, then $M''=\{v_{de}^{1}h,v_{ab}^{1}g,v_{ab}^{2}f\}$. Since the length of an augmenting path is always odd, one of the endpoints of $Q$ must belong to $B$. This implies that $M''$ saturates one more vertex in set $F$ (as $Q$ is an augmenting path in $H-R$) than $M'$. Initialize $\mathcal{\widehat{P}}$ as $\mathcal{P}$. Remove all those paths from $\mathcal{\widehat{P}}$ that contain vertices from set $V(Q)\cap F$ as internal vertices. Furthermore, if $v_{xy}^{j}$  for some $j\in [w-1]$ is the end point of $Q$ in $A$, then also remove the path in $\mathcal{P}$ between the $j^{th}$ occurrence of terminal pair $(x,y)\in \mathcal{X}$. Next, for every edge $v_{st}^{j}b\in M''\cap Q$, where $v_{st}^{j}\in A$ for some $j\in [w-1]$ and $b\in F$, introduce the path $(s,b,t)$ in $\mathcal{\widehat{P}}$. Next, note that $\mathcal{\widehat{P}}$ is a minimum solution of $(G,\mathcal{X},k)$. Furthermore, the paths in $\mathcal{\widehat{P}}$ visit one more vertex of $F$ (and no vertex from $R$) than the paths in $\mathcal{P}$, which contradicts the choice of $\mathcal{P}$. Thus, every vertex in $F$ is visited by some path in $\mathcal{P}$. 
\end{proof}

We are now ready to establish the promised equivalence.

\begin{lemma} \label{newlemma1}
Let $(G,\mathcal{X},k)$ be a Yes-instance of \textsc{VDP} where $G$ is a split graph. Moreover, let $M$, $B$, $\mathcal{\widehat{X}}$, and $H$ be as described
in Construction $\mathcal{A}$. Then, the output $(G',\mathcal{X'},k')$ of \textsc{Clean-Up} on $(G,\mathcal{X},k)$ is also a Yes-instance of \textsc{VDP}. 
\end{lemma}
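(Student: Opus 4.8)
The plan is to start from a minimum solution of $(G,\mathcal{X},k)$ chosen as economically as possible with respect to $I_N$, massage it so that the portion living in $I_N$ mirrors the fixed matching $M$ exactly, and then discard that portion to obtain a solution of $(G',\mathcal{X}',k')$. Concretely, I would first invoke Remark~\ref{obsdel} to note that $G'=G-I_N$ is again a split graph. Assuming $(G,\mathcal{X},k)$ is a Yes-instance, fix a minimum solution $\mathcal{P}$ that first minimizes the number of vertices of $R$ it visits and, subject to that, maximizes the number of vertices of $F$ it visits (with $R$ and $F$ as in Proposition~\ref{firstclaim} and Lemma~\ref{secondclaim}). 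Proposition~\ref{firstclaim} then gives that no path of $\mathcal{P}$ visits $R$, and Lemma~\ref{secondclaim} gives that every vertex of $F$ is visited by some path of $\mathcal{P}$. By Observation~\ref{newobs}, every path of $\mathcal{P}$ that meets $I_N$ is a length-$2$ path $(s,v,t)$ with $v\in F$ and $(s,t)$ a heavy Type-II pair for which $P_{st}\in\mathcal{P}$. Hence a matching $M'$ induced by $\mathcal{P}$ in $H$ (Definition~\ref{Mtry}, Observation~\ref{new:obs}) saturates every vertex of $F=I_N\setminus R$ on the $I_N$-side, so $|M'|=|F|=|M|$, i.e., $M'$ is itself a maximum matching of $H$.

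The core step is to argue that there is a minimum solution $\mathcal{P}^\star$ of $(G,\mathcal{X},k)$ in which, for every heavy Type-II pair $st$, the number of occurrences of $(s,t)$ realized through $I_N$ equals $m_{st}:=|\{\,j : v_{st}^j\text{ is saturated by }M\,\}|$; equivalently, the occurrences of $\mathcal{P}^\star$ that pass through $I_N$ form precisely the multiset $\widehat{\mathcal{X}}$. To establish this I would compare $M$ with $M'$. Since both saturate exactly $F$ on the $I_N$-side, in $M\triangle M'$ every $I_N$-vertex is saturated by both matchings or by neither, so each component of $M\triangle M'$ is an even alternating cycle or an alternating path with both endpoints in $A$ (and using equally many $M$- and $M'$-edges). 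Processing components one at a time, I rotate the $I_N$-paths of the current solution off the $M'$-edges onto the $M$-edges of that component: along a cycle this merely relabels which vertex of $F$ serves which heavy occurrence (each new edge $v_{st}^ju\in M$ still has $u\in N(s)\cap N(t)$, so $(s,u,t)$ is a legal replacement path, and the set of used $F$-vertices is unchanged), whereas along a path exactly one heavy occurrence loses its $I_N$-route and exactly one other gains one, the latter freeing a vertex of the clique $C$; because every vertex of $C$ is a common neighbor of both endpoints of any Type-II pair, that freed clique vertex can host the displaced occurrence through a length-$2$ path, so the solution stays minimum and stays valid. After all components are handled we obtain $\mathcal{P}^\star$ whose induced matching agrees with $M$ on the $A$-side, which gives the required statement.

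Finally, I would delete from $\mathcal{P}^\star$ the $|F|=|M|=|\widehat{\mathcal{X}}|$ paths that traverse $I_N$. These paths realize exactly the occurrences of $\widehat{\mathcal{X}}$, so the surviving $k-|\widehat{\mathcal{X}}|=k'$ paths realize exactly the occurrences of $\mathcal{X}'=\mathcal{X}\setminus\widehat{\mathcal{X}}$, avoid $I_N$ entirely (hence lie in $G-I_N=G'$), and remain pairwise internally vertex-disjoint and pairwise distinct; therefore $(G',\mathcal{X}',k')$ is a Yes-instance, which is what Definition~\ref{CU} requires.

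I expect the main obstacle to be the rotation step: showing that pushing a heavy occurrence off $I_N$ can always be absorbed inside the clique, including the degenerate situations where the clique vertex freed by the occurrence moved onto $I_N$ happens to coincide with an endpoint of the pair being rerouted, or where that occurrence was realized by the bare edge $P_{st}$ rather than by a clique-internal length-$2$ path. Resolving these will lean on the clique structure of $C$ together with Remark~\ref{remark2} and on the fact (Observation~\ref{newobs}, Proposition~\ref{prop1}) that in a minimum solution a heavy pair uses $I_N$ only alongside its edge-path $P_{st}$.
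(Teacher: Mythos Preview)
Your approach is essentially that of the paper: choose a minimum solution extremal with respect to $R$ and $F$, invoke Proposition~\ref{firstclaim} and Lemma~\ref{secondclaim}, and discard the $I_N$-paths. The paper, however, passes over a point you correctly isolate: it simply asserts that the paths of $\mathcal{P}$ traversing $I_N$ may be taken as $\widehat{\mathcal{P}}$, i.e., that they realize precisely the multiset $\widehat{\mathcal{X}}$, whereas a priori the induced matching $M'$ may distribute the $|F|$ many $I_N$-visits among the heavy Type-II pairs differently from $M$. Your symmetric-difference rotation on $M\triangle M'$ is exactly the right way to repair this and is sound. The anticipated obstacles both dissolve cleanly: since the endpoint $a_{p+1}$ is unsaturated by the current matching at the time you process its component, the current $I_N$-count for $(s_{p+1},t_{p+1})$ is at most $w-2$, so a $C$-routed occurrence is always available to move onto $I_N$; and the freed clique vertex $c$, being an internal vertex of a path in an internally vertex-disjoint solution, lies in no other path of $\mathcal{P}$ and is therefore a non-terminal, hence distinct from $s_1,t_1$.
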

\begin{proof}

Let $(G,\mathcal{X},k)$ be a Yes-instance, and let $R$ $(\subseteq B)$ be the set of vertices in $I_{N}$ that are not saturated by $M$ in graph $H$, and let $F=I_{N}\setminus{R}$. Let $\mathcal{P}$ be a solution of $(G,\mathcal{X},k)$ for which the number of vertices in $R$ visited by the paths in $\mathcal{P}$ is minimum, and the number of vertices in $F$ visited by the paths in $\mathcal{P}$ is maximum. Now, by Proposition \ref{firstclaim}, none of the paths in $\mathcal{P}$ visits a vertex in $R$. Furthermore, by Lemma \ref{secondclaim}, for every $v\in F$, there exists a path, say, $P_v$, in $\mathcal{P}$ that visits $v$.  Therefore, if $\mathcal{\widehat{P}}$ denotes the set of internally vertex-disjoint paths between the terminals in $\mathcal{\widehat{X}}$, then note that none of the paths in $\mathcal{P}\setminus \mathcal{\widehat{P}}$ visits a  vertex in the set $I_{N}$. Thus, $\mathcal{P}\setminus \mathcal{\widehat{P}}$ is a solution of $(G',\mathcal{X'},k')$. 
\end{proof}

\begin{lemma}  \label{newlemma}
Let $(G,\mathcal{X},k)$ be an instance of \textsc{VDP} where $G$ is a split graph. Moreover, let $\mathcal{\widehat{X}}$ be as described
in Construction $\mathcal{A}$. Furthermore, let $(G',\mathcal{X'},k')$ be the output of \textsc{Clean-Up} on $(G,\mathcal{X},k)$. Then, if $(G',\mathcal{X'},k')$ is a Yes-instance of \textsc{VDP}, then $(G,\mathcal{X},k)$ is also a Yes-instance of \textsc{VDP}.
\end{lemma}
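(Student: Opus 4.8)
The plan is to show that an arbitrary solution of $(G',\mathcal{X'},k')$ can be lifted to a solution of $(G,\mathcal{X},k)$ by re-inserting, for every terminal pair removed by \textsc{Clean-Up}, a length-$2$ path through the vertex of $I_{N}$ that the maximum matching $M$ assigned to it.

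First I would fix a solution $\mathcal{P}'=\{P'_{1},\ldots,P'_{k'}\}$ of $(G',\mathcal{X'},k')$. Since $G'=G-I_{N}$, no path in $\mathcal{P}'$ visits a vertex of $I_{N}$; thus $\mathcal{P}'$ is also a collection of pairwise internally vertex-disjoint, pairwise distinct paths \emph{of $G$} realizing exactly the terminal pairs of $\mathcal{X'}=\mathcal{X}\setminus\widehat{\mathcal{X}}$. Next, for every matched edge $v_{st}^{\,\ell}u\in M$ with $u\in I_{N}$ (recall that $A$ consists of the copies $v_{st}^{1},\ldots,v_{st}^{w-1}$ of Type-II heavy pairs and $B=I_{N}$), Construction $\mathcal{A}$ guarantees that $u$ is adjacent in $G$ to both $s$ and $t$, so $Q_{st}^{\,\ell}:=(s,u,t)$ is a valid $(s,t)$-path of $G$ whose unique internal vertex is $u$. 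Let $\mathcal{Q}$ be the multiset of all such paths, one per matched edge.

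Then I would check disjointness. Because $M$ is a matching, distinct matched edges have distinct endpoints in $B=I_{N}$, so the internal vertices of the paths in $\mathcal{Q}$ are pairwise distinct; hence the paths in $\mathcal{Q}$ are pairwise internally vertex-disjoint, and they are pairwise distinct as they differ in their middle vertex. Moreover, these internal vertices all lie in $I_{N}=V(G)\setminus V(G')$, so each path of $\mathcal{Q}$ is internally vertex-disjoint from, and distinct from, every path of $\mathcal{P}'$. Finally I would do the bookkeeping: since each matched edge saturates exactly one vertex of $A$, and $\widehat{\mathcal{X}}$ records exactly one occurrence of $(s,t)$ for every saturated copy $v_{st}^{\,\ell}$, we have $|\mathcal{Q}|=|M|=|\widehat{\mathcal{X}}|$; therefore $\mathcal{P}'\cup\mathcal{Q}$ consists of $k'+|\widehat{\mathcal{X}}|=k$ pairwise internally vertex-disjoint, pairwise distinct paths, and by construction it realizes exactly the terminal pairs of $\mathcal{X'}\cup\widehat{\mathcal{X}}=\mathcal{X}$ (as a multiset), with the $\ell$-th matched copy of $(s,t)$ serving the corresponding occurrence of $(s,t)$ in $\widehat{\mathcal{X}}$. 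Hence $\mathcal{P}'\cup\mathcal{Q}$ is a solution of $(G,\mathcal{X},k)$, so $(G,\mathcal{X},k)$ is a Yes-instance.

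This is the ``easy'' direction of the equivalence, so I do not anticipate a real obstacle; the only point that needs care is the treatment of heavy edges, namely verifying that the reinstated length-$2$ paths together with the paths of $\mathcal{P}'$ serving the remaining occurrences of a heavy pair are all pairwise distinct. This holds because the middle vertices used by $\mathcal{Q}$ are distinct elements of $I_{N}$ that do not occur anywhere in $G'$, so no new coincidence among paths can arise.
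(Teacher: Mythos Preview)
Your proposal is correct and follows essentially the same approach as the paper: take a solution $\mathcal{P}'$ of $(G',\mathcal{X}',k')$, and for each terminal pair in $\widehat{\mathcal{X}}$ add the length-$2$ path through the unique vertex of $I_N$ assigned to it by the matching $M$, noting these internal vertices lie outside $V(G')$ and are pairwise distinct. Your write-up is in fact slightly more careful than the paper's about checking pairwise distinctness of the resulting paths.
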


\begin{proof}

Let $(G',\mathcal{X'},k')$ be a Yes-instance, and let $\mathcal{P'}$ be one of its solutions. Note that, in order to prove that $(G,\mathcal{X},k)$ is a Yes-instance, we only need to find additional internally vertex-disjoint paths between the terminal pairs in $\widehat{\mathcal{X}}$. By the definition of $\widehat{\mathcal{X}}$, we know that for every occurrence of a terminal pair, say, $(s_{i},t_{i})$, in $\widehat{\mathcal{X}}$, there exists a vertex, say, $u_{i}$, in $I_{N}$ such that $u_{i}$ is adjacent to both $s_{i}$ and $t_{i}$ in $G$. Also, for distinct $i\in[k]$ such that $(s_{i},t_{i})\in \mathcal{\widehat{X}}$, we have distinct $u_{i}$ (i.e., $u_{i}\neq u_{j}$ for $i\neq j\in [k]$). Furthermore, since $G'=G-I_{N}$, it is clear that $u_i$ does not belong to any path in $\mathcal{P'}$. Therefore, for every $(s_{i},t_{i})\in \widehat{\mathcal{X}} $, define $P_{i}=(s_{i},u_{i},t_{i})$. Note that $\{P_{i}:{(s_{i},t_{i})\in \widehat{\mathcal{X}}}\}$ is a set of internally vertex-disjoint paths in $G$. Thus, $\mathcal{P}=\mathcal{P'}\cup \{P_{i}:{(s_{i},t_{i})\in \widehat{\mathcal{X}}}\}$ is a solution of $(G,\mathcal{X},k)$.
\end{proof}

Now, consider the following lemma.

\begin{lemma} \label{rmrk14}
  Let $(G,\mathcal{X},k)$ be an instance of \textsc{VDP} obtained by applying \textsc{Clean-Up}. Let  $(s,t)\in \mathcal{X}$ be a heavy terminal pair of Type-II in $G$ of weight $w\geq2$. Then, any minimum solution of $(G,\mathcal{X},k)$ must contain the following internally vertex-disjoint paths: $\{P_{st}\}\cup \{(s,u^{1},t),\ldots,(s,u^{w-1},t)\}$, where $\{u^{1},\ldots,u^{w-1}\}$ is a set of non-terminal vertices in $C$. 
\end{lemma}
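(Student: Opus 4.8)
The plan is to fix a minimum solution $\mathcal{P}$ of the instance (if it is a No-instance the statement is vacuous, so I assume it is a Yes-instance) and then track, one at a time, the $w$ paths that serve the $w$ occurrences of the pair $\{s,t\}$. The anchor is Observation~\ref{obsminimum}: since $(s,t)$ being heavy means $st\in E(G)$, the single-edge path $P_{st}$ lies in every minimum solution, so it serves exactly one of the $w$ occurrences — exactly one, because a solution is required to consist of pairwise distinct paths. Hence $\mathcal{P}$ also contains $w-1$ further $(s,t)$-paths $Q_1,\dots,Q_{w-1}$, each of length at least $2$.

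Next I would determine the shape of each $Q_j$ via Proposition~\ref{prop1}: in a minimum \textsc{VDP} solution on a chordal graph, every path is either induced or of length $2$ with a parallel length-$1$ path already present in the solution. The induced alternative is impossible here: $s,t\in V(Q_j)$ and $st\in E(G)$, so $G[V(Q_j)]$ contains the edge $st$, whereas $Q_j$ (having length $\geq 2$) does not have $s$ and $t$ consecutive and so does not contain that edge — hence $G[V(Q_j)]\neq Q_j$. Therefore each $Q_j$ has length exactly $2$, say $Q_j=(s,u^j,t)$ with $u^j\in N_G(s)\cap N_G(t)$ and $u^j\notin\{s,t\}$ (and the required parallel length-$1$ path is $P_{st}$). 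The vertices $u^1,\dots,u^{w-1}$ are pairwise distinct because $Q_1,\dots,Q_{w-1}$ are internally vertex-disjoint and each $u^j$ is the unique internal vertex of $Q_j$.

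The final step, which is the only place the \textsc{Clean-Up} hypothesis enters, is to show each $u^j$ is a non-terminal vertex lying in $C$. If $u^j$ were a terminal vertex, it would appear in some terminal pair and hence be an \emph{endpoint} of the path $P\in\mathcal{P}$ serving that pair; but $u^j$ is an internal vertex of $Q_j$, so $Q_j$ and $P$ would violate internal vertex-disjointness (and $Q_j\neq P$, since the endpoints of $Q_j$ are $s,t$ while $u^j\notin\{s,t\}$). So each $u^j$ is non-terminal, and since \textsc{Clean-Up} has removed all of $I_N$, every non-terminal vertex of $G$ lies in $C$; hence $u^j\in C$. Collecting everything, $\{P_{st}\}\cup\{(s,u^1,t),\dots,(s,u^{w-1},t)\}\subseteq\mathcal{P}$ with the $u^j$ distinct non-terminal vertices of $C$, which is the claim.

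I expect the only genuinely delicate point to be the bookkeeping in that last step: one must use precisely the (strong) definition of internally vertex-disjoint paths, namely that an internal vertex of one path may not occur anywhere on another path, not even as an endpoint, so that "$u^j$ is a terminal vertex" forces a real contradiction rather than an innocuous incidence. Everything else reduces to straightforward applications of Observation~\ref{obsminimum} and Proposition~\ref{prop1}.
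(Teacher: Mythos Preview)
Your proof is correct and follows essentially the same approach as the paper: use Observation~\ref{obsminimum} to place $P_{st}$ in any minimum solution, invoke Proposition~\ref{prop1} to force the remaining $w-1$ paths to have length~$2$, and then use the definition of internal vertex-disjointness together with $I_N=\emptyset$ (after \textsc{Clean-Up}) to conclude that each middle vertex is a non-terminal in~$C$. Your write-up is in fact more careful than the paper's on two points---you explicitly rule out the induced case of Proposition~\ref{prop1} and spell out why an internal vertex cannot be any terminal (not just one in $I_T$)---but the underlying argument is identical.
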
 
\begin{proof}
First, note that $P_{st}$ must belong to every minimum solution of $(G,\mathcal{X},k)$ (due to Observation \ref{obsminimum}). Moreover, by Proposition \ref{prop1}, note that the internally vertex-disjoint paths between the remaining $w-1$ terminal pairs $(s,t)$ must have length exactly 2. Next, observe that after applying the \textsc{Clean-Up} operation, $I_{N}=\emptyset$. By the definition of internally vertex-disjoint paths, vertices in $I_{T}$ cannot be used as internal vertices in any path of the solution; so, the only choice is that the internal vertices are from $C$.
\end{proof}

Now, we are ready to define the reduction rules.
\medskip

\noindent \textbf{Reduction Rules.} Let us start by defining our first reduction rule (RR\ref{RR1}).

\begin{RR}[RR\ref{RR1}]\label{RR1} If there is a terminal pair $(s,t)\in \mathcal{X}$ of Type-I such that $st\in E(G)$, then $V(G')\Leftarrow V(G)$, $E(G') \Leftarrow E(G)\setminus \{st\}$, $\mathcal{X'} \Leftarrow \mathcal{X}\setminus\{(s,t)\}$, $k' \Leftarrow k-1$. Furthermore, for every $x\in \{s,t\}$ that does not appear as a terminal in any terminal pair in $\mathcal{X'}$, update $V(G')\Leftarrow V(G')\setminus \{x\}$.  
\end{RR}

We have the following lemma to establish that RR\ref{RR1} is safe.

\begin{lemma} \label{rr1}
RR\ref{RR1} is safe. 
\end{lemma}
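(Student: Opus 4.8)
The plan is to show that RR\ref{RR1} preserves the answer by a routine two-direction equivalence argument, where the non-trivial direction exploits the fact that deleting the edge $st$ does not destroy any useful path. First I would set up notation: let $(G',\mathcal{X}',k')$ be the instance produced by RR\ref{RR1} from $(G,\mathcal{X},k)$, where $E(G')=E(G)\setminus\{st\}$, $\mathcal{X}'=\mathcal{X}\setminus\{(s,t)\}$ (removing just one occurrence), $k'=k-1$, and possibly $s$ and/or $t$ are deleted if they no longer appear in $\mathcal{X}'$. Note that $G'$ is still a split graph: if a terminal $x\in\{s,t\}$ from the independent set $I$ is deleted that is clear, and if $x\in C$ is deleted, removing a clique vertex keeps the split structure; and deleting only the edge $st$ where $(s,t)$ is Type-I means $st$ joins a vertex of $C$ to a vertex of $I$, so $G'$ still has the same clique $C$ (or $C$ minus one vertex) and independent set $I$.

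For the forward direction, suppose $(G,\mathcal{X},k)$ is a Yes-instance, and take a \emph{minimum} solution $\mathcal{P}$. By Observation~\ref{obsminimum}, since $st\in E(G)$, the single-edge path $P_{st}$ belongs to $\mathcal{P}$; fix the occurrence of $(s,t)$ realized by this $P_{st}$ and let $\mathcal{P}^-=\mathcal{P}\setminus\{P_{st}\}$. I claim $\mathcal{P}^-$ is a solution to $(G',\mathcal{X}',k')$. Indeed $|\mathcal{P}^-|=k-1=k'$ and it connects exactly the pairs of $\mathcal{X}'$. The only thing to check is that every path in $\mathcal{P}^-$ actually lives in $G'$, i.e. uses neither the edge $st$ nor any deleted vertex. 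No path other than $P_{st}$ can use the edge $st$, because the paths are internally vertex-disjoint and $s,t$ are endpoints only of the pair $(s,t)$-paths — but here I must be a little careful since there could be \emph{another} occurrence of $(s,t)$. If $(s,t)$ is heavy (weight $\ge 2$), then by Proposition~\ref{prop1} any second $(s,t)$-path in the minimum solution has length $2$, hence does not use the edge $st$; and since $(s,t)$ is Type-I, $s$ or $t$ is an independent-set vertex, so it still appears as a terminal in $\mathcal{X}'$ and is not deleted. If $(s,t)$ is light, then after removing $P_{st}$ the terminals $s,t$ may or may not appear elsewhere; if a terminal $x\in\{s,t\}$ is deleted, then by construction $x$ occurs in no pair of $\mathcal{X}'$, and since a solution path visits $x$ only as an endpoint of a pair containing $x$, no path of $\mathcal{P}^-$ visits $x$. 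Hence $\mathcal{P}^-$ is a valid solution in $G'$.

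For the reverse direction, suppose $(G',\mathcal{X}',k')$ is a Yes-instance with solution $\mathcal{P}'$. Since $G'$ is a subgraph of $G$ (with the same vertex set except for possibly-deleted isolated-from-$\mathcal{X}$ terminals, which trivially re-embed), $\mathcal{P}'$ is a collection of internally vertex-disjoint paths in $G$ connecting all pairs of $\mathcal{X}'$. It remains to serve the one removed occurrence of $(s,t)$: simply add the single-edge path $P_{st}=(s,t)$, which exists in $G$ since $st\in E(G)$. This path has no internal vertex, so it is automatically internally vertex-disjoint from every path in $\mathcal{P}'$; thus $\mathcal{P}'\cup\{P_{st}\}$ is a solution to $(G,\mathcal{X},k)$ of size $k'+1=k$. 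This establishes equivalence, and together with Remark~\ref{obsdel} (that $G'$ remains split) and the obvious fact $k'\le k$, it shows RR\ref{RR1} is safe.

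The main subtlety — really the only place one can slip — is the forward direction in the presence of a \emph{second} occurrence of the pair $(s,t)$: one must invoke Proposition~\ref{prop1}/minimality to guarantee that the other $(s,t)$-path, if it exists, has length exactly $2$ and therefore avoids the deleted edge $st$, and one must check that the Type-I condition prevents an accidental deletion of a vertex that is still needed. I would phrase the argument so as to handle the heavy and light cases uniformly: take a minimum solution, note $P_{st}\in\mathcal{P}$ by Observation~\ref{obsminimum}, remove it, and argue no remaining path uses $st$ (length-$1$ $(s,t)$-paths are unique among the solution paths, length-$2$ ones avoid $st$) nor any deleted vertex (a deleted vertex appears in no pair of $\mathcal{X}'$, hence on no path of $\mathcal{P}^-$).
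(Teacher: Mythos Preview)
Your approach is the same as the paper's, but your reverse direction contains a genuine gap. You write that $P_{st}$ ``has no internal vertex, so it is automatically internally vertex-disjoint from every path in $\mathcal{P}'$.'' This verifies only one half of the definition: you also need $(V(P_j)\setminus\{s_j,t_j\})\cap V(P_{st})=\emptyset$ for each $P_j\in\mathcal{P}'$, i.e., that neither $s$ nor $t$ is an \emph{internal} vertex of any $P_j$. This is not automatic, and it is precisely why RR\ref{RR1} deletes a vertex $x\in\{s,t\}$ when $x$ no longer occurs in $\mathcal{X}'$: without that deletion, some $P_j\in\mathcal{P}'$ could legitimately pass through $x$ as an internal vertex, and then $\mathcal{P}'\cup\{P_{st}\}$ would not be internally vertex-disjoint in $G$. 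The paper makes this point explicitly. The fix is short: if $x$ was deleted it is absent from $G'$ and hence from every $P_j$; if $x$ was kept, then $x$ appears in some pair of $\mathcal{X}'$, so $x$ is an endpoint of some $P_i\in\mathcal{P}'$, and internal vertex-disjointness within $\mathcal{P}'$ then forbids $x$ from being an internal vertex of any $P_j$.

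Consequently, your diagnosis that ``the only place one can slip'' is the forward direction is off. The forward direction is in fact routine once $P_{st}\in\mathcal{P}$ by Observation~\ref{obsminimum}: internal vertex-disjointness with $P_{st}$ already prevents $s$ and $t$ from being internal on any other path of $\mathcal{P}$, and a distinct $(s,t)$-path cannot reuse the edge $st$ (so your case analysis on heavy versus light pairs is unnecessary). The real subtlety---and the whole reason for the vertex-deletion clause in the rule---lives in the reverse direction.
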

\begin{proof}

 First, note that since we are only removing an edge between a vertex of $C$ and a vertex of $I$, $G'$ is a split graph. Next, we claim that $(G,\mathcal{X},k)$ and $(G',\mathcal{X'},k')$ are equivalent instances of \textsc{VDP} on split graphs.
 
In one direction, let $\mathcal{P}$ be a minimum solution of $(G,\mathcal{X},k)$. By Observation \ref{obsminimum}, $P_{st}\in \mathcal{P}$. Since terminal vertices cannot be used as internal vertices in any path of the solution, note that $x\in \{s,t\}$ can only appear as an endpoint of the paths in $\mathcal{P}\setminus \{P_{st}\}$ (if $x$ is a terminal in some terminal pair in $\mathcal{X'}$). Thus, $\mathcal{P}\setminus \{P_{st}\}$ is a solution of $(G',\mathcal{X'},k')$.
 
In the other direction, let $(G',\mathcal{X'},k')$ be a Yes-instance of \textsc{VDP}. This implies that there exists a set $\mathcal{P'}$ of $k-1$ internally vertex-disjoint paths in $G'$ joining the terminal pairs in $\mathcal{X'}$. Since $st\notin E(G')$, none of the paths in $\mathcal{P'}$ uses the edge $st$. Furthermore, note that if $x\in\{s,t\}$ does not appear as a terminal in any terminal pair in $\mathcal{X'}$, then it is necessary to remove $x$ from $G'$ as otherwise; there may exist a path in $\mathcal{P'}$ that uses $x$ as an internal vertex, which risks us an invalid solution. Thus, after taking care of $x\in\{s,t\}$ accordingly, note that $\mathcal{P'}\cup \{P_{st}\}$ is a set of $k$ internally vertex-disjoint paths in $G$ joining the terminal pairs in $\mathcal{X}$. 
\end{proof}

\begin{observation} \label{rmrk4}
 After applying RR\ref{RR1} exhaustively on $G$, no Type-I terminal pair in $G$ has an edge between its terminals. Moreover, RR\ref{RR1} can be applied in polynomial
time.
\end{observation}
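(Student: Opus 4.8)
The plan is to unfold what ``applying RR\ref{RR1} exhaustively'' means and to check that this process is well defined, terminates, and runs in polynomial time. First I would observe that a single application of RR\ref{RR1} strictly decreases the number of terminal pairs: the rule sets $\mathcal{X}' \Leftarrow \mathcal{X} \setminus \{(s,t)\}$ and $k' \Leftarrow k-1$, so $|\mathcal{X}'| = |\mathcal{X}| - 1$. Consequently RR\ref{RR1} can be triggered at most $k$ times before its condition fails, which both guarantees termination (so that ``applying RR\ref{RR1} exhaustively'' is meaningful) and, combined with the fact that detecting a Type-I terminal pair whose endpoints are adjacent and carrying out the prescribed deletion of the edge $st$ (and of at most the two vertices $s,t$) take polynomial time, shows that the whole exhaustive application runs in polynomial time. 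This takes care of the ``moreover'' part of the statement.

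For the first assertion I would argue as follows. By definition of exhaustive application, the condition of RR\ref{RR1} is false on the resulting instance $(G,\mathcal{X},k)$; that is exactly the statement that there is no Type-I terminal pair $(s,t) \in \mathcal{X}$ with $st \in E(G)$. To make sure this is not vacuous, I would also note that RR\ref{RR1} cannot create new ``bad'' Type-I pairs: by Remark~\ref{obsdel} the class of split graphs is closed under vertex deletion, and, more precisely, any induced subgraph inherits the partition $(C,I)$, so every surviving terminal pair keeps its type; moreover deleting the edge $st$ (and possibly its endpoints) only removes edges, hence a pair that had no edge between its terminals still has none. Thus once the condition of RR\ref{RR1} becomes false it stays false, and the final instance indeed has the claimed property.

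There is no genuine obstacle here; the only things that need a word of care are the termination bookkeeping (each application lowers $|\mathcal{X}|$ by one) and the invariance of the split partition and of the types of terminal pairs under the vertex deletions performed by the rule, both of which are immediate. Safeness of the rule itself is already established in Lemma~\ref{rr1}, so nothing further is needed.
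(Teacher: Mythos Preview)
Your argument is correct and matches the paper's treatment: the paper states this as an observation without proof, since the claim follows immediately from the definition of exhaustive application of a reduction rule together with the fact that each application decreases $|\mathcal{X}|$ by one. Your additional remarks about invariance of types and non-creation of new adjacent Type-I pairs are sound but more than the paper bothers to spell out.
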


To define the next reduction rule (RR\ref{RR2}), we use the following notation: Let $\{(s,t)\times (w)\}$ denote $w$ copies of $(s,t)$.

\begin{RR}[RR\ref{RR2}]\label{RR2}  If there is a heavy terminal pair $(s,t)\in \mathcal{X}$ of Type-II in $G$ of weight $w\geq 2$, then $V(G') \Leftarrow V(G) \cup \{ s^{1},\ldots,s^{w-1},t^{1},\ldots,t^{w-1}\}$, $E(G')\Leftarrow E(G)\cup \{s^{i}v,t^{i}v :v\in C, i\in [w-1]\}$, $\mathcal{X'}\Leftarrow (\mathcal{X} \setminus \mathcal{\overline{X}})\cup \{(s^{1},t^{1}),\ldots,(s^{w-1},t^{w-1})\}$, where $\mathcal{\overline{X}}=\{(s,t)\times (w-1)\}\subseteq \mathcal{X}$.
\end{RR}

We have the following lemma to establish that RR\ref{RR2} is safe.

\begin{lemma} \label{rr2}
After \textsc{Clean-Up} and the exhaustive application of RR\ref{RR1}, RR\ref{RR2} is safe. 
\end{lemma}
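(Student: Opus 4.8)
The plan is to prove that $(G,\mathcal{X},k)$ and $(G',\mathcal{X}',k')$ are equivalent instances of \textsc{VDP} on split graphs, where $k'=k$ since $|\overline{\mathcal{X}}|=w-1$ equals the number of newly added pairs $(s^i,t^i)$. First I would record that $G'$ is still a split graph: each new vertex $s^i$ or $t^i$ is adjacent exactly to $C$, so $\bigl(C,\ I\cup\{s^i,t^i : i\in[w-1]\}\bigr)$ is a split partition of $G'$. I would also note the elementary fact, used repeatedly below, that every $s^i$ and every $t^i$ is a \emph{terminal} vertex of $G'$ and therefore cannot serve as an internal vertex of any path in a \textsc{VDP}-solution; this is what prevents paths serving old terminal pairs and paths serving the new pairs from interfering with each other.

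For the forward direction, I would take a minimum solution $\mathcal{P}$ of $(G,\mathcal{X},k)$. Since the instance has been produced by \textsc{Clean-Up}, we have $I_N=\emptyset$, so Lemma~\ref{rmrk14} applies and gives that $\mathcal{P}$ contains $P_{st}$ together with $w-1$ paths $(s,u^1,t),\dots,(s,u^{w-1},t)$, where $u^1,\dots,u^{w-1}$ are pairwise distinct non-terminal vertices of $C$; as these are non-terminal internal vertices, no other path of $\mathcal{P}$ touches any $u^i$. I would then build $\mathcal{P}'$ by keeping every path of $\mathcal{P}$ serving a pair of $\mathcal{X}\setminus\overline{\mathcal{X}}$ — these lie in $G\subseteq G'$ and never touch the fresh vertices, and include $P_{st}$ for the one surviving copy of $(s,t)$ — and by replacing each $(s,u^i,t)$ with the path $(s^i,u^i,t^i)$, which is valid in $G'$ because $s^i,t^i$ are adjacent to all of $C\ni u^i$. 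Since the $u^i$ are distinct and used nowhere else, $\mathcal{P}'$ is a collection of $k$ pairwise distinct, internally vertex-disjoint paths realizing $\mathcal{X}'$.

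For the backward direction — which I expect to be the main obstacle — I would take a \emph{minimum} solution $\mathcal{P}'$ of $(G',\mathcal{X}',k')$ and pin down the shape of the paths it uses for the new pairs. The surviving copy of $(s,t)$ has $st\in E(G')$, so $P_{st}\in\mathcal{P}'$ by Observation~\ref{obsminimum}. For each $i$, since $s^it^i\notin E(G')$, Proposition~\ref{prop1} forces the path $Q_i$ realizing $(s^i,t^i)$ to be an induced path; and because $G'$ is split with $s^i,t^i$ on the independent side and all their neighbours in $C$, an induced $(s^i,t^i)$-path can use at most one vertex of the clique $C$, whence $Q_i=(s^i,a_i,t^i)$ with $a_i\in C$. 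Being an internal vertex, $a_i$ is non-terminal, the $a_i$ are pairwise distinct, and no other path of $\mathcal{P}'$ touches any $a_i$. I would then form $\mathcal{P}$ by keeping the paths of $\mathcal{P}'$ serving $\mathcal{X}\setminus\overline{\mathcal{X}}$ (none of which uses a fresh vertex or any $a_i$) and adding, for each of the $w-1$ copies of $(s,t)$ in $\overline{\mathcal{X}}$, the path $(s,a_i,t)$, valid in $G$ since $s,t,a_i\in C$; together with $P_{st}$ these form $w$ pairwise distinct internally vertex-disjoint $(s,t)$-paths that do not clash with the kept paths. Hence $(G,\mathcal{X},k)$ is a Yes-instance, and safeness follows (also using Remark~\ref{obsdel} for membership in the split class).

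The crux is thus the structural claim in the backward direction: in a minimum \textsc{VDP}-solution on $G'$, each artificial pair $(s^i,t^i)$ is necessarily served by a short ``hub'' path $(s^i,a_i,t^i)$ through a private non-terminal clique vertex, and these $w-1$ hubs are distinct. This is the exact mirror image of what Lemma~\ref{rmrk14} gives in $G$, so both directions amount to rerouting $w-1$ length-$2$ paths between $\{s,t\}$ and $\{s^i,t^i\}_{i}$ through the same pool of non-terminal vertices of $C$. Once this correspondence is established, verifying internal vertex-disjointness and the distinctness of the paths is routine; the only place one must be careful is in ruling out longer induced $(s^i,t^i)$-paths, which is precisely where the clique structure of $C$ (no induced path uses two clique vertices) is essential.
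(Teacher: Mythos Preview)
Your proof is correct and follows essentially the same approach as the paper's: both directions reroute the $w-1$ length-two paths between $\{s,t\}$ and $\{s^i,t^i\}$ through the same pool of non-terminal clique vertices, using Lemma~\ref{rmrk14} in the forward direction and Proposition~\ref{prop1} in the backward direction. Your backward argument is in fact more explicit than the paper's, which simply cites Proposition~\ref{prop1} to conclude that each $(s^i,t^i)$-path has the form $(s^i,u,t^i)$; you spell out why the induced-path clause applies (since $s^it^i\notin E(G')$) and why the split structure with $N(s^i)=N(t^i)=C$ rules out any longer induced path.
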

\begin{proof}
First, note that since we are adding an independent set of vertices in $I$, $G'$ is a split graph. Next, we claim that $(G,\mathcal{X},k)$ and $(G',\mathcal{X'},k)$ are equivalent instances of \textsc{VDP} on split graphs.

In one direction, let $\mathcal{P}$ be a minimum solution of $(G,\mathcal{X},k)$. By Observation \ref{obsminimum}, we note that $P_{st}\in \mathcal{P}$. Let $\mathcal{\widehat{P}}\subseteq \mathcal{P}$ such that $\mathcal{\widehat{P}}$ contains internally vertex-disjoint paths between the terminal pairs in $\overline{\mathcal{X}}$. Observe that the paths in $\mathcal{\widehat{P}}$ must be of the form $(s,u,t)$, where $u$ is some non-terminal vertex in $C$ (due to Lemma \ref{rmrk14}). In other words, exactly $w-1$ non-terminal vertices in $C$ are reserved to be used by the paths in $\mathcal{\widehat{P}}$. Let $\{u^{1},\ldots,u^{w-1}\}$ denote such a set. Next, we define a solution $\mathcal{P'}$ of $(G',\mathcal{X'},k)$ as follows: Let $\mathcal{P'}=(\mathcal{P}\setminus \mathcal{\widehat{P}})\cup \{(s^{1},u^{1},t^{1}),\ldots,(s^{w-1},u^{w-1},t^{w-1})\}$. Since none of $\{u^{1},\ldots,u^{w-1}\}$ belongs to any path in $\mathcal{P}\setminus \mathcal{\widehat{P}}$, $\mathcal{P'}$ is a solution of $(G',\mathcal{X'},k)$.

In the other direction, let $\mathcal{P'}$ be a minimum solution of $(G',\mathcal{X'},k)$. Let $\mathcal{\widehat{P}}\subseteq \mathcal{P'}$ such that $\mathcal{\widehat{P}}$ contains internally vertex-disjoint paths between the terminal pairs $\{(s^{1},t^{1}),\ldots,(s^{w-1},t^{w-1})\}$. Observe that the paths in $\mathcal{\widehat{P}}$ must be of the form $(s^{i},u,t^{i})$, where $u$ is some non-terminal vertex in $C$ (due to Proposition \ref{prop1}). Next, we define a solution $\mathcal{P}$ of $(G,\mathcal{X},k)$ as follows: Let $\mathcal{P}=(\mathcal{P'}\setminus \mathcal{\widehat{P}})\cup \{(s,u^{1},t),\ldots,(s,u^{w-1},t)\}$, where $\{u^{1},\ldots,u^{w-1}\}$ denotes the set of non-terminal vertices in $C$ that belong to the paths in $\mathcal{\widehat{P}}$. Since none of $\{u^{1},\ldots,u^{w-1}\}$ belongs to any path in $\mathcal{P}\setminus \mathcal{\widehat{P}}$, $\mathcal{P}$ is a solution of $(G,\mathcal{X},k)$.
\end{proof}

\begin{observation} \label{rmrk3}
 After applying RR\ref{RR2} exhaustively, there do not exist any heavy Type-II terminal pairs. Moreover, RR\ref{RR2} can be applied in polynomial
time.
\end{observation}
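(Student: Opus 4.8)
The plan is to argue via a simple monovariant that RR\ref{RR2} can be applied only boundedly many times, and that the terminating condition is exactly the absence of heavy Type-II pairs. First I would observe that a single application of RR\ref{RR2} to a heavy Type-II pair $(s,t)$ of weight $w\ge 2$ removes $w-1$ occurrences of $(s,t)$ from $\mathcal{X}$ (namely $\overline{\mathcal{X}}=\{(s,t)\times(w-1)\}$), so that $(s,t)$ occurs exactly once in $\mathcal{X}'$; since by Definition~\ref{def:heavy} a pair is heavy only if it occurs at least twice, $(s,t)$ is no longer heavy in $(G',\mathcal{X}',k)$. Next I would check that the newly introduced pairs $(s^i,t^i)$ are of Type-III, not Type-II: indeed $s^i,t^i$ are added to the independent side and joined only to $C$, so $s^i,t^i\in I$ and $s^it^i\notin E(G')$; in particular they cannot be heavy Type-II pairs (and, having no edge between their endpoints, cannot be heavy at all). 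Then I would note that no other terminal pair is affected: RR\ref{RR2} only deletes occurrences of $(s,t)$ and introduces pairs on fresh vertices, so the multiplicity of every pair other than $(s,t)$ is unchanged and no pair changes its type. Consequently the number of heavy Type-II terminal pairs strictly decreases with each application of RR\ref{RR2}.

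Since the number of heavy Type-II pairs is a non-negative integer bounded initially by $k$, this quantity reaches $0$ after at most $k$ applications; when it is $0$ the condition of RR\ref{RR2} is false, so the rule is no longer applicable, which is precisely the statement that no heavy Type-II terminal pair remains after exhaustive application. For the ``moreover'' part, a single application of RR\ref{RR2} adds $2(w-1)=\mathcal{O}(k)$ vertices, $\mathcal{O}(wk)=\mathcal{O}(k^2)$ edges, and rewrites $\mathcal{X}$ in linear time, hence runs in polynomial time; together with the bound of $k$ on the number of applications, the whole process is polynomial.

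I do not expect a real obstacle here: the only subtleties are making sure that RR\ref{RR2} does not inadvertently create a new heavy Type-II pair (ruled out because the new pairs live on independent-set vertices, hence are Type-III) and that it does not re-enable itself on the same pair (ruled out because the processed pair drops to multiplicity one). One additional sentence is warranted because the reduction rules are applied in increasing index order, so between two applications of RR\ref{RR2} we may re-apply RR\ref{RR1}; however RR\ref{RR1} only deletes an edge between $C$ and $I$ and possibly deletes a vertex of $\{s,t\}$, so it neither increases any pair's multiplicity nor turns a pair into Type-II, and therefore cannot create heavy Type-II pairs --- so the monovariant argument goes through unchanged.
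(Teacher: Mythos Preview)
Your argument is correct. The paper does not supply a proof for this observation, treating it as self-evident; your monovariant argument (the number of heavy Type-II pairs strictly decreases, the new pairs are Type-III with no edge between their endpoints, and RR\ref{RR1} cannot create heavy Type-II pairs) is exactly the right justification and is more explicit than anything the paper provides.
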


We apply the next reduction rule (RR\ref{RR3}) for every terminal participating in more than one terminal pair.

\begin{RR}[RR\ref{RR3}]\label{RR3}  If $v\in V(G)$ belongs to $x\geq 2$ terminal pairs $(v,a_1),\ldots,(v,a_x)$, then $V(G')\Leftarrow (V(G)\setminus \{v\}) \cup \{v_{1},\ldots,v_{x}\}$, $E(G')\Leftarrow E(G)\cup \{v_{i}u: u\in N(v), i\in [x]\}$, $\mathcal{X'}\Leftarrow (\mathcal{X} \setminus \{(v,a_1),\ldots,(v,a_x)\})\cup \{(v_1,a_1),\ldots,(v_x,a_x)\}$. Moreover, if $v\in C$, then $E(G')=E(G')\cup \{v_{i}v_{j}:i\neq j\in [x]\}$.
 \end{RR} 

We have the following lemma to establish that RR\ref{RR3} is safe.

\begin{lemma} \label{rr3}
 After \textsc{Clean-Up} and the exhaustive application of  RR\ref{RR1} and RR\ref{RR2}, RR\ref{RR3} is safe. 
\end{lemma}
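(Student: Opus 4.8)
The plan is to prove that $(G,\mathcal{X},k)$ and $(G',\mathcal{X}',k)$ are equivalent instances of \textsc{VDP} on split graphs; note that RR\ref{RR3} keeps the parameter unchanged and replaces terminal pairs one for one, so $|\mathcal{X}'|=|\mathcal{X}|$. First I would record that $G'$ is a split graph: since \textsc{Clean-Up} has set $I_N=\emptyset$, we have $v\in C$ or $v\in I_T$; if $v\in I_T$ then the copies $v_1,\dots,v_x$ are pairwise non-adjacent and adjacent only to $N(v)\subseteq C$, so $(C,(I\setminus\{v\})\cup\{v_1,\dots,v_x\})$ is a split partition of $G'$, and if $v\in C$ then the copies are pairwise adjacent and each one is adjacent to $C\setminus\{v\}\subseteq N(v)$, so $(C\setminus\{v\})\cup\{v_1,\dots,v_x\}$ is a clique of $G'$. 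The structural heart of the argument --- and the only place where exhaustive application of RR\ref{RR1} and RR\ref{RR2} is used --- is the claim that \emph{for all $i\neq j$ one cannot have simultaneously $a_i=a_j$ and $va_i\in E(G)$}: otherwise the pair $(v,a_i)$ would occur at least twice in $\mathcal{X}$ and carry an edge, which is impossible because a Type-II such pair would be heavy (contradicting Observation~\ref{rmrk3}), a Type-I such pair would contradict Observation~\ref{rmrk4}, and a Type-III pair cannot carry an edge as its endpoints lie in the independent set. This is exactly the configuration that makes RR\ref{RR3} unsafe in Figure~\ref{fig1}. Throughout I will use the elementary fact that a terminal vertex is never an internal vertex of a path in a \textsc{VDP} solution.

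For the forward direction, let $\mathcal{P}$ be a solution of $(G,\mathcal{X},k)$. Since $v$ is a terminal, $v$ is never internal in a path of $\mathcal{P}$, and the paths of $\mathcal{P}$ having $v$ as an endpoint are exactly the paths $P_1,\dots,P_x$ routing $(v,a_1),\dots,(v,a_x)$. For each $i$ let $u_i\in N(v)$ be the neighbour of $v$ on $P_i$, and let $P_i'$ be obtained from $P_i$ by relabelling the endpoint $v$ as $v_i$; this is a path of $G'$ because $v_iu_i\in E(G')$, and its set of internal vertices coincides with that of $P_i$. Every other path of $\mathcal{P}$ avoids $v$, hence is still a path of $G'$. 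As internal-vertex sets are preserved and no $v_i$ is used internally, the resulting family is internally vertex-disjoint; moreover the $P_i'$ are pairwise distinct (distinct endpoints $v_i$) and distinct from the unchanged paths (which contain no $v_i$). Hence $(G',\mathcal{X}',k)$ is a Yes-instance.

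For the reverse direction, let $\mathcal{P}'$ be a solution of $(G',\mathcal{X}',k)$. Each $v_i$ occurs in $\mathcal{X}'$ only in the pair $(v_i,a_i)$, so $v_i$ is a terminal of $G'$, never internal, and the endpoint of exactly one path $P_i'$. I would first show that $P_i'$ contains no $v_j$ with $j\neq i$: such a $v_j$ cannot be internal in $P_i'$ (it is a terminal), and it is not the other endpoint of $P_i'$, which is $a_i\in V(G)\setminus\{v\}$, a vertex distinct from the newly created $v_j$. Hence $P_i'$ lies on $\{v_i\}\cup(V(G)\setminus\{v\})$, its second vertex belongs to $N(v)$, and relabelling the endpoint $v_i$ as $v$ yields a genuine path $P_i$ of $G$ from $v$ to $a_i$ with the same internal-vertex set as $P_i'$. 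All other paths of $\mathcal{P}'$ avoid every $v_j$ and are therefore paths of $G$. Internal-vertex sets are preserved, so the new family is internally vertex-disjoint. For distinctness: $P_i\neq P_j$ whenever $a_i\neq a_j$; if $a_i=a_j=:a$, the structural claim gives $va\notin E(G)$, hence $v_ia,v_ja\notin E(G')$, so $P_i'$ and $P_j'$ have length at least two, their non-empty interiors are disjoint (being part of a solution), and thus $P_i\neq P_j$; finally each $P_i$ differs from every unchanged path since it has $v$ as an endpoint. Therefore $\mathcal{P}:=(\mathcal{P}'\setminus\{P_1',\dots,P_x'\})\cup\{P_1,\dots,P_x\}$ is a solution of $(G,\mathcal{X},k)$, and the two instances are equivalent.

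I expect the main obstacle to be the bookkeeping around \emph{distinctness} of the merged paths in the reverse direction: this is the only step that genuinely needs the preprocessing by RR\ref{RR1} and RR\ref{RR2}, and one must simultaneously ensure that, when $v\in C$, collapsing the copies $v_1,\dots,v_x$ back to $v$ never turns a path through some $v_j$ into a degenerate walk --- which the terminal/endpoint dichotomy above rules out --- and that two collapsed paths cannot coincide, which is precisely where the absence of heavy Type-I and Type-II pairs is invoked.
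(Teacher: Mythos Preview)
Your proof is correct and follows the same high-level scheme as the paper's (relabel the endpoint $v\leftrightarrow v_i$ in both directions, and use the exhaustive application of RR\ref{RR1} and RR\ref{RR2} to handle the only dangerous configuration $a_i=a_j$ with $va_i\in E(G)$), but your execution is more elementary. In the forward direction the paper takes a \emph{minimum} solution and invokes Proposition~\ref{prop1} to pin down the exact shape of each $P_i$ through a case analysis on $v\in I_T$ versus $v\in C$ and on the type of the pair $(v,a_i)$; you bypass this entirely by observing that a terminal is never internal, so the first edge of $P_i$ goes to some $u_i\in N(v)$ and the relabelling $v\mapsto v_i$ is automatically a valid path in $G'$ regardless of what the rest of $P_i$ looks like. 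This buys you a cleaner argument that works for an arbitrary solution. Conversely, you treat the distinctness of the collapsed paths in the reverse direction more carefully than the paper, which states the key dichotomy ``either $a_i\ne a_j$ or $va_i\notin E(G)$'' but does not spell out why it forces $P_i\ne P_j$; your observation that in the second case both $P_i',P_j'$ have nonempty disjoint interiors closes that gap explicitly.
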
\begin{proof}
First, note that $G'$ is a split graph because if $v\in I$, then we are adding an independent set of vertices in $I$, and if $v\in C$, then we are adding a clique in $C$. Next, we claim that $(G,\mathcal{X},k)$ and $(G',\mathcal{X'},k)$ are equivalent instances of \textsc{VDP} on split graphs.

In one direction, let $\mathcal{P}$ be a minimum solution of $(G, \mathcal{X},k)$. Let  $\mathcal{\widehat{P}}=\{P_{1},\ldots,P_{x}\}\subseteq \mathcal{P}$ be the subset of internally vertex-disjoint paths of $\mathcal{P}$ such that $P_{i}$ is a path between $v$ and $a_{i}$ for each $i\in [x]$. 
Note that there are two possibilities: one where $v\in I_{T}$, and the other where $v\in C$. First, assume that $v\in I_{T}$. On the one hand, if $(v,a_{i})$ is of Type-I (i.e., $a_{i}\in C$) for some $i\in [x]$, then $va_{i}\notin E(G)$ (by Observation \ref{rmrk4}). In this case, due to Proposition \ref{prop1}, $P_{i}$ must be of the form $(v,u_{i},a_{i})$, where $u_{i}$ is some non-terminal vertex in $C$. On the other hand, if $(v,a_{i})$ is of Type-III (i.e., $a_{i}\in I_{T}$), then due to Proposition \ref{prop1} $(i)$, $P_{i}$ must be either of the form $(v,u_{i},a_{i})$ or $(v,u_{i},u_{j},a_{i})$, where $u_{i}, u_{j}$ are some non-terminal vertices in $C$. Let $\mathcal{P^{*}}$ be the set of internally vertex-disjoint paths obtained from $\mathcal{\widehat{P}}$ by replacing $v$ with $v_{i}$ in each $P_{i}, i\in [x]$. We claim that $\mathcal{P'}=(\mathcal{P}\setminus \mathcal{\widehat{P}})\cup \mathcal{P^*}$ is a solution of $(G',\mathcal{X'},k)$. Since the paths in $\mathcal{\widehat{P}}$ are internally vertex-disjoint, the collection of all internal vertices of the paths in $\mathcal{\widehat{P}}$ are distinct. This implies that all the internal vertices of the paths in $\mathcal{P^{*}}$ are distinct, and thus the paths in $\mathcal{P'}$ are internally vertex-disjoint.

Now, assume that $v\in C$. On the one hand, if $(v,a_i)$ is of Type-I for some $i\in [x]$, then $P_{i}$ must be of the form $(v,u_{i},a_{i})$, where $u_{i}$ is some non-terminal vertex in $C$ (due to Observation \ref{rmrk4} and Proposition \ref{prop1}). On the other hand, if $(v,a_i)$ is of Type-II, then $P_{i}$ must be of the form $va_{i}$ (due to Observation \ref{rmrk3} and Proposition \ref{prop1}). Let $\mathcal{P^{*}}$ be the set of internally vertex-disjoint paths obtained from $\mathcal{\widehat{P}}$ by replacing $v$ with $v_{i}$ in each $P_{i}, i\in [x]$. We claim that $\mathcal{P'}=(\mathcal{P}\setminus \mathcal{\widehat{P}})\cup \mathcal{P^*}$ is a solution of $(G',\mathcal{X'},k)$. Since the paths in $\mathcal{\widehat{P}}$ are internally vertex-disjoint, the collection of all internal vertices of the paths in $\mathcal{\widehat{P}}$ are distinct. This implies that all the internal vertices of the paths in $\mathcal{P^{*}}$ are distinct, and thus the paths in $\mathcal{P'}$ are internally vertex-disjoint.

In the other direction, let $\mathcal{P'}$ be a solution of $(G', \mathcal{X'},k)$. Let  $\mathcal{\widehat{P}}=\{P_{1},\ldots,P_{x}\}\subseteq \mathcal{P'}$ be the subset of internally vertex-disjoint paths of $\mathcal{P'}$ such that $P_{i}$ is a path between $v_{i}$ and $a_{i}$ for each $i\in [x]$. Due to Observations \ref{rmrk4} and \ref{rmrk3}, note that either $a_{i}\neq a_{j}$ for every distinct $i,j\in [x]$ or if $a_{i}=a_{j}$ for distinct $i,j\in [x]$, then $va_{i}\notin E(G)$. Let $\mathcal{P^{*}}$ be the set of internally vertex-disjoint paths obtained from $\mathcal{\widehat{P}}$ by replacing $v_{i}$ with $v$ in each $P_{i}, i\in [x]$. We claim that $\mathcal{P}=(\mathcal{P'}\setminus \mathcal{\widehat{P}})\cup \mathcal{P^*}$ is a solution of $(G,\mathcal{X},k)$. Since the paths in $\mathcal{\widehat{P}}$ are internally vertex-disjoint, the collection of all internal vertices of the paths in $\mathcal{\widehat{P}}$ are distinct. This implies that all the internal vertices in the paths in $\mathcal{P^{*}}$ are distinct, and thus the paths in $\mathcal{P}$ are internally vertex-disjoint.
\end{proof}

By Lemma \ref{rmrk14} and Observations \ref{rmrk4} and \ref{rmrk3}, we have the following lemma.

\begin{observation} \label{rmrk1}
After applying reduction rules RR\ref{RR1}-RR\ref{RR3} exhaustively, no terminal participates in more than one terminal pair. Moreover, the reduction rules, RR\ref{RR1}-RR\ref{RR3}, can be applied in polynomial time.
\end{observation}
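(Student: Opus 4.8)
The plan is to prove the two assertions separately: (i) that exhaustive application of RR\ref{RR1}--RR\ref{RR3} leaves every terminal in at most one terminal pair, and (ii) that the whole procedure, including the preceding \textsc{Clean-Up}, runs in polynomial time. For (i), the essential remark is that the precondition of RR\ref{RR3} is \emph{literally} the negation of the desired property --- ``some vertex occurs in at least two terminal pairs'' --- so once RR\ref{RR3} is no longer applicable, (i) holds. Hence all that is needed is to show that the process (\textsc{Clean-Up} once, and then RR\ref{RR1}, RR\ref{RR2}, RR\ref{RR3} each applied exhaustively in this order) terminates and that an application of RR\ref{RR3} never re-enables the condition of RR\ref{RR1} or RR\ref{RR2}, so that no backtracking to an earlier rule is required.

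For termination I would argue rule by rule. RR\ref{RR1} strictly decreases $k$, so it fires at most $k$ times. For RR\ref{RR2}: processing a heavy Type-II pair $(s,t)$ of weight $w$ leaves a single occurrence of $(s,t)$ (no longer heavy) and introduces only the fresh vertices $s^1,\dots,s^{w-1},t^1,\dots,t^{w-1}$, each occurring in exactly one pair; so no new heavy Type-II pair appears, the number of heavy Type-II pairs strictly decreases, and RR\ref{RR2} fires $O(k)$ times. For RR\ref{RR3} I would use the potential $\Phi(G,\mathcal{X})=\sum_{v\in V(G)}\max\{0,\ \mathrm{occ}_{\mathcal{X}}(v)-1\}$, where $\mathrm{occ}_{\mathcal{X}}(v)$ is the number of occurrences of terminal pairs containing $v$; note $\Phi\le 2k$. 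Splitting a vertex $v$ that lies in $x\ge 2$ pairs replaces $v$'s contribution $x-1\ge 1$ by $0$ (each of the $x$ new copies occurs in exactly one pair) while leaving $\mathrm{occ}_{\mathcal{X}}(\cdot)$ unchanged for every other vertex, so $\Phi$ drops by exactly $x-1\ge 1$; hence RR\ref{RR3} fires at most $2k$ times.

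Next I would verify that no rule reopens an already-finished phase. After RR\ref{RR1} and RR\ref{RR2} are exhausted, Observations~\ref{rmrk4} and~\ref{rmrk3} say that no Type-I pair has an edge between its terminals and no heavy Type-II pair remains; also RR\ref{RR2} only creates fresh Type-III pairs (both new endpoints lie on the independent side), so it does not re-enable RR\ref{RR1}. An application of RR\ref{RR3} to a vertex $v$ adds edges only incident to its fresh copies $v_1,\dots,v_x$, hence gives no pre-existing pair a new edge, and every new pair it creates has the form $(v_i,a_i)$ with $v_i$'s neighborhood into the old vertex set equal to $N_G(v)$: if this pair is Type-I then $va_i\notin E(G)$ by Observation~\ref{rmrk4}, so $v_ia_i\notin E(G')$; if it is Type-II then $v_i$ occurs in only one pair, so the pair is light. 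Thus the conditions of RR\ref{RR1} and RR\ref{RR2} stay false throughout the RR\ref{RR3} phase, the procedure is well defined and halts, and at that point no terminal lies in more than one pair.

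Finally, for the running time: \textsc{Clean-Up} is polynomial by Observation~\ref{cleanupobs}, RR\ref{RR1} and RR\ref{RR2} by Observations~\ref{rmrk4} and~\ref{rmrk3}, and one application of RR\ref{RR3} (creating $x\le k$ copies together with their incident edges) is clearly polynomial in the current instance size; by the termination bounds each of RR\ref{RR1}, RR\ref{RR2}, RR\ref{RR3} fires $O(k)$ times, and since the total number of vertices added over all applications of RR\ref{RR2} and of RR\ref{RR3} is $O(k)$ (for RR\ref{RR3} it equals the total decrease of $\Phi$), the instance never grows beyond size $|V(G)|+O(k)$; hence the whole procedure is polynomial. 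I expect the termination/no-cascade step to be the only non-routine one --- specifically, ruling out that an RR\ref{RR3} split resurrects a heavy Type-II pair or a Type-I pair carrying an edge and thereby reopens an earlier phase; this is exactly what the ``fresh vertices occur exactly once'' observation together with Observation~\ref{rmrk4} take care of, after which the monotone potential $\Phi$ bounds the number of RR\ref{RR3} steps and the rest is bookkeeping.
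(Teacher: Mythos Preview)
Your proposal is correct and considerably more detailed than the paper's own treatment, which simply cites Lemma~\ref{rmrk14} and Observations~\ref{rmrk4} and~\ref{rmrk3} and leaves the termination and non-cascading arguments implicit. Your potential-function argument for RR\ref{RR3} and the explicit verification that RR\ref{RR3} cannot re-enable RR\ref{RR1} or RR\ref{RR2} are exactly the right way to fill in these details; the paper assumes them but does not spell them out.
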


Before concluding this section, we need the following result by Yang et al.~\cite{yang}.
 \begin{proposition} [\cite{yang}] \label{labelyangprop}
\textsc{VDP-Unique} on split graphs admits a kernel with at most $4k$ vertices, where $k$ is the number of occurrences of terminal pairs. 
\end{proposition}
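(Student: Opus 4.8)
The plan is to mirror the split-graph arguments developed earlier in the paper, exploiting the defining feature of \textsc{VDP-Unique} that no vertex lies in two pairs. Fix a split partition $(C,I)$ of $G$ (computable in linear time) and write $I=I_T\cup I_N$, $C=C_T\cup C_N$ for the terminal/non-terminal parts. Since the $\le 2k$ terminal occurrences sit on pairwise distinct vertices, there are at most $2k$ terminal vertices, i.e.\ $|C_T|+|I_T|\le 2k$. The first key observation is that \textsc{VDP-Unique} has no heavy edges: a heavy edge would require two pairs on the same set $\{s,t\}$, forcing $s$ (and $t$) into two pairs. Hence case~(ii) of Proposition~\ref{prop1} never applies and \emph{every} path of a minimum solution is induced. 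In a split graph an induced path contains at most two clique vertices (three pairwise-adjacent clique vertices would create a chord) and, crucially, no internal vertex from $I$ (an internal $I$-vertex would have two clique neighbours on the path, again a chord). Thus every internal vertex of a minimum-solution path lies in $C_N$, and no vertex of $I_N$ is ever used, being neither an endpoint (it is non-terminal) nor internal. This justifies a first reduction rule that deletes all of $I_N$: the forward direction is immediate since the reduced graph is a subgraph, and the backward direction follows by restricting a minimum solution, which avoids $I_N$.

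After this rule $I=I_T$, so $|I|\le 2k$, and since $|C_T|\le 2k$ it remains only to keep few non-terminal clique vertices. Refining the counting above, each path uses at most one internal clique vertex ``next to'' each of its $I$-endpoints (a length-$3$ path $(i,c_1,c_2,j)$ uses two such vertices, a length-$2$ path $(i,c,j)$ reuses a common neighbour, a direct edge uses none, and by Observation~\ref{obsminimum} pairs joined by an edge are routed directly), so a minimum solution uses at most $|I_T|$ vertices of $C_N$. I would therefore mark a sufficient set of that size: form the bipartite graph $B$ with parts $I_T$ and $C_N$ and an edge $ic$ whenever $ic\in E(G)$, compute a maximum matching $M$ in $B$ via Proposition~\ref{propmatching}, and let $S\subseteq C_N$ be the vertices saturated by $M$. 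A second reduction rule deletes $C_N\setminus S$. Then $|S|=|M|\le|I_T|\le 2k$, so the reduced instance has at most $|C_T|+|I_T|+|S|\le 2k+|I_T|\le 4k$ vertices, the desired bound; both rules run in polynomial time and leave $k$ unchanged.

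The heart of the argument, and the step I expect to be the main obstacle, is the safeness of the second rule. The forward implication is trivial. For the backward implication one starts from a minimum solution $\mathcal{P}$ of the original instance and reroutes it so that every internal clique vertex lies in $S$. Each internal clique vertex of $\mathcal{P}$ is adjacent to the $I$-endpoint it serves, so $\mathcal{P}$ induces a matching of $B$ whose size equals the number of clique vertices it uses, hence at most $|M|$. Exactly as in the proof of Lemma~\ref{secondclaim}, I would take $\mathcal{P}$ minimising the number of internal vertices in $C_N\setminus S$; if some path still used such a vertex $c$, the induced matching would be non-maximum, so by Proposition~\ref{maxprop} there is an $M$-augmenting path in $B$, and augmenting along it swaps $c$ for a vertex of $S$ while keeping the paths internally vertex-disjoint, contradicting minimality. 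The delicate point, absent in Lemma~\ref{secondclaim}, is that a length-$2$ path $(i,c,j)$ between two $I$-terminals uses one clique vertex to serve two endpoints; correctly charging such a shared vertex to a single endpoint in the matching, and re-expanding a length-$2$ path into a length-$3$ path when augmentation frees a second clique vertex, is where the bookkeeping needs the most care.

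Once every path of the rerouted minimum solution uses only vertices of $S\cup C_T\cup I_T$, it is a valid solution of the reduced instance, so the two instances are equivalent. Combining the two safe, polynomial-time rules with the vertex count above yields an equivalent instance on at most $4k$ vertices, as claimed.
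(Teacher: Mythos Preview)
The paper does not prove this proposition at all: it is quoted verbatim from Yang, Li, Chen, and Guo and used as a black box. So there is no ``paper's own proof'' to compare against; what follows is an assessment of your attempt on its own merits.

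Your first reduction rule (delete $I_N$) is correct, and the counting yielding $|C_T|+|I_T|+|S|\le 2k+|I_T|\le 4k$ is fine. The problem is the safeness of the second rule. Your key sentence --- ``if some path still used such a vertex $c$, the induced matching would be non-maximum, so by Proposition~\ref{maxprop} there is an $M'$-augmenting path'' --- is not justified and is in fact false. Take $I_T=\{i\}$, $C_N=\{c_1,c_2\}$ with $i$ adjacent to both, max matching $M=\{ic_1\}$, $S=\{c_1\}$. A solution using $c_2$ gives $M'=\{ic_2\}$, which is \emph{maximum}; there is no $M'$-augmenting path. The analogy with Lemma~\ref{secondclaim} does not transfer: there the aim was to force the solution to use every $M$-saturated vertex (so ``$M'$ misses one'' genuinely implies $|M'|<|M|$), whereas here you want to forbid $M$-unsaturated vertices, which is the dual statement and needs a different argument.

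The right tool is an \emph{alternating} path in $M\triangle M'$ starting at $c\in C_N\setminus S$; because $M$ is maximum, such a path cannot end on the $I_T$ side (that would be $M$-augmenting), so it ends at some $c'\in S$ not used by the solution, and one rotates along it. But then you must check that the rotation produces a valid \textsc{VDP} solution. For a length-$3$ path $(i,c,c_*,j)$ the swap $c\mapsto c''\in N(i)$ is harmless (clique edges take care of $c''c_*$), but for a length-$2$ path $(i,c,j)$ with $i,j\in I_T$ the replacement $c\mapsto c''$ may fail because $c''\notin N(j)$, forcing an expansion to $(i,c'',c''',j)$ and the question of where $c'''$ comes from. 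You flag this as ``where the bookkeeping needs the most care'' but do not carry it out; this is precisely the missing idea, not just bookkeeping, and without it the proof is incomplete.
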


By Observations \ref{cleanupobs} and \ref{rmrk1}, we note that every instance $(G, \mathcal{X},k)$ of \textsc{VDP} where $G$ is a split graph, can be converted to an instance $(G', \mathcal{X'},k')$ of \textsc{VDP-Unique} in polynomial time. Due to Observations \ref{rmrk4} and \ref{rmrk3} and Lemmas \ref{newlemma1}, \ref{newlemma}, \ref{rr1}, \ref{rr2}, and \ref{rr3}, $(G,\mathcal{X},k)$ and $(G',\mathcal{X'},k)$ are equivalent. Furthermore, note that throughout this section, our initial parameter $k$ does not increase during the application of the \textsc{Clean-Up} operation and rules RR\ref{RR1}-RR\ref{RR3}. Therefore, using Proposition \ref{labelyangprop}, we have the following theorem.

\splitvdp*

\subsection{A Quadratic Vertex Kernel for Well-partitioned Chordal Graphs} \label{quadraticvdpwpc}

In this section, we show that $\textsc{VDP}$ on well-partitioned chordal graphs admits a kernel with $\mathcal{O}(k^{2})$ vertices. Let $(G,\mathcal{X},k)$ be an instance of \textsc{VDP} where $G$ is a well-partitioned chordal graph. A brief overview of our kernelization algorithm is given below.
\medskip

\noindent \textbf{Overview.} Ahn et al.~\cite{ahn1} showed that \textsc{VDP} on well-partitioned chordal graphs admits a kernel with $\mathcal{O}(k^{3})$ vertices. Based on their algorithm, we design our kernelization algorithm for \textsc{VDP} on well-partitioned chordal graphs. Since most of our rules (in this section) are borrowed from~\cite{ahn1}, our contribution can be viewed as an improved analysis of their algorithm.  

Note that given a well-partitioned chordal graph $G$, by Proposition \ref{ahn:prop}, we can compute (in polynomial time) a partition tree $\mathcal{T}$ of $G$. By Proposition \ref{prop1} and utilizing the properties of the partition tree $\mathcal{T}$, first, we define a marking procedure (called \textsc{Marking Procedure} by us) that marks at most $\mathcal{O}(k^{2})$ vertices in $V(G)$. Our marking procedure uses the following classification of terminal pairs. For a terminal pair $(s,t)\in \mathcal{X}$, either $st\in E(G)$ or $st\notin E(G)$. Furthermore, if $st\in E(G)$, then $(s,t)$ is either a heavy terminal pair or a light terminal pair (see Definition \ref{def:heavy}). Accordingly, we have two different sets of rules to mark the vertices in $V(G)$: one for heavy terminal pairs and one for (every occurrence of) non-heavy terminal pairs (defined below). After marking the desired vertices, we show (in Lemmas \ref{lm2} and \ref{lmheavy}) that if our input instance $(G,\mathcal{X},k)$ is a Yes-instance, then there exists a solution of $(G,\mathcal{X},k)$ that uses the vertices from the marked vertices only (as internal vertices). 

\medskip
We start the technical part of this section with the following definitions.

\begin{definition} [Non-heavy Terminal Pairs]
Let $(G,\mathcal{X},k)$ be an instance of \textsc{VDP} where $G$ is a well-partitioned chordal graph. A terminal pair $(s,t)\in \mathcal{X}$ is \emph{non-heavy} if either $st\notin E(G)$ or $(s,t)$ is light.
\end{definition}
\begin{definition} [Valid Path] \label{def:valid}
Let $(G,\mathcal{X},k)$ be an instance of \textsc{VDP} where $G$ is a well-partitioned chordal graph with a partition tree $\mathcal{T}$.
Also, for a terminal pair $(s,t)\in \mathcal{X}$, let $s\in V(B)$ and $t\in V(B')$, where $B,B'\in V(\mathcal{T})$. Then, the unique path $(B,\ldots,B')$ in $\mathcal{T}$ is the \emph{valid path} corresponding to $(s,t)$. 
\end{definition} 

\begin{remark}
In Definition \ref{def:valid}, it is possible that $B=B'$. In this case, the valid path consists of the single bag $B$.
\end{remark}

\begin{definition} [Active Boundary] \label{def:active}
Let $(G,\mathcal{X},k)$ be an instance of \textsc{VDP} where $G$ is a well-partitioned chordal graph with a partition tree $\mathcal{T}$. For a bag $B\in V(\mathcal{T})$ and a bag $B'\in N_{\mathcal{T}}(B)$, the boundary $\mathsf{bd}(B,B')$ is \emph{active} if both $B$ and $B'$ belong to the valid path corresponding to some non-heavy terminal pair $(s,t)\in \mathcal{X}$. Furthermore, we say that $(s,t)$ \emph{activates} $\mathsf{bd}(B,B')$.
\end{definition}

\begin{remark}
In Definition \ref{def:active}, note that a terminal pair $(s,t)$ such that $s,t\in V(B)$ for some $B\in V(\mathcal{T})$ does not activate any boundary. Also, note that the notion of active boundary is not defined for any heavy terminal pair. 
\end{remark}

For an illustration of valid paths and active boundaries, see Figure \ref{Fig:wpc}.

\begin{figure}
    \centering
    \includegraphics[scale=0.65]{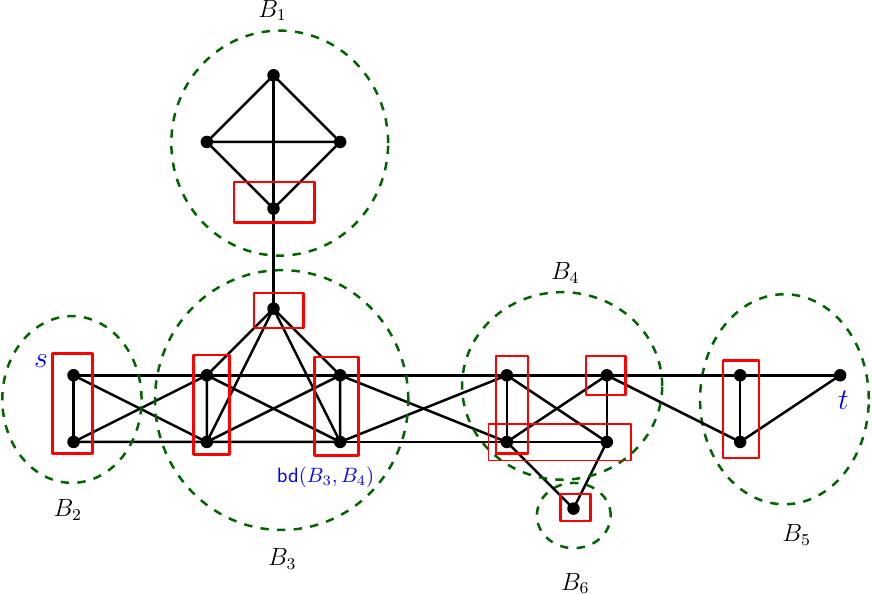}
    \caption{ For the terminal pair $(s,t)$, the valid path is $(B_{2},B_{3},B_{4},B_{5})$. Bag $B_{3}$ has three boundaries, viz. $\mathsf{bd(B_{3},B_{2})}$, $\mathsf{bd(B_{3},B_{1})}$, and $\mathsf{bd(B_{3},B_{4})}$. Assuming that $\mathcal{X}=\{(s,t)\}$, only two of them ($\mathsf{bd(B_{3},B_{2})}$, $\mathsf{bd(B_{3},B_{4})}$) are active.}
    \label{Fig:wpc}
\end{figure}

Additionally, we have the following straightforward observation that follows from Definitions \ref{def:valid} and \ref{def:active}. In particular, note that a single (non-heavy) terminal pair can activate at most two boundaries in every bag.

\begin{observation} \label{obsk}
Let $(G,\mathcal{X},k)$ be an instance of \textsc{VDP} where $G$ is a well-partitioned chordal graph with a partition tree $\mathcal{T}$. For a bag $B$ in $\mathcal{T}$, let $\mathsf{bd}(B,B_{1}),\ldots,\mathsf{bd}(B,B_{q})$ denote the active boundaries in $B$. Furthermore, for each $i\in [q]$, let $k_{B_i}$ denote the number of occurrences of terminal pairs that activate $\mathsf{bd}(B,B_{i})$. Then, $\sum_{i=1}^{q}{k_{B_i}}\leq 2k$. In other words, any bag can have at most $2k$ active boundaries.
\end{observation}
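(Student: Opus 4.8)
The plan is to prove Observation~\ref{obsk} by a direct double-counting argument, leveraging the fact that each non-heavy terminal pair contributes in a very controlled way to the active boundaries of a single bag. First I would fix a bag $B \in V(\mathcal{T})$ and, for each occurrence of a non-heavy terminal pair $(s,t) \in \mathcal{X}$, look at its valid path $P_{(s,t)} = (B_s, \ldots, B_t)$ in the tree $\mathcal{T}$ (Definition~\ref{def:valid}). The crucial claim is: if $(s,t)$ activates some boundary $\mathsf{bd}(B,B_i)$, then $B$ lies on the valid path $P_{(s,t)}$, and moreover $(s,t)$ can activate at most two boundaries incident to $B$. The first half is immediate from Definition~\ref{def:active} (activation of $\mathsf{bd}(B,B')$ requires both $B$ and $B'$ to be on the valid path). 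For the second half I would observe that a valid path is a simple path in the tree $\mathcal{T}$, so it passes through the vertex $B$ at most once; hence it uses at most two of the tree-edges incident to $B$ (one "incoming" and one "outgoing", or fewer if $B$ is an endpoint $B_s$ or $B_t$). Each such tree-edge $BB_i$ corresponds to exactly one boundary $\mathsf{bd}(B,B_i)$, so $(s,t)$ activates at most two boundaries incident to $B$.

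Next I would set up the counting. Let $\mathsf{bd}(B,B_1), \ldots, \mathsf{bd}(B,B_q)$ be the active boundaries of $B$, and for each $i \in [q]$ let $k_{B_i}$ be the number of occurrences of terminal pairs activating $\mathsf{bd}(B,B_i)$. Consider the bipartite incidence count between occurrences of non-heavy terminal pairs and active boundaries of $B$, where an occurrence $(s,t)$ is incident to $\mathsf{bd}(B,B_i)$ iff $(s,t)$ activates it. Counting this set of incidences by summing over boundaries gives exactly $\sum_{i=1}^{q} k_{B_i}$. Counting it instead by summing over occurrences of terminal pairs: each of the at most $k$ occurrences contributes at most $2$ (by the claim above), and occurrences that do not activate any boundary of $B$ contribute $0$. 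Hence $\sum_{i=1}^{q} k_{B_i} \le 2k$, which is the first assertion. For the second assertion ("any bag can have at most $2k$ active boundaries"), note that each active boundary $\mathsf{bd}(B,B_i)$ has $k_{B_i} \ge 1$ by definition of being active, so $q \le \sum_{i=1}^{q} k_{B_i} \le 2k$.

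I do not anticipate a genuine obstacle here; the statement is essentially a restatement of "a simple path in a tree uses at most two edges at any internal vertex." The only point requiring a little care is the correspondence between tree-edges incident to $B$ and boundaries of $B$: by the definition of the partition tree, the boundary $\mathsf{bd}(B,B')$ is defined precisely for edges $BB' \in E(\mathcal{T})$, and distinct neighbors $B_i \ne B_j$ give distinct boundaries, so the map from active boundaries to tree-edges at $B$ is injective. I would also remark explicitly (as the paper already does in the preceding remark) that a terminal pair $(s,t)$ with $s,t \in V(B)$ has a valid path consisting of the single bag $B$, uses no tree-edge at $B$, and therefore activates no boundary, consistent with contributing $0$ to the count. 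With these remarks the argument is complete.
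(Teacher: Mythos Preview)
Your proposal is correct and follows exactly the approach the paper intends: the paper states that the observation follows from Definitions~\ref{def:valid} and~\ref{def:active} together with the remark that a single non-heavy terminal pair can activate at most two boundaries in every bag, and you have simply spelled out this double-counting argument in full detail.
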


Consider the following reduction rule.

\begin{RR}[RR\ref{RR4}]\label{RR4} Let $(s,t)\in \mathcal{X}$ be a light terminal pair. Then, $G'\Leftarrow G$, $\mathcal{X'} \Leftarrow \mathcal{X} \setminus \{(s,t)\}$, $k'\Leftarrow k-1$.  Furthermore, for every $x\in \{s,t\}$ that does not appear as a terminal in any terminal pair in $\mathcal{X'}$, update $V(G')\Leftarrow V(G')\setminus \{x\}$.   
\end{RR}

\begin{proposition} [\cite{ahn1}] \label{rr4}
RR\ref{RR4} is safe. 
\end{proposition}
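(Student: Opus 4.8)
The plan is to prove that $(G,\mathcal{X},k)$ and $(G',\mathcal{X}',k')$ are equivalent instances of \textsc{VDP} on well-partitioned chordal graphs, following the same template as the other safeness lemmas. Since $G'$ arises from $G$ by deleting at most the vertices $s$ and $t$, Remark~\ref{obsdel} immediately gives that $G'$ is again a well-partitioned chordal graph. The one auxiliary fact I would record first is that in \emph{any} solution of a \textsc{VDP} instance, no terminal vertex is an internal vertex of any path: if $x$ is a terminal of pair $i$ and lies in $V(P_j)\setminus\{s_j,t_j\}$ for some $j\neq i$, then $x\in V(P_i)$ as well, contradicting that $P_i,P_j$ are internally vertex-disjoint, and $x$ cannot be internal to $P_i$ since $x\in\{s_i,t_i\}$. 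This is exactly what makes the optional deletion of an orphaned copy of $s$ or $t$ harmless.

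For the forward direction I would take a \emph{minimum} solution $\mathcal{P}$ of $(G,\mathcal{X},k)$. Because $(s,t)$ is light we have $st\in E(G)$, so Observation~\ref{obsminimum} yields $P_{st}\in\mathcal{P}$; setting $\mathcal{P}''=\mathcal{P}\setminus\{P_{st}\}$ leaves $k-1=k'$ pairwise internally vertex-disjoint paths, one per pair of $\mathcal{X}'=\mathcal{X}\setminus\{(s,t)\}$ (using that the light pair $(s,t)$ occurs exactly once). It remains only to see that these paths live in $G'$: if, say, $s$ no longer appears as a terminal and is therefore deleted, then by the fact above $s$ is not internal to any path of $\mathcal{P}\supseteq\mathcal{P}''$, and $s$ is not an endpoint of any path of $\mathcal{P}''$ since those endpoints are terminals of $\mathcal{X}'$; hence $s$ occurs on no path of $\mathcal{P}''$, and symmetrically for $t$. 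So $\mathcal{P}''$ is a solution of $(G',\mathcal{X}',k')$.

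For the reverse direction I would take a solution $\mathcal{P}'$ of $(G',\mathcal{X}',k')$; re-adding $s,t$ where needed recovers $G$, with $st\in E(G)$. I claim $\mathcal{P}'\cup\{P_{st}\}$ solves $(G,\mathcal{X},k)$: it has $k$ paths, one per pair of $\mathcal{X}$; the paths are pairwise distinct because no pair of $\mathcal{X}'$ equals $\{s,t\}$ (lightness again), so $P_{st}\notin\mathcal{P}'$; and they are pairwise internally vertex-disjoint because $P_{st}$ has no internal vertex, while for every $P_j\in\mathcal{P}'$ neither $s$ nor $t$ is internal to $P_j$ --- if it is still a terminal, by the recorded fact; if it was deleted, it does not occur in $G'$ at all. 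Note that the \textsc{VDP} disjointness condition ignores edges, so it does not matter whether some $P_j$ reuses the edge $st$. The only genuinely delicate point in all of this is the bookkeeping around the conditional deletion of $s$ and $t$, and the single observation about terminals never being internal vertices discharges it uniformly in both directions; the rest is a direct consequence of Observation~\ref{obsminimum} and the lightness of $(s,t)$.
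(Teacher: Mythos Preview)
Your proof is correct. The paper does not actually supply a proof of this proposition---it simply cites \cite{ahn1}---but your argument follows exactly the template used for the analogous safeness proofs elsewhere in the paper (notably Lemma~\ref{rr1}), invoking Observation~\ref{obsminimum} for the forward direction and handling the conditional deletion of $s,t$ via the observation that terminal vertices are never internal to any solution path.
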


\begin{observation} \label{poly:rr4}
RR\ref{RR4} can be applied in polynomial
time.
\end{observation}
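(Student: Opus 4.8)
The plan is to go through the three things that RR\ref{RR4} does to an instance $(G,\mathcal{X},k)$ and argue that each of them is computable in time polynomial in $|V(G)|+|E(G)|+k$. First I would detect a light terminal pair: by Definition \ref{def:heavy}, $(s,t)\in\mathcal{X}$ is light precisely when $st\in E(G)$ and $st$ is not a heavy edge, that is, $\{s,t\}$ occurs exactly once in $\mathcal{X}$ (it occurs at least once since $(s,t)\in\mathcal{X}$). So I would first precompute, by a single pass over the at most $k$ pairs of $\mathcal{X}$, the multiplicity of each unordered pair; then, for each pair, I test adjacency of its two endpoints in $G$ and look up its multiplicity. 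Any pair passing both tests is light, and this whole search runs in polynomial time.

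Once a light pair $(s,t)$ is fixed, the second step --- replacing $\mathcal{X}$ by $\mathcal{X}\setminus\{(s,t)\}$ and $k$ by $k-1$ --- is constant-time bookkeeping. The third step is to delete, for each $x\in\{s,t\}$, the vertex $x$ from $G$ whenever $x$ no longer appears as a terminal of any pair in the updated multiset $\mathcal{X}'$; deciding this takes one more scan over $\mathcal{X}'$ in $\mathcal{O}(k)$ time per endpoint, and the actual vertex deletion (removing $x$ together with its incident edges) costs $\mathcal{O}(|V(G)|+|E(G)|)$. Adding these up, the entire application of RR\ref{RR4} is polynomial.

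I do not expect a genuine obstacle here. The only point that deserves a sentence of care is that ``light'' is a conditional notion: it only makes sense once $st\in E(G)$, so in the recognition step I must test adjacency before (or simultaneously with) the multiplicity check. Both tests are clearly polynomial, so this causes no difficulty, and the observation follows.
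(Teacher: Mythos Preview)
Your argument is correct. The paper itself does not supply a proof for this observation at all --- it is stated without justification, as the claim is considered routine --- so your write-up simply fills in details the authors omitted; there is nothing to compare.
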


Using the fact that every bag in $\mathcal{T}$ is a clique, we have the following proposition. 

\begin{proposition} [\cite{ahn1}] \label{obs3}
Let $(G,\mathcal{X},k)$ be an instance of \textsc{VDP} where $G$ is a well-partitioned chordal graph with a partition tree $\mathcal{T}$, obtained after the exhaustive application of RR\ref{RR4}. Furthermore, let $\mathcal{\widehat{X}}\subseteq \mathcal{X}$ denote the multiset of non-heavy terminal pairs.  Let $\mathcal{P}$ be a minimum solution of $(G,\mathcal{X},k)$, and let $\mathcal{\widehat{P}}\subseteq \mathcal{P}$ such that $\mathcal{\widehat{P}}$ contains the internally vertex-disjoint paths between the terminal pairs in  $\mathcal{\widehat{X}}$. Then, for every $P\in \mathcal{\widehat{P}}$ and every bag $B\in V(\mathcal{T})$, $|V (P) \cap V(B)| \leq 2$. \end{proposition}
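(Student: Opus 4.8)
The plan is to prove Proposition~\ref{obs3} by contradiction, exploiting minimality of $\mathcal{P}$ and the structural constraint that consecutive bags interact only through the boundary sets $X' \times Y'$ (condition $(ii)$ of Definition~\ref{defwpc}). Suppose some $P \in \widehat{\mathcal{P}}$ and some bag $B \in V(\mathcal{T})$ satisfy $|V(P) \cap V(B)| \geq 3$; write the three vertices as $a, b, c$ in the order they appear along $P$ (traversing from one endpoint to the other). Since $P$ corresponds to a non-heavy terminal pair $(s,t)$, we know by Proposition~\ref{prop1} that $P$ is either an induced path or a length-$2$ path paired with a length-$1$ path; in the latter case $|E(P)|=2$ and $P$ visits at most $3$ vertices total, so I would handle that degenerate case separately (it forces $st$ to be a heavy-type configuration or contradicts non-heaviness, and in any event a length-$2$ path cannot have $3$ internal-plus-endpoint vertices all in one clique without the path having a chord). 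So the main case is that $P$ is an induced path.

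The key step is then: because $B$ is a clique, $ab$, $bc$, and $ac$ are all edges of $G$. If $b$ is an internal vertex of the $(a,c)$-subpath of $P$ other than immediately between $a$ and $c$ — i.e. the $(a,c)$-subpath has length at least $2$ — then the edge $ac$ is a chord of $P$, contradicting that $P$ is induced. Hence $a$ and $c$ must be consecutive on $P$, meaning the $(a,c)$-subpath is just the edge $ac$; but that subpath is supposed to contain $b$, a contradiction unless there is no vertex strictly between $a$ and $c$. More carefully: since $a,b,c$ appear in this order along $P$, the portion of $P$ from $a$ to $c$ passes through $b$ and therefore has length $\geq 2$, so $ac$ is a chord, and $P$ is not induced — contradiction. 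This is the heart of the argument: three mutually adjacent vertices on an induced path is impossible.

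Thus I would reduce everything to ruling out the non-induced alternative from Proposition~\ref{prop1}$(ii)$. There, $P$ has length exactly $2$, say $P = (s, x, t)$ with $st \in E(G)$. Then $V(P) = \{s, x, t\}$, and $|V(P) \cap V(B)| \leq 3$ trivially; to get a violation we would need $s, x, t$ all in $B$. But then $sx, xt, st$ are all edges of $B$, and this configuration is fine for vertex-disjointness — however, $(s,t)$ with $st \in E(G)$ being non-heavy (light) means RR\ref{RR4} would have been applied and $(s,t)$ removed, so no such $P$ survives in $\widehat{\mathcal{P}}$. Actually the cleaner route: after exhaustive application of RR\ref{RR4}, every non-heavy terminal pair $(s,t)$ with $st \in E(G)$ has been deleted, so every remaining non-heavy pair has $st \notin E(G)$; for such a pair Proposition~\ref{prop1} forces $P$ to be induced (case $(ii)$ requires a length-$1$ companion path, impossible when $st \notin E(G)$), and we are back to the induced-path argument above. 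The main obstacle I anticipate is bookkeeping the interaction with RR\ref{RR4}: one must be careful that "light" pairs are exactly the ones with an edge that got removed, so that the surviving non-heavy pairs genuinely have no edge between terminals, making Proposition~\ref{prop1}$(i)$ apply. Once that is pinned down, the clique-plus-induced-path contradiction is immediate and the proof is short.
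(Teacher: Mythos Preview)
Your argument is correct: after exhaustive application of RR\ref{RR4} every remaining non-heavy pair $(s,t)$ has $st\notin E(G)$, so Proposition~\ref{prop1} forces its path in a minimum solution to be induced, and an induced path cannot meet a clique bag in three or more vertices (the chord $ac$ between the first and third such vertices would violate inducedness). The paper does not supply its own proof of this proposition---it is simply quoted from~\cite{ahn1}---so there is nothing further to compare against; your write-up would benefit only from trimming the exploratory discussion of the length-$2$ case before you observe that RR\ref{RR4} eliminates it.
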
 

Note that Proposition \ref{obs3} is not true (in general) for heavy terminal pairs as in a minimum solution, the vertices in the path corresponding to an occurrence of a heavy terminal pair can have $3$ vertices in common with the vertices in a bag (when the terminals themselves belong to that bag).

The following proposition is used to characterize the structure of minimum solutions (for non-heavy terminal pairs) in terms of valid paths.

\begin{proposition} [\cite{ahn1}] \label{lm1} 
 Let $(G,\mathcal{X},k)$ be a Yes-instance of \textsc{VDP} where $G$ is a well-partitioned chordal graph, obtained after the exhaustive application of RR\ref{RR4}. Furthermore, let $\mathcal{T}$ be a partition tree of $G$, and let $\mathcal{\widehat{X}}\subseteq \mathcal{X}$ denote the multiset of non-heavy terminal pairs. Let $\mathcal{P}$ be a minimum solution of $(G,\mathcal{\widehat{X}},|\mathcal{\widehat{X}}|)$. Let $(B_{1},\ldots,B_{\ell})$ be the valid path corresponding to some $(s,t)\in \mathcal{\widehat{X}}$, and let $P\in \mathcal{P}$ be the path between $s$ and $t$. Then, $V(P)\cap V(B)\neq \emptyset$ if and only if $B\in \{B_{1},\ldots,B_{\ell}\}$.
\end{proposition}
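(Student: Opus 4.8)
The plan is to prove the two implications separately. The ``$\Leftarrow$'' direction is a pure connectivity observation and does not use minimality of $\mathcal{P}$; the ``$\Rightarrow$'' direction is an exchange (shortcutting) argument exploiting that every bag of $\mathcal{T}$ is a clique.

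For the forward direction, given any path $Q$ in $G$ let $\mathcal{B}_Q\subseteq V(\mathcal{T})$ be the set of bags intersected by $Q$. Since $G$ is a well-partitioned chordal graph, any two consecutive vertices of $Q$ lie either in the same bag or in two $\mathcal{T}$-adjacent bags (Definition~\ref{defwpc}(iii)); hence $\mathcal{B}_Q$ induces a connected subgraph of $\mathcal{T}$, i.e.\ a subtree. Applying this to $P$: as $s\in V(B_1)$ and $t\in V(B_\ell)$, the subtree $\mathcal{B}_P$ contains both $B_1$ and $B_\ell$, hence contains the unique $B_1$--$B_\ell$ path $(B_1,\dots,B_\ell)$ in $\mathcal{T}$, which is exactly the valid path. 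Thus $B\in\{B_1,\dots,B_\ell\}$ implies $V(P)\cap V(B)\neq\emptyset$.

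For the converse, suppose towards a contradiction that $\mathcal{B}_P\supsetneq\{B_1,\dots,B_\ell\}$. Choose $B^\star\in\mathcal{B}_P\setminus\{B_1,\dots,B_\ell\}$ at maximum $\mathcal{T}$-distance from $B_1$ within $\mathcal{B}_P$. Rooting $\mathcal{B}_P$ at $B_1$, a child of $B^\star$ in $\mathcal{B}_P$ would be farther from $B_1$, and it cannot lie on the valid path (the valid path, read downward from $B_1$, is a root-to-$B_\ell$ path, so a child of $B^\star$ lying on it would force $B^\star$ onto it too); hence $B^\star$ has no children, i.e.\ $B^\star$ is a leaf of $\mathcal{B}_P$. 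Let $B'$ be its unique $\mathcal{B}_P$-neighbour and let $S^\star$ be the component of $\mathcal{T}-\{B^\star B'\}$ containing $B^\star$. Then $\mathcal{B}_P\cap S^\star=\{B^\star\}$, so the valid path avoids $S^\star$; in particular $s,t\notin R^\star:=\bigcup_{B\in S^\star}V(B)$. By Definition~\ref{defwpc}(iii) every edge of $G$ with exactly one endpoint in $R^\star$ has its other endpoint in $B'$. Writing $P=(p_0=s,p_1,\dots,p_m=t)$ and setting $a=\min\{i:p_i\in R^\star\}$, $b=\max\{i:p_i\in R^\star\}$, we have $1\le a\le b\le m-1$ (since $V(P)\cap V(B^\star)\neq\emptyset$ but $s,t\notin R^\star$), and the displayed edge property gives $p_{a-1},p_{b+1}\in V(B')$. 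These two vertices are distinct (as $P$ is a path) and $B'$ is a clique, so $p_{a-1}p_{b+1}\in E(G)$; replacing the $(p_{a-1},p_{b+1})$-subpath of $P$ — which has length $b-a+2\ge 2$ — by this single edge yields a path $P'$ with the same endpoints, $V(P')\subsetneq V(P)$, and $|E(P')|<|E(P)|$. Since $V(P')\subseteq V(P)$ and the endpoints are unchanged, $(\mathcal{P}\setminus\{P\})\cup\{P'\}$ is again a feasible solution of $(G,\widehat{\mathcal{X}},|\widehat{\mathcal{X}}|)$, contradicting the minimality of $\mathcal{P}$.

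The step I expect to require the most care is the selection of $B^\star$ together with the claim $\mathcal{B}_P\cap S^\star=\{B^\star\}$, which is what guarantees that the deleted detour is genuinely separated from the terminals (so $s$ and $t$ are untouched by the shortcut); this is precisely where the tree structure of $\mathcal{T}$ and the leaf property of $B^\star$ are both needed. Everything after that — the identification of $p_{a-1},p_{b+1}$ and the length bound $b-a+2\ge 2$ — is routine.
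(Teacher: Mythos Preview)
The paper does not prove this proposition; it is cited from~\cite{ahn1} and used as a black box, so there is no in-paper argument to compare against. Your proof is correct and self-contained. The ``$\Leftarrow$'' direction follows cleanly from the observation that consecutive vertices of a path lie in the same bag or in $\mathcal{T}$-adjacent bags (Definition~\ref{defwpc}(iii)), so $\mathcal{B}_P$ is a subtree of $\mathcal{T}$ containing both $B_1$ and $B_\ell$. For the ``$\Rightarrow$'' direction your shortcutting argument is sound: selecting $B^\star$ at maximal distance forces it to be a leaf of $\mathcal{B}_P$, which in turn gives $\mathcal{B}_P\cap S^\star=\{B^\star\}$; since the only $\mathcal{T}$-edge leaving $S^\star$ is $B^\star B'$, the vertices $p_{a-1},p_{b+1}$ land in the clique $B'$ and the one-edge shortcut contradicts minimality. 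One point you use implicitly and could state: because the bags of a partition tree form a \emph{partition} of $V(G)$, each vertex lies in exactly one bag, so $s\in V(B_1)$ with $B_1\notin S^\star$ already gives $s\notin R^\star$ (and likewise for $t$). Also note that after exhaustive application of RR\ref{RR4} every non-heavy pair $(s,t)$ satisfies $st\notin E(G)$, so the shortened path $P'$ cannot collapse to a single edge and thus retains at least one internal vertex; this guarantees $P'$ is distinct from every other path in $\mathcal{P}$, keeping the new solution valid.
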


\begin{remark}
Let $V(\mathcal{X})=\{s,t: (s,t)\in \mathcal{X}\}$ throughout this section.
\end{remark}

Next, we describe our marking procedure, which is crucial to proceed further. To this end, let $(G,\mathcal{X},k)$ be an instance of \textsc{VDP} where $G$ is a well-partitioned chordal graph with a partition tree $\mathcal{T}$ of $G$, obtained after the exhaustive application of RR\ref{RR4}. Let $\mathcal{P}$ be a minimum solution of $(G,\mathcal{X},k)$. Before formally defining the marking procedure, let us give some intuition as to how to choose which vertices to mark. Note that the light terminal pairs have already been eliminated by RR\ref{RR4}. To deal with the remaining non-heavy terminal pairs, we make use of Propositions \ref{obs3} and \ref{lm1} to characterize the paths in a minimum solution between every occurrence of a non-heavy terminal pair, and, accordingly, choose which vertices to mark. Next, let $(s,t)\in \mathcal{X}$ be a heavy terminal pair. By Proposition \ref{prop1} and Observation \ref{obsminimum}, for every occurrence of $(s,t)$, the $(s,t)$-path in $\mathcal{P}$ must be of the form of either $P_{st}$ or $(s,v,t)$, where $v\in N(s)\cap N(t)$.  Since $st\in E(G)$, there are two possible cases (based on the positions of $s$ and $t$ in $\mathcal{T}$). First, let there exist bags $B, B' \in V(\mathcal{T})$ such that $s\in \mathsf{bd}(B, B')$ and $t \in \mathsf{bd}(B',B)$ (note that $B\in N_\mathcal{T}(B')$). In this case, the only vertices that can be used as internal vertices to form length-2 $(s,t)$-paths have to be contained in $\mathsf{bd}(B, B') \cup \mathsf{bd}(B', B)$. Second, let there exist a bag $B\in V(\mathcal{T})$ such that $\{s,t\} \subseteq V(B)$. In this case, the only vertices that can be used as internal vertices to form length-2 $(s,t)$-paths have to be contained in $B$ or in a bag $B'\in N_{\mathcal{T}}(B)$ such that $s,t\in \mathsf{bd}(B,B')$.
\smallskip

Now, we are ready to formally define the \textsc{Marking Procedure}. Note that it is necessary to first perform the \textsc{Marking Procedure} for every occurrence of a non-heavy terminal pair and afterward for every heavy terminal pair. 
\smallskip

\noindent \textbf{\textsc{Marking Procedure}:} 
Let $(G,\mathcal{X},k)$ be an instance of \textsc{VDP} where $G$ is a well-partitioned chordal graph with a partition tree $\mathcal{T}$, obtained after the exhaustive application of RR\ref{RR4}. For two adjacent bags, say, $B$ and $B'$, in $\mathcal{T}$, let $k_{BB'}$ denote the number of occurrences of terminal pairs that activate $\mathsf{bd}(B,B')$. Further, let $\mathsf{\widetilde{bd}}(B,B')\subseteq \mathsf{bd}(B,B')$ that do not belong to any other active boundary of $B$.

\begin{enumerate}
    \item For every occurrence of a non-heavy terminal pair $(s,t)\in \mathcal{X}$ and for every heavy terminal pair $(s',t')\in \mathcal{X}$: Initialize $M_{(s,t)} \Leftarrow \emptyset$ and $M_{(s',t')} \Leftarrow \emptyset$.
 
    \item For every bag $B\in \mathcal{T}$ that has at least one active boundary: 
    \begin{enumerate}
        \item [2.1] Let $\mathsf{bd}(B,B_{1}),\ldots, \mathsf{bd}(B,B_{q})$ be active in $B$. 
        \item [2.2] For each $i \in [q]$ and for every $(s,t)\in \mathcal{X}$ that activates $\mathsf{bd}(B,B_i)$:
         \begin{enumerate}
        \item [2.2.1] If $|\mathsf{\widetilde{bd}}(B,B_i)|\geq 2k_{BB_i}$, then add to $M_{(s,t)}$ a maximal\footnote{Here, when we say that we pick a maximal subset of some set $A$ of size at most $a$, we mean that: (i) if $|A|\geq a$, then, we pick some subset of size $a$ of $A$; (ii) otherwise, we simply pick $A$.} subset of $\mathsf{\widetilde{bd}}(B,B_i)\setminus V(\mathcal{X})$ of size at most $2k_{BB_i}$. 
        
       \item [2.2.2] Else, add to $M_{(s,t)}$ the set $\mathsf{bd}(B,B_i)$.
       
    \end{enumerate}
    \end{enumerate}
   
 \item  For every heavy terminal pair $(s,t)\in \mathcal{X}$:
 \begin{enumerate}
     \item [3.1] If there exist bags $B, B' \in V(\mathcal{T})$ such that $s\in \mathsf{bd}(B, B')$ and $t \in \mathsf{bd}(B',B)$, then add to $M_{(s,t)}$ a maximal subset of $(\mathsf{bd}(B, B') \cup \mathsf{bd}(B', B))\setminus V(\mathcal{X})$ of size at most $2k$, with preference to unmarked vertices.
     \item [3.2] If there exists a bag $B\in V(\mathcal{T})$ such that $\{s,t\} \subseteq V(B)$, then add to $M_{(s,t)}$ a maximal subset of $(B\cup \bigcup_{B' \in N_{\mathcal{T}}(B), \{s,t\}\subseteq \mathsf{bd}(B,B')} \mathsf{bd}(B',B)) \setminus V(\mathcal{X})$ of size at most $2k$, with preference to unmarked vertices.
 \end{enumerate}
 \item For every bag $B\in \mathcal{T}$ that has at least one active boundary:
 
\begin{enumerate}
        \item [4.1] Let $\mathsf{bd}(B,B_{1}),\ldots, \mathsf{bd}(B,B_{q})$ be active in $B$.
 \item [4.2] Let $F_{i}=\mathsf{bd(B,B_{i})} \setminus \mathsf{\widetilde{bd}(B,B_{i})}$ for all $i\in [q]$.
 \item [4.3] For every fixed $i\in [q]$:
 
 \begin{enumerate}
     \item [4.3.1] Let $\widehat{B}_i^{1},\ldots,\widehat{B}_i^{p}$ be neighboring bags of $B$ such that $F_{i}\cap\mathsf{bd}(B,\widehat{B}_{i}^{j})\neq \emptyset$ and $\mathsf{bd}(\widehat{B}_{i}^{j},B)$ are active for all $j\in [p]$.
 \item [4.3.2] If $|F_{i}|\geq 2k_{B\widehat{B}_{i}^{1}}+\ldots+2k_{B\widehat{B}_{i}^{p}}$, then arbitrarily keep only $2k_{B\widehat{B}_{i}^{j}}$ distinct vertices in the marked sets of terminal pairs that activate $\mathsf{bd}(B,\widehat{B}_{i}^{j})$ for all $j\in[p]$, and unmark all remaining vertices in $F_{i}$.
  \end{enumerate}
  \end{enumerate}
\end{enumerate}

This completes our \textsc{Marking Procedure}. 

\begin{observation} \label{obsmark11}
\textsc{Marking Procedure} can be executed in polynomial time.
\end{observation}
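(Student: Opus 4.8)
The plan is to observe that the \textsc{Marking Procedure} is a finite sequence of nested loops, each ranging over a polynomially bounded family of objects --- bags of $\mathcal{T}$, active boundaries, and occurrences of terminal pairs --- and that every atomic operation inside these loops is a polynomial-time set manipulation. The only data not directly available from the input must be precomputed once, and I would do this first. By Proposition~\ref{ahn:prop}, a partition tree $\mathcal{T}$ of $G$ can be computed in polynomial time, and $|V(\mathcal{T})| \le |V(G)|$ since the bags partition $V(G)$. For each terminal pair $(s,t) \in \mathcal{X}$, one locates the unique bags containing $s$ and $t$ and finds the unique path between them in $\mathcal{T}$ in linear time, yielding its valid path (Definition~\ref{def:valid}). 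With all valid paths in hand, for every ordered pair of adjacent bags $(B,B')$ one can decide in polynomial time whether $\mathsf{bd}(B,B')$ is active (Definition~\ref{def:active}), compute $k_{BB'}$, and compute $\mathsf{bd}(B,B')$ and $\mathsf{\widetilde{bd}}(B,B')$ from the adjacencies of $G$. All of this takes time polynomial in $|V(G)| + |\mathcal{X}|$.

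Given the precomputed data, one then walks through the four steps. Step~1 initializes $\mathcal{O}(k)$ empty sets. Step~2 iterates over the at most $|V(\mathcal{T})|$ bags, the at most $2k$ active boundaries per bag (Observation~\ref{obsk}), and the occurrences of terminal pairs activating each boundary; for each it either selects a subset of $\mathsf{\widetilde{bd}}(B,B_i) \setminus V(\mathcal{X})$ of prescribed size --- the ``maximal subset of size at most $a$'' of a footnote, which is trivially computable --- or copies $\mathsf{bd}(B,B_i)$ into $M_{(s,t)}$. Step~3 iterates over the at most $k$ heavy terminal pairs; for each it tests which bags contain $s,t$ and which boundaries they lie on (readable from the precomputed data) and adds a bounded-size subset of an explicitly described set, with preference to unmarked vertices, which is again a polynomial selection. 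Step~4 iterates over bags and their active boundaries, forms the sets $F_i$ by set difference, identifies the relevant neighboring bags, and, when the size threshold is met, scans the $\mathcal{O}(k)$ marked sets of the relevant terminal pairs to retain the prescribed number of vertices and unmark the rest. Summing up, the procedure performs a polynomial number of polynomial-time operations, so it runs in polynomial time.

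The main obstacle, if any, is purely one of bookkeeping: one must verify that \emph{every} quantity the procedure refers to --- active boundaries, the counts $k_{BB'}$, the reduced boundaries $\mathsf{\widetilde{bd}}$, membership in $V(\mathcal{X})$, the tie-breaking ``with preference to unmarked vertices'', and the selective unmarking of Step~4.3.2 --- is computable from the precomputed valid-path information with no hidden super-polynomial blow-up. The paragraph above does this by going through the four steps in order; there is no genuine mathematical difficulty.
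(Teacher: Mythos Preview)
Your proposal is correct; the paper actually states this observation without proof, treating it as self-evident from the description of the \textsc{Marking Procedure}. Your detailed walk-through of the four steps and the precomputation of valid paths, boundaries, and the counts $k_{BB'}$ is exactly the routine verification one would supply if asked to justify it, and there is nothing to compare.
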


Next, we have the following lemma.

\begin{lemma} \label{mark:lemma}
Let $(G,\mathcal{X},k)$ be an instance of \textsc{VDP} where $G$ is a well-partitioned chordal graph obtained after the exhaustive application of RR\ref{RR4}. Furthermore, let $\mathcal{T}$ be a partition tree of $G$, and let $\mathcal{\widehat{X}}\subseteq \mathcal{X}$ denote the multiset of non-heavy terminal pairs. Let $I=|\mathcal{\widehat{X}}|$. Let $M_1, \ldots, M_I \subseteq V(G)$ be sets of marked vertices obtained by applying \textsc{Marking Procedure} to $(G,\mathcal{\widehat{X}},I)$. Then, for each bag $B \in V (\mathcal{T})$, $|V(B) \cap \bigcup_{i\in [I]} M_i| \in \mathcal{O}(k)$.
\end{lemma}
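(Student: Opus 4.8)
The plan is to bound, for a fixed bag $B \in V(\mathcal{T})$, the number of vertices of $B$ that get marked, by summing the contributions from each step of the \textsc{Marking Procedure} applied to $(G, \widehat{\mathcal{X}}, I)$ (i.e. only the non-heavy part, so only Steps~1, 2 and~4 are relevant). First I would invoke \cref{obsk}: any bag $B$ has at most $2k$ active boundaries in the sense that if $\mathsf{bd}(B,B_1), \dots, \mathsf{bd}(B,B_q)$ are the active boundaries of $B$ and $k_{BB_i}$ is the number of terminal-pair occurrences activating $\mathsf{bd}(B,B_i)$, then $\sum_{i=1}^q k_{BB_i} \le 2k$. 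In particular $q \le 2k$.

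Next I would split the marked vertices of $B$ into two groups according to which step marked them. In Step~2, when we process a terminal pair $(s,t)$ activating $\mathsf{bd}(B,B_i)$, we add to $M_{(s,t)}$ either a subset of $\widetilde{\mathsf{bd}}(B,B_i) \setminus V(\mathcal{X})$ of size at most $2k_{BB_i}$ (case~2.2.1), or all of $\mathsf{bd}(B,B_i)$ (case~2.2.2, which only occurs when $|\widetilde{\mathsf{bd}}(B,B_i)| < 2k_{BB_i}$). So the contribution of boundary $\mathsf{bd}(B,B_i)$ to $\bigcup_i M_i$ restricted to $V(B)$ splits as: the vertices of $\widetilde{\mathsf{bd}}(B,B_i)$ marked (at most $2k_{BB_i}$ of them if case~2.2.1 holds for every activating pair, and fewer than $2k_{BB_i}$ if case~2.2.2 ever fires since then $|\widetilde{\mathsf{bd}}(B,B_i)| < 2k_{BB_i}$); plus vertices of $F_i = \mathsf{bd}(B,B_i) \setminus \widetilde{\mathsf{bd}}(B,B_i)$, which get marked only in case~2.2.2. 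The key point is that Step~4 then cleans up the $F_i$'s: for a fixed $i$, the neighboring bags $\widehat B_i^1, \dots, \widehat B_i^p$ whose boundaries intersect $F_i$ and are active, and Step~4.3.2 ensures that after cleanup we retain at most $2k_{B\widehat B_i^1} + \dots + 2k_{B\widehat B_i^p}$ vertices from $F_i$ (either because $|F_i|$ was already that small, or because we explicitly cap it). So the number of retained vertices from $F_i$ is at most $2 \sum_j k_{B\widehat B_i^j}$, and each $k_{B\widehat B_i^j}$ counts occurrences activating a boundary of $B$; summing over all $i$ and all such $j$, each boundary of $B$ is counted a bounded number of times, giving $\mathcal{O}(k)$ total from the $F_i$-parts.

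Putting it together: the contribution from the $\widetilde{\mathsf{bd}}(B,B_i)$ parts is at most $\sum_{i=1}^q 2k_{BB_i} \le 4k$ by \cref{obsk}; and the contribution from the $F_i$ parts, after Step~4 cleanup, is at most $\sum_i 2\sum_j k_{B\widehat B_i^j}$, where the inner sum ranges over neighbors $\widehat B_i^j$ of $B$ with active boundary $\mathsf{bd}(B,\widehat B_i^j)$. Since for each such neighbor the quantity $k_{B\widehat B_i^j}$ is one of the $k_{BB_{i'}}$ terms and each appears in the double sum a bounded number of times (at most once per $i$ that sees it, and the $F_i$ are disjoint so a given boundary index appears in at most... — I would carefully argue each $i'$ contributes $k_{BB_{i'}}$ to at most a constant number of $i$'s, or more cleanly bound the whole double sum by $\mathcal{O}(k) \cdot \sum_{i'} k_{BB_{i'}} / \ldots$), the total is again $\mathcal{O}(k)$. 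Hence $|V(B) \cap \bigcup_{i\in[I]} M_i| = \mathcal{O}(k)$.

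The main obstacle I anticipate is the bookkeeping in the last paragraph: making precise that the $F_i$-parts, after the Step~4.3.2 cleanup, total $\mathcal{O}(k)$ rather than $\mathcal{O}(k^2)$. The subtlety is that $F_i$ consists of vertices of $\mathsf{bd}(B,B_i)$ that also lie in \emph{other} active boundaries of $B$, so the same vertex of $V(B)$ could a priori be ``charged'' through several boundaries; the cleanup in Step~4 is precisely designed so that each vertex of $F_i$ survives only if it is one of the $2k_{B\widehat B_i^j}$ retained vertices for some neighbor $\widehat B_i^j$, and I would argue that the family $\{F_i\}_i$ together with the retention bound ensures the surviving vertices inject into a set of size $\sum_{i} 2\sum_{j} k_{B\widehat B_i^j} \le 2 \cdot (\text{const}) \cdot \sum_{i'} k_{BB_{i'}} \le \mathcal{O}(k)$, using that each terminal-pair occurrence activates at most two boundaries per bag and appears in at most two valid-path bags adjacent to $B$. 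I expect this counting argument, rather than any structural graph claim, to be where the real work lies; everything else follows directly from \cref{obsk}, \cref{obs3}, and the definitions.
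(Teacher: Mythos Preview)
Your proposal is correct and follows essentially the same approach as the paper: fix a bag $B$, split the marked vertices of $B$ into those lying in a unique active boundary (the $\widetilde{\mathsf{bd}}(B,B_i)$ parts, bounded by $\sum_i 2k_{BB_i}\le 4k$ via \cref{obsk}) and those lying in several active boundaries (the $F_i$ parts, capped by Step~4), and use \cref{obsk} once more to bound the latter. The paper's own proof handles the $F_i$ bookkeeping just as tersely as you feared---it simply notes that after Step~4.3.2 each $F_i$ has at most $2k_{B\widehat B_i^1}+\cdots+2k_{B\widehat B_i^p}$ surviving vertices and leaves the aggregation implicit---so the obstacle you flag is real but is not something the paper resolves with more detail than you have sketched.
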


\begin{proof}

Consider a bag $B$ in $\mathcal{T}$. Let $\mathsf{bd}(B,B_{1}),\ldots,\mathsf{bd}(B,B_{q})$ denote the active boundaries in $B$. For each $i\in [q]$, let $p_{i}$ denote the number of marked vertices in $\mathsf{bd}(B,B_{i})$. Note that if for every $i\in[q]$, for which $\mathsf{bd}(B,B_{i})\subseteq M_{(s,t)}$, we have $|\mathsf{bd}(B,B_{i})|\leq 2k_{BB_{i}}$ (here,  $(s,t)$ is an arbitrary terminal pair that activates $\mathsf{bd}(B,B_{i})$), then
by Observation \ref{obsk} and the description of \textsc{Marking Procedure}, the total marked vertices in $B$ are at most  $\sum_{i=1}^{q}{p_{i}}\leq 4k$.

Now consider the case when, for some arbitrary but fixed $i\in[q]$, we have $\mathsf{bd}(B,B_{i})\subseteq M_{(s,t)}$ and $|\mathsf{bd}(B,B_{i})|> 2k_{BB_{i}}$ (here,  $(s,t)$ is an arbitrary terminal pair that activates $\mathsf{bd}(B,B_{i})$). This implies that there exist vertices in $\mathsf{bd}(B,B_{i})$ that are shared by more than one active boundary. Consider the set $F_i\subseteq \mathsf{bd}(B,{B_{i}})$ that is shared by boundaries, say, $B_{1},\ldots,B_{p}$. If $|F_i|\geq  2k_{BB_{1}}+\ldots+2k_{BB_{p}}$, then note that we can uniquely assign the vertices arbitrarily to each active boundary sharing them and unmark the remaining vertices in $F_i$. Otherwise, it is clear that $|F_i|<2k_{BB_{1}}+\ldots+2k_{BB_{p}}$.
\end{proof}

The following lemma helps us to establish that if $(G,\mathcal{\widehat{X}},I)$ (the instance restricted to the multiset of non-heavy terminal pairs) is a Yes-instance, then there exists a solution of $(G,\mathcal{\widehat{X}},I)$ such that the paths among the non-heavy terminal pairs use only the marked vertices (obtained by applying \textsc{Marking Procedure}) as internal vertices. To prove the lemma,  we build on the proof given by Ahn et al.~\cite{ahn1}. 

\begin{lemma} \label{lm2}
  Let $(G,\mathcal{X},k)$ be an instance of \textsc{VDP} where $G$ is a well-partitioned chordal graph obtained after the exhaustive application of RR\ref{RR4}. Furthermore, let $\mathcal{T}$ be a partition tree of $G$, and let $\mathcal{\widehat{X}}\subseteq \mathcal{X}$ denote the multiset of non-heavy terminal pairs. Let $I=|\mathcal{\widehat{X}}|$. If
 $(G,\mathcal{\widehat{X}},I)$ is a Yes-instance, then there exists a minimum solution $\mathcal{P}$ of $(G,\mathcal{\widehat{X}},I)$ such that if $P\in \mathcal{P}$ denotes the path between one occurrence of $(s,t)\in \mathcal{\widehat{X}}$ and $M_{(s,t)} \subseteq V(G)$ denotes the set of marked vertices obtained by applying \textsc{Marking Procedure}, then $ V(P) \subseteq M_{(s,t)} \cup \{s,t\}$.
\end{lemma}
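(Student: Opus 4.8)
The plan is to start from an arbitrary minimum solution $\mathcal{P}$ of $(G,\mathcal{\widehat{X}},I)$ and, by a careful exchange argument, modify it path-by-path so that every internal vertex of the path for an occurrence of $(s,t)$ lies in $M_{(s,t)}$. First I would fix notation: for each occurrence of $(s,t)\in\mathcal{\widehat{X}}$ let $P\in\mathcal{P}$ be the corresponding path and let $(B_1,\dots,B_\ell)$ be its valid path in $\mathcal{T}$ (Definition~\ref{def:valid}). By Proposition~\ref{lm1}, $V(P)\cap V(B)\neq\emptyset$ precisely for $B\in\{B_1,\dots,B_\ell\}$, and by Proposition~\ref{obs3}, $|V(P)\cap V(B_j)|\le 2$ for every $j$. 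Since each bag is a clique and consecutive bags are joined along their boundaries, $P$ enters $B_j$ through a vertex of $\mathsf{bd}(B_j,B_{j-1})$ and leaves through a vertex of $\mathsf{bd}(B_j,B_{j+1})$; the only internal vertices of $P$ that are not terminals and not in the first/last bag are these boundary vertices. Crucially, each such vertex lies in an active boundary $\mathsf{bd}(B_j,B_{j+1})$ activated by $(s,t)$ itself.

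The heart of the argument is a rerouting step inside each active boundary. Consider a fixed active boundary $\mathsf{bd}(B,B')$ and the set $S$ of occurrences of non-heavy terminal pairs whose valid path contains the edge $BB'$; then $|S|=k_{BB'}$. Each path in $\mathcal{P}$ corresponding to an occurrence in $S$ uses exactly one vertex of $\mathsf{bd}(B,B')$ (its crossing vertex from $B$ to $B'$), and these crossing vertices are distinct because the paths are internally vertex-disjoint. I want to reroute these crossings so that they all land in $M_{(s,t)}$. The \textsc{Marking Procedure} guarantees that either (case 2.2.2) all of $\mathsf{bd}(B,B')$ is marked for $(s,t)$, in which case nothing needs to change, or (case 2.2.1) at least $2k_{BB'}$ vertices of $\widetilde{\mathsf{bd}}(B,B')\setminus V(\mathcal{X})$ are marked — enough to host one distinct crossing vertex per occurrence in $S$ with room to spare (the factor $2$ absorbs the fact that a path may touch two vertices of a bag, but only one is the genuine crossing to the next bag). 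The same analysis applies symmetrically from the $B'$ side. Because the marked vertices in case 2.2.1 are chosen from $\widetilde{\mathsf{bd}}(B,B')$ — vertices not shared with any other active boundary of $B$ — and step 4 of the procedure re-partitions the shared pool $F_i$ so that each boundary gets its own $2k_{B\widehat B_i^j}$ vertices, the reservations made for distinct boundaries are pairwise disjoint. Hence I can reroute the crossing vertex of each occurrence, boundary by boundary, without any collisions, replacing the original crossing vertex by a reserved marked vertex in the same boundary; since all vertices involved are non-terminals, the new paths remain internally vertex-disjoint and the terminal pairs are unchanged. One must also check that this rerouting does not increase total length (it keeps lengths exactly, since we only swap one interior vertex of a bag for another), so the modified solution is still minimum; if it only stays minimum rather than strictly decreasing, that's fine, but to be safe one can define $\mathcal{P}$ to be minimum and argue the swap preserves minimality.

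The main obstacle I anticipate is bookkeeping the disjointness of the reservations across all active boundaries of a bag simultaneously, together with handling the endpoints: the first bag $B_1$ and last bag $B_\ell$ of a valid path contain the terminals $s$ or $t$ themselves, so $P$ may use up to two vertices there, one of which is the terminal; I need to argue the other (if a non-terminal, and if it's genuinely an internal vertex of $P$ rather than just $s$ or $t$) is forced to be the crossing vertex into $B_2$ (resp.\ out of $B_{\ell-1}$) and hence handled by the boundary rerouting above. I also need the observation that a single occurrence of $(s,t)$ contributes a crossing to at most two boundaries of any given bag (the "in" and "out" edges of the valid path), which is exactly Observation~\ref{obsk}'s mechanism and is why the quota $2k_{BB'}$ suffices. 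I would organize the write-up as: (1) structural description of $P$ via Propositions~\ref{lm1} and~\ref{obs3}; (2) the per-boundary rerouting lemma with the counting $|\widetilde{\mathsf{bd}}(B,B')\setminus V(\mathcal{X})|\ge 2k_{BB'}$ (or the whole boundary is marked); (3) globalizing across boundaries using the disjointness guaranteed by step 4 of \textsc{Marking Procedure}; (4) concluding that the resulting solution $\mathcal{P}'$ is minimum and satisfies $V(P)\subseteq M_{(s,t)}\cup\{s,t\}$ for every occurrence.
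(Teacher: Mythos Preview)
Your proposal is essentially the same strategy as the paper's proof: use Propositions~\ref{lm1} and~\ref{obs3} to localize each path to its valid path with at most two vertices per bag, then swap unmarked boundary vertices for marked ones using the quota $2k_{BB'}$ versus the at-most-$2(k_{BB'}-1)$ vertices consumed by the other paths. The paper packages the exchange differently, though: instead of rerouting constructively boundary-by-boundary, it fixes a minimum solution $\mathcal{P}$ that additionally \emph{maximizes} $|V(P)\cap(M_{(s,t)}\cup\{s,t\})|$ and derives a contradiction from a single unmarked internal vertex $y$ by exhibiting one marked replacement $y'$ (a twin of $y$ in $G[V(B_{j-1})\cup V(B_j)]$ or $G[V(B_j)\cup V(B_{j+1})]$). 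This extremal framing neatly avoids the global bookkeeping you correctly flag as the main obstacle.

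One point where your write-up would need more care than you indicate: when the path uses a \emph{single} vertex $y$ in an intermediate bag $B_j$, that vertex lies in $\mathsf{bd}(B_j,B_{j-1})\cap\mathsf{bd}(B_j,B_{j+1})$. Your proposed replacements in case~2.2.1 come from $\widetilde{\mathsf{bd}}(B_j,B_{j+1})$, which by definition avoid all other active boundaries of $B_j$ and hence are \emph{not} adjacent to $B_{j-1}$; a naive one-for-one swap there would break the path (and splitting into two vertices would destroy minimality). The paper handles this via the observation that if such a $y$ is unmarked it must have been unmarked in step~4 of the \textsc{Marking Procedure}, which only fires when $|F_i|$ is large enough to retain $2k_{B\widehat{B}_i^j}$ marked vertices inside the shared region for each neighboring boundary; those retained vertices supply the needed replacement in the intersection. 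Your plan is salvageable with this refinement, but the extremal argument gets there with less case analysis.
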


 \begin{proof}
 Let $(G,\mathcal{\widehat{X}},I)$ be a Yes-instance. For every occurrence of an arbitrary (but fixed) terminal pair $(s,t)\in \mathcal{\widehat{X}}$, let $M_{(s,t)}$ denote the set of marked vertices obtained by applying \textsc{Marking Procedure}. Furthermore, let $\mathcal{P}$ be a minimum solution such that $V(P)$ has maximum vertex intersection with $ M_{(s,t)} \cup \{s,t\}$, where $P\in \mathcal{P}$ denotes the path between one occurrence of $(s,t)$. Since $(s,t)$ is a non-heavy terminal pair, $P$ is induced (due to Proposition \ref{prop1}). If $V(P) \subseteq M_{(s,t)} \cup \{s,t\}$, then we are done. So, assume otherwise. This implies that there exists some $B\in V(\mathcal{T})$ such that $V(P)\cap V(B) \nsubseteq M_{(s,t)} \cup \{s,t\}$. Let $(B_1, \ldots, B_{\ell})$ be the valid path in $\mathcal{T}$ corresponding to this particular occurrence of the terminal pair $(s,t)$. By Proposition \ref{lm1}, note that $B$ is a bag on the valid path $(B_1, \ldots, B_{\ell})$. 
 
From now on, let $j \in [\ell]$ be such that $B=B_j$, and let $Y= (V(P)\cap V(B_j))\setminus \{s,t\}$. Here, note that by the definition of the \textsc{Marking Procedure}, it is clear that every vertex in $Y$ either does not belong to any other active boundary, or has been unmarked later. By Proposition \ref{obs3}, $|Y|\leq 2$. Furthermore, for every $j'\in [\ell]\setminus \{j\}$, let $k_{B_jB_{j'}}$ denote the number of occurrences of terminal pairs that activate $\mathsf{bd}(B_{j},B_{j'})$. Now, consider the following cases based on the position of $j$ in the path $(B_1, \ldots, B_{\ell})$.
\smallskip

\noindent \textbf{Case 1:} $\bm{j=1.}$ In this case, note that $|Y|=1$ (as $|Y|\leq 2$ and $s\notin Y$). Furthermore, note that $s \notin  \mathsf{bd}(B_1,B_2)$ (otherwise $P$ will not be an induced path). Now, let $y \in Y \setminus M_{(s,t)}$. Note that $y \in \mathsf{bd(B_1,B_2)}$. Since $y$ was not marked, it is clear that we have not marked the entire set $\mathsf{bd}(B_1, B_2)$ for $M_{(s,t)}$. This implies that there are $2k_{B_1B_2}$ marked vertices in $\mathsf{bd}(B_1, B_2)$ that are reserved only for the paths going from $B_1$ to $B_2$. Since the paths in $\mathcal{P}\setminus \{P\}$ use at most $2(k_{B_1B_2}-1)$ vertices in total from $\mathsf{bd}(B_1, B_2)$ (due to Propositions \ref{obs3} and \ref{lm1}), there is at least one vertex, say, $y'$, in $\mathsf{bd}(B_1, B_2)\cap M_{(s,t)}$ that is not used by any path in $\mathcal{P}\setminus \{P\}$. 

Define $P'$ by replacing $y$ with $y'$ in $P$. Note that $P'$ is an induced $(s,t)$-path in $G$ (as $y$ and $y'$ are twins in $G[V(B_{1})\cup V(B_{2})]$). Furthermore, since $y'\notin \bigcup_{P\in \mathcal{P}} V(P)$, $P'$ is internally vertex-disjoint from every path in $\mathcal{P}\setminus \{P\}$. Let $\mathcal{P'}=(\mathcal{P}\setminus \{P\})\cup \{P'\}$. Since $| V(P')\cap (M_{(s,t)} \cup \{s,t\})|>|V(P)\cap (M_{(s,t)} \cup \{s,t\})|$, this contradicts our choice of $\mathcal{P}$. Hence, this case is not possible.
\smallskip

\noindent \textbf{Case 2:} $\bm{2\leq j < \ell.}$
First, let $|Y \setminus M_{(s,t)}|=2$ and let $y_1,y_2\in Y \setminus M_{(s,t)}$. Without loss of generality, assume that $y_1 \in \mathsf{bd}(B_j,B_{j-1})$ and $y_2\in \mathsf{bd}(B_j,B_{j+1})$. Due to Propositions \ref{obs3} and \ref{lm1}, note that the paths in $\mathcal{P}\setminus \{P\}$ use at most $2(k_{B_jB_{j-1}}-1)$ vertices from $\mathsf{bd}(B_{j},B_{j-1})$ and at most $2(k_{B_jB_{j+1}}-1)$ vertices from $\mathsf{bd}(B_{j},B_{j+1})$ 
(since we have not marked the entire sets $\mathsf{bd}(B_j, B_{j-1})$ and $\mathsf{bd}(B_j, B_{j+1})$ for $M_{(s,t)}$, this implies that $2k_{B_jB_{j-1}}$ marked vertices in $\mathsf{bd}(B_j, B_{j-1})$ are reserved only for the paths going from $B_{j-1}$ to $B_j$ and $2k_{B_jB_{j+1}}$ marked vertices in $\mathsf{bd}(B_j, B_{j+1})$ are reserved only for the paths going from $B_{j}$ to $B_{j+1}$).
This implies that there exist $y'_1 \in \mathsf{bd}(B_j,B_{j-1})\cap M_{(s,t)}$ and $y'_{2} \in \mathsf{bd}(B_j,B_{j+1})\cap M_{(s,t)}$ such that neither $y'_1$ nor $y'_2$ is used by any path in $\mathcal{P}\setminus \{P\}$. Note that since $y_{1}$ and $y'_{1}$ are twins in $G[V(B_{j-1})\cup V(B_{j})]$ and $y_{2}$ and $y'_{2}$ are twins in $G[V(B_{j})\cup V(B_{j+1})]$, we can define a path $P'$ by replacing $y_1$ with $y'_1$ and $y_2$ with $y'_2$ in $P$ such that $P'$ is also an induced path. Since $y'_{1},y'_{2}\notin \bigcup_{P\in \mathcal{P}} V(P)$, $P'$ is internally vertex-disjoint from every path in $\mathcal{P}\setminus \{P\}$. Let $\mathcal{P'}=(\mathcal{P}\setminus \{P\})\cup \{P'\}$. Since $| V(P')\cap (M_{(s,t)} \cup \{s,t\})|>|V(P)\cap (M_{(s,t)} \cup \{s,t\})|$, this contradicts our choice of $\mathcal{P}$. Hence, this case is not possible. 

Second, let $|Y \setminus M_{(s,t)}| = 1$. This case is similar to the case where $|Y \setminus M_{(s,t)}|=2$. Note that there are two possibilities here: let $y\in Y \setminus M_{(s,t)}$, then either $y\in \mathsf{bd}(B_{j},B_{j-1})\cap \mathsf{bd}(B_{j},B_{j+1})$ or we can assume without loss of generality that $y\in \mathsf{bd}(B_{j},B_{j-1}) $ and $y\notin \mathsf{bd}(B_{j},B_{j+1})$. In either case, it is possible to find a marked vertex, say, $y'$, such that we can replace $y$ with $y'$ in $P$ to obtain another minimum solution that has one more vertex in its vertex set among the marked vertices leading to a contradiction to the choice of $\mathcal{P}$. Hence, this case is also not possible. 
\smallskip

\noindent \textbf{Case 3:} $\bm{j=\ell.}$ This case is symmetric to Case 1: Replace $B_{1}$ with $B_{\ell}$ and $s$ with $t$. Hence, this case is not possible. 
 \end{proof}

The next lemma helps us to establish that if $(G,\mathcal{X},k)$ is a Yes-instance and $(s,t)\in \mathcal{X}$ is a heavy terminal pair, then there exists a solution of $(G,\mathcal{X},k)$ that uses only the marked vertices obtained by applying the \textsc{Marking Procedure} (as internal vertices) for every occurrence of $(s,t)$.

\begin{lemma} \label{lmheavy}
  Let $(G,\mathcal{X},k)$ be an instance of \textsc{VDP} where $G$ is a well-partitioned chordal graph obtained after the exhaustive application of RR\ref{RR4}. Furthermore, let $\mathcal{T}$ be a partition tree of $G$. If 
 $(G,\mathcal{X},k)$ is a Yes-instance, then there exists a minimum solution $\mathcal{P}$ of $(G,\mathcal{X},k)$ such that if $\mathcal{P'}\subseteq \mathcal{P}$ denotes a set of internally vertex-disjoint paths for every occurrence of a heavy terminal pair $(s,t)$ and $M_{(s,t)} \subseteq V(G)$ denotes the set of marked vertices obtained by applying \textsc{Marking Procedure}, then $ \bigcup_{P\in \mathcal{P'}} V(P)  \subseteq M_{(s,t)} \cup \{s,t\}$.
\end{lemma}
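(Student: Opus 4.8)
The plan is to mimic the exchange argument used in the proof of Lemma \ref{lm2}, but adapted to the structure of heavy terminal pairs as dictated by Proposition \ref{prop1} and Observation \ref{obsminimum}. First I would fix a minimum solution $\mathcal{P}$ of $(G,\mathcal{X},k)$ and, among all minimum solutions, choose one for which the quantity $\sum_{(s,t)} |(\bigcup_{P\in \mathcal{P}'} V(P)) \cap (M_{(s,t)}\cup\{s,t\})|$ — summed over all occurrences of heavy terminal pairs — is maximum (with ties broken so as to be consistent with the choice made in Lemma \ref{lm2} for the non-heavy pairs, so that both lemmas can be invoked simultaneously later). Recall that for a heavy pair $(s,t)$ we have $st\in E(G)$, so by Observation \ref{obsminimum} one path in $\mathcal{P}'$ is exactly $P_{st}$, and by Proposition \ref{prop1} every other path for an occurrence of $(s,t)$ is of length $2$, i.e.\ of the form $(s,v,t)$ with $v\in N(s)\cap N(t)$. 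Hence the only internal vertices to worry about are these ``middle'' vertices $v$; I must show each can be taken from $M_{(s,t)}$.

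The core of the argument is a counting step that parallels Cases 1--3 of Lemma \ref{lm2}. Since $st\in E(G)$ and $G$ is well-partitioned chordal, there are exactly two structural possibilities, which correspond precisely to Steps 3.1 and 3.2 of the \textsc{Marking Procedure}: either $s\in\mathsf{bd}(B,B')$ and $t\in\mathsf{bd}(B',B)$ for adjacent bags $B,B'$, or $\{s,t\}\subseteq V(B)$ for some bag $B$ (possibly with further middle vertices living in some neighbor $B'$ with $\{s,t\}\subseteq\mathsf{bd}(B,B')$). In the first case every common neighbor $v$ of $s$ and $t$ must lie in $\mathsf{bd}(B,B')\cup\mathsf{bd}(B',B)$ (any vertex adjacent to both $s$ and $t$ must be in a bag adjacent to or equal to both $B$ and $B'$, and the well-partitioned structure forces it into one of these two boundaries); in the second case every such $v$ lies in $V(B)$ or in $\mathsf{bd}(B',B)$ for a neighbor $B'$ with $\{s,t\}\subseteq\mathsf{bd}(B,B')$. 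In either case, Step 3.1 or 3.2 marks for $M_{(s,t)}$ a subset of exactly $\min\{2k,\,|(\text{relevant set})\setminus V(\mathcal{X})|\}$ vertices. Now suppose some occurrence of $(s,t)$ uses a middle vertex $v\notin M_{(s,t)}$. If the full relevant set (minus $V(\mathcal{X})$) was marked, then $v$ — being a legal middle vertex, hence a non-terminal in the relevant set, since a terminal vertex cannot be an internal vertex of any path — would already be in $M_{(s,t)}$, a contradiction. Otherwise $|M_{(s,t)}|=2k$; but the number of middle vertices used across all of $\mathcal{P}$ (for all occurrences of all terminal pairs) is at most $k$ (each path contributes at most one, and in fact heavy occurrences other than the $P_{st}$ copy and the non-heavy paths all together use at most $k$ internal slots, which is a crude but sufficient bound; more carefully, the number of paths visiting a given bag-region is bounded as in Observation \ref{obsk} and the arguments of Lemma \ref{lm2}), so there is a vertex $y'\in M_{(s,t)}$ not used by any path in $\mathcal{P}$. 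Since $v$ and $y'$ are both common neighbors of $s$ and $t$ and, being in the same boundary (or both in $V(B)$), are twins in the relevant induced subgraph, replacing the path $(s,v,t)$ by $(s,y',t)$ yields another minimum solution with strictly larger intersection with $\bigcup M_{(s,t)}\cup\{s,t\}$, contradicting the choice of $\mathcal{P}$. This forces every middle vertex to lie in $M_{(s,t)}$, and together with $P_{st}\in\mathcal{P}'$ we get $\bigcup_{P\in\mathcal{P}'}V(P)\subseteq M_{(s,t)}\cup\{s,t\}$.

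The main obstacle I anticipate is making the availability count precise: I need a clean argument that, when $|M_{(s,t)}|=2k$ (i.e.\ the marked set was truncated to $2k$ vertices), at least one of these $2k$ marked vertices is free of all other paths. The naive bound ``at most $k$ middle vertices in total'' is a little too coarse if multiple heavy occurrences compete for the same boundary, and one must be careful that the ``preference to unmarked vertices'' clause in Steps 3.1/3.2 does the right bookkeeping so that distinct heavy pairs sharing a boundary do not over-claim — this is exactly analogous to the role of Step 4 of the \textsc{Marking Procedure} for non-heavy pairs. I would handle this by first establishing, as an auxiliary claim mirroring Lemma \ref{mark:lemma} and Observation \ref{obsk}, that the total number of marked vertices reserved in any single boundary (or in any single bag region) is linear in the number of occurrences of pairs that can legally use that region, so that a pigeonhole argument over the $\le k$ paths in $\mathcal{P}$ always leaves a spare marked vertex; the twin property then finishes the exchange exactly as above. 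Once the auxiliary bound is in place, the case analysis itself is routine and essentially a transcription of Cases 1--3 of Lemma \ref{lm2}.
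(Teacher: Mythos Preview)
Your proposal is correct and follows essentially the same exchange argument as the paper: pick a minimum solution maximizing overlap with $M_{(s,t)}$, observe via Proposition~\ref{prop1} and Observation~\ref{obsminimum} that every heavy $(s,t)$-path other than $P_{st}$ has the form $(s,v,t)$, and if some middle vertex $v$ is unmarked then swap it for a free marked vertex $v'\in M_{(s,t)}\subseteq N(s)\cap N(t)$ to obtain a contradiction.

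Two remarks on places where you over-engineer relative to the paper. First, your ``main obstacle'' is not one: once $v\notin M_{(s,t)}$, the relevant non-terminal set must exceed $2k$ elements (otherwise all of it was marked and $v$, being a non-terminal common neighbour of $s$ and $t$, would lie in $M_{(s,t)}$), so $|M_{(s,t)}|=2k$; the paper then simply bounds the number of vertices of $M_{(s,t)}$ used by the remaining $k-1$ paths by $2(k-1)$ and concludes that a free $v'$ exists. No auxiliary pigeonhole claim or appeal to Step~4 of the \textsc{Marking Procedure} is needed. Second, the ``preference to unmarked vertices'' clause in Steps~3.1/3.2 is irrelevant for this lemma---it is there only to control $|M_1\cup M_2|$ for the kernel-size count (Lemma~\ref{ok:lemma}), not to guarantee availability in the exchange. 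Similarly, the twin observation is unnecessary: every vertex in $M_{(s,t)}$ is by construction adjacent to both $s$ and $t$, so $(s,v',t)$ is automatically a path.
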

\begin{proof}
 Let $(G,\mathcal{X},k)$ be a Yes-instance, and let $\mathcal{P}$ be a minimum solution of $(G,\mathcal{X},k)$. For an arbitrary (but fixed) heavy terminal pair $(s,t)\in \mathcal{X}$, let $M_{(s,t)}$ denote the set of marked vertices obtained by applying the \textsc{Marking Procedure} (note that $M_{(s,t)}$ contains marked vertices for every occurrence of $(s,t)$).  Furthermore, let $\mathcal{P'}\subseteq \mathcal{P}$ denote the set of internally vertex-disjoint paths for every occurrence of $(s,t)$ such that $\bigcup_{P\in \mathcal{P'}} V(P)$ has maximum vertex intersection with $ M_{(s,t)} \cup \{s,t\}$.   If $ \bigcup_{P\in \mathcal{P'}} V(P)  \subseteq M_{(s,t)} \cup \{s,t\}$, then we are done. So, assume otherwise. This implies that there exists some $P\in \mathcal{P'}$ such that $V(P) \nsubseteq M_{(s,t)} \cup \{s,t\}$. By Proposition \ref{prop1} and Observation \ref{obsminimum}, the paths in $\mathcal{P'}$ are of the form $P_{st}$ or $(s,v,t)$, where $v\in N(s)\cap N(t)$. Since $P$ has length two, this further implies that there exists a vertex, say, $v\in V(P)$, such that $v\notin M_{(s,t)}$. Since the paths in $\mathcal{P}\setminus \{P\}$ use at most $2(k-1)$ vertices in total from $M_{(s,t)}$, there is at least one vertex, say, $v'$, in $M_{(s,t)}$ that is not used by any path in $\mathcal{P}$. Define $P'$ by replacing $v$ with $v'$ in $P$. Furthermore, since $v'\notin \bigcup_{P\in \mathcal{P'}} V(P)$, $P'$ is internally vertex-disjoint from every path in $\mathcal{P'}\setminus \{P\}$. Let $\mathcal{P'}=(\mathcal{P}\setminus \{P\})\cup \{P'\}$. Since $|\bigcup_{P\in \mathcal{P'}} V(P)\cap (M_{(s,t)} \cup \{s,t\})|>|\bigcup_{P\in \mathcal{P}} V(P)\cap (M_{(s,t)} \cup \{s,t\})|$, this contradicts our choice of $\mathcal{P}$. Hence, $ \bigcup_{P\in \mathcal{P'}} V(P)  \subseteq M_{(s,t)} \cup \{s,t\}$.
\end{proof}

\begin{remark}
From now onwards (throughout this section), let $M_{1}$ and $M_{2}$ denote the set of marked vertices obtained by applying the \textsc{Marking Procedure} to each occurrence of every non-heavy terminal pair and each heavy terminal pair, respectively.
\end{remark}

Before proceeding further, let us discuss in brief what we are planning to do next. Due to \textsc{Marking Procedure} and Lemmas \ref{lm2} and \ref{lmheavy}, we know that the bags in $\mathcal{T}$ that do not contain any marked vertex can be removed without changing the answer to our input instance. Furthermore, note that we mark only $\mathcal{O}(k^{2})$ vertices for all heavy terminal pairs (combined). However, for non-heavy terminal pairs, we have only established that the number of marked vertices in each bag is $\mathcal{O}(k)$; we still need to bound the number of bags that have a non-empty intersection with $M_{1}$. For this purpose, consider the following definition.
\begin{definition} [Well-Partitioned Forest] \label{PF}
Let $(G,\mathcal{X},k)$ be an instance of \textsc{VDP} where $G$ is a well-partitioned chordal graph obtained after the exhaustive application of RR\ref{RR4}. Furthermore, let $\mathcal{T}$ be a partition tree of $G$. After applying \textsc{Marking Procedure}, the subgraph of  $\mathcal{T}$ induced by all bags with a nonempty intersection with $M_{1}$ is called a \emph{well-partitioned forest} of $(G,\mathcal{X},k)$.
\end{definition}

The next observation follows from the fact that a well-partitioned forest consists of the union of at most $k$ paths.
\begin{observation} \label{forest}
Let $(G,\mathcal{X},k)$ be an instance of \textsc{VDP} where $G$ is a well-partitioned chordal graph obtained after the exhaustive application of RR\ref{RR4}. Furthermore, let $\mathcal{T'}$ be a well-partitioned forest of $(G,\mathcal{X},k)$ as described in Definition \ref{PF}. Then, $\mathcal{T'}$ has at most $2k$ bags of degree one.
\end{observation}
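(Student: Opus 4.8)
\textbf{Proof plan for Observation~\ref{forest}.}
The plan is to argue that a well-partitioned forest, as defined in Definition~\ref{PF}, is in fact a union of at most $k$ paths in $\mathcal{T}$, and that a union of $k$ paths in a tree has at most $2k$ leaves. The second part is elementary: a single path has at most two endpoints (vertices of degree one in the path), so $k$ paths together contribute at most $2k$ path-endpoints; since every bag of degree one in the forest $\mathcal{T}'$ must be an endpoint of at least one of these paths (a bag internal to every path containing it has degree at least two in $\mathcal{T}'$), there are at most $2k$ such bags. So the crux is the structural claim that $M_1$ is ``covered'' by the union of at most $k$ valid paths in $\mathcal{T}$.

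For this, recall that $M_1$ is the set of marked vertices obtained by running the \textsc{Marking Procedure} over every occurrence of a non-heavy terminal pair. The key point is that in Steps~1 and~2 of the procedure, vertices are added to $M_{(s,t)}$ only when we process a bag $B$ having an active boundary $\mathsf{bd}(B,B_i)$ that is activated by $(s,t)$; by Definition~\ref{def:active}, $\mathsf{bd}(B,B_i)$ is active and activated by $(s,t)$ only if both $B$ and $B_i$ lie on the valid path corresponding to $(s,t)$ (Definition~\ref{def:valid}). Hence every marked vertex of $M_1$ lies in some bag $B$ that belongs to a valid path of some non-heavy terminal pair occurrence. Since $\mathcal{X}$ has at most $k$ occurrences of terminal pairs in total, there are at most $k$ such valid paths, and therefore the bags with a nonempty intersection with $M_1$ all lie on the union of these $\le k$ paths of $\mathcal{T}$. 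Thus the well-partitioned forest $\mathcal{T}'$ is a subgraph of this union of $\le k$ paths. (The unmarking in Step~4.3.2 only removes vertices, so it can only shrink the set of bags intersecting $M_1$, which is fine.)

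I would then finish by combining the two observations: $\mathcal{T}'$ is contained in a union of at most $k$ paths $P_1,\dots,P_m$ ($m\le k$) of the tree $\mathcal{T}$, and any vertex of $\mathcal{T}'$ of degree one must be an endpoint of $P_j \cap \mathcal{T}'$ for every $j$ such that it lies on $P_j$, hence an endpoint of some $P_j$; since each $P_j$ has at most two endpoints, $\mathcal{T}'$ has at most $2m \le 2k$ bags of degree one. The main obstacle I anticipate is purely bookkeeping: making precise that ``belongs to a valid path of some occurrence'' genuinely bounds the number of relevant paths by $k$ rather than by the number of distinct terminal pairs (which could be smaller, but that is only better), and carefully handling the degenerate case where a valid path consists of a single bag (which contributes no active boundary and hence no marked vertices via Steps~1--2, so it does not even enter $\mathcal{T}'$ through the non-heavy marking). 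Everything else is a short counting argument about paths in a tree.
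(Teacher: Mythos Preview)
Your approach is exactly the paper's: the one-line justification given there is that the well-partitioned forest ``consists of the union of at most $k$ paths'', from which the $2k$-leaf bound is immediate. Your explanation of why every bag in $\mathcal{T}'$ lies on some valid path is correct and spells out what the paper leaves implicit.

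There is, however, a gap in your concluding step. From the inclusion $V(\mathcal{T}') \subseteq \bigcup_j V(P_j)$ alone it does not follow that every leaf of $\mathcal{T}'$ is an endpoint of some $P_j$: if $B$ is an \emph{internal} bag of $P_j$ and one of its $P_j$-neighbours $B_+$ happens to satisfy $B_+ \cap M_1 = \emptyset$ (so $B_+ \notin V(\mathcal{T}')$), then $B$ can become a leaf of $\mathcal{T}'$ without being an endpoint of any valid path; your clause ``hence an endpoint of some $P_j$'' is exactly where this breaks. What is actually needed --- and what the paper's phrasing ``consists of'' tacitly asserts --- is the \emph{equality} $V(\mathcal{T}') = \bigcup_j V(P_j)$ over the valid paths of the non-heavy pairs (of length $\ge 1$), i.e., that every bag with an active boundary receives at least one marked vertex in Steps~2.2.1/2.2.2 (and is not fully unmarked in Step~4). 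Once you argue this reverse inclusion, your leaf-counting argument goes through verbatim, since then every leaf of $\mathcal{T}'$ really is a path endpoint and hence a bag containing a terminal.
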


By Observation \ref{forest} and the fact that the number of vertices of degree at least three in a forest is always bounded by the number of vertices of degree 1, it is left to bound the number of degree 2 bags in $\mathcal{T'}$. The next reduction rules (RR\ref{RR5} and RR\ref{RR6}) help us to remove bags of degree 2 in $\mathcal{T'}$ with some additional properties without changing the answer of the given instance. The reduction rules RR\ref{RR5} and RR\ref{RR6} were also used by Ahn et
al.~\cite{ahn1}. However, the condition that $V(B)\cap M_{2}=\emptyset$ is necessary in RR\ref{RR6}, which was not there. Furthermore, note that the arguments given in Lemma \ref{ok:lemma} are crucial.
 
\begin{RR}[RR\ref{RR5}]\label{RR5}  Let $(G,\mathcal{X},k)$ be an instance of \textsc{VDP} where $G$ is a well-partitioned chordal graph obtained after the exhaustive application of RR\ref{RR4}. Furthermore, let $\mathcal{T'}$ be a well-partitioned forest of $G$. Let $B \in V(\mathcal{T'})$ such that $V(B) \cap V (\mathcal{X}) = \emptyset$, $d_{\mathcal{T'}}(B) = 2$, and $V(B)\cap M_{2}=\emptyset$. Let $A$ and $C$ be the two neighbors of $B$ in $\mathcal{T'}$. Let $k_{BA}$ and $k_{BC}$ denote the number of occurrences of terminal pairs that activate $\mathsf{bd}(B,A)$ and $\mathsf{bd}(B,C)$, respectively. If $|\mathsf{bd}(B,A)| < k_{BA}$ or
$|\mathsf{bd}(B,C)| <k_{BC}$, then answer negatively.
\end{RR}

\begin{RR}[RR\ref{RR6}]\label{RR6}  Let $(G,\mathcal{X},k)$ be an instance of \textsc{VDP} where $G$ is a well-partitioned chordal graph obtained after the exhaustive application of RR\ref{RR4}. Furthermore, let $\mathcal{T'}$ be a well-partitioned forest of $G$. Let $B \in V(\mathcal{T'})$ such that $V(B) \cap V (\mathcal{X}) = \emptyset$, $d_{\mathcal{T'}}(B) = 2$, and $V(B)\cap M_{2}=\emptyset$. Let $A$ and $C$ be the two neighbors of $B$ in $\mathcal{T'}$. Let $k_{BA}$ and $k_{BC}$ denote the number of occurrences of terminal pairs that activate $\mathsf{bd}(B,A)$ and $\mathsf{bd}(B,C)$, respectively. If $|\mathsf{bd}(B,A)| \geq k_{BA}$ and
$|\mathsf{bd}(B,C)| \geq k_{BC}$, then reduce $(G,\mathcal{X},k)$ to $(G',\mathcal{X},k)$, where $G'$ is obtained from $G$ by removing $B$
and making all vertices in $\mathsf{bd}(A,B)$ adjacent to all vertices in $\mathsf{bd}(C,B)$.
\end{RR}

The following proposition establishes that RR\ref{RR5} and RR\ref{RR6} are safe.

\begin{proposition} [\cite{ahn1}] \label{rr56}
RR\ref{RR5} and RR\ref{RR6} are safe. 
\end{proposition}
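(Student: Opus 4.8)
The plan is to treat RR\ref{RR5} and RR\ref{RR6} together, exploiting that their trigger conditions are complementary: RR\ref{RR5} fires exactly when $|\mathsf{bd}(B,A)| < k_{BA}$ or $|\mathsf{bd}(B,C)| < k_{BC}$, and RR\ref{RR6} when $|\mathsf{bd}(B,A)| \geq k_{BA}$ and $|\mathsf{bd}(B,C)| \geq k_{BC}$. I would first record a few structural facts about the processed bag $B$. Since $B\in V(\mathcal{T}')$ it contains a vertex of $M_1$, which by the \textsc{Marking Procedure} lies on an active boundary of $B$; combined with $d_{\mathcal{T}'}(B)=2$ and the fact that every active boundary $\mathsf{bd}(B,D)$ connects $B$ to a $\mathcal{T}'$-neighbour, this forces the set of active boundaries of $B$ to be exactly $\{\mathsf{bd}(B,A),\mathsf{bd}(B,C)\}$. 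Moreover, since $V(B)\cap V(\mathcal{X})=\emptyset$, no valid path ends inside $B$, so following the valid path of any occurrence activating $\mathsf{bd}(B,A)$ one step past $B$ lands on another active-boundary neighbour, which must be $C$; hence every such occurrence also activates $\mathsf{bd}(B,C)$, giving $k_{BA}=k_{BC}=:k_B$, and every terminal pair whose two terminals lie on opposite sides of the $A$--$B$ cut has valid path of the form $(\dots,A,B,C,\dots)$. I would also note that, bags disjoint from $M_1\cup M_2$ having been removed, we may take $B$ to have degree $2$ in $\mathcal{T}$, so that $G'$ (delete $V(B)$, join every vertex of $\mathsf{bd}(A,B)$ to every vertex of $\mathsf{bd}(C,B)$) is again a well-partitioned chordal graph with partition tree obtained from $\mathcal{T}$ by deleting $B$ and making $A$ adjacent to $C$.

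For RR\ref{RR5} I would argue the contrapositive. If $(G,\mathcal{X},k)$ is a Yes-instance, then so is $(G,\widehat{\mathcal{X}},|\widehat{\mathcal{X}}|)$ (restrict the solution to the non-heavy pairs); let $\mathcal{P}$ be a minimum solution of the latter. Each of the $k_B$ occurrences activating $\mathsf{bd}(B,A)$ has, by Proposition~\ref{lm1}, a path visiting both $A$ and $B$; as all $A$--$B$ edges run between $\mathsf{bd}(A,B)$ and $\mathsf{bd}(B,A)$, this path uses a vertex of $\mathsf{bd}(B,A)$, which is an internal vertex because $V(B)\cap V(\mathcal{X})=\emptyset$. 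Internal-vertex-disjointness of these $k_B$ paths then gives $|\mathsf{bd}(B,A)|\geq k_B$, and symmetrically $|\mathsf{bd}(B,C)|\geq k_B$; thus if either inequality fails, the instance is a No-instance, which is exactly safeness of RR\ref{RR5}.

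For RR\ref{RR6} I would prove equivalence of $(G,\mathcal{X},k)$ and $(G',\mathcal{X},k)$ in both directions. In the forward direction, take a minimum solution $\mathcal{P}$ of $(G,\mathcal{X},k)$ chosen to be ``marking-respecting'' in the sense of Lemmas~\ref{lm2} and~\ref{lmheavy}. Because $V(B)\cap M_2=\emptyset$, no path of a heavy pair touches $B$ (here is where the extra hypothesis is used); because $V(B)\cap V(\mathcal{X})=\emptyset$, any path meeting $B$ belongs to a non-heavy pair and meets $B$ only internally, hence by Lemma~\ref{lm2} uses only vertices of $M_1\cap V(B)\subseteq\mathsf{bd}(B,A)\cup\mathsf{bd}(B,C)$, and by Proposition~\ref{obs3} at most two of them --- one in $\mathsf{bd}(B,A)$ and one in $\mathsf{bd}(B,C)$, or a single vertex of their intersection. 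Replacing that detour through $B$ by the new edge between this path's last vertex in $A$ (necessarily in $\mathsf{bd}(A,B)$) and its first vertex in $C$ (in $\mathsf{bd}(C,B)$) deletes only internal vertices, so it turns $\mathcal{P}$ into a solution of $(G',\mathcal{X},k)$. In the backward direction, take a minimum solution $\mathcal{P}'$ of $(G',\mathcal{X},k)$; exactly $k_B$ of its induced paths cross the new $A$--$C$ cut, each via one edge $a_ic_i$ with $a_i\in\mathsf{bd}(A,B)$, $c_i\in\mathsf{bd}(C,B)$. Since $B$ is a clique and each $a_i$ (resp.\ $c_i$) is adjacent in $G$ to all of $\mathsf{bd}(B,A)$ (resp.\ $\mathsf{bd}(B,C)$), it suffices to assign these $k_B$ paths pairwise-disjoint ``transits'' through $B$: route as many as possible through distinct vertices of $\mathsf{bd}(B,A)\cap\mathsf{bd}(B,C)$, and the remainder through an edge from $\mathsf{bd}(B,A)\setminus\mathsf{bd}(B,C)$ to $\mathsf{bd}(B,C)\setminus\mathsf{bd}(B,A)$; a short counting argument shows this is feasible precisely when $|\mathsf{bd}(B,A)|\geq k_B$ and $|\mathsf{bd}(B,C)|\geq k_B$. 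Re-inserting these transits in place of the edges $a_ic_i$, and leaving all other paths of $\mathcal{P}'$ (which do not involve $B$) unchanged, yields a solution of $(G,\mathcal{X},k)$.

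The main obstacle I anticipate is the backward direction of RR\ref{RR6}: one must show that a minimum solution of $G'$ crosses the $A$--$C$ cut with exactly $k_B$ paths, each at most once (a minimality/shortcut argument plus the identification of the crossing pairs with those activating $\mathsf{bd}(B,A)$), and then that the simultaneous rerouting through $B$ can be carried out with internally disjoint transits --- which is where the counting $|\mathsf{bd}(B,A)|,|\mathsf{bd}(B,C)|\geq k_B$ is genuinely needed and where the overlap $\mathsf{bd}(B,A)\cap\mathsf{bd}(B,C)$ must be treated with care. A secondary but essential point is justifying the new hypothesis $V(B)\cap M_2=\emptyset$: without it, an occurrence of a heavy pair could be forced through $B$ (as a length-$2$ path $(s,v,t)$ with $v\in V(B)$), and deleting $B$ would destroy the only such path available for that occurrence, breaking the forward direction; the condition, via Lemma~\ref{lmheavy}, guarantees a minimum solution in which no heavy-pair path uses $B$ at all.
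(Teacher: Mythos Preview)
The paper does not give its own argument here; Proposition~\ref{rr56} is simply imported from Ahn et al.~\cite{ahn1}, and the paper only remarks that the added hypothesis $V(B)\cap M_2=\emptyset$ in RR\ref{RR6} is needed for its setting---since adding a precondition can only restrict when a rule fires, safeness is inherited from the cited version. So there is no in-paper argument to compare against, and your reconstruction is the natural one; the overall plan for both rules is correct, including the ``intersection first, then matched pairs from the complements'' counting for the transits in the backward direction of RR\ref{RR6}.

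Two places in your write-up need tightening. First, in the forward direction of RR\ref{RR6} you appeal simultaneously to Lemmas~\ref{lm2} and~\ref{lmheavy} to obtain a single minimum solution of $(G,\mathcal{X},k)$ that is marking-respecting for both heavy and non-heavy pairs. But Lemma~\ref{lm2} is stated only for minimum solutions of the \emph{restricted} instance $(G,\widehat{\mathcal{X}},|\widehat{\mathcal{X}}|)$, so it cannot be cited as-is for a solution of the full instance. Fortunately you do not need it: start from a minimum solution of $(G,\mathcal{X},k)$ satisfying Lemma~\ref{lmheavy} (so heavy paths miss $B$, using $V(B)\cap M_2=\emptyset$); then every remaining path meeting $B$ is non-heavy and hence induced (Proposition~\ref{prop1}), meets $B$ in at most two vertices (Proposition~\ref{obs3}, which \emph{is} stated for the full instance), and those vertices lie in $\mathsf{bd}(B,A)\cup\mathsf{bd}(B,C)$ because an induced path follows its valid path and the only active boundaries of $B$ go to $A$ and $C$. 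This already suffices for the shortcutting.

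Second, the assertion ``we may take $B$ to have degree $2$ in $\mathcal{T}$'' is the right move but is not automatic: even after deleting bags disjoint from $M_1\cup M_2$, a neighbour $D$ of $B$ with $V(D)\cap M_2\ne\emptyset$ but $V(D)\cap M_1=\emptyset$ could survive outside $\mathcal{T}'$. What rescues the argument is that, under $V(B)\cap V(\mathcal{X})=\emptyset$, any terminal pair with one endpoint on the $D$-side and one on the $A$/$C$-side has its terminals in non-adjacent bags, hence is non-heavy and activates $\mathsf{bd}(B,D)$, forcing $D\in V(\mathcal{T}')$---a contradiction. Thus every pair touching the $D$-subtree is local to it, and removing $B$ neither destroys a required path nor interferes with the $k_B$ crossings. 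Making this explicit closes the gap.
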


\begin{observation} \label{poly:rr56}
RR\ref{RR5} and RR\ref{RR6} can be applied in polynomial
time.
\end{observation}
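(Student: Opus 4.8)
The plan is to check that every ingredient needed to test applicability of RR\ref{RR5} and RR\ref{RR6} and to carry out the corresponding modification is computable in polynomial time, and that the rules terminate after polynomially many applications. First I would recall that a partition tree $\mathcal{T}$ of $G$ is obtainable in polynomial time by Proposition~\ref{ahn:prop}, that the \textsc{Marking Procedure} runs in polynomial time by Observation~\ref{obsmark11}, and hence that the well-partitioned forest $\mathcal{T'}$ of Definition~\ref{PF} (the subgraph of $\mathcal{T}$ induced by the bags meeting $M_{1}$) is extracted in polynomial time by checking, for each bag of $\mathcal{T}$, whether it intersects $M_{1}$. Next, for each bag $B\in V(\mathcal{T'})$ one tests the three side conditions $V(B)\cap V(\mathcal{X})=\emptyset$, $d_{\mathcal{T'}}(B)=2$, and $V(B)\cap M_{2}=\emptyset$, all of which are immediate once $\mathcal{T'}$, $M_{1}$, and $M_{2}$ are available; for a candidate $B$ with neighbours $A$ and $C$ in $\mathcal{T'}$, the boundaries $\mathsf{bd}(B,A)$, $\mathsf{bd}(A,B)$, $\mathsf{bd}(B,C)$, $\mathsf{bd}(C,B)$ are read directly from the adjacency structure of $G$.

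The only slightly less immediate quantities are $k_{BA}$ and $k_{BC}$, the numbers of occurrences of terminal pairs activating $\mathsf{bd}(B,A)$ and $\mathsf{bd}(B,C)$. For each of the at most $k$ occurrences of a terminal pair $(s,t)$, one locates the bags containing $s$ and $t$ and computes the unique valid path between them in $\mathcal{T}$ in polynomial time; by Definition~\ref{def:active}, $(s,t)$ activates $\mathsf{bd}(B,A)$ exactly when both $B$ and $A$ lie on this valid path. Summing the contributions over the $\le k$ occurrences yields $k_{BA}$ and $k_{BC}$ in polynomial time, after which the inequalities in RR\ref{RR5} and RR\ref{RR6} are evaluated directly. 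Performing the operation is then clearly polynomial: RR\ref{RR5} merely reports a No-instance, while RR\ref{RR6} deletes the bag $B$ and inserts the edges of the complete bipartite graph between $\mathsf{bd}(A,B)$ and $\mathsf{bd}(C,B)$, which is of size at most $|V(G)|^{2}$.

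Finally, since every application of RR\ref{RR6} removes the nonempty bag $B$ from $G$, it strictly decreases $|V(G)|$ and can therefore be triggered at most $|V(G)|$ times, whereas RR\ref{RR5} halts the procedure; hence the overall running time remains polynomial. The only point requiring any care is the bookkeeping of $k_{BA}$ and $k_{BC}$ through the valid paths together with the remark that the rules do not loop; everything else is routine, so I expect no genuine obstacle.
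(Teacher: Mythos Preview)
Your argument is correct; the paper states this observation without proof, and your detailed verification—checking that each ingredient (partition tree, \textsc{Marking Procedure}, $\mathcal{T}'$, boundaries, the counts $k_{BA},k_{BC}$ via valid paths) is computable in polynomial time, that a single application is polynomial, and that RR\ref{RR6} can fire at most $|V(G)|$ times since it deletes a nonempty bag—fills in exactly what the paper leaves implicit. One small point you might add for completeness is that $G'$ remains a well-partitioned chordal graph after RR\ref{RR6} (this is part of the safeness established in \cite{ahn1} and is needed to re-run the procedure), but otherwise there is nothing to correct.
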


Now, we have our final lemma.
\begin{lemma} \label{ok:lemma}
 Let $(G,\mathcal{X},k)$ be an instance of \textsc{VDP} where $G$ is a well-partitioned chordal graph obtained after the exhaustive application of RR\ref{RR4}. Furthermore, let $\mathcal{T'}$ be the well-partitioned forest of $G$ obtained after applying \textsc{Marking Procedure} followed by an exhaustive application of RR\ref{RR5} and RR\ref{RR6} on a partition tree $\mathcal{T}$ of $G$. Then, the number of degree 2 bags in $\mathcal{T'}$ is $\mathcal{O}(k)$.
\end{lemma}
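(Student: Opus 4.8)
The plan is to bound the number of degree-2 bags in $\mathcal{T'}$ by classifying them according to which active boundaries pass through them, and then arguing that once RR\ref{RR5} and RR\ref{RR6} are no longer applicable, every degree-2 bag must be ``charged'' to some terminal or to some heavy terminal pair. First I would recall that, by Definition~\ref{PF}, every bag of $\mathcal{T'}$ has a nonempty intersection with $M_1$, and hence by the description of the \textsc{Marking Procedure} it lies on the valid path of some non-heavy terminal pair, i.e.\ it has at least one active boundary. Since $\mathcal{T'}$ is a subforest of the partition tree, a degree-2 bag $B$ has exactly two neighbours $A, C$ in $\mathcal{T'}$; I would observe that the only active boundaries of $B$ are $\mathsf{bd}(B,A)$ and $\mathsf{bd}(B,C)$ (an active boundary towards a bag outside $\mathcal{T'}$ would force that bag into $\mathcal{T'}$, and a bag with fewer than two active boundaries and no terminal would already have been deleted), so $B$ contributes to exactly two of the valid paths' shared segments.

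Next I would split the degree-2 bags into three types: (i) those with $V(B)\cap V(\mathcal{X})\neq\emptyset$; (ii) those with $V(B)\cap M_2\neq\emptyset$; (iii) the remaining ones, with $V(B)\cap V(\mathcal{X})=\emptyset$ and $V(B)\cap M_2=\emptyset$. For type (iii), RR\ref{RR5} and RR\ref{RR6} together say that such a bag is either reported negatively or removed — so after exhaustive application there are \emph{no} type-(iii) degree-2 bags at all. For type (i), since each terminal pair has two terminals, there are at most $2k$ bags containing a terminal vertex, hence at most $2k$ degree-2 bags of type (i). For type (ii), I would use Lemma~\ref{lmheavy} / the \textsc{Marking Procedure} for heavy pairs: $M_2$ is contained in $\bigcup_{(s,t)\text{ heavy}} M_{(s,t)}$, and for each heavy terminal pair $M_{(s,t)}$ has size at most $2k$ and, crucially, is contained in the union of at most a constant number of boundaries, hence meets at most a constant number of bags. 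Since the number of distinct heavy edges is at most $k/2 = \mathcal{O}(k)$, the total number of bags met by $M_2$ is $\mathcal{O}(k)$, so there are $\mathcal{O}(k)$ degree-2 bags of type (ii).

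Combining the three cases: the number of degree-2 bags in $\mathcal{T'}$ is at most $2k + \mathcal{O}(k) + 0 = \mathcal{O}(k)$, which is exactly the claim. I would close by noting that together with Observation~\ref{forest} (at most $2k$ degree-one bags), the standard fact that in a forest the number of degree-$\geq 3$ vertices is bounded by the number of leaves, and Lemma~\ref{lm2} (each bag holds $\mathcal{O}(k)$ marked vertices), this yields the promised $\mathcal{O}(k^2)$-vertex kernel for \textsc{VDP} on well-partitioned chordal graphs.

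The step I expect to be the main obstacle is making rigorous the claim that, after exhaustively applying RR\ref{RR5} and RR\ref{RR6}, no type-(iii) degree-2 bag survives — one must be careful that removing a bag via RR\ref{RR6} (and contracting its boundaries) does not create new degree-2 bags of type (iii) elsewhere, so the argument should be phrased as: apply RR\ref{RR5}/RR\ref{RR6} to a fixed partition tree, and note that each application strictly decreases the number of bags, so the process terminates with an instance in which the condition ``$d_{\mathcal{T'}}(B)=2$, $V(B)\cap V(\mathcal{X})=\emptyset$, $V(B)\cap M_2=\emptyset$'' is false for every $B$; here one also needs that $M_1$ and the active-boundary structure are recomputed consistently (or invariant under the contraction), which follows from the fact that RR\ref{RR6} preserves all valid paths and hence all active boundaries except the trivial reindexing at the removed bag. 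A secondary subtlety is verifying that a heavy terminal pair's marked set $M_{(s,t)}$ indeed spans only $\mathcal{O}(1)$ bags: this follows because in Step~3 of the \textsc{Marking Procedure} $M_{(s,t)}$ is drawn from $\mathsf{bd}(B,B')\cup\mathsf{bd}(B',B)$ (two bags) or from $B\cup\bigcup_{B'}\mathsf{bd}(B',B)$ over neighbours $B'$ of $B$ with $\{s,t\}\subseteq\mathsf{bd}(B,B')$, and all such boundaries lie in $B$ and its neighbours — but the neighbours could be many, so one must instead argue via the size bound $|M_{(s,t)}|\le 2k$ to conclude it meets $\mathcal{O}(k)$ bags and sum over the $\mathcal{O}(k)$ heavy edges, giving $\mathcal{O}(k^2)$ in the worst case, which is still $\mathcal{O}(k)$ bags \emph{per} heavy pair only when $|M_{(s,t)}|$ is spread across genuinely distinct bags; the cleanest route is to bound type-(ii) bags by $|M_2| = \mathcal{O}(k^2)$ directly, which already suffices for the $\mathcal{O}(k^2)$-vertex kernel even if it is weaker than $\mathcal{O}(k)$, and then tighten using that $M_2$ concentrates on $O(1)$ bags per heavy pair if the stronger $\mathcal{O}(k)$ bound on degree-2 bags is really needed.
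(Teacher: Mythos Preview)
Your three-way split into types (i)--(iii), with type (iii) eliminated by RR\ref{RR5}/RR\ref{RR6} and type (i) bounded by $2k$, matches the paper exactly. The gap is in type (ii), and your proposed fallback does not close it. Bounding the number of type-(ii) bags by $|M_2|=\mathcal{O}(k^2)$ gives $\mathcal{O}(k^2)$ bags in $\mathcal{T'}$, each carrying $\mathcal{O}(k)$ vertices of $M_1$ by Lemma~\ref{mark:lemma}, so you only recover the old $\mathcal{O}(k^3)$ bound of Ahn et al.; contrary to your assertion, this does \emph{not} suffice for the $\mathcal{O}(k^2)$-vertex kernel. The $\mathcal{O}(k)$ bound stated in the lemma is genuinely needed, and your argument does not establish it.

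The observation you are missing (and which the paper uses) is this: if $B$ is a type-(ii) bag, so $V(B)\cap V(\mathcal{X})=\emptyset$, then no vertex of $V(B)\cap M_2$ can come from Step~3.1 of the \textsc{Marking Procedure}, because both bags touched in Step~3.1 contain a terminal. Hence any such vertex comes from Step~3.2 for some heavy pair $(s,t)$ with $\{s,t\}\subseteq B_0$, where $B_0$ is a bag \emph{adjacent} to $B$ in the partition tree. In other words, every type-(ii) bag is a neighbour of a bag that hosts both endpoints of a heavy edge. The paper then charges each type-(ii) bag to such a neighbouring heavy-pair bag and bounds their number by the number of distinct heavy edges, which is at most $\lfloor k/2\rfloor$. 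This locality argument --- that a terminal-free bag carrying $M_2$-marks must sit next to the heavy pair's home bag --- is the missing ingredient in your sketch; your ``$O(1)$ bags per heavy pair'' intuition was on the right track but you abandoned it instead of using the hypothesis $V(B)\cap V(\mathcal{X})=\emptyset$ to rule out Step~3.1.
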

\begin{proof}
Note that after applying RR\ref{RR5} and RR\ref{RR6} exhaustively, if we do not answer negatively and there exists a bag, say, $B$, with degree 2 in $\mathcal{T'}$, then either $V(B)\cap V(\mathcal{X})\neq \emptyset$ or $V(B)\cap M_{2}\neq \emptyset$. Since $|V(\mathcal{X})|\leq2k$, the number of degree 2 bags having a non-empty intersection with $V(\mathcal{X})$ is at most $2k$. Next, let $V(B)\cap V(\mathcal{X})= \emptyset$ and $V(B)\cap M_{2}\neq \emptyset$ for some bag $B\in V(\mathcal{T'})$. Then, due to the description of the \textsc{Marking Procedure}, at least one of the neighboring bags of $B$ in $\mathcal{T'}$ must contain a heavy terminal pair (as $V(B)\cap V(\mathcal{X})=\emptyset$ and $V(B)\cap M_{2}\neq \emptyset$). Since there can be at most $\big\lfloor\frac{k}{2}\big\rfloor$ heavy terminal pairs in $\mathcal{X}$, there are at most $\big\lfloor\frac{k}{2}\big\rfloor$ bags with $V(B)\cap M_{2}\neq \emptyset$ and $V(B)\cap V(\mathcal{X})= \emptyset$ in $\mathcal{T'}$.
Thus, the lemma holds.
\end{proof}

Note that throughout this section, our initial parameter $k$ does not increase during the application of reduction rules RR\ref{RR4}-RR\ref{RR6}. Therefore, using the \textsc{Marking Procedure}, Lemmas \ref{mark:lemma}-\ref{lmheavy}, \ref{ok:lemma}, Propositions \ref{rr4} and \ref{rr56}, and Observations \ref{poly:rr4}, \ref{obsmark11}, and \ref{poly:rr56}, we have the following theorem.

\wpcvdp*

\section{Polynomial-time Algorithm for Threshold Graphs}
\label{sec:threshold-poly}

Unlike the case of \textsc{EDP}, the \textsc{VDP} problem becomes easy on highly restricted graph classes.
While it remains \textsf{NP}-hard on split graphs, we will show that it becomes polynomial-time solvable on threshold graphs.
Recall that a split graph with a split partition $(C,I)$ is threshold if we can order the vertices of $I$ as $v_1, v_2, \dots, v_{|I|}$ such that $N(v_1) \subseteq N(v_2) \subseteq \dots \subseteq N(v_{|I|})$.
On the intuitive level, $v_1$ is the ``weakest'' vertex of the graph, so the paths starting at $v_1$ need to be processed first.
We present a greedy argument stating that if $v_1$ is a terminal, then we can freely allocate the paths starting at $v_1$ without ``worrying'' about the other paths, and then remove $v_1$ from the graph.
Throughout this section, we abbreviate $(G,\mathcal{X},k)$ as simply $(G,\mathcal{X})$ when referring to a \textsc{VDP} instance because we do not need to keep track of the parameter.

Now, let us begin with the following lemma.

\begin{lemma}\label{lem:threshold-VDP-greedy}
    Let $(G,\mathcal{X})$ be an instance of \textsc{VDP} such that $G$ is a threshold graph with a split partition $V(G) = (C,I)$.
    Let $v \in I$ be a vertex for which $N_G(v) \subseteq N_G(w)$ for every $w \in I$ and
    $\mathcal{X}_v \subseteq \mathcal{X}$ be the subset of pairs containing $v$.
    Suppose that $\mathcal{X}_v \ne \emptyset$. Then, in polynomial time, we either detect that $(G,\mathcal{X})$ is a No-instance, or
    compute an equivalent instance $(G', \mathcal{X} \setminus \mathcal{X}_v)$ such that $G'$ is an induced subgraph of $G-v$. 
\end{lemma}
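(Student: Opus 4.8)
The plan is to process the terminal pairs containing $v$ greedily, exploiting the fact that $v$ has the smallest neighborhood among vertices of $I$. Let $\mathcal{X}_v = \{(v, a_1), \dots, (v, a_r)\}$ (listing occurrences, so possibly with repetitions), and recall from Remark~\ref{remark2} that if $r > d_G(v)$ then we immediately report a No-instance. First I would observe that by Proposition~\ref{prop1}, in a minimum solution every path starting at $v$ is either the single edge $va_i$ (when $a_i \in C = N_G(v)$, i.e.\ the pair is of Type-I in the split terminology) or has the form $(v, c_i, \dots)$ with $c_i \in C$, since the only neighbours of $v$ lie in $C$. In fact, since $v \in I$ and all its neighbours are in the clique $C$, any induced path leaving $v$ uses exactly one vertex of $C$ as its first internal vertex, and then can be shortcut (again by Proposition~\ref{prop1}, using that $G$ is chordal) to length at most $2$; so each path for a pair $(v,a_i)$ is either $va_i$ itself or $(v, c_i, a_i)$ for some $c_i \in C \setminus \{a_i\}$ with $c_i \in N_G(a_i)$.

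The key step is a matching/greedy argument showing that the ``local'' choices for the paths out of $v$ can be made without loss of generality. Concretely, I would build a bipartite graph $B$ between the occurrences in $\mathcal{X}_v$ and the vertices of $C$: connect occurrence $(v,a_i)$ to every $c \in N_G(v) \setminus \{a_i\}$ with $c \in N_G(a_i)$ (these are the legal ``middle'' vertices for a length-$2$ path), and additionally, if $a_i \in C$, allow $(v,a_i)$ to be ``served directly'' by the edge $va_i$. Using Hall-type reasoning (Proposition~\ref{propmatching}, Proposition~\ref{maxprop}), I would show: $(G,\mathcal{X})$ is a Yes-instance if and only if there is a system of distinct representatives assigning to each occurrence of $\mathcal{X}_v$ either its direct edge or a private middle vertex in $C$, \emph{and} the remaining instance obtained by deleting $v$ and the used middle vertices (but crucially keeping the $a_i$'s, which are only endpoints) is a Yes-instance. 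The heart of the exchange argument is: given any solution $\mathcal{P}$ of $(G,\mathcal{X})$ and any valid assignment $\phi$ of middle vertices chosen greedily, we can reroute $\mathcal{P}$ so that the paths for $\mathcal{X}_v$ use exactly $\phi$. This uses that $N_G(v) \subseteq N_G(w)$ for every $w \in I$: any clique vertex $c$ that $\phi$ wants to reserve for $v$ but is currently used by some other path $P'$ can be swapped out — if $P'$ used $c$ as an internal vertex it entered and left $c$ via clique vertices or via $I$-vertices, all of which are adjacent to some alternative clique vertex, and one reroutes along an $M$-augmenting-style alternating structure — while any path currently using a middle vertex that $\phi$ assigns elsewhere is handled symmetrically. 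Thus we may fix $\phi$ first, delete $v$ together with $\bigcup_i \{\phi(i)\}$ (the privately reserved clique vertices), and set $\mathcal{X}' = \mathcal{X} \setminus \mathcal{X}_v$, $G' = G - (\{v\} \cup \{\phi(i) : i \in [r]\})$, which is an induced subgraph of $G-v$.

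The step I expect to be the main obstacle is the rerouting/exchange lemma itself: proving that a greedily chosen assignment $\phi$ of clique vertices to the pairs of $\mathcal{X}_v$ is ``non-blocking'', i.e.\ that committing to $\phi$ never destroys a solution. The subtlety is exactly the one flagged in the paper's overview for heavy Type-II pairs and in Figure~\ref{fig1}: one must be careful with repeated pairs $(v,a)$ and with the case $a_i \in C$, where the direct edge $va_i$ is the unique length-$1$ route and cannot be duplicated. I would handle this by first settling, for each distinct partner $a$ appearing with multiplicity $m$ in $\mathcal{X}_v$: if $va \in E(G)$, one occurrence uses the edge $va$ and the other $m-1$ must use distinct middle vertices from $N_G(v) \cap N_G(a) \setminus \{a\}$; if $va \notin E(G)$, all $m$ occurrences need distinct middle vertices in $N_G(v) \cap N_G(a)$. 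Feasibility of simultaneously choosing all these private middle vertices is precisely a bipartite matching condition, checkable in polynomial time by Proposition~\ref{propmatching}; if it fails we output No. Once $\phi$ exists, the condition $N_G(v) \subseteq N_G(w)$ for all $w \in I$ guarantees that no path we might displace loses reachability — every clique vertex is interchangeable with the one $\phi$ freed up from the perspective of any other path — which is what makes the greedy commitment safe and yields the claimed equivalence with $(G', \mathcal{X} \setminus \mathcal{X}_v)$.
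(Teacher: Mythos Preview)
Your overall strategy matches the paper's—commit greedily to a family $\mathcal{P}_v$ of paths for $\mathcal{X}_v$, argue via an exchange that some minimum solution contains $\mathcal{P}_v$, then delete the used vertices—but two concrete gaps break the equivalence as written. First, your bipartite graph admits any $c \in N_G(v) \cap N_G(a_i) \setminus \{a_i\}$ as a middle vertex, including vertices that are terminals of pairs in $\mathcal{X} \setminus \mathcal{X}_v$; if $\phi$ picks such a $c$, then $c$ is an internal vertex of $(v, c, a_i)$ while also an endpoint of another required path, so $\mathcal{P}_v$ is not even a valid partial solution, and removing $c$ from $G'$ strands a terminal of $\mathcal{X}'$. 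Second, your reduced graph $G' = G - (\{v\} \cup \{\phi(i)\})$ retains every partner $a_i$; if some $a_i$ is not a terminal in $\mathcal{X} \setminus \mathcal{X}_v$, a solution to the reduced instance may use $a_i$ as an internal vertex, and this does not lift back since $a_i$ is already an endpoint of a path in $\mathcal{P}_v$. The paper fixes both issues at once by taking middle vertices only from $N_G(v) \setminus (Y \cup T)$ (where $Y$ is the set of partners of $v$ and $T$ the set of terminals of $\mathcal{X} \setminus \mathcal{X}_v$) and by deleting all of $\left(\bigcup_{P \in \mathcal{P}_v} V(P)\right) \setminus T$.

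Once middle vertices are restricted to non-terminals, your matching machinery collapses to a count: since $N_G(v) \subseteq N_G(w)$ for every $w \in I$ and $N_G(v) \subseteq C$, every non-terminal in $N_G(v)$ is a valid middle vertex for every pair, so feasibility is just the inequality $|N_G(v) \setminus (Y \cup T)| \ge \ell$ (with $\ell$ the number of pairs not served by a direct edge), and any injection $\tau$ works. The exchange then simplifies to a direct two-path swap rather than an alternating-path structure: if your reserved $u = \tau(i)$ is occupied by some $Q = (x, u, y)$ while the actual $(v, t_i)$-path in the current solution goes through another $u' \in N_G(v) \setminus (Y \cup T)$, replace $Q$ by $(x, u', y)$ (valid because $u' \in N_G(v) \subseteq N_G(x) \cap N_G(y)$) and swap in $(v, u, t_i)$.
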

\begin{proof}
    Let $Y$ be the set of vertices, different than $v$, that appear in at least one pair in $\mathcal{X}_v$ and $T$ be the set of vertices appearing in $\mathcal{X} \setminus \mathcal{X}_v$.
    
    Let $\widehat{\mathcal{X}}_v \subseteq \mathcal{X}_v$ be defined as the {\em set}
    $\{\{v,y\} \mid y \in N_G(v) \cap Y\}$
    (if there are several identical elements in $\mathcal{X}_v$, then we choose one of them).
    We set $\ell = |\mathcal{X}_v \setminus \widehat{\mathcal{X}}_v|$ where $\mathcal{X}_v \setminus \widehat{\mathcal{X}}_v$ is considered a multiset.
    
    If $|N_G(v) \setminus (Y \cup T)| < \ell$,
    then $N_G(v)$ cannot accommodate all paths starting at $v$, so we can report that $(G,\mathcal{X})$ is a No-instance.
    Otherwise, there exists an injective mapping 
    $\tau \colon [\ell] \to N_G(v) \setminus (Y \cup T)$; we fix an arbitrary one.
    
    We construct a solution $\mathcal{P}_v$ to the instance $(G,\mathcal{X}_v)$ as follows. 
    For each $u \in N_G(v) \cap Y$, add the path $P_{vu} = (v,u)$ to $\mathcal{P}_v$;
    this resolves the pairs in $\widehat{\mathcal{X}}_v$.
    Next, for each 
    $i \in [\ell]$ we consider the terminal pair $\{s_i,t_i\} \in  \mathcal{X}_v \setminus \widehat{\mathcal{X}}_v$ (where $s_i = v$)
    and insert the path $(v, \tau(i), t_i)$ to $\mathcal{P}_v$.
    Observe that due to the choice of $v$, the set $N_G(v) \setminus \{t_i\}$ is contained in $N_G(t_i)$ so the edge between $\tau(i)$ and $t_i$ is present in $G$.

    Let $G'$ be obtained from $G$ by removing the vertex set $(\bigcup_{P \in \mathcal{P}_v} V(P)) \setminus T$.
    Note that this set contains $v$.
    We will show that $(G', \mathcal{X} \setminus \mathcal{X}_v)$ is equivalent to $(G, \mathcal{X})$.
    The first implication is straightforward: when $\mathcal{P}'$ is a solution to $(G', \mathcal{X} \setminus \mathcal{X}_v)$ then $\mathcal{P}' \cup \mathcal{P}_v$
    is a family of internally vertex-disjoint paths that forms a solution to $(G, \mathcal{X})$.
    In order to show the second implication, we argue that $\mathcal{P}_v$ is part of some solution.

    We claim that if $(G,\mathcal{X})$ is a Yes-instance, then there exists a solution $\mathcal{P}$ such that $\mathcal{P}_v \subseteq \mathcal{P}$.
    Let $\mathcal{P}$ be a minimum solution that additionally maximizes the number of used paths that belong to $\mathcal{P}_v$.
    Suppose that $\mathcal{P}_v \not\subseteq \mathcal{P}$ and let $P$ a path from $\mathcal{P}_v \setminus \mathcal{P}$.
    By \cref{obsminimum}
    we know that every single-edge path corresponding to a pair in $\widehat{\mathcal{X}}_v$ is present in $\mathcal{P}$, so $P$
    must be of the form $(v, \tau(i), t_i)$ for some $i \in [\ell]$.
    Observe that an induced path in a threshold graph can have length at most two, so by \cref{prop1} every path in $\mathcal{P}$ visits at most three vertices.
    By the choice of $\mathcal{P}$, there exists some path $Q \in \mathcal{P} \setminus \mathcal{P}_v$ which uses the vertex $u = \tau(i)$; recall that $u \not\in Y \cup T$, so $u$ is an internal vertex of $Q$.
    Then $Q$ must be of the form $(x,u,y)$ where $x,y \in Y \cup T$.
    Let $u' \in N_G(v) \setminus (Y \cup T)$ be a vertex visited by some $(v,t_i)$-path $P'$ from $\mathcal{P} \setminus \mathcal{P}_v$ (we know that $P'$ exists because some path must be a replacement for $P$). Observe that due to choice of $v$, the vertex $u'$ belongs to both $N_G(x)$ and $N_G(y)$.
    Therefore, we can replace $Q$ with $(x,u',y)$ 
    and replace $P'$ with $P$, obtaining a new valid minimum solution.
    This solution uses more paths from $\mathcal{P}_v$ than $\mathcal{P}$, yielding a contradiction.
    
    Let $\mathcal{P}$ be the solution to 
    $(G,\mathcal{X})$ satisfying $\mathcal{P}_v \subseteq \mathcal{P}$.
    Then $\mathcal{P} \setminus \mathcal{P}_v$ cannot use any internal vertex from any path in $\mathcal{P}_v$, nor any terminal vertex from $\mathcal{P}_v$ that does not belong to $T$.
    Hence  $\mathcal{P} \setminus \mathcal{P}_v$ is a solution to $(G', \mathcal{X} \setminus \mathcal{X}_v)$.
    The lemma follows.
\end{proof}

We can now utilize the greedy procedure to repeatedly remove the vertices from the independent set and reduce the graph to a clique.
It remains to ensure that the weakest vertex is a terminal, so \cref{lem:threshold-VDP-greedy} can be applied.
To this end, we take advantage of the {\sc Clean-Up} operation from \cref{linearvdpsplit}.

\thresholdPoly*
\begin{proof}
    We begin with applying the {\sc Clean-Up} operation
    (\cref{CU}) to the given instance, which does not affect the answer (Lemmas \ref{newlemma1} and \ref{newlemma}) and does not increase the graph size.
    Since the class of threshold graphs is closed under vertex deletion, we obtain a threshold graph as well.
    Let $(G,\mathcal{X})$ denote the instance after performing {\sc Clean-Up} and let $(C,I)$ be the split partition of $G$.
    Due to such preprocessing, we can assume that every vertex from $I$ appears in $\mathcal{X}$.
    Let $v \in I$ be a vertex for which $N_G(v) \subseteq N_G(w)$ for every $w \in I$ (its existence is guaranteed by the definition of a threshold graph) and
    $\mathcal{X}_v \subseteq \mathcal{X}$ be the subset of pairs containing $v$.
    We apply the reduction from \cref{lem:threshold-VDP-greedy} and obtain an equivalent instance $(G', \mathcal{X} \setminus \mathcal{X}_v)$ such that $G'$ is an induced subgraph of $G-v$.
    As a consequence, $G'$ is again a threshold graph, 
    and it has fewer vertices than $G$.

    We iterate the procedure described above as long as the independent part of the graph is non-empty.
    At every iteration, the size of the graph decreases, so at some point, the process terminates.
    We either arrive at a No-instance (and report it as specified in \cref{lem:threshold-VDP-greedy}) or
    obtain a graph being a clique.
    Such an instance can be solved easily: it suffices to check whether the number of terminal pairs that cannot be resolved by single-edge paths (due to repetitions of terminal pairs) does not exceed the number of non-terminal vertices. 
    This concludes the proof.
    \end{proof}

\section{Conclusion}\label{sec:conclusion} 
In this paper, we studied \textsc{VDP} and \textsc{EDP}, two disjoint paths problems in the realm of Parameterized Complexity. We analyzed these problems with respect to the natural parameter ``the number of (occurrences of) terminal pairs''. We gave several improved kernelization results as well as new kernelization results for these problems on subclasses of chordal graphs. 

For \VDP, we provided a $4k$ vertex kernel on split graphs and an $\mathcal{O}(k^2)$ vertex kernel on well-partitioned chordal graphs. We also show that \VDP~becomes polynomial-time solvable on threshold graphs.
For \EDP, we first proved \textsf{NP}-hardness on complete graphs. Second, we provided an $\mathcal{O}(k^{2.75})$ vertex kernel on split graphs, a $7k+1$ vertex kernel on threshold graphs, an $\mathcal{O}(k^2)$ vertex kernel for \EDP~on block graphs, and a $2k+1$ vertex kernel on clique paths.

Apart from the obvious open questions to improve the sizes of the kernels we designed, the following is a natural next step for future work.

\begin{question}\label{Q:chordal}
Do \VDP~or/and \EDP~admit polynomial kernels for chordal graphs?
\end{question}

It is worth noting here that Golovach et al.~\cite{golovach2022parameterized} proved that it is unlikely for \textsc{Set-Restricted Disjoint Paths}, a generalization of \VDP~where each terminal pair has to find its path from a predefined set of vertices, to admit a polynomial kernel even on interval graphs. However, as noted by them, their reduction is heavily dependent on the sets designed for terminal pairs and thus cannot be directly generalized to \VDP. Moreover, recently W{\l}odarczyk and Zehavi~\cite{michalEDP} established that both \VDP~and \EDP~are unlikely to admit polynomial compression even when restricted to planar graphs. They also suggested to investigate the existence of polynomial kernels for chordal graphs for \VDP~and \EDP.

Another interesting open problem is to study the kernelization complexity of \EDP~on well-partitioned chordal graphs. 
Note that \EDP~on well-partitioned chordal graphs is more subtle than \textsc{VDP} on the same. The reason is that, in \VDP, a path between a non-adjacent terminal pair must be induced due to Proposition \ref{prop1}.
This paves the way to define valid paths in the partition tree of the given well-partitioned chordal graph. However, the concept of valid paths and, thus, the \textsc{Marking Procedure} (from Section \ref{quadraticvdpwpc}) fails for \textsc{EDP}, as a path in the solution can visit the bags of the partition tree in any weird manner.
Furthermore, the approach for \textsc{EDP} on block graphs does not generalize to well-partitioned chordal graphs because the intersection of adjacent bags can be large (in well-partitioned chordal graphs),
whereas, for a block graph $G$, there always exists a partition tree such that for any two consecutive bags, the corresponding boundary of one of the bags has size one.

\bibliography{main.bib}

\end{document}